\documentclass[a4paper,11pt]{article}
\usepackage{authblk}

\usepackage[
  paperwidth=199.8mm,
  paperheight=287mm,
  centering,
  hmargin=22.5mm,
  vmargin=2cm
]{geometry}
\usepackage[T1]{fontenc}
\usepackage{amsmath,amssymb,bm,mathtools,amsthm}
\usepackage{graphicx}
\usepackage{microtype}
\usepackage{enumitem}
\usepackage[dvipsnames]{xcolor}
\usepackage[normalem]{ulem}
\usepackage{tikz}
\usepackage[colorlinks=true,citecolor=blue!55!black,linkcolor=blue!55!black,urlcolor=blue!55!black]{hyperref}
\usepackage{comment}
\usetikzlibrary{arrows.meta,calc,decorations.pathreplacing,positioning,shapes.geometric}

\newcommand{\ee}{\mathrm{e}}
\newcommand{\ii}{\mathrm{i}}
\newcommand{\one}{\mathbf{1}}
\newcommand{\dd}{\mathrm{d}}
\newcommand{\bbE}{\mathbb{E}}
\newcommand{\bbP}{\mathbb{P}}
\newcommand{\cA}{\mathcal{A}}

\newcommand{\cG}{\mathcal{G}}
\newcommand{\cM}{\mathcal{M}}
\newcommand{\cQ}{\mathcal{Q}}
\newcommand{\cS}{\mathcal{S}}
\newcommand{\cT}{\mathcal{T}}
\newcommand{\order}{\mathcal{O}}
\newcommand{\poly}{\operatorname{poly}}
\newcommand{\polylog}{\operatorname{polylog}}
\newcommand{\TV}{\operatorname{TV}}
\DeclareMathOperator{\Tr}{Tr}
\DeclareMathOperator{\supp}{supp}
\DeclareMathOperator{\conv}{conv}
\DeclareMathOperator{\sgn}{sgn}
\DeclarePairedDelimiter{\norm}{\lVert}{\rVert}

\newtheorem{theorem}{Theorem}
\newtheorem{lemma}[theorem]{Lemma}
\newtheorem{proposition}[theorem]{Proposition}
\newtheorem{corollary}[theorem]{Corollary}
\theoremstyle{definition}

\newcommand{\hzy}[1]{}

\newcommand{\zyc}[1]{}

\begin{document}

\title{Weakly interacting fermionic Gibbs states are\\ Gaussian mixtures and classically simulable}

\author[1]{Zhengyi Han\thanks{These authors contributed equally to this work.}}
\author[2]{Yuanchen Zhao\protect\footnotemark[1]}
\author[1]{Zi-Wen Liu\thanks{\texttt{zwliu0@tsinghua.edu.cn}}}
\affil[1]{Yau Mathematical Sciences Center, Tsinghua University, Beijing 100084, China}
\affil[2]{State Key Laboratory of Low Dimensional Quantum Physics, Department of Physics, Tsinghua University, Beijing, 100084, China}

\date{\today}

\maketitle

\begin{abstract}
 Identifying regimes in which interacting fermionic systems admit efficient classical simulation is a crucial task in quantum many-body physics. In this work, we rigorously establish the Gaussianity and classical simulability of sufficiently weakly interacting fermions. For the Gibbs state of a local interacting fermionic Hamiltonian $H=H_0+V$ at any finite temperature and in any spatial dimension, where $H_0$ is quadratic and $V$ is the interaction term, we show that there exists a positive interaction strength below which the Gibbs state is a convex mixture of fermionic Gaussian states. Moreover, based on this structure, we construct a polynomial-time classical sampler for the Gibbs state via Metropolis updates on a tree whose leaves are Gaussian states.
\end{abstract}

\setcounter{tocdepth}{2} 
\tableofcontents

\section{Introduction}
\label{sec:introduction}
{
Simulating interacting fermions at finite temperature is an important computational physics problem, with wide applications in condensed matter physics, quantum chemistry, nuclear physics, and quantum gravity~\cite{QinEtAl2022,LeePhamReichman2022,Alhassid2001,MaldacenaStanford2016}. Although general interacting fermionic problems are computationally hard in the worst case~\cite{TroyerWiese2005,SchuchVerstraete2009,OGormanEtAl2022}, physically relevant regimes may nevertheless admit efficient classical
simulation. The practical efficiency of widely used methods often
depends on favorable model structures. Tensor network methods are particularly effective in one dimension, but their efficiency generally does not extend to higher-dimensional interacting systems.
Quantum Monte Carlo can treat higher dimensions, but the fermion sign problem can lead to exponential computational costs~\cite{TroyerWiese2005}. Certain specific models can circumvent the sign problem, but require additional symmetry or other structures.
This raises the natural question of which parameter regimes allow efficient classical simulation of interacting fermionic systems at finite temperature without relying on low dimensionality or special symmetries.

Free fermions provide a natural starting point. Their Gibbs states are Gaussian at any temperature. Fermionic Gaussian states are completely specified by their covariance matrices and obey Wick's theorem, enabling efficient classical simulation~\cite{TerhalDiVincenzo2002,Bravyi2005,SuraceTagliacozzo2022}.
Probabilistic mixtures of Gaussian states are called convex-Gaussian states~\cite{DeMeloCwiklinskiTerhal2013}. Their components can be entangled, so this representation goes beyond mixtures of spatially uncorrelated states. Convex-Gaussian states also support efficient classical simulation, e.g., when their sample space has polynomial size \cite{DeMeloCwiklinskiTerhal2013,OszmaniecGuttKus2014}. Therefore, efficiently sampleable Gaussian decompositions provide a route to the classical simulation of interacting fermionic systems.

High temperature provides a regime where both the Gaussian structure and efficient sampling can be established rigorously.
Recent work showed that high-temperature Gibbs states of local fermionic Hamiltonians are convex-Gaussian and constructed an efficient classical sampler~\cite{RamkumarCaiTongJiang2026}, paralleling the product state structure established for spin systems~\cite{BakshiLiuMoitraTang2025}.

However, the low-temperature regime matters more for physical applications, as interaction-driven quantum many-body phenomena are typically most pronounced when thermal fluctuations are suppressed. 
Beyond the high-temperature setting, rigorous classical algorithms have also been obtained for sufficiently weak interactions at finite temperature. Recent work established a polynomial-time classical importance sampling algorithm for estimating the log-partition function and local observables of weakly interacting fermions~\cite{ChenEtAl2025}. However, their algorithm computes specified quantities rather than sampling a representation of the Gibbs state, limiting access to quantities like nonlocal high-order correlation functions and nonlinear properties such as entanglement entropies.

In this work, we rigorously establish a universal weak interaction regime that supports convex-Gaussian structure as well as efficient
classical sampling. For local fermionic Hamiltonians $H=H_0+V$ in any spatial dimension, with a quadratic term \(H_0\) and an interaction term \(V\), we show that at any finite temperature there is a positive interaction threshold $g_{\rm mix} = e^{-\Theta(\beta)}>0 $ below which the Gibbs state is exactly convex-Gaussian. Moreover, below a possibly smaller threshold value $g_{\rm sample}= e^{-\Theta(\beta)}$, we construct a classically efficient Gibbs state sampler which provides access to the entire Gibbs ensemble. Although the decomposition may contain exponentially many Gaussian states, our sampler uses the Metropolis method on a polynomial-depth sampling tree without enumerating all components, so that the sampler requires only polynomial time cost. This state-level sampling enables estimation of not only local observables and free energy, but also higher-order correlation functions, full counting statistics of subsystem particle number~\cite{HumeniukBuchler2017}, and nonlinear functionals of reduced density matrices. Our results establish weak interactions as a broad regime for efficient classical simulation of interacting fermions without requiring high temperatures, low spatial dimensions, or special symmetries. 
}

\section{Main results and physical examples}
\label{sec:main-result}

\subsection{Setting and theorem}

Consider $n$ fermionic modes arranged on a $D$-dimensional lattice, with a constant number of modes per lattice site, and Majorana operators $\gamma_1,\ldots,\gamma_{2n}$.
We write
\begin{equation}
 H=H_0+V,\qquad
 H_0=\frac14\bm\gamma^{\mathsf T}K_0\bm\gamma,
 \label{eq:H}
\end{equation}
where $H_0$ is even and quadratic and admits a bounded-local-norm
decomposition on balls of radius at most $r_0$.
The matrix $K_0$ may break particle-number conservation.
We write $V=\sum_B V_B$, where each $V_B$ is even, supported in a ball of fixed radius $r_0$, and obeys $\norm{V_B}\le g$.
The local quadratic norm is fixed to order one.
Let $\cG_n$ denote all normalized fermionic Gaussian density operators, { and $\conv(\cG_n)$ denote all convex combinations of states in $\cG_n$.} 

{

\begin{theorem}[Informal]
\label{thm:main}
Fix finite $\beta>0$, spatial dimension $D$, and a constant $r_0$.
There exists $g_{\rm mix}(\beta,D,r_0)>0$, independent of $n$, such
that for every Hamiltonian~\eqref{eq:H} with $g\le g_{\rm mix}$,
\begin{equation}
 \rho_\beta=\frac{\ee^{-\beta H}}{\Tr\ee^{-\beta H}}
 \in\conv(\cG_n).
 \label{eq:convex}
\end{equation}
Moreover, there exists $g_{\rm samp}(\beta,D,r_0)$ satisfying
$0<g_{\rm samp}\le g_{\rm mix}$ such that, for every $\epsilon>0$ and
$g\le g_{\rm samp}$, a randomized classical algorithm runs in
$\poly(n,\log(1/\epsilon))$ time and outputs $\tau\in\cG_n$ satisfying
\begin{equation}
 \norm{\bbE\tau-\rho_\beta}_1\le\epsilon,
 \label{eq:sampling}
\end{equation}
where the expectation is over the internal randomness of the algorithm.
\end{theorem}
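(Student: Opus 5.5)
Our plan is to write $\rho_\beta$ as a perturbative expansion around the free Gibbs state in which every term is a positive multiple of a Gaussian state. We split each interaction term as $V_B = c_B\one - P_B$ with $P_B := c_B\one - V_B \succeq 0$, where $c_B = g$, so that $\norm{P_B}\le 2g$. The shift by $\sum_B c_B$ only changes normalization, so we may take $H = H_0 - \sum_B P_B$. Next we decompose each $P_B$ into a positive combination of local Gaussian-compatible operators. We expand the even operator $P_B$ on its $O(1)$ modes as a nonnegative combination $P_B = \sum_j \lambda_{B,j}\, \sigma_{B,j}$, with $\lambda_{B,j}\ge 0$ and $\sum_j\lambda_{B,j}=O(g)$. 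Here each $\sigma_{B,j}$ is a positive operator of the form $e^{-Q_{B,j}}$ for a local quadratic $Q_{B,j}$, i.e.\ an unnormalized local Gaussian operator. Such a decomposition exists because even positive operators on finitely many modes lie in the cone generated by local Gaussian operators; for instance, $\one\pm \ii\gamma_a\gamma_b$ are limits of Gaussians, and their products span the even algebra. Any sign issues are absorbed by enlarging the constant shift $c_B$ by an $O(g)$ amount.

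We then apply the Dyson/Duhamel expansion
\begin{equation}
 \ee^{-\beta H}=\sum_{k\ge0}\int_{0\le t_1\le\cdots\le t_k\le\beta}\ee^{-(\beta-t_k)H_0}P\,\ee^{-(t_k-t_{k-1})H_0}P\cdots P\,\ee^{-t_1H_0}\,\dd t ,
\end{equation}
with $P=\sum_B P_B=\sum_{B,j}\lambda_{B,j}\sigma_{B,j}$. Each term is a product of exponentials of quadratic operators, hence a Gaussian operator up to scalar, though not necessarily Hermitian or positive. To make each term a genuine positive Gaussian state, we symmetrize the insertions. Writing $\sigma = \ee^{-Q/2}\ee^{-Q/2}$, we split $\ee^{-\beta H}=\ee^{-\beta H/2}\ee^{-\beta H/2}$ and use a cluster-type expansion of the form $\ee^{-\beta H/2}=\sum_{\text{words}} W$. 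This gives $\rho_\beta$ as a sum of terms $W W'^\dagger$, with diagonal terms $WW^\dagger\succeq 0$ Gaussian. The off-diagonal cross terms are the genuine obstacle.

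The main step is therefore a positivity/dominance argument. We show that the total weight of non-Gaussian or non-positive cross contributions is bounded by $\sum_k (C\beta g\,\ee^{c\beta})^k$ times the weight of the free Gaussian $\ee^{-\beta H_0}$. We then absorb them into a mixture of Gaussians near $\rho^{(0)}_\beta$, using that $\rho^{(0)}_\beta$ has a full-rank covariance ($\norm{\Gamma}\le\tanh(\beta\norm{K_0}/2)<1$), so it is in the interior of $\conv(\cG_n)$ locally. Concretely, we organize the expansion as a polymer gas over connected clusters of blocks $B$. Locality and Lieb--Robinson bounds on imaginary-time evolution $\ee^{tH_0}\gamma\ee^{-tH_0}$ give each cluster weight $\le (C g\ee^{c\beta})^{|X|}$. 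The Kotecký--Preiss criterion then converges for $g\le g_{\rm mix}=\ee^{-\Theta(\beta)}$ uniformly in $n$, and each local cluster correction is a small perturbation of a locally interior Gaussian state. Making this absorption local, so that a perturbation per cluster stays in the convex hull without global control, is where we expect most of the difficulty; we would first try per-cluster rebalancing against a fraction of the free Gaussian's weight.

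For sampling, we view the convergent cluster expansion as a tree: the root chooses clusters and the leaves are Gaussians with efficiently computable weights, via Pfaffians/determinants of covariance matrices. We truncate cluster sizes at $O(\log(n/\epsilon))$ using the geometric tail. We then run Metropolis updates on the tree with proposals from the free weights. The acceptance ratios are bounded thanks to the $g\le g_{\rm samp}$ smallness, which gives polynomial mixing and hence $\poly(n,\log(1/\epsilon))$ time.
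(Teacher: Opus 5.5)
\paragraph{Gap in the convex-Gaussianity step.} The gap is at the step you yourself flag as the obstacle: you never obtain a decomposition of $\ee^{-\beta H}$ into nonnegative Gaussian operators, and the proposed repair cannot work as stated.

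Your Dyson expansion of $\ee^{-\beta(H_0-P)}$ only gives a nonnegative combination of non-Hermitian Gaussian operators, essentially the Gaussian quantum Monte Carlo representation. After the symmetric splitting, the cross terms $WW'^\dagger$ have nothing to be absorbed against. Each word is a product of Gaussian operators, not of the form (common Gaussian factor)$\times(\one+\text{small})$, so there is no identity term that a cross term can be mixed with while keeping every piece both positive and Gaussian.

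The dominance estimate is also wrong. The correction terms carry weight of order $\ee^{\order(\beta gn)}$ relative to $\ee^{-\beta H_0}$, not $\sum_k(C\beta g\ee^{c\beta})^k$. At fixed $g>0$, the trace distance between $\rho_\beta$ and $\rho^{(0)}_\beta$ generically does not vanish as $n\to\infty$. So full rank of $\rho^{(0)}_\beta$, and any neighborhood argument around it, does not help.

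A Koteck\'y--Preiss expansion controls $\log Z$ and local observables. It does not produce a positivity-preserving decomposition of the operator itself. Moreover, $\ee^{-\beta H_0}$ couples the whole lattice, so there is no factorization of $\ee^{-\beta H}$ into commuting cluster factors on which ``per-cluster rebalancing'' could act.

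A smaller point: even positive operators do not in general lie in the cone generated by Gaussian operators, because a pure non-Gaussian even state is extremal. What actually makes your step 2 work is the enlarged shift $c_B\ge\sum_R|v_R|$, which places $c_B\one-V_B$ in that cone.

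\paragraph{The missing idea.} You need the corrections to be multiplicative perturbations of the identity whose size does not depend on $n$.
\begin{itemize}
\item The paper writes $\ee^{-\beta H}=GXX^\dagger G$ with $G=\ee^{-\beta H_0/2}$ and $X=\ee^{\beta H_0/2}\ee^{-\beta H/2}$. Only the positive middle operator $XX^\dagger$ then has to be decomposed, and $G\sigma G$ stays Gaussian.
\item It peels $X$ into relative factors $X^S=X^{S\setminus\{i\}}M_{S,i}$, removing one Majorana label at a time. Only monomials containing $\gamma_i$ distinguish the two evolutions. The local coefficient bound plus a nested-commutator estimate then give $\sum_R|m_R|2^{|R|}\le1/12$ for $M_{S,i}-\one$, uniformly in $S$, $i$ and $n$.
\item It writes $M_{S,i}=\bbE(\one+Z)$ with $Z$ a single random monomial. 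It keeps the middle operator in the form $(\one+\alpha\Gamma)\sigma$, with one non-Gaussian monomial and $\alpha\le2^{-|\supp\Gamma\cap S|}$.
\item Each of the seven terms of $(\one+Z_1)(\one+\alpha\Gamma)(\one+Z_2)$ can then be importance-weighted into $\one+\alpha_k\Gamma_k$ with $\alpha_k\le1$. This is exactly the local, exact rebalancing against the free part that you were looking for. The leftover monomial is finally split into quadratic projectors.
\end{itemize}

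\paragraph{Sampling.} Your sampling paragraph skips the corresponding difficulty. The leaves must be reweighted by their traces $F_z$, which are not known in advance and can differ by exponentially large factors. This requires estimates of subtree weights at internal nodes. The paper gets them from three ingredients:
\begin{itemize}
\item soft pinning $\one\pm t\ii\gamma_a\gamma_b$;
\item a convergent expansion of $\log Z_{S,\sigma}$ around $G\sigma G$, with Schur-norm covariance bounds that are uniform even for long-range pinned pairs;
\item a proof that adjacent node weights differ by at most a factor $3$, which is what makes the tree Metropolis chain mix in polynomial time.
\end{itemize}
The claim ``bounded acceptance ratios from small $g$'' supplies none of this.
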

}

{
The theorem establishes both the convex-Gaussianity and classical simulability of geometrically local, weakly interacting fermionic systems. The local interaction terms are only required to have bounded support and need not be restricted to quartic interactions. Notably, the sampling threshold may be smaller than the threshold for convex-Gaussianity, consistent with the intuition that not every convex-Gaussian state admits an efficient classical simulation. As will be shown below, the sample space associated with the convex decomposition of $\rho_\beta$ is exponentially large, and we employ a Metropolis update to sample from it efficiently. The final output of the algorithm is an ensemble of Gaussian states whose average approximates $\rho_\beta$.
}

The Gaussian state sampler also gives access to observables, full
counting statistics, and nonlinear functions of reduced states.  The
following corollary summarizes these consequences.
\begin{corollary}[Informal]
\label{cor:informal-physical-quantities}
Assume the sampling conditions of Theorem~\ref{thm:main}, and run the
sampler with error parameter $\epsilon_{\mathrm{samp}}/2$.  For statistical accuracy
$\delta>0$, the following quantities are classically accessible in
time polynomial in $n$, $1/\delta$, and
$\log(1/\epsilon_{\mathrm{samp}})$; the moment estimator also has
polynomial dependence on its order $k$.

\begin{enumerate}[label=(\arabic*)]
\item
Let $O$ be an operator given by a polynomially sparse, numerically
accessible Majorana expansion.  Repeated samples estimate
$\Tr(O\rho_\beta)$ with additive error at most
$(\epsilon_{\mathrm{samp}}+\delta)\norm O$.

\item
For any subset of fermionic modes, the simultaneous
projective measurement of the occupation numbers on $\rho_\beta$
can be sampled with total variation error at most
$\epsilon_{\mathrm{samp}}/2$.
Consequently, one can efficiently sample the full counting statistics as well as the fermion parity.

\item
Let $A$ be any subsystem. For any integer $k\ge2$,
$\Tr(\rho_{\beta,A}^k)$ can be estimated with additive error at most
$k\epsilon_{\mathrm{samp}}+\delta$.

\item
The von Neumann entropy of a subsystem containing $\order(\log n)$
fermionic modes can be estimated to inverse-polynomial additive
accuracy in polynomial time.
\end{enumerate}
\end{corollary}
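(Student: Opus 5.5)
The plan is to derive each item from the sampling guarantee of Theorem~\ref{thm:main}. The key fact is that if the sampler is run with error $\epsilon_{\mathrm{samp}}/2$, then $\sigma:=\bbE\tau$ satisfies $\norm{\sigma-\rho_\beta}_1\le\epsilon_{\mathrm{samp}}/2$. Each sample $\tau$ is a Gaussian state, so it is described by a $2n\times 2n$ covariance matrix $M_\tau$, and Wick's theorem makes all of its relevant quantities computable classically. Every estimator below will have the same shape: draw i.i.d.\ samples, compute an exact Gaussian quantity for each, and average. The total error is then a bias, controlled by the trace-distance bound, plus a statistical error, controlled by Hoeffding's inequality.

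\textbf{Item (1).} Write $O=\sum_S c_S\gamma_S$ with polynomially many terms. For each sample I compute $\Tr(O\tau)$ exactly: each $\Tr(\gamma_S\tau)$ is a Pfaffian of a submatrix of $M_\tau$. The quantity $\Tr(O\tau)$ is bounded by $\norm O$ and has expectation $\Tr(O\sigma)$. Its bias is $|\Tr(O(\sigma-\rho_\beta))|\le\norm O\,\epsilon_{\mathrm{samp}}/2$, and $\order(\log(1/\text{fail})/\delta^2)$ samples give statistical error $\delta\norm O$. If $O$ is not Hermitian, I treat its real and imaginary parts separately.

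\textbf{Item (2).} For each sampled $\tau$, I sample the occupation numbers on the chosen modes exactly, mode by mode, using the chain rule. Conditioning a Gaussian state on a measured occupation yields another Gaussian state, with the updated covariance given by a rank-two Schur-complement formula. The resulting outcome distribution equals that of measuring $\sigma$. Measurement is a CPTP map, and TV distance is at most half the trace distance, so the error is at most $\epsilon_{\mathrm{samp}}/4\le\epsilon_{\mathrm{samp}}/2$. The full counting statistics and the parity are functions of the sampled outcome string.

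\textbf{Item (3).} I use independent samples $\tau_1,\dots,\tau_k$ and compute $\Tr(\tau_{1,A}\cdots\tau_{k,A})$, whose expectation is $\Tr(\sigma_A^k)$ by independence. A product of Gaussian operators is a Gaussian operator, up to normalization, so this trace has a closed-form expression via products of the matrices $(1+\ii M)/2$ or via Pfaffian/determinant formulas. I expect this step to be the main technical point: the known product formulas can be numerically singular when the covariance matrices are pure. I would handle this with the Grassmann-integral representation, or with a small depolarizing perturbation whose effect is controlled in trace norm. Each term has modulus at most $1$, so Hoeffding applies to its real part. For the bias, the telescoping estimate $|\Tr\sigma_A^k-\Tr\rho_{\beta,A}^k|\le k\norm{\sigma_A-\rho_{\beta,A}}_1\le k\epsilon_{\mathrm{samp}}$ follows from Hölder's inequality with operator norms at most $1$.

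\textbf{Item (4).} With $m=\order(\log n)$ modes, the dimension is $2^m=\poly(n)$. Using item (1) on all $4^m$ Majorana monomials, I reconstruct $\rho_{\beta,A}$ entrywise; a union bound over the monomials keeps the failure probability small. This gives an estimate $\hat\rho$ with inverse-polynomial trace error $\eta$. I then project $\hat\rho$ onto density matrices and apply Fannes–Audenaert continuity, $|S(\hat\rho)-S(\rho)|\le\eta m+h(\eta)$. This is inverse-polynomial when $\eta$ is, and $\epsilon_{\mathrm{samp}}$ enters only through $\log(1/\epsilon_{\mathrm{samp}})$ in the running time.
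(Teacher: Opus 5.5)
Your proposal is correct and matches the paper's proof item by item. Item (1) uses Wick/Pfaffian evaluation with Hoeffding and the trace-norm bias; item (2) uses sequential Gaussian occupation sampling and contractivity of the measurement channel; item (3) uses independent $k$-tuples, telescoping, and Bravyi's Grassmann formulas; item (4) uses Majorana-basis reconstruction, projection to a density operator, and Fannes--Audenaert.

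The differences are minor. The paper tracks finite precision, using $\sum_R|o_R|\le\sqrt K\norm O$ for sparse $O$. In (4) it reconstructs $\bbE\tau_A$ and applies $\norm{\bbE\tau_A-\rho_{\beta,A}}_1\le\epsilon_{\mathrm{samp}}/2$ once, whereas you accumulate the bias over all $4^{|A|}$ monomials. That accumulation is harmless, since $\epsilon_{\mathrm{samp}}$ enters the runtime only logarithmically.
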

Notably, Cor.~\ref{cor:informal-physical-quantities}(1) imposes no
restriction on the support of $O$. The observable may be highly
nonlocal and may have support on an extensive number of sites. Cor.~\ref{cor:informal-physical-quantities}(2) is not a simple application of Cor.~\ref{cor:informal-physical-quantities}(1), since the projection operator of occupation number might contain exponentially many Majorana monomials.
Also, the subsystem $A$ in
Cor.~\ref{cor:informal-physical-quantities}(3) may itself be extensive. For the evaluation of von Neumann entropy in Cor.~\ref{cor:informal-physical-quantities}(4), the subsystem size is restricted to $\mathcal O(\log n)$.
Cor.~\ref{cor:informal-physical-quantities}(3) and (4) further give access
to nonlinear functions of reduced density matrices, substantially
extending the class of efficiently accessible quantities beyond those
covered by the algorithm of Ref.~\cite{ChenEtAl2025}.

\subsection{Representative models}

As a first example, consider the Fermi-Hubbard model~\cite{QinEtAl2022}
\begin{align}
 H_0={}&-\sum_{\langle i,j\rangle,\sigma}
 t_{ij,\sigma}(c_{i\sigma}^\dagger c_{j\sigma}+\mathrm{H.c.})
 -\sum_{i,\sigma}\mu_{i\sigma}n_{i\sigma},
 \nonumber\\
 V={}&g\sum_i
 \left(n_{i\uparrow}-\frac12\right)
 \left(n_{i\downarrow}-\frac12\right).
 \label{eq:Hubbard-example}
\end{align}
The Fermi-Hubbard model provides a standard setting for studying Mott
insulating behavior~\cite{JordensEtAl2008}, antiferromagnetic
ordering~\cite{MazurenkoEtAl2017}, and unconventional
superconductivity~\cite{DongEtAl2022}. It has also been realized with
ultracold fermionic atoms in optical lattices~\cite{SchneiderEtAl2008}.
For finite temperature $\beta$ and bounded local hopping $|t_{ij,\sigma}| \leq t$, Theorem~\ref{thm:main} applies when $|g|$ is below a finite threshold, such that the weakly interacting thermal density operator is an average of sampleable free fermion states.
In particular, it requires no translation invariance and thus applies even in the presence of disorder.
 
Our theorem does not necessarily require particle conservation.
A second example is a Bogoliubov-de Gennes (BdG) background,
\begin{equation}
 H_0=\sum_{i,j}h_{ij}c_i^\dagger c_j
 +\frac12\sum_{i,j}(\Delta_{ij}c_i^\dagger c_j^\dagger+\mathrm{H.c.}),
 \label{eq:BdG-example}
\end{equation}
perturbed by weak local density, exchange, or other even many-fermion interactions.
BdG Hamiltonians provide the standard mean-field
framework for describing pairing and quasiparticle excitations in
superconductors~\cite{Bogoliubov1958,DeGennes1964}, and are also central
to the study of topological superconducting phases~\cite{ReadGreen2000,Kitaev2001}.
Here the pairing matrix $\Delta$ can be order one and is incorporated exactly. { This setting captures interacting superconducting systems whose quadratic description is given by a BdG Hamiltonian, with additional local fermion interactions. It therefore provides access to finite-temperature equilibrium properties of such systems, including anomalous pairing correlations \cite{Gorkov1959} and interaction-induced corrections \cite{Carbotte1990,KhveshchenkoEtAl2001} to observables beyond the quadratic BdG description, to be simulated efficiently within our regime.}

\section{Proof strategy}
\label{sec:proof-strategy}
We show that the Gibbs state is a convex mixture of Gaussian states and obtain a polynomial-time sampler.

A direct high-temperature expansion treats the full Hamiltonian perturbatively by expanding the Gibbs state in powers of \(\beta\). At low temperatures, however, the quadratic contribution can be large even when the interaction is weak. We therefore treat the quadratic evolution exactly and remove Majorana labels from the interaction support one at a time. At each step, the coefficient sum is controlled by the local interaction strength, with a bound independent of the system size.

\subsection{From local interaction to Gaussian leaves}

Set $T=\beta/2$.  Factoring out the quadratic evolution gives
\begin{equation}
G=\ee^{-TH_0},
\qquad
X=\ee^{TH_0}\ee^{-TH}
=
\cT\exp\left[-\int_0^T
\ee^{uH_0}V\ee^{-uH_0}\,\dd u\right].
\label{eq:X}
\end{equation}
and hence
\begin{equation}
\ee^{-\beta H}=GXX^\dagger G.
\label{eq:whiten}
\end{equation}
The operator $G$ is Gaussian, while $XX^\dagger$ contains the
interaction and is positive.  It therefore suffices to decompose
$XX^\dagger$ into nonnegative Gaussian operators.

Conjugation by \(e^{uH_0}\) spreads local interaction terms over the lattice. For any fixed Majorana label, however, the weighted sum of the absolute coefficients of terms containing that label is bounded independently of the system size. If
\begin{equation}
\ee^{uH_0}V\ee^{-uH_0}
=
\sum_Rv_R(u)P_R,
\end{equation} Appendix~\ref{sec:SM-mixture-proof} proves
\begin{equation}
\max_a\sum_{R\ni a}|v_R(u)|q^{|R|}
\le
q^dC_Vg\,\ee^{dC_Hu}.
\label{eq:strategy-local-coefficient}
\end{equation}
The argument below requires only this coefficient bound.

The nonidentity coefficient weight in the full expansion of $X$
generally grows with the system size. We therefore factor $X$ by
removing the Majorana labels one at a time. For a set $S$ of active
Majorana labels, let $X^S$ be the evolution obtained by keeping only
interaction monomials supported in $S$. For $i\in S$, define
$M_{S,i}$ by
\begin{equation}
X^S=X^{S\setminus\{i\}}M_{S,i}.
\label{eq:ratio}
\end{equation}
Only interaction terms containing $i$ distinguish the two evolutions.
The local bound therefore keeps the nonidentity coefficient weight
of $M_{S,i}-\one$ and $M_{S,i}^\dagger-\one$ small, uniformly in $S$,
$i$, and the system size.

At each removal step, we keep the middle operator in
$X^SC(X^S)^\dagger$ in the form
\begin{equation}
C=(\one+\alpha\Gamma)\sigma.
\label{eq:strategy-partial-state}
\end{equation}
Here $\Gamma$ is a Hermitian even Majorana monomial that has not yet
been reduced to quadratic form, while $\sigma$ is a product of
commuting quadratic operators. Expanding $M_{S,i}$ and its adjoint
gives an exact convex decomposition into operators of the same form.
The weights in this decomposition can be chosen so that each term
satisfies $\alpha\le2^{-|\supp\Gamma\cap S|}$. A monomial of degree $s$
can introduce at most $s$ active labels, and the factor $2^{-s}$ in
its coefficient compensates for this increase. In particular,
$\alpha\le1$, so $\one+\alpha\Gamma$ is nonnegative.

When $\Gamma$ contains no active label, write
$\Gamma=AB$, where $B=\ii\gamma_a\gamma_b$ is quadratic and is
disjoint from $A$.  The identity
\begin{equation}
\one+\alpha AB
=
\frac12(\one+\alpha A)(\one+B)
+
\frac12(\one-\alpha A)(\one-B)
\label{eq:strategy-hard-pinning}
\end{equation}
separates one quadratic pair from $\Gamma$ and transfers it to
$\sigma$.  Alternating label removal with this splitting eventually
gives
\begin{equation}
XX^\dagger
=
\sum_zp_z\sigma_z,
\qquad
p_z\ge0,
\qquad
\sum_zp_z=1,
\label{eq:tree-mean}
\end{equation}
where every $\sigma_z$ is nonnegative and Gaussian.  Together with
Eq.~\eqref{eq:whiten}, this proves the Gaussian-mixture theorem.

However, this theorem does not by itself give an efficient sampling algorithm. We also need to estimate the normalization
factors that determine the sampling probabilities. For these
estimates, we replace the singular operators $\one\pm B$ in
Eq.~\eqref{eq:strategy-hard-pinning} by the strictly positive
operators $\one\pm tB$, where $0<t<1$. We call this step \emph{soft pinning}, which keeps
$\sigma$ strictly positive. The modified recursion maintains
$\alpha\le1/2$, and hence
\begin{equation}
\frac12\sigma
\preceq
(\one+\alpha\Gamma)\sigma
\preceq
\frac32\sigma.
\label{eq:strategy-live-comparison}
\end{equation}
Thus the intermediate operators are also strictly positive, while the leaves remain Gaussian.

\subsection{Sampling the trace-weighted leaves}

Let $p_z$ be the probability of reaching a leaf $z$ in the sampling
tree, and write
\begin{equation}
L_z=G\sigma_zG,
\qquad
F_z=\Tr L_z,
\qquad
\tau_z=\frac{L_z}{F_z}.
\end{equation}
The tree identity gives
\begin{equation}
\rho_\beta
=
\sum_zq_z\tau_z,
\qquad
q_z
=
\frac{p_zF_z}{\Tr\ee^{-\beta H}}.
\label{eq:weights}
\end{equation}
The tree produces the probabilities $p_z$, but the normalized Gaussian
states must be sampled with probabilities proportional to $p_zF_z$.
The unknown trace $F_z$ is therefore the main obstacle in turning the
decomposition into an algorithm.

For a node $v$, let $W_v$ be the expected leaf trace below $v$.  If
$w$ ranges over the children of $v$, then
\begin{equation}
W_v
=
\sum_wp(w\,|\,v)W_w.
\end{equation}
Choosing $w$ with probability $p(w\,|\,v)W_w/W_v$ would produce the
distribution in Eq.~\eqref{eq:weights}.  Computing $W_v$ from this
definition, however, requires summing over the entire subtree below
$v$.

To estimate $W_v$ without summing over the subtree below $v$, we
use the exact decomposition identities, which give
\[
W_v=\Tr\left[GX^SC(X^S)^\dagger G\right].
\]
The comparison between $C$ and $\sigma$ in
Eq.~\eqref{eq:strategy-live-comparison} reduces the problem to
estimating
\begin{equation}
Z_{S,\sigma}
=
\Tr\left[GX^S\sigma(X^S)^\dagger G\right].
\label{eq:internalZ}
\end{equation}
Indeed, $Z_{S,\sigma}/2\le W_v\le3Z_{S,\sigma}/2$, so an estimate
of $\log Z_{S,\sigma}$ with constant additive error gives a constant-factor approximation to $W_v$.
To estimate $Z_{S,\sigma}$, we use an interaction expansion in
the Gaussian state proportional to $G\sigma G$, as described in
Appendix~\ref{sec:SM-oracle}. The normalization $\Tr(G\sigma G)$
and the two-point function of this state can be computed from
$N\times N$ transfer matrices.

Each soft pinning factor $\one\pm t\ii\gamma_a\gamma_b$ in $\sigma$
involves a pair of labels $\{a,b\}$, and different factors share
no labels. The two labels in a pair may be far apart, so spatial
decay of the two-point function cannot be assumed. For sufficiently
small $t$, the disjointness of these pairs gives uniform bounds
on the absolute row and column sums of the two-point function.
These bounds are independent of the number of pairs and the
distances between their labels.

The coefficients in the interaction expansion of $\log Z_{S,\sigma}$
are expressed through connected cumulants in this Gaussian state.
Wick's theorem writes the Gaussian moments as Pfaffians, and the
BKAR forest formula expresses each connected coefficient as a sum
over spanning trees. The covariance entries on the tree edges are
summed using the row and column bounds, while the remaining entries
stay inside a single Pfaffian of absolute value at most one.
Consequently,
\begin{equation}
|c_s|
\le
C_{\beta,D,r_0}Ng\,
\bigl(C'_{\beta,D,r_0}g\bigr)^{s-1},
\qquad
C'_{\beta,D,r_0}g<1.
\label{eq:cluster}
\end{equation}
Truncating the series at order $\order(\log(N/\eta))$ gives an
additive error of at most $\eta$ in $\log Z_{S,\sigma}$.
An efficient estimator for the truncated series is constructed
in Appendix~\ref{sec:SM-oracle}.

The sampling rule for $M_{S,i}$ involves continuous time variables
and paths of unbounded length. We truncate the paths using their
exponential tail and discretize the time variables, as described in
Appendix~\ref{sec:SM-finite}. The accumulated error remains small
over the $\order(n^2)$ levels of the tree.

Evaluating all child weights at every step is still too expensive.
We therefore use a lazy Metropolis chain on the finite tree. The
trace oracle estimates the expected leaf trace only at visited
nodes. A downward proposal samples a child from the original
child distribution, while an upward proposal selects the parent.
For a proposed move $v\to w$ between adjacent nodes, the acceptance
probability is
\begin{equation}
\min\left\{1,\frac{\varphi_w}{\varphi_v}\right\}.
\label{eq:strategy-Metropolis}
\end{equation}
After mixing, conditioning on the leaves gives a distribution
close to the target. The appendix bounds the errors from
truncation, trace estimation, and finite running time.

At the selected leaf $z$, the algorithm outputs
\begin{equation}
\tau_z
=
\frac{G\sigma_zG}{\Tr(G\sigma_zG)}.
\label{eq:tau}
\end{equation}
The algorithm achieves
$\norm{\bbE\tau-\rho_\beta}_1\le\epsilon$ in time polynomial in
$n$ and $\log(1/\epsilon)$, as proved in
Appendix~\ref{sec:SM-finite}.

\section{Discussion}
\label{sec:discussion}
\subsection{Relation to previous work}
Diagrammatic Monte Carlo expands around free fermions to estimate local observables, with polynomial cost in the inverse error under convergence and sampling assumptions~\cite{Rossi2017,RossiEtAl2017}. Using a different sampling approach, Ref.~\cite{ChenEtAl2025} rigorously established polynomial-time estimation of the log-partition function and local observables for weakly interacting fermions without Markov chain mixing assumptions. Our sampler produces Gaussian states whose average approximates the entire Gibbs state in trace norm, further enabling nonlocal correlations, occupation-number sampling, and nonlinear quantities of the reduced density matrix (see Cor.~\ref{cor:informal-physical-quantities}).

Fermionic tensor network methods numerically approximate thermal states of the two-dimensional Hubbard model~\cite{SinhaEtAl2022} and interacting spinless fermions~\cite{DeMeyerEtAl2026}. However, their accuracy is assessed through bond dimension convergence and observable benchmarks, without the rigorous runtime and global trace norm guarantees established here.

Gaussian quantum Monte Carlo uses nonnegative weights over Gaussian operators and stochastic differential equations to estimate thermal and dynamical observables, including for the Hubbard model~\cite{CorneyDrummond2004,CorneyDrummond2006}. However, its components need not be Hermitian or positive semidefinite, and these works do not establish a polynomial time guarantee. We prove a decomposition into physical Gaussian states together with efficient sampling for sufficiently weak interactions.

Ref.~\cite{RamkumarCaiTongJiang2026} established convex-Gaussianity and efficient sampling at high temperature on bounded-degree interaction graphs. We obtain both results at any fixed positive temperature for sufficiently weak nonquadratic interactions, requiring finite-range interactions on a finite-dimensional lattice.

\subsection{Outlook}

{ In this work, we established the convex-Gaussianity and classical simulability of weakly interacting fermions for interaction strengths below positive thresholds. These thresholds provide sufficient conditions for our main results and may decrease rapidly with $\beta$. A natural open question is whether these bounds can be further improved, or, more fundamentally, whether one can determine the optimal interaction strength thresholds for convex-Gaussianity and classical simulability. Our results suggest that convex-Gaussianity can support classical
simulation beyond Gaussian states. However, this property alone
does not guarantee efficient sampling. States outside the set
may still be classically tractable.  In addition, our results concern geometrically local systems, and the sampling proof relies explicitly on this assumption.
Whether the results extend to long-range interactions remains open.

The current results focus on a certain microscopic Hamiltonian and establish simulability only within a restricted parameter regime. A possible direction is to incorporate renormalization methods to extend our approach to a broader parameter regime. Moreover, although the simulable regime considered here is continuously connected to the quadratic Hamiltonian and thus belongs to the same many-body phase, techniques that reorganize the original problem around a different reference quadratic Hamiltonian may enable the study of different phases. Exploring this possibility is another interesting direction for future work.}

\section*{Acknowledgments}
 Z.H., Y.Z., and Z.-W.L.\ are supported in part by NSFC under Grant No.~12475023, Dushi Program, and a startup funding from YMSC.

\section*{AI disclosure}
The authors set the overall problem and strategy, and generative AI tools provided substantial assistance in developing the proof details under extensive instructions from the authors. The authors take full responsibility for the results and presentation.

\appendix
\setcounter{section}{0}

\numberwithin{equation}{section}
\numberwithin{figure}{section}
\numberwithin{table}{section}
\numberwithin{theorem}{section}

\renewcommand{\theHequation}{\thesection.\arabic{equation}}
\renewcommand{\theHfigure}{\thesection.\arabic{figure}}
\renewcommand{\theHtable}{\thesection.\arabic{table}}
\renewcommand{\theHtheorem}{\thesection.\arabic{theorem}}

\section{Setting and main theorems}
\label{app:detailed-proof}
\label{sec:SM-model}

\subsection{Preliminaries}

Let $\gamma_1,\ldots,\gamma_N$, $N=2n$, be Majorana operators satisfying
\begin{equation}
 \gamma_a^\dagger=\gamma_a,
 \qquad
 \{\gamma_a,\gamma_b\}=2\delta_{ab}\one.
\end{equation}
For every even subset
$R=\{r_1<\cdots<r_{|R|}\}\subseteq[N]$, choose the canonical Hermitian
Majorana monomial
\begin{equation}
 P_R=\ii^{|R|/2}\gamma_{r_1}\cdots\gamma_{r_{|R|}},
 \qquad
 P_R^\dagger=P_R,
 \qquad
 P_R^2=\one.
 \label{eq:SM-PR}
\end{equation}
Products satisfy
\begin{equation}
 P_RP_T=\omega(R,T)P_{R\triangle T},
 \qquad
 |\omega(R,T)|=1.
 \label{eq:SM-clifford}
\end{equation}

The fermions live on a finite subset
$\Lambda\subset\mathbb Z^D$, with at most $\nu$ complex fermionic modes
per site, where $\nu$ is independent of the system size. We write
$\operatorname{dist}(a,b)$ for the lattice distance between the sites
supporting $\gamma_a$ and $\gamma_b$.

We consider
\begin{equation}
 H=H_0+V,
 \qquad
 H_0=\frac14\bm\gamma^{\mathsf T}K_0\bm\gamma,
 \label{eq:SM-H}
\end{equation}
where $K_0^{\mathsf T}=-K_0$ and $K_0^\dagger=K_0$.
The local decompositions below are supported on sets $B$ contained
in balls of radius at most $r_0$. Each $B$ contains at most $d$
Majorana modes, and each lattice site belongs to at most $q_B$ such
sets. The constants $d$ and $q_B$ are independent of the system size.

The quadratic part has a local decomposition
\begin{equation}
 H_0=\sum_B H_{0,B},
 \qquad
 \supp H_{0,B}\subseteq B,
 \qquad
 H_{0,B}\ \text{quadratic},
 \qquad
 \norm{H_{0,B}}\le1.
 \label{eq:SM-H0-local}
\end{equation}
The interaction is decomposed as
\begin{equation}
 V=\sum_B V_B,
 \qquad
 \supp V_B\subseteq B,
 \qquad
 \norm{V_B}\le g,
 \label{eq:SM-V}
\end{equation}
where $g\ge0$ denotes the local interaction strength.
Each $V_B$ is even, but particle number conservation is not assumed.
We omit additive constants in $H$, which do not affect $\rho_\beta$.
The normalization $\norm{H_{0,B}}\le1$ fixes the local energy scale.
A general bound $\norm{H_{0,B}}\le J_0$ can be reduced to this
convention by replacing $\beta$ with $\beta J_0$ and $g$ with
$g/J_0$.

The Majorana monomials are orthogonal with respect to the normalized trace:
\begin{equation}
 2^{-n}\Tr(P_RP_T)=\delta_{R,T}.
 \label{eq:SM-Majorana-orthogonality}
\end{equation}
Thus, if $O=\sum_R o_RP_R$, then
\begin{equation}
 |o_R|
 =
 \left|2^{-n}\Tr(P_RO)\right|
 \le\norm O.
 \label{eq:SM-coefficient-bound}
\end{equation}
The coefficient of $P_{\{a,b\}}=\ii\gamma_a\gamma_b$ in $H_0$ is
$-\ii(K_0)_{ab}/2$. For a fixed $a$, each local quadratic term
contributes at most $d-1$ such coefficients, each of absolute value
at most one, and at most $q_B$ local terms contain $a$. It follows that
\begin{equation}
 \max\left\{
 \norm{K_0}_{1\to1},
 \norm{K_0}_{\infty\to\infty}
 \right\}
 \le 2q_B(d-1)
 =:C_H.
 \label{eq:SM-K0-local-bound}
\end{equation}
Writing $V=\sum_Rv_RP_R$, the same coefficient bound gives
\begin{equation}
 \max_a\sum_{R\ni a}|v_R|
 \le q_B2^dg
 =:C_Vg.
 \label{eq:SM-V-local-bound}
\end{equation}

Let $\cG_n$ denote the set of even fermionic Gaussian states on $n$
modes. These states need not be full rank; in particular, pure
Gaussian states are included~\cite{Bravyi2005,SuraceTagliacozzo2022}.

\subsection{Main results}

Set
\begin{equation}
    I_\beta:=\int_0^{\beta/2}\ee^{dC_Hu}\,\dd u,
\end{equation}
\begin{equation}
    \eta_{\mathrm{mix}}:=
 \min\left\{
 \frac1d,
 \frac12\log\frac{13}{12}
 \right\},
 \qquad
 g_{\mathrm{mix}}
 :=
 \frac{\eta_{\mathrm{mix}}}{2^dC_VI_\beta}.
\end{equation}

\begin{theorem}
\label{thm:SM-mixture}
Fix $\beta>0$, $D$, and $r_0$.
For any system size and any Hamiltonian satisfying
Eqs.~\eqref{eq:SM-H}--\eqref{eq:SM-V}, if
$g\le g_{\mathrm{mix}}$, then its Gibbs state satisfies
\begin{equation}
 \rho_\beta
 =
 \frac{\ee^{-\beta H}}{\Tr\ee^{-\beta H}}
 \in\conv(\cG_n).
 \label{eq:SM-main-struct}
\end{equation}
\end{theorem}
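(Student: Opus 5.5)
We follow the route sketched in Section~\ref{sec:proof-strategy}. With $T=\beta/2$, $G=\ee^{-TH_0}$ and $X=\ee^{TH_0}\ee^{-TH}$, we have $\ee^{-\beta H}=GXX^\dagger G$. Conjugation by a Gaussian operator preserves the cone of nonnegative Gaussian operators: if $\sigma\succeq0$ is Gaussian, then $G\sigma G$ is again Gaussian, possibly of lower rank. It therefore suffices to write $XX^\dagger=\sum_z p_z\sigma_z$ with $p_z\ge0$ and every $\sigma_z\succeq0$ Gaussian, and then normalize.

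\textbf{Step 1: the local coefficient bound.} The first step is to prove Eq.~\eqref{eq:strategy-local-coefficient}. Write $\ee^{uH_0}\gamma_a\ee^{-uH_0}=\sum_b(\ee^{uK_0})_{ab}\gamma_b$, where $\ee^{uK_0}$ is an orthogonal-type matrix with $\norm{\ee^{uK_0}}_{1\to1}\le\ee^{C_Hu}$ by Eq.~\eqref{eq:SM-K0-local-bound}. Expanding each monomial of $V$ of degree at most $d$ multilinearly then gives
\[
\sum_{R\ni a}|v_R(u)|q^{|R|}\le q^dC_Vg\,\ee^{dC_Hu},
\]
using Eq.~\eqref{eq:SM-V-local-bound} and a union over which factor carries label $a$.

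\textbf{Step 2: the ratio operators.} Next define $X^S$ by keeping only the monomials supported in $S$, and set $M_{S,i}=(X^{S\setminus i})^{-1}X^S$. This is an interaction-picture time-ordered exponential whose generator consists only of terms containing $i$, conjugated by $X^{S\setminus i}$. To avoid that conjugation, I would instead define $M_{S,i}$ directly as a Dyson series over paths of monomials. Each path is weighted by its coefficients, and at least one factor must contain $i$. The weighted coefficient sum of $M_{S,i}-\one$, with weight $2^{|R|}$ per new label, is then bounded by $\ee^{2^dC_VgI_\beta}-1\le\ee^{\eta_{\rm mix}}-1$. The constant $\log(13/12)$ is exactly what makes the resulting sum fit below the thresholds used later.

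\textbf{Step 3: the recursion.} I maintain an invariant. Each current node has the form $X^SC(X^S)^\dagger$ with $C=(\one+\alpha\Gamma)\sigma$, where $\sigma$ is a product of commuting factors $\one\pm\ii\gamma_a\gamma_b$ on pairs disjoint from $\supp\Gamma$, and $|\alpha|\le2^{-|\supp\Gamma\cap S|}$.

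\emph{Removal step.} Remove a label $i\in S$, writing $X^SC(X^S)^\dagger=X^{S\setminus i}\,M C M^\dagger\,(X^{S\setminus i})^\dagger$. Expand $M=\sum_R m_RP_R$ and $M^\dagger$ likewise, and multiply out $MCM^\dagger$. This produces terms $m_Rm_{R'}^*P_R(\one+\alpha\Gamma)\sigma P_{R'}$. These must be regrouped into Hermitian pairs and shown to equal a positive combination of operators $(\one+\alpha'\Gamma')\sigma'$. Symmetrizing, $P_R\sigma P_{R'}+P_{R'}\sigma P_R$ can be rewritten using $\one\pm$ monomial splittings. The freshly added monomials carry the weight $2^{-|R|}$ that compensates the new active labels, so the invariant $\alpha'\le2^{-|\supp\Gamma'\cap(S\setminus i)|}$ is restored after normalizing by the total weight. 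That total weight is at most $1+O(\ee^{\eta}-1)$, which absorbs the identity term.

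\emph{Splitting step.} When $\Gamma$ has no active label, apply Eq.~\eqref{eq:strategy-hard-pinning} repeatedly to move quadratic pairs into $\sigma$. These pairs commute with the remaining factors, and the new pairs are disjoint from the old ones. The process terminates when $S=\emptyset$ and $\Gamma=\one$, giving Gaussian leaves.

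\textbf{Main obstacle.} The hard part is the removal step. The difficulty is showing that $M C M^\dagger$ is a \emph{nonnegative} combination of operators of the invariant form. Operator products $P_R\sigma P_{R'}$ do not commute with $\sigma$'s pinning factors, and a conjugated pinning $\one\pm B$ may fail to be of the allowed form. I would first try to handle this by commuting every $P_R$ through $\sigma$: each $P_R$ either commutes or anticommutes with a given pair, and anticommutation flips a sign $\pm$, which is harmless because a flipped factor is still a product of the allowed form. Cross terms with $R\ne R'$ would be written as $\tfrac12(\one\pm P_RP_{R'})$-type splittings, absorbed into $\Gamma$ with coefficient $|m_Rm_{R'}|/W$. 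Verifying the weight inequality $\sum|m_R||m_{R'}|2^{|R|+|R'|}\le 13/12$, which is where the constant $\eta_{\rm mix}$ enters, is the crux.
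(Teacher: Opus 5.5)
Your overall architecture (the factorization $\ee^{-\beta H}=GXX^\dagger G$, the local bound of Step~1, the invariant $\alpha\le2^{-|\supp\Gamma\cap S|}$, and the pair splitting) matches the paper. However, Step~2, which is the analytic core of the theorem, does not work as proposed. Write $S'=S\setminus\{i\}$. The natural reading of your definition is a Dyson series over time-ordered products of monomials of $A^S(u)$ in which at least one factor contains $i$. That series sums to $X^S-X^{S'}$, not to $M_{S,i}-\one=(X^{S'})^{-1}(X^S-X^{S'})$. Its weighted coefficient sum is also not size-independent: the unconstrained factors range over all of $A^{S'}(u)$, whose total weight $L_q(A^{S'}(u))$ is extensive, of order $Ng$.

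The bound $\ee^{\eta_q}-1$ you quote is the bound for $\cT\exp[-\int_0^TD_{S,i}(u)\,\dd u]$, the evolution generated by the $i$-containing terms alone. That is not $M_{S,i}$, because $D_{S,i}$ does not commute with $A^{S'}$. The conjugation you want to avoid is exactly what encodes the cancellation between $(X^{S'})^{-1}$ and $X^S$. The paper uses $\frac{\dd}{\dd u}M_{S,i}(u)=-B_{S,i}(u)M_{S,i}(u)$ with $B_{S,i}(u)=(X^{S'}(u))^{-1}D_{S,i}(u)X^{S'}(u)$. It expands $B_{S,i}$ into nested commutators $[A^{S'}(t_1),[\cdots,[A^{S'}(t_m),D_{S,i}(u)]\cdots]]$. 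A commutator of even monomials vanishes unless their supports intersect, so each commutator costs only the \emph{local} quantity $\frac{2k}{q^2}J_q(A(t))$, with $k\le jd$ the current degree (Lemma~\ref{lem:SM-comm}). The resulting factor $m!(2d/q^2)^m$ is absorbed by the time simplex, giving $L_q(M_{S,i}-\one)\le\exp[\eta_q/(1-2d\eta_q/q^2)]-1$ (Lemma~\ref{lem:SM-ratio}). This is where $\eta_{\mathrm{mix}}\le1/d$ is used: it makes the denominator at least $\frac12$ at $q=2$. After that, $2\eta_2\le\log\frac{13}{12}$ gives $r\le\frac1{12}$. Your accounting omits the commutator factor, so it explains neither constant.

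Step~3 also misplaces the difficulty. Pinned pairs are created only once $\supp\Gamma\subseteq S^c$, and $S$ only shrinks, so $\sigma$ is supported in $S^c$ while $M_{S,i}$ is supported in $S$. Even monomials with disjoint supports commute, hence $MCM^\dagger=M(\one+\alpha\Gamma)M^\dagger\sigma$. The commuting-through and sign-flip bookkeeping is therefore unnecessary; you should add ``the pairs of $\sigma$ lie in $S^c$'' to your invariant.

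The real subtlety is the choice of the removed label. When $\alpha>0$ you must take $i\in\supp\Gamma\cap S$. Then $m$ drops by one for the surviving term $\alpha\Gamma$, and it uses at most half of its new allowance $2^{-(m-1)}$. That slack of $\frac12$ is what pays for the new terms from $M-\one$ and $M^\dagger-\one$ (the linear terms, their products with $\alpha\Gamma$, and the quadratic terms, after taking Hermitian parts). With an arbitrary $i\in S$ and $\alpha=2^{-m}$, the term $\alpha\Gamma$ alone exhausts the budget. The inequality to check is therefore that these terms fit into the remaining $\frac12$. For a deterministic expansion this reads $3r+\frac32r^2\le\frac12$; the paper's choice $p_1=\frac12$, $p_2=\cdots=p_7=\frac1{12}$ needs $12r\le1$. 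It is not $\sum_{R,R'}|m_R||m_{R'}|2^{|R|+|R'|}\le\frac{13}{12}$.
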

We also construct a classical sampler for the Gibbs state.  The
sampler requires a stronger bound on $g$.

Let $t_0=t_0(\beta,D,r_0)>0$ and
$g_{\mathrm{log}}=g_{\mathrm{log}}(\beta,D,r_0)>0$ be the constants
from the internal log-weight estimate below.  Let $Q$ be the smallest
power of two satisfying
\begin{equation}
 Q\ge
 \max\left\{
 8,\frac{2}{\sqrt{t_0}}
 \right\},
 \qquad
 t=\frac4{Q^2}\le t_0.
 \label{eq:SM-Qt}
\end{equation}
Here $t$ is the pinning parameter.  Set
\begin{equation}
 \kappa_0:=\frac1{32\sqrt2},
 \qquad
 \eta_{\mathrm{pin}}
 :=
 \min\left\{
 \frac{Q^2}{4d},
 \frac12\log(1+\kappa_0)
 \right\},
 \qquad
 g_{\mathrm{pin}}
 :=
 \frac{\eta_{\mathrm{pin}}}{Q^dC_VI_\beta}.
 \label{eq:SM-gpin}
\end{equation}
The sampler applies when
\begin{equation}
 g\le g_{\mathrm{samp}},
 \qquad
 g_{\mathrm{samp}}
 :=
 \min\left\{
 g_{\mathrm{mix}},
 g_{\mathrm{log}},
 g_{\mathrm{pin}},
 1
 \right\}.
 \label{eq:SM-gsample}
\end{equation}

\begin{theorem}[Gibbs sampler]
\label{thm:SM-sampler}
Fix $\beta>0$, as well as $D$ and $r_0$.
Let $H$ be any Hamiltonian satisfying
Eqs.~\eqref{eq:SM-H}--\eqref{eq:SM-V} with
$g\le g_{\mathrm{samp}}$.

Suppose that the supports $B$ and the coefficients of the local terms
$H_{0,B}$ and $V_B$ in the Majorana basis are given explicitly.  We
assume that $\beta$ and these coefficients can be evaluated to
precision $2^{-p}$ in time polynomial in $n$ and $p$.  Then, for every
$\epsilon>0$, there is a randomized classical algorithm running in
\begin{equation}
 \poly\left(n,\log(1/\epsilon)\right)
\end{equation}
time that outputs a classical description of a normalized Gaussian
state $\tau\in\cG_n$ such that
\begin{equation}
 \norm{\bbE\tau-\rho_\beta}_1\le\epsilon.
 \label{eq:SM-main-alg}
\end{equation}
The expectation is over the randomness of the algorithm.
\end{theorem}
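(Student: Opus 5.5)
The sampler is assembled from four components, as outlined in Section~\ref{sec:proof-strategy}. We take Theorem~\ref{thm:SM-mixture} and its soft-pinning variant as the structural input. The first component is the exact identity $\ee^{-\beta H}=GXX^\dagger G$ with $G=\ee^{-\beta H_0/2}$. We then build the soft-pinned recursion tree. At a node carrying $(S,C)$ with $C=(\one+\alpha\Gamma)\sigma$, we remove the largest active label $i\in S$ via $X^S=X^{S\setminus\{i\}}M_{S,i}$. We expand $M_{S,i}$ and $M_{S,i}^\dagger$ as Dyson series in the terms containing $i$, using the local coefficient bound \eqref{eq:strategy-local-coefficient}. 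We normalize the absolute coefficients to define the child probabilities, with the phases and weights $Q^{-|\cdot|}$ absorbed into $\alpha$. When $\Gamma$ has no active labels, we apply the pinning identity with $\one\pm tB$, $t=4/Q^2$. The condition $g\le g_{\mathrm{pin}}$ is exactly what keeps the total weight at most $1+\kappa_0$ and preserves $\alpha\le 1/2$. This gives $\tfrac12\sigma\preceq C\preceq\tfrac32\sigma$ at every node and Gaussian leaves $\sigma_z$. The tree has depth $\order(n^2)$: there are $N$ label removals, and at most $N/2$ pinnings between consecutive removals.

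The second component makes the tree finite and computable. The Dyson paths in each $M_{S,i}$ have geometric tails because the normalized weight is at most $\eta_{\mathrm{pin}}<1$. We truncate them at length $\order(\log(n/\epsilon))$ and discretize the time integrals on a grid of polynomial size. The integrand $\ee^{uH_0}V\ee^{-uH_0}$ is computed through the $N\times N$ matrix $\ee^{uK_0}$. Each truncation perturbs the mean operator by a relative trace-norm error of order $\epsilon/n^2$. Because every node's operator is sandwiched by $\sigma$ up to the constant factors $\tfrac12$ and $\tfrac32$, these errors add over the $\order(n^2)$ levels to a total of $\order(\epsilon)$ after normalization.

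The third component is the trace oracle, which I expect to be the main obstacle. We must estimate $\log Z_{S,\sigma}$ to additive $\order(1)$ accuracy. The approach is an interaction-picture expansion in the Gaussian reference state $\propto G\sigma G$, whose covariance and normalization come from transfer matrices. The cumulant coefficients are written via Wick/Pfaffian expansions, and the BKAR forest formula reduces them to spanning-tree sums. The key input is uniform row- and column-sum bounds on the two-point function, which must hold despite nonlocal pinning pairs; this is where $t\le t_0$ enters. I would prove these bounds by treating $\prod(\one\pm t\ii\gamma_a\gamma_b)$ as a perturbation of the maximally mixed state whose Neumann series converges in the $1\to1$ and $\infty\to\infty$ norms, since the pairs are disjoint. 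The Gosselin--Hadamard bound $|\mathrm{Pf}|\le1$ then controls the remaining entries and yields \eqref{eq:cluster} for $g\le g_{\log}$. Truncating the series at order $\order(\log(N/\eta))$ and evaluating each term by sampling the tree structures, or by direct polynomial enumeration, gives the oracle.

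The fourth component is a lazy Metropolis walk on the finite tree. Downward proposals use $p(w\mid v)$ and upward proposals go to the parent, with acceptance $\min\{1,\varphi_w/\varphi_v\}$ where $\varphi$ is the oracle value. The stationary law is proportional to the path-weighted $\varphi$. Its leaf marginal, conditioned on leaves, is within constant factors of $p_zF_z$, and rejection at the leaves corrects these constant factors exactly. Mixing in polynomial time follows from canonical paths or a conductance bound on a tree of polynomial depth whose weights vary by only constant factors between adjacent levels. Restarting yields success probability $1-\epsilon$ with $\order(\log(1/\epsilon))$ overhead. Finally, we output $\tau_z$, computed as a Gaussian covariance matrix, and combine the truncation, oracle, and mixing errors to obtain \eqref{eq:SM-main-alg}.
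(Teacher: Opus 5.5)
Your proposal follows the paper's proof essentially step for step. It uses the soft-pinned tree with path-sampled relative factors $M_{S,i}$ and $t=4/Q^2$, and truncates and discretizes the index-removal laws, with errors accumulated over the $\order(n^2)$ levels. The trace oracle is the BKAR expansion of $\log Z_{S,\sigma}$ around the Gaussian state $\propto G\sigma G$, with a Gram-type Pfaffian bound and uniform row/column-sum bounds on the covariance. Those bounds come from a Neumann series in the pinning perturbation around the $\cM=\varnothing$ covariance $2(\one+\ee^{-\beta K_0})^{-1}$, whose own bounded row sums you still need from the decay of this Fermi function for local $K_0$. The sampler is a lazy Metropolis walk on the tree; your rejection step at the leaves is an equivalent substitute for the paper's choice $\varphi_z=\widetilde F_z$.

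Two points need fixing. First, when $\alpha>0$ the removed label must come from $\supp\Gamma\cap S$, not simply be the largest element of $S$. Otherwise the child retaining $\alpha\Gamma$ gets coefficient $2\alpha$ with no compensating factor $Q\sqrt t=2$ in the weight $t^{\ell/2}Q^{-m}$, and the bound $\alpha\le1/2$ is lost. Second, polynomial mixing and a stationary leaf mass of order at least $1/h$ rest on the (memoized) oracle values $\varphi_v$ being uniformly within a constant factor of the martingale weights $W_v=\sum_wp(w\,|\,v)W_w$. Constant-factor variation between adjacent levels is not enough on its own, since it would allow the leaf mass to be exponentially small in the depth $h$.
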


\subsection{Examples}

As a first example, consider the spin-$1/2$ Fermi--Hubbard model
\begin{align}
 H_0={}&-\sum_{\langle i,j\rangle,\sigma}
 t_{ij,\sigma}
 \left(c_{i\sigma}^\dagger c_{j\sigma}+\mathrm{H.c.}\right)
 -\sum_{i,\sigma}\mu_{i\sigma}n_{i\sigma},
 \nonumber\\
 V={}&g_{\mathrm H}\sum_i
 \left(n_{i\uparrow}-\frac12\right)
 \left(n_{i\downarrow}-\frac12\right).
 \label{eq:SM-Hubbard}
\end{align}
Each site has two fermionic modes, corresponding to four Majorana
operators.  The coefficients $t_{ij,\sigma}$ and $\mu_{i,\sigma}$ may
vary with the sites and bonds but are uniformly bounded independently of
the system size. Thus $H_0$ is a finite-range quadratic Hamiltonian
with bounded local norm.

Each term in $V$ is an even quartic operator supported on one site.
In the normalization $\norm{H_{0,B}}\le1$ used above, its local
interaction strength is \(g=\frac{|g_{\mathrm H}|}{4}.\)
Hence Theorem~\ref{thm:SM-mixture} shows that $\rho_\beta$ is a convex
mixture of fermionic Gaussian states whenever
$|g_{\mathrm H}|\le4g_{\mathrm{mix}}$.
Under the stronger condition
$|g_{\mathrm H}|\le4g_{\mathrm{samp}}$,
Theorem~\ref{thm:SM-sampler} provides a classical Gibbs sampler.

As a second example, consider a Bogoliubov--de Gennes Hamiltonian of
finite range,
\begin{equation}
 H_0=\sum_{i,j}h_{ij}c_i^\dagger c_j
 +\frac12\sum_{i,j}
 \left(
 \Delta_{ij}c_i^\dagger c_j^\dagger+\mathrm{H.c.}
 \right).
 \label{eq:SM-BdG}
\end{equation}
Both the hopping terms proportional to $h_{ij}$ and the pairing terms
proportional to $\Delta_{ij}$ are quadratic, although the latter need
not conserve particle number. 

We assume that $H_0=\sum_BH_{0,B}$ with
$\supp H_{0,B}\subseteq B$ and $\norm{H_{0,B}}\le1$.
The interaction $V=\sum_BV_B$ satisfies
$\supp V_B\subseteq B$ and $\norm{V_B}\le g$, and may contain even
local terms of degree four or higher. Under these assumptions, $\rho_\beta$ is a convex mixture of fermionic
Gaussian states when $g\le g_{\mathrm{mix}}$.
Under the stronger condition $g\le g_{\mathrm{samp}}$, the
corresponding Gaussian mixture can also be sampled classically.

\section{Proof of the Gaussian-mixture theorem}
\label{sec:SM-mixture-proof}

Our goal is to write the Gibbs state as a convex mixture of Gaussian
states.  To do this, we first separate the quadratic part of the
Hamiltonian in a way that leaves a positive operator in the middle.
Set $T=\beta/2$.  Then
\begin{equation}
 \ee^{-\beta H}=GXX^\dagger G,
 \qquad
 G=\ee^{-TH_0},
 \qquad
 X=\ee^{TH_0}\ee^{-TH}.
 \label{eq:SM-half-contour}
\end{equation}
Indeed, $GX=X^\dagger G=\ee^{-TH}$.

The factorization reduces the theorem to the positive middle operator
$XX^\dagger$: since $G$ is Gaussian, a convex decomposition of
$XX^\dagger$ into nonnegative Gaussian operators induces a convex
Gaussian decomposition of the Gibbs state.

\begin{lemma}
\label{lem:SM-Gaussian-sandwich}
Let $G$ be an invertible positive Gaussian operator, and suppose that
a nonzero positive operator $B$ can be written as
\begin{equation}
 B=\sum_z\pi_z\sigma_z,
 \qquad
 \pi_z\ge0,
 \qquad
 \sum_z\pi_z=1,
\end{equation}
where every $\sigma_z$ is a nonnegative Gaussian operator.  Then
\begin{equation}
 \frac{GBG}
 {\Tr(GBG)}
 \in\conv(\cG_n).
\end{equation}
\end{lemma}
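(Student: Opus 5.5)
We prove Lemma~\ref{lem:SM-Gaussian-sandwich} by pushing the convex decomposition of $B$ through the conjugation by $G$ and renormalizing each term. Conjugation gives
\begin{equation*}
 GBG=\sum_z\pi_z\,G\sigma_zG.
\end{equation*}
Each term $G\sigma_zG$ is positive semidefinite, being a congruence of a nonnegative operator by a Hermitian $G$.

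Set $F_z:=\Tr(G\sigma_zG)\ge0$, and let $Z_+:=\{z:\pi_z>0,\ F_z>0\}$. If $F_z=0$, then $G\sigma_zG=0$ because it is positive with zero trace, so such terms drop out of the sum. Since $B\neq0$ is positive and $G$ is invertible, $GBG\neq0$, hence $\Tr(GBG)>0$ and $Z_+$ is nonempty. Define
\begin{equation*}
 q_z:=\frac{\pi_zF_z}{\Tr(GBG)},\qquad \tau_z:=\frac{G\sigma_zG}{F_z},\qquad z\in Z_+ .
\end{equation*}
Then $q_z\ge0$ and $\sum_{z\in Z_+}q_z=1$ by linearity of the trace. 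This yields $GBG/\Tr(GBG)=\sum_{z\in Z_+}q_z\tau_z$, a convex combination of normalized positive operators.

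The one substantive step is showing that each $\tau_z$ lies in $\cG_n$, i.e.\ that $G\sigma G$ is a (possibly rank-deficient) Gaussian operator whenever $G$ is an invertible positive Gaussian and $\sigma$ is a nonnegative Gaussian.

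\emph{Full-rank case.} Write $G=c\,\ee^{\frac14\bm\gamma^{\mathsf T}A\bm\gamma}$ and $\sigma=c'\ee^{\frac14\bm\gamma^{\mathsf T}A'\bm\gamma}$. Exponentials of quadratic Majorana operators form a group; this follows from the correspondence with $SO(2n,\mathbb C)$ and the Baker--Campbell--Hausdorff closure of the quadratic Lie algebra. Hence $G\sigma G$ is again an exponential of a quadratic form times a scalar, and it is positive. So it is a Gaussian operator.

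\emph{Rank-deficient case.} Such a $\sigma$ arises as a limit of full-rank Gaussians $\sigma^{(k)}\to\sigma$; for instance, replace eigenvalues $0$ of the modes by $\epsilon_k\to0$ in the normal-mode form of $\sigma$. Then $G\sigma^{(k)}G\to G\sigma G$. After normalization, the limit lies in $\cG_n$, because $\cG_n$ is closed: Gaussian states are characterized by covariance matrices in the compact set $\{M=-M^{\mathsf T}\ \text{real},\ \ii M\le\one\}$, and the map from covariance to state is continuous. The normalizations converge to $F_z>0$, so no degeneration occurs.

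An alternative route for the Gaussianity step would be to write $G=\sqrt{G}\sqrt{G}$ and verify Wick's theorem directly for $G\sigma G$. The closure and limit argument above is cleaner, and this verification of $\tau_z\in\cG_n$ is the main point of care. Once it holds, $GBG/\Tr(GBG)\in\conv(\cG_n)$.
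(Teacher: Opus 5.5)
Your proof is correct and follows essentially the same route as the paper. You push the decomposition through $G(\cdot)G$, renormalize each term by $\Tr(G\sigma_zG)>0$, treat invertible $\sigma_z$ via closure of quadratic exponentials under products, and reach rank-deficient $\sigma_z$ as limits of invertible ones. Your compactness-of-covariance-matrices argument for the closedness of $\cG_n$ is just a more explicit version of the paper's remark that $\cG_n$ contains its noninvertible limits.
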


{Since $X$ preserves fermion parity, $XX^\dagger$ is a real
linear combination of Hermitian even Majorana monomials.} {The following elementary calculation explains when such an
expansion gives a convex combination of nonnegative Gaussian
operators.  Consider} 
{
\begin{equation}
 Y=c_0\one+\sum_Rc_R\Gamma_R,
 \qquad
 c_0>0,
\end{equation}
}
{where every $\Gamma_R$ is a nonidentity Hermitian even
Majorana monomial and every $c_R$ is real.  If}
{
\begin{equation}
 \sum_R|c_R|\le c_0,
 \label{eq:SM-overview-global-condition}
\end{equation}
}
then
{
\begin{equation}
 Y
 =
 \left(c_0-\sum_R|c_R|\right)\one
 +
 \sum_R|c_R|
 \left(\one+\sgn(c_R)\Gamma_R\right).
 \label{eq:SM-overview-positive-mixture}
\end{equation}
}
{Since $\Gamma_R^2=\one$, every operator
$\one+\sgn(c_R)\Gamma_R$ is nonnegative.  Thus
Eq.~\eqref{eq:SM-overview-positive-mixture} is a nonnegative
combination of nonnegative operators.}

Each $\one\pm\Gamma_R$ is itself a convex combination of
nonnegative Gaussian operators.  To see this, write
{
\begin{equation}
 \Gamma_R=F_1\cdots F_\ell,
\end{equation}
}
{Here the $F_j$ are commuting Hermitian quadratic Majorana
monomials with disjoint supports.  For $s\in\{\pm1\}$,}
{
\begin{equation}
 \one+s\Gamma_R
 =
 2^{1-\ell}
 \sum_{\substack{\eta_1,\ldots,\eta_\ell\in\{\pm1\}\\
                  \eta_1\cdots\eta_\ell=s}}
 \prod_{j=1}^{\ell}
 \left(\one+\eta_jF_j\right).
 \label{eq:SM-overview-monomial-mixture}
\end{equation}
}
{Each factor $\one+\eta_jF_j$ is nonnegative and Gaussian.
Since the factors have disjoint supports, each product on the right is
also nonnegative and Gaussian.  Together with
Eq.~\eqref{eq:SM-overview-positive-mixture}, this expresses $Y$ as a
sum of nonnegative Gaussian operators with nonnegative coefficients.}

{However, the Majorana expansion of $XX^\dagger$ is not
guaranteed to satisfy
Eq.~\eqref{eq:SM-overview-global-condition}.  Instead, writing} 
{\
\begin{equation}
 A(u)=\sum_Rv_R(u)P_R,
\end{equation}
}
{Lemma~\ref{lem:SM-interaction-coefficients} gives a bound for
the terms containing each Majorana operator $\gamma_i$:}
{
\begin{equation}
 \max_i\sum_{R\ni i}|v_R(u)|2^{|R|}
 \le
 2^dC_Vg\,\ee^{dC_Hu}.
 \label{eq:SM-overview-local-coefficient}
\end{equation}
}

{For $S\subseteq[N]$, let $A^S(u)$ contain the monomials of
$A(u)$ supported in $S$, and let $X^S$ denote the corresponding
evolution.  Then}
{
\begin{equation}
 X^{[N]}=X,
 \qquad
 X^\varnothing=\one.
\end{equation}
}
{For $i\in S$, the difference
$A^S(u)-A^{S\setminus\{i\}}(u)$ consists precisely of the monomials
containing $\gamma_i$ and hence satisfies the coefficient bound in
Eq.~\eqref{eq:SM-overview-local-coefficient}.}  {Define the relative factor $M_{S,i}$ by}

{
\begin{equation}
 X^S=X^{S\setminus\{i\}}M_{S,i}.
 \label{eq:SM-overview-relative}
\end{equation}
}
{Although $M_{S,i}$ may contain Majorana monomials of high
degree, the next lemma bounds the weighted sum of its nonidentity
coefficients independently of $S$, $i$, and the system size.}
{
\begin{lemma}
\label{lem:SM-relative-coefficient}
If $g\le g_{\mathrm{mix}}$, there is a constant
$r\le1/12$, independent of $S$, $i$, and the system size, such that
for every $S\subseteq[N]$ and $i\in S$,
\begin{equation}
 M_{S,i}-\one=\sum_Rm_RP_R,
 \qquad
 M_{S,i}^\dagger-\one=\sum_R\widetilde m_RP_R,
\end{equation}
with
\begin{equation}
 \sum_R|m_R|2^{|R|}\le r,
 \qquad
 \sum_R|\widetilde m_R|2^{|R|}\le r.
 \label{eq:SM-overview-ratio-bound}
\end{equation}
\end{lemma}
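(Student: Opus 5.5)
We write $M_{S,i}$ as a time-ordered exponential of a perturbation whose coefficients all contain $\gamma_i$, and then bound its nonidentity coefficients using the weighted norm $\norm{O}_w:=\sum_R|o_R|2^{|R|}$. This norm is submultiplicative up to phases: by Eq.~\eqref{eq:SM-clifford}, $P_RP_T=\omega P_{R\triangle T}$ with $|\omega|=1$ and $|R\triangle T|\le|R|+|T|$, so $\norm{OO'}_w\le\norm O_w\norm{O'}_w$.

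\emph{Step 1: an equation for $M_{S,i}$.} Since $X^S(T)=\cT\exp[-\int_0^TA^S(u)\,\dd u]$ satisfies $\partial_uX^S(u)=X^S(u)\,(-A^S(u))$ with the later-time factor on the right, writing $X^S(u)=X^{S\setminus\{i\}}(u)M(u)$ gives
\[
\partial_uM(u)=-\widetilde A(u)M(u),
\qquad
\widetilde A(u)=\bigl(X^{S\setminus\{i\}}(u)\bigr)^{-1}\Delta(u)\,X^{S\setminus\{i\}}(u),
\]
where $\Delta=A^S-A^{S\setminus\{i\}}$. Hence $M_{S,i}=\cT\exp[-\int_0^T\widetilde A(u)\,\dd u]$. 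Expanding as a Dyson series,
\[
\norm{M_{S,i}-\one}_w\le\exp\Bigl(\int_0^T\norm{\widetilde A(u)}_w\,\dd u\Bigr)-1 .
\]
The adjoint $M_{S,i}^\dagger$ is the reverse-ordered exponential of $-\widetilde A(u)^\dagger$, and the same bound applies to it.

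\emph{Step 2: bound the conjugation.} This is the crux. The $w$-norm of $\Delta(u)$ alone is controlled by Eq.~\eqref{eq:SM-overview-local-coefficient}, since every monomial in $\Delta$ contains $i$:
\[
\norm{\Delta(u)}_w\le 2^dC_Vg\,\ee^{dC_Hu}.
\]
Conjugation by $X^{S\setminus\{i\}}$, however, can spread the support of $\Delta$. Submultiplicativity alone gives $\norm{(X^{S\setminus\{i\}})^{-1}}_w\norm{X^{S\setminus\{i\}}}_w$, which grows with system size, so it is useless here.

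Instead I would use a commutator expansion. Only monomials of $A^{S\setminus\{i\}}$ that overlap $\Delta$ in an odd number of labels fail to commute with it. Conjugation then produces nested commutators $\mathrm{ad}_{A}^k(\Delta)$, and I would bound these by a cluster/tree expansion. For each nested commutator, the new term must contain a label already present; there are at most $|R|$ such labels, and the factor $2^{|R|}$ together with the local bound absorbs this choice.

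This gives a bound of the form
\[
\norm{\mathrm{ad}^k(\Delta)}_w\le k!\,\bigl(c\,gI_\beta\bigr)^k\,\norm\Delta_w .
\]
Integrated over the time simplex, the $k!$ cancels, leaving a geometric series. The condition $2^dC_VgI_\beta\le\eta_{\mathrm{mix}}\le1/d$ keeps this series convergent.

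The main obstacle is exactly this step: producing a weighted-norm bound on the conjugated perturbation that is uniform in the system size. Its naturalness depends on the label-selection counting with weight $2^{|R|}$ being absorbed, which is plausibly why $\eta_{\mathrm{mix}}\le1/d$ appears.

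\emph{Step 3: collect constants.} With $\int_0^T\norm{\widetilde A}_w\le 2\eta_{\mathrm{mix}}\le\log(13/12)$, Step 1 gives
\[
r\le\ee^{2\eta_{\mathrm{mix}}}-1\le\tfrac1{12}.
\]
The choice $\eta_{\mathrm{mix}}\le\tfrac12\log\tfrac{13}{12}$ matches this, so the factor $2$ must come out of the conjugation estimate in Step 2. If the tree bound in Step 2 loses more than this factor, I would fall back on a Lieb–Robinson-type estimate on the coefficient spreading, in the style of the proof of Eq.~\eqref{eq:SM-overview-local-coefficient}.
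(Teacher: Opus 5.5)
\textbf{Gap at the crux (Step~2).} Your architecture is the paper's. You write $M_{S,i}$ as the solution of $\partial_uM=-\widetilde AM$ with $\widetilde A=(X^{S'})^{-1}\Delta X^{S'}$ and $S'=S\setminus\{i\}$, and expand the conjugation into time-ordered nested commutators $\operatorname{ad}_{A^{S'}(t_1)}\cdots\operatorname{ad}_{A^{S'}(t_m)}\Delta(u)$. The degree growth gives $m!$, the time simplex cancels it, a geometric series remains, and the Dyson bound $\exp(\cdot)-1$ finishes. One slip in Step~1: your formula requires the paper's ordering $\partial_uX^S=-A^SX^S$, with later times on the left, as $X(u)=\ee^{uH_0}\ee^{-uH}$ gives. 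With the ordering you state, $M=(X^{S'})^{-1}X^S$ would satisfy $\partial_uM=A^{S'}M-MA^S$ instead.

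Step~2 is the step you flag as the crux, and it is not established. The mechanism you describe is also not the one that makes the constants work. The missing ingredient is a one-commutator estimate with a gain. If $A,Y$ are even and every monomial of $Y$ has degree at most $k$, then
\[
\norm{[A,Y]}_w\le\frac{2k}{4}\,J_2(A)\,\norm{Y}_w,
\qquad
J_2(A)=\max_a\sum_{R\ni a}|a_R|2^{|R|}.
\]
The proof has three parts:
\begin{itemize}
\item $[P_R,P_T]$ is either $0$ or $2P_RP_T$.
\item It is nonzero only if $|R\cap T|$ is odd, so $R\cap T\ne\varnothing$ and $|R\triangle T|\le|R|+|T|-2$.
\item Charging each such pair to a shared label $a\in T$ costs at most $k$.
\end{itemize}
So the choice of shared label is not absorbed by the weight $2^{|R|}$. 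It produces the factor $k\le jd$ at the $j$th commutator, hence the $m!$. The weight's actual role is the factor $2^{-2}$ from the two cancelled copies of the shared label, which outweighs the factor $2$ of the commutator.

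\textbf{How the constants come out.} Iterating at fixed times gives $m!\,(d/2)^m J_2(A(u))\prod_j J_2(A(t_j))$. Your displayed bound with $I_\beta$ mixes this up: $I_\beta$ appears only after integrating over $0<t_1<\cdots<t_m<u<T$. That integration gives $\eta_2^{m+1}/(m+1)!$ with $\eta_2=\int_0^TJ_2(A)\le2^dC_VgI_\beta$. Hence
\[
\int_0^T\norm{\widetilde A}_w\,\dd u\le\frac{\eta_2}{1-d\eta_2/2},
\]
and this is where your missing factor $2$ comes from. The condition $\eta_2\le\eta_{\mathrm{mix}}\le1/d$ makes the ratio $d\eta_2/2\le1/2$, so the integral is at most $2\eta_2\le\log(13/12)$ and $r\le1/12$.

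Without the $2^{-2}$ gain the ratio would be $2d\eta_2$. The hypothesis does not make this $\le1/2$, and for large $d$ not even $<1$, so the stated threshold would not yield $r\le1/12$.

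Your Lieb--Robinson fallback also does not apply. The proof of Eq.~\eqref{eq:SM-overview-local-coefficient} uses that $\ee^{uH_0}$ acts linearly on the Majorana operators. $X^{S'}$ is an interacting, non-unitary imaginary-time propagator, and operator-norm commutator bounds would not control the weighted coefficient norm the lemma needs anyway. The commutator estimate above is what replaces it.
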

}
{The bound $r\le1/12$ will ensure that the operators appearing
in the convex combination remain nonnegative.}
{After some Majorana operators have been removed, the
corresponding term has the form} 
{
\begin{equation}
 \mathcal E_S(C):=X^SC(X^S)^\dagger,
 \label{eq:SM-overview-partial}
\end{equation}
}
{The factors already separated from $X^S$ are collected in
$C$.  We start from}
\begin{equation}
 \mathcal E_{[N]}(\one)=XX^\dagger.
\end{equation}
{For $i\in S$, set $S'=S\setminus\{i\}$.  The relation
$X^S=X^{S'}M_{S,i}$ gives}
{
\begin{equation}
 \mathcal E_S(C)
 =
 X^{S'}M_{S,i}CM_{S,i}^\dagger(X^{S'})^\dagger.
\end{equation}
}
{The next subsection proves that the middle operator can be
written as an exact convex combination}
{
\begin{equation}
 M_{S,i}CM_{S,i}^\dagger
 =
 \sum_kp_kC_k,
 \qquad
 p_k\ge0,
 \qquad
 \sum_kp_k=1,
 \label{eq:SM-overview-local-replacement}
\end{equation}
}
where every $C_k$ is nonnegative and can again be written as 
$(\one+\alpha_k\Gamma_k)\sigma_k$ where $\sigma_k$ denotes a Gaussian state.  Hence
\begin{equation}
 \mathcal E_S(C)
 =
 \sum_kp_k\mathcal E_{S'}(C_k).
\end{equation}

Repeating the above identity until $S=\varnothing$ writes
$XX^\dagger$ as a convex combination of the final operators $C$.
Any remaining factor $\one+\alpha\Gamma$ is also written as a convex
combination of products of commuting quadratic factors.  Hence the
final operators are nonnegative and Gaussian, and
\begin{equation}
XX^\dagger
=
\sum_z\pi_z\sigma_z,
\qquad
\pi_z\ge0,
\qquad
\sum_z\pi_z=1,
\end{equation}
with each $\sigma_z$ Gaussian and nonnegative.

Proposition~\ref{prop:SM-hard-decomposition} derives the Gaussian
mixture from the coefficient bound in
Lemma~\ref{lem:SM-relative-coefficient}.  We prove that bound by
combining the local estimate for $A(u)$ with a commutator expansion.
The Gaussian-sandwich lemma then transfers the decomposition of
$XX^\dagger$ to the normalized Gibbs state.

\subsection{\texorpdfstring{$XX^\dagger$ is a Gaussian mixture}{A Gaussian mixture for XXdagger}}

\begin{proposition}
\label{prop:SM-hard-decomposition}
If $g\le g_{\mathrm{mix}}$, then $XX^\dagger$ is a convex combination
of nonnegative Gaussian operators.
\end{proposition}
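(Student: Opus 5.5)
We prove the proposition by induction along the label-removal recursion described above. The invariant is the following. At a node with active set $S$, the middle operator has the form
\begin{equation*}
 C=(\one+\alpha\Gamma)\sigma .
\end{equation*}
Here $\sigma$ is a product of commuting quadratic factors $\one\pm F_j$ with disjoint supports, all disjoint from $\Gamma$. The operator $\Gamma$ is a Hermitian even Majorana monomial commuting with $\sigma$, and
\begin{equation*}
 0\le\alpha\le 2^{-|\supp\Gamma\cap S|}.
\end{equation*}
The base case is $C=\one$, with $\alpha=0$ and $S=[N]$. Whenever $\Gamma$ has no active label, we apply the splitting identity \eqref{eq:strategy-hard-pinning} repeatedly. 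Each application moves one quadratic pair of $\Gamma$ into $\sigma$ and leaves $\alpha$ unchanged, until $\Gamma=\one$ or $\Gamma$ again contains an active label. At $S=\varnothing$, every remaining $\one+\alpha\Gamma$ is split completely by the same identity, as in \eqref{eq:SM-overview-monomial-mixture}. Each leaf is then a product of disjoint commuting factors $\one\pm F_j$, which is a nonnegative Gaussian operator.

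The inductive step expands $M_{S,i}CM_{S,i}^\dagger$ using Lemma~\ref{lem:SM-relative-coefficient}:
\begin{equation*}
 M_{S,i}CM_{S,i}^\dagger=\sum_{R,T}m_R\widetilde m_T^{*}\,P_R(\one+\alpha\Gamma)\sigma P_T,
\end{equation*}
where $P_\varnothing=\one$ has coefficient $1$. Each term is a product of Majorana monomials with the quadratic factors of $\sigma$. Moving $P_T$ through $\sigma$ flips the signs of those $F_j$ whose supports meet $T$ an odd number of times. The result is a monomial times a $\sigma'$ of the same type. Because $C_k$ must be Hermitian, we combine each $(R,T)$ term with its adjoint $(T,R)$ term. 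Each pair then becomes a real multiple of $\sigma'$ times a Hermitian monomial, or of a Hermitian combination $\tfrac12(\Gamma'\sigma'+\sigma'\Gamma')$. When $\Gamma'$ does not commute with $\sigma'$, we first split off the offending quadratic factors: the identity $\one\pm F=\tfrac12(\one\pm F)\cdot2$, combined with $F\Gamma'=\pm\Gamma' F$, lets us project onto sectors where the ordering is immaterial. Then, as in \eqref{eq:SM-overview-positive-mixture}, we reallocate weight. The identity-like part, which has coefficient close to $1$, is paired with each nonidentity term into $(\one\pm\alpha_k\Gamma_k)\sigma_k$ with weights $p_k$.

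We choose $p_k$ proportional to $2^{|R|+|T|}|m_R\widetilde m_T|$ times the inherited factor. Then
\begin{equation*}
 \alpha_k\lesssim 2^{-|R|-|T|}\,2^{-|\supp\Gamma\cap S|}.
\end{equation*}
A new monomial $P_R\Gamma P_T$ has at most $|R|+|T|+|\supp\Gamma\cap S|$ active labels, so the required bound $\alpha_k\le2^{-|\supp\Gamma_k\cap S'|}$ holds. The total nonidentity weight is at most $(1+r)^2-1\le 25/144$, using $r\le1/12$ together with the factor $1+\alpha\le 2$ from the current $\Gamma$. This leaves a positive residual weight on the term $(\one+\alpha\Gamma)\sigma$ with the same $\Gamma$. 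That weight is needed to keep the $p_k$ a genuine convex combination, and the budget $\eta_{\mathrm{mix}}$ with $\tfrac12\log\tfrac{13}{12}$ in $g_{\mathrm{mix}}$ is presumably tuned exactly for it.

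\textbf{Main obstacle.} The main obstacle is the bookkeeping in this step. The cross terms $\alpha\,m_R\widetilde m_T^{*}P_R\Gamma P_T$ must be absorbed while keeping $\alpha\le1$, Hermiticity, and commutation with $\sigma$. The case $R=\varnothing,T\neq\varnothing$ is delicate because the normalization must not exceed $1$: the constant coefficient of $M_{S,i}CM_{S,i}^\dagger$ relative to $\sigma$ can drift from $1$. We handle this by renormalizing by that constant. Its deviation is at most $O(r)$, and we rescale all $\alpha_k$ by at most the factor $12/11$, which the halving margins in $2^{-|R|}$ absorb.
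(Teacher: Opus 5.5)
There is a genuine gap in your inductive step. It sits in the inherited term $\alpha\Gamma$, not in the new monomials.

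You put the residual weight $p_0=1-\sum_kp_k<1$ on a child with the same $\Gamma$. For the average to reproduce $\alpha\Gamma\sigma$, that child must be $(\one+(\alpha/p_0)\Gamma)\sigma$, so its coefficient is inflated, not unchanged. With your allocation, $\sum_kp_k\le(1+\alpha2^m)\bigl((1+r)^2-1\bigr)\le 25/72$; the factor $1+\alpha2^m\le2$ doubles your $25/144$. The inflation factor can therefore reach $72/47$.

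Nothing prevents $\alpha=2^{-m}$. A child created from $P_RP_T$ with $R\cap T=\varnothing$ and $i\notin R\cup T$ can receive exactly $\alpha_k=2^{-|R|-|T|}=2^{-m_k}$. If the next removed label is not in $\supp\Gamma$, then $|\supp\Gamma\cap S'|=|\supp\Gamma\cap S|$, and the inflated coefficient violates $\alpha\le2^{-|\supp\Gamma\cap S'|}$. Repeated removals of such labels would push it above $1$ and destroy positivity.

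Your proposal never specifies which label is removed, and that choice is the missing idea. The paper removes, whenever $\alpha>0$, a label $i\in\supp\Gamma\cap S$. Then $|\supp\Gamma\cap S'|=m-1$, so the admissible coefficient for the inherited monomial doubles; there $\alpha\Gamma$ is selected with probability $1/2$, giving $2\alpha\le2^{-(m-1)}$. Labels outside $\supp\Gamma$ are removed only when $\alpha=0$, i.e.\ after $\Gamma$ has lost all active labels and been split down to a scalar absorbed into the prefactor of $\sigma$. With this rule, and the scalar treatment below, your collected-coefficient weights would also close the induction, since $72/47<2$.

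Two secondary points. First, your remedy for the drift of the identity coefficient does not work as stated. The new-term bounds $2^{-|R|-|T|}$ are tight against $2^{-m_k}$ in the case just described, so there is no halving margin to absorb a factor $12/11$. The inherited term carries no factor $2^{-|R|}$ at all. No renormalization is needed, though: a product $P_RP_T$ or $P_R\Gamma P_T$ proportional to $\one$ can be kept as its own child with $\Gamma_k=\pm\one$ and $\alpha_k\le1$. Every child then has identity coefficient one, as in the paper.

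Second, the sign flips and the non-commuting case never occur. If they did, your sector projection would not rescue positivity: Hermitizing $(\one+\alpha\Gamma')(\one+F)$ with $\Gamma'F=-F\Gamma'$ gives $(\one+F)+\alpha\Gamma'$, which is not positive for $\alpha>0$. They do not occur for three reasons. Every pinned pair lies in $S^c$, because it is split off only when $\Gamma$ has no active label. $M_{S,i}$ and $M_{S,i}^\dagger$ are supported in $S$. $\Gamma$ stays disjoint from the pinned pairs. Hence $\sigma$ factors out, $M_{S,i}CM_{S,i}^\dagger=M_{S,i}(\one+\alpha\Gamma)M_{S,i}^\dagger\sigma$, exactly as the paper uses.
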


\begin{proof}

Assume Lemma~\ref{lem:SM-relative-coefficient}.  We recursively remove
the interaction terms from $X^S$ while preserving a nonnegative middle
operator.  This will give
\begin{equation}
 XX^\dagger=\bbE_z\,\sigma_z,
 \label{eq:SM-hard-goal}
\end{equation}
for a distribution over nonnegative Gaussian operators $\sigma_z$.

For $S\subseteq[N]$, write
\begin{equation}
 \mathcal E_S(C)=X^SC(X^S)^\dagger,
 \label{eq:SM-hard-partial}
\end{equation}
and maintain
\begin{equation}
 C=(\one+\alpha\Gamma)\sigma,
 \label{eq:SM-hard-state}
\end{equation}
where $\alpha\ge0$ and $\Gamma$ is a Hermitian even Majorana monomial.
The Gaussian part has the form
\begin{equation}
 \sigma
 =
 c\prod_{e=(a,b)\in\cM}
 \left(\one+s_e\ii\gamma_a\gamma_b\right),
 \qquad
 c\ge0,
 \qquad
 s_e\in\{\pm1\},
 \label{eq:SM-hard-sigma}
\end{equation}
where $\cM$ is a collection of disjoint pairs in $S^c$, and $\Gamma$
is disjoint from these pairs.  The quadratic factors are commuting and
nonnegative, so $\sigma$ is a nonnegative Gaussian operator.  Moreover,
$\sigma$ commutes with $M_{S,i}$ because their supports lie in $S^c$
and $S$, respectively.

At the root, $S=[N]$ and $C=\one$, so
\begin{equation}
 \mathcal E_{[N]}(\one)=XX^\dagger.
\end{equation}

{For $i\in S$, set}
\begin{equation}
 S'=S\setminus\{i\},
 \qquad
 M=(X^{S'})^{-1}X^S.
\end{equation}

The identity $X^S=X^{S'}M$ confines the difference between
the two partial evolutions to the middle operator:
\begin{equation}
 \mathcal E_S(C)
 =
 X^{S'}MCM^\dagger(X^{S'})^\dagger.
\end{equation}
It therefore suffices to write
\begin{equation}
 MCM^\dagger=\sum_kp_kC_k,
 \qquad
 p_k\ge0,
 \qquad
 \sum_kp_k=1,
 \label{eq:SM-hard-local-replacement}
\end{equation}
where every $C_k$ has the form in
Eq.~\eqref{eq:SM-hard-state}.  Indeed,
\begin{equation}
 \mathcal E_S(C)
 =\sum_kp_k\mathcal E_{S'}(C_k).
\end{equation}
Thus one term in the convex combination is replaced by another convex
combination with the same value.

We now construct the local convex identity in
Eq.~\eqref{eq:SM-hard-local-replacement} while keeping every $C_k$
nonnegative and in the form required for the next removal.

Since $M$ is supported in $S$, it commutes with $\sigma$.  Hence
\begin{equation}
 MCM^\dagger
 =
 M(\one+\alpha\Gamma)M^\dagger\sigma.
\end{equation}

The coefficient bound represents the nonidentity part of $M$
as the expectation of a single Majorana monomial.  Write
\begin{equation}
 M-\one=\sum_Rm_RP_R,
 \qquad
 \sum_R|m_R|2^{|R|}\le r.
\end{equation}
For every $R$ with $m_R\ne0$, set
\begin{equation}
 p_R=\frac{|m_R|2^{|R|}}{r}.
\end{equation}
Since $\sum_Rp_R\le1$, define a random variable $Z$ as follows.  With
probability $p_R$, set
\begin{equation}
 Z=\frac{m_R}{|m_R|}\,r2^{-|R|}P_R,
\end{equation}
and set $Z=0$ with the remaining probability.  Its expectation is
\begin{align}
 \bbE Z
 &=
 \sum_R
 \frac{|m_R|2^{|R|}}{r}
 \frac{m_R}{|m_R|}\,
 r2^{-|R|}P_R
 \nonumber\\
 &=\sum_Rm_RP_R
 =M-\one.
\end{align}
Thus the coefficients are redistributed among the terms of
the average, but none of them is discarded.  Applying the same
definition independently to $M^\dagger-\one$ gives random variables
$Z_1$ and $Z_2$ satisfying
\begin{equation}
 \bbE(\one+Z_1)=M,
 \qquad
 \bbE(\one+Z_2)=M^\dagger.
\end{equation}
Then
\begin{equation}
 M(\one+\alpha\Gamma)M^\dagger
 =
 \bbE_{Z_1,Z_2}
 \left[
 (\one+Z_1)(\one+\alpha\Gamma)(\one+Z_2)
 \right].
 \label{eq:SM-hard-random-product}
\end{equation}
For fixed $Z_1$ and $Z_2$, write
\begin{equation}
 (\one+Z_1)(\one+\alpha\Gamma)(\one+Z_2)
 =
 \one+\sum_{k=1}^7T_k,
\end{equation}
where
\begin{align}
 T_1&=\alpha\Gamma,
 &
 T_2&=Z_1,
 &
 T_3&=Z_2,
 \nonumber\\
 T_4&=Z_1\alpha\Gamma,
 &
 T_5&=\alpha\Gamma Z_2,
 &
 T_6&=Z_1Z_2,
 &
 T_7&=Z_1\alpha\Gamma Z_2.
 \label{eq:SM-hard-seven-terms}
\end{align}
Choose probabilities $p_1,\ldots,p_7>0$ satisfying
$\sum_{k=1}^7p_k=1$, and define
\begin{equation}
 H_k=\operatorname{Herm}(p_k^{-1}T_k),
 \qquad
 \operatorname{Herm}(Y)=\frac{Y+Y^\dagger}{2}.
 \label{eq:SM-hard-HK}
\end{equation}
The factor $p_k^{-1}$ compensates for selecting the $k$th term with
probability $p_k$. Each $H_k$ is either zero or a real multiple of a Hermitian Majorana monomial.

For each $k$, set
\begin{equation}
 C_k=(\one+H_k)\sigma.
\end{equation}
For fixed $Z_1$ and $Z_2$, these operators satisfy
\begin{align}
 \sum_{k=1}^7p_kC_k
 &=
 \left[
 \one+\sum_{k=1}^7\operatorname{Herm}(T_k)
 \right]\sigma
 \nonumber\\
 &=
 \operatorname{Herm}\left[
 (\one+Z_1)(\one+\alpha\Gamma)(\one+Z_2)
 \right]\sigma.
 \label{eq:SM-hard-seven-mean}
\end{align}
Averaging over the independent variables $Z_1$ and $Z_2$ gives
\begin{align}
 \bbE_{Z_1,Z_2}
 \left[
 \sum_{k=1}^7p_kC_k
 \right]
 &=
 \operatorname{Herm}\left[
 M(\one+\alpha\Gamma)M^\dagger
 \right]\sigma
 \nonumber\\
 &=
 M(\one+\alpha\Gamma)M^\dagger\sigma
 \nonumber\\
 &=MCM^\dagger.
 \label{eq:SM-hard-child-mean}
\end{align}

The identity above holds for any positive probabilities
$p_1,\ldots,p_7$ whose sum is one.  The individual operators $C_k$
need not be nonnegative for an arbitrary choice, because the
coefficient of $T_k$ in $H_k$ is multiplied by $p_k^{-1}$.  The
following lemma chooses the probabilities and the removed label so
that every $C_k$ remains nonnegative and can be used at the next
removal.

For $\alpha>0$, let
$m=|\supp\Gamma\cap S|$.  The condition used throughout the removal is
$\alpha\le2^{-m}$.  It implies $\alpha\le1$, which makes
$(\one+\alpha\Gamma)\sigma$ nonnegative.  Its dependence on $m$ is
needed because a selected monomial of degree $s$ can introduce at most
$s$ labels into $\supp\Gamma\cap S$, while its coefficient contains the
factor $2^{-s}$.

There are three cases.  If $\alpha=0$, there is no remaining
monomial.  If $\alpha>0$ and $m>0$, we remove a label in
$\supp\Gamma\cap S$.  If $\alpha>0$ and $m=0$, then $\Gamma$ is
supported in $S^c$ and we first separate quadratic factors from it.
The following lemma handles the first two cases.

\begin{lemma}
\label{lem:SM-hard-invariant}
Let $C=(\one+\alpha\Gamma)\sigma$.  When $\alpha>0$, let
$m=|\supp\Gamma\cap S|$ and assume that
\begin{equation}
 m>0,
 \qquad
 \alpha\le2^{-m}.
 \label{eq:SM-hard-invariant}
\end{equation}
No condition on $\Gamma$ is needed when $\alpha=0$.

Choose $i\in\supp\Gamma\cap S$ when $\alpha>0$, and choose any
$i\in S$ when $\alpha=0$.  Set $S'=S\setminus\{i\}$ and take
\begin{equation}
 p_1=\frac12,
 \qquad
 p_2=\cdots=p_7=\frac1{12}.
 \label{eq:SM-hard-probabilities}
\end{equation}

For every realization of $Z_1,Z_2$ and every $k$, the operator $C_k$
has the form required above with $S$ replaced by $S'$.  If $H_k$ is
nonzero, write $H_k=\alpha_k\Gamma_k$, where $\alpha_k\ge0$, and let
$m_k=|\supp\Gamma_k\cap S'|$.  Then
\begin{equation}
 \alpha_k\le2^{-m_k}.
 \label{eq:SM-hard-child-invariant}
\end{equation}
If $H_k=0$, then $C_k=\sigma$ is nonnegative.  Otherwise,
Eq.~\eqref{eq:SM-hard-child-invariant} gives $\alpha_k\le1$.
Since $\Gamma_k^2=\one$ and $\Gamma_k$ commutes with $\sigma$, this
also implies that $C_k=(\one+\alpha_k\Gamma_k)\sigma$ is nonnegative.
\end{lemma}
\begin{proof}
Every Majorana monomial in $Z_1$ or $Z_2$ is supported in $S$.
The quadratic factors in $\sigma$ are supported in $S^c$, and
$\Gamma$ is disjoint from these factors.  Hence every $H_k$ commutes
with $\sigma$ and is disjoint from its quadratic factors.  Moreover,
$S^c\subseteq(S')^c$, so $\sigma$ has the required form after $S$ is
replaced by $S'$.

For a nonzero $Z_j$, write
\begin{equation}
 Z_j=\zeta_j r2^{-s_j}P_{R_j},
 \qquad
 s_j=|R_j|,
 \qquad
 |\zeta_j|=1.
\end{equation}
We use $r\le1/12$.  Taking the Hermitian part replaces the coefficient
of a Majorana monomial by its real part and therefore cannot increase
its absolute value.  It is thus enough to bound the coefficients of
$p_k^{-1}T_k$.

Suppose first that $\alpha>0$.  For $T_1=\alpha\Gamma$, we have
\begin{equation}
 H_1=2\alpha\Gamma.
\end{equation}
The chosen label $i$ belongs to $\supp\Gamma\cap S$.  Hence $\Gamma$
has $m-1$ labels in $S'$, and
\begin{equation}
 2\alpha
 \le
 2^{-(m-1)}.
\end{equation}
Thus Eq.~\eqref{eq:SM-hard-child-invariant} holds for $k=1$.

Consider next $T_2$ and $T_3$.  If the selected monomial has degree
$s$, its coefficient after division by $p_k=1/12$ is at most
\begin{equation}
 12r2^{-s}\le2^{-s}.
\end{equation}
Its support contains at most $s$ labels in $S'$, so
$m_k\le s$ and
\begin{equation}
 \alpha_k\le2^{-s}\le2^{-m_k}.
\end{equation}

For $T_4$ and $T_5$, the coefficient is at most
\begin{equation}
 12r\alpha2^{-s}\le\alpha2^{-s}.
\end{equation}
After $i$ is removed, $\Gamma$ has $m-1$ labels in $S'$.  Multiplication
by a monomial of degree $s$ gives
\begin{equation}
 m_k\le m-1+s.
\end{equation}
Using $\alpha\le2^{-m}$, we obtain
\begin{equation}
 \alpha_k
 \le
 \alpha2^{-s}
 \le
 2^{-(m+s)}
 \le
 2^{-m_k}.
\end{equation}

For $T_6$, let $s_1$ and $s_2$ be the degrees of the monomials in
$Z_1$ and $Z_2$.  Its coefficient is at most
\begin{equation}
 12r^2 2^{-(s_1+s_2)}
 \le
 2^{-(s_1+s_2)}.
\end{equation}
Since $m_k\le s_1+s_2$, this gives
$\alpha_k\le2^{-m_k}$.  The same argument for $T_7$ gives
\begin{equation}
 \alpha_k
 \le
 12r^2\alpha2^{-(s_1+s_2)}
 \le
 \alpha2^{-(s_1+s_2)}
 \le
 2^{-m_k},
\end{equation}
where
\begin{equation}
 m_k\le m-1+s_1+s_2.
\end{equation}

If $\alpha=0$, the terms $T_1,T_4,T_5$, and $T_7$ vanish.  The
arguments for $T_2,T_3$, and $T_6$ do not depend on the choice of
$i$, and therefore remain valid for any $i\in S$.

We have proved Eq.~\eqref{eq:SM-hard-child-invariant} in every case.
For a nonzero $H_k$, it gives $\alpha_k\le2^{-m_k}\le1$.  Hence
$\one+\alpha_k\Gamma_k$ is nonnegative.  This factor commutes with the
nonnegative operator $\sigma$, so $C_k$ is nonnegative as well.
\end{proof}
The lemma removes labels from $\supp\Gamma\cap S$.  Once this set is
empty, $\Gamma$ is supported in $S^c$.  If $\alpha>0$ and $\Gamma$ is
not a scalar, choose
two labels $a,b\in\supp\Gamma$ and write
\begin{equation}
 \Gamma=\Gamma^-Q,
 \qquad
 Q=\ii\gamma_a\gamma_b,
\end{equation}
where $\Gamma^-$ and $Q$ are disjoint commuting Hermitian monomials.
The following identity is the one-pair form of
Eq.~\eqref{eq:SM-overview-monomial-mixture}.
We use the identity
\begin{equation}
 \one+\alpha\Gamma^-Q
 =
 \frac12(\one+\alpha\Gamma^-)(\one+Q)
 +
 \frac12(\one-\alpha\Gamma^-)(\one-Q).
 \label{eq:SM-hard-split}
\end{equation}
Since $m=0$, the condition $\alpha\le2^{-m}$ gives
$\alpha\le1$.  The four factors on the right hand side are therefore
nonnegative.  Each product is also nonnegative because its two factors
commute.

Using Eq.~\eqref{eq:SM-hard-split}, replace $C$ by either
\begin{align}
 C_+&=(\one+\alpha\Gamma^-)(\one+Q)\sigma,
 \nonumber\\
 C_-&=(\one-\alpha\Gamma^-)(\one-Q)\sigma,
\end{align}
each with probability $1/2$.  Their average is $C$.  In the first
case, include $\one+Q$ in $\sigma$ and replace $\Gamma$ by
$\Gamma^-$.  In the second case, include $\one-Q$ in $\sigma$ and
replace $\Gamma$ by $-\Gamma^-$.  The coefficient $\alpha$ is
unchanged.

The labels $a$ and $b$ lie in $S^c$ and do not occur in
$\Gamma^-$.  Thus the new quadratic factor is disjoint from the
factors already present in $\sigma$, and the remaining Majorana
monomial has degree two less than $\Gamma$.  Repeating this procedure
eventually reduces $\Gamma$ to $\pm\one$.  The remaining factor is then
the nonnegative scalar $1\pm\alpha$, which is absorbed into $c$.

Starting from $S=[N]$ and $C=\one$, so that
$\mathcal E_{[N]}(\one)=XX^\dagger$, we apply the two identities above
until no label remains in $S$.  When $\alpha=0$, remove any
$i\in S$.  When $\alpha>0$ and
$\supp\Gamma\cap S\ne\varnothing$, remove a label in this
intersection and apply Lemma~\ref{lem:SM-hard-invariant}.  When
$\alpha>0$ and $\supp\Gamma\cap S=\varnothing$, first use
Eq.~\eqref{eq:SM-hard-split} to separate all remaining quadratic
factors from $\Gamma$, and then continue removing labels.
A label removal decreases $|S|$ by one, while each use of
Eq.~\eqref{eq:SM-hard-split} decreases the degree of $\Gamma$ by two.
Moreover, every quadratic factor separated from $\Gamma$ is supported
on a pair of labels that does not occur in $\sigma$ or in the remaining
monomial.  Thus all labels are removed after finitely many steps, and
the final operator has the form $C=\sigma$ with $S=\varnothing$.

For a label removal, we have already shown that
\begin{equation}
 \mathcal E_S(C)
 =\sum_kp_k\mathcal E_{S'}(C_k).
\end{equation}
For a quadratic separation, Eq.~\eqref{eq:SM-hard-split} gives
\begin{equation}
 C=\frac12C_++\frac12C_-,
\end{equation}
and hence
\begin{equation}
 \mathcal E_S(C)
 =
 \frac12\mathcal E_S(C_+)
 +\frac12\mathcal E_S(C_-).
\end{equation}
Applying these identities until $S=\varnothing$ expresses
$\mathcal E_{[N]}(\one)=XX^\dagger$ as
\begin{equation}
 XX^\dagger
 =
 \sum_z\pi_z\sigma_z,
 \qquad
 \pi_z\ge0,
 \qquad
 \sum_z\pi_z=1.
 \label{eq:SM-hard-mean}
\end{equation}
Every nonzero $\sigma_z$ is a nonnegative product of commuting
quadratic factors and is therefore Gaussian.  This proves the desired
convex decomposition of $XX^\dagger$.
\end{proof}

We have proved Proposition~\ref{prop:SM-hard-decomposition}
assuming Lemma~\ref{lem:SM-relative-coefficient}.  The next two
subsections prove this lemma, beginning with the coefficient bound for
$A(u)$.

\subsection{\texorpdfstring{{Coefficient bound for $A(u)$}}{Coefficient bound for A(u)}}
 Define
\begin{equation}
 X(u)=\ee^{uH_0}\ee^{-uH},
 \qquad
 A(u)=\ee^{uH_0}V\ee^{-uH_0},
 \qquad
 0\le u\le T.
 \label{eq:SM-Xu-Au}
\end{equation}
When $V=0$, one has $X(u)=\one$, so $X(u)$ describes the change in
the imaginary-time evolution due to the interaction.  Direct
differentiation gives
\begin{equation}
 \frac{\dd}{\dd u}X(u)=-A(u)X(u),
 \qquad
 X(0)=\one.
\end{equation}
Therefore,
\begin{equation}
 X=X(T)
 =
 \cT\exp\left[-\int_0^T A(u)\,\dd u\right].
 \label{eq:SM-A}
\end{equation}

\par
Quadratic conjugation generally spreads a local interaction
monomial across the system.  The coefficient weight through any fixed
Majorana operator nevertheless grows by a factor independent of the
system size.  We begin with the evolution of a single Majorana
operator. Since $H_0$ is quadratic,
\begin{equation}
 [H_0,\gamma_a]
 =
 -\sum_b(K_0)_{ab}\gamma_b,
\end{equation}
and therefore
\begin{equation}
 \ee^{uH_0}\gamma_a\ee^{-uH_0}
 =
 \sum_bQ(u)_{ab}\gamma_b,
 \qquad
 Q(u)=\ee^{-uK_0}.
 \label{eq:SM-Q}
\end{equation}
Thus, conjugation by $\ee^{uH_0}$ may spread a Majorana monomial over
the lattice, but it does not change its degree. Since $Q(u)=\ee^{-uK_0}$, the submultiplicativity of the induced norms
gives
\begin{equation}
 \max\left\{
 \norm{Q(u)}_{1\to1},
 \norm{Q(u)}_{\infty\to\infty}
 \right\}
 \le \ee^{C_Hu}.
 \label{eq:SM-Qbound}
\end{equation}

The bound above describes the spreading of a single Majorana operator.
The row sums control the expansion of a fixed $\gamma_a$,
while the column sums enter when the terms containing a fixed
$\gamma_b$ are collected.
To apply it to the Majorana expansion of $A(u)$, we use the following
weighted coefficient sums.  For $q\ge1$ and an operator
$Y=\sum_R y_RP_R$, define
\begin{equation}
 L_q(Y)
 :=
 \sum_R|y_R|q^{|R|},
 \qquad
 J_q(Y)
 :=
 \max_a\sum_{R\ni a}|y_R|q^{|R|}.
 \label{eq:SM-Lq}
\end{equation}
\begin{lemma}
\label{lem:SM-interaction-coefficients}
For every $q\ge1$ and $0\le u\le T$,
\begin{equation}
 J_q(A(u))
 \le
 q^dC_Vg\,\ee^{dC_Hu}.
 \label{eq:SM-Jbound}
\end{equation}
In particular, the bound is independent of the system size.
\end{lemma}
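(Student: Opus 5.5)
We will prove the bound term by term in the local decomposition $V=\sum_BV_B$, using linearity of conjugation and the triangle inequality for the weighted sums $J_q$. Fix a local term $V_B$ and expand it in the Majorana basis, $V_B=\sum_{R\subseteq B}v_{B,R}P_R$. By Eq.~\eqref{eq:SM-coefficient-bound}, each $|v_{B,R}|\le g$, and there are at most $2^d$ monomials, each of degree $|R|\le d$. For a single monomial $P_R=\ii^{|R|/2}\gamma_{r_1}\cdots\gamma_{r_s}$, conjugation and Eq.~\eqref{eq:SM-Q} give
\begin{equation*}
 \ee^{uH_0}P_R\ee^{-uH_0}
 =\ii^{s/2}\sum_{b_1,\ldots,b_s}Q(u)_{r_1b_1}\cdots Q(u)_{r_sb_s}\,
 \gamma_{b_1}\cdots\gamma_{b_s}.
\end{equation*}
Each product $\gamma_{b_1}\cdots\gamma_{b_s}$ is, up to a phase, a Majorana monomial $P_{R'}$ with $R'\subseteq\{b_1,\ldots,b_s\}$, so $|R'|\le s$. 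Repeated indices only reduce the degree. Hence $q^{|R'|}\le q^s\le q^d$ for $q\ge1$, and the weight of each expansion term is bounded by the corresponding product $|Q_{r_1b_1}|\cdots|Q_{r_sb_s}|$.

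Next we bound $\sum_{R'\ni a}|\cdot|$ for a fixed label $a$. A term of the expansion contributes to some $R'\ni a$ only if $b_j=a$ for some $j$. Using the union bound over $j$, the contribution is at most
\begin{equation*}
 q^d\sum_{j=1}^s|Q(u)_{r_ja}|\prod_{l\ne j}\sum_{b}|Q(u)_{r_lb}|
 \le q^d\,\ee^{(s-1)C_Hu}\sum_{j}|Q(u)_{r_ja}|,
\end{equation*}
by the row-sum bound in Eq.~\eqref{eq:SM-Qbound}. Summing over all local terms $B$ and over the monomials $R\subseteq B$, the factor $\sum_j|Q(u)_{r_ja}|$ turns into a sum over original labels $c$ of $|Q(u)_{ca}|$, weighted by $\sum_{R\ni c}|v_R|$. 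The latter is at most $C_Vg$ by Eq.~\eqref{eq:SM-V-local-bound}, understood as a bound on the sum over all $B\ni c$ of $\sum_{R\ni c}|v_{B,R}|$, which is how that constant was derived. The remaining sum $\sum_c|Q(u)_{ca}|$ is a column sum, which is at most $\ee^{C_Hu}$. Combining these gives $q^dC_Vg\,\ee^{sC_Hu}\le q^dC_Vg\,\ee^{dC_Hu}$.

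The bookkeeping step is the main thing to get right. We must not use the coefficients of the summed operator $A(u)$ directly, since cancellations between terms could only help but the triangle inequality goes the other way. Instead we bound $J_q(A(u))\le\sum_B\sum_R|v_{B,R}|\,J_q(\ee^{uH_0}P_R\ee^{-uH_0})$ in the refined, label-resolved form above. This requires $C_V$ to control the per-label sum over all local pieces rather than $V$'s merged coefficients. If one insists on the merged coefficients $v_R$ of $V$, the argument is identical with $\sum_{R\ni c}|v_R|\le C_Vg$ applied directly. The subtlety of the union bound over $j$ (overcounting when several $b_j=a$) is harmless, since it only gives an upper bound.
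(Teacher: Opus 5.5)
Your argument is correct and is essentially the paper's proof: you expand the conjugated monomial over all output tuples, whereas the paper uses the determinant expansion over injective maps and then drops injectivity, which yields the identical bound. As in the paper, you then fix the output label $a$, bound the other $|R|-1$ factors by row sums $\le\ee^{C_Hu}$, and close with the column sum $\sum_c|Q(u)_{ca}|\le\ee^{C_Hu}$ weighted by $\sum_{R\ni c}|v_R|\le C_Vg$. One caution about your last paragraph: the displayed inequality there, read literally with $J_q$ (a maximum over labels) inside the sum over $B,R$, is true but would give an $n$-dependent bound, so the label $a$ must be fixed before summing over local terms and the maximum taken last, exactly as you did in your main computation.
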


\begin{proof}
To estimate $J_q(A(u))$, consider first the evolution of a single
Majorana monomial.  Since conjugation by $\ee^{uH_0}$ acts linearly
on the Majorana operators,
\begin{equation}
 \ee^{uH_0}P_R\ee^{-uH_0}
 =
 \sum_{\substack{S\subseteq[N]\\ |S|=|R|}}
 \det Q(u)_{R,S}\,P_S,
 \label{eq:SM-exterior-power}
\end{equation}
where $Q(u)_{R,S}$ denotes the corresponding submatrix of $Q(u)$.
Thus, the evolution may change the support of a Majorana monomial,
but not its degree.

We next bound the coefficients of $A(u)$.  Expanding the determinant
in Eq.~\eqref{eq:SM-exterior-power}, we have
\begin{equation}
 \det Q(u)_{R,S}
 =
 \sum_{\pi:R\to S}
 \sgn(\pi)\prod_{a\in R}Q(u)_{a,\pi(a)},
 \label{eq:SM-determinant-expansion}
\end{equation}
where the sum is over bijections from $R$ to $S$.  Fix an output
Majorana operator $\gamma_b$.  In every bijection
$\pi:R\to S$ with $b\in S$, there is a unique $a\in R$ such that
$\pi(a)=b$.  We first choose this $a$ and then sum over the possible
outputs of the remaining elements of $R$.  Allowing these remaining
outputs to be chosen independently gives the upper bound
\begin{equation}
 \sum_{\substack{S\ni b\\ |S|=|R|}}
 \left|\det Q(u)_{R,S}\right|
 \le
 \sum_{a\in R}
 \left[
 |Q(u)_{ab}|
 \prod_{c\in R\setminus\{a\}}
 \left(
 \sum_j|Q(u)_{cj}|
 \right)
 \right].
 \label{eq:SM-minor-bound}
\end{equation}
The right-hand side may also count choices in which two elements of
$R$ have the same output, but this only enlarges the sum.

Write the Majorana expansion of $A(u)$ as
\begin{equation}
 A(u)=\sum_Sv_S(u)P_S.
\end{equation}
Equation~\eqref{eq:SM-exterior-power} gives
\begin{equation}
 v_S(u)
 =
 \sum_{\substack{R\subseteq[N]\\ |R|=|S|}}
 v_R\det Q(u)_{R,S}.
\end{equation}
Using the triangle inequality and
Eq.~\eqref{eq:SM-minor-bound}, we obtain
\begin{align}
 J_q(A(u))
 &\le
 \max_b
 \sum_R |v_R|q^{|R|}
 \sum_{a\in R}
 \left[
 |Q(u)_{ab}|
 \prod_{c\in R\setminus\{a\}}
 \left(
 \sum_j|Q(u)_{cj}|
 \right)
 \right].
 \label{eq:SM-Jbound-intermediate}
\end{align}

Every monomial in $V$ has degree at most $d$.  Hence
$q^{|R|}\le q^d$, and Eq.~\eqref{eq:SM-Qbound} bounds each of the
$|R|-1$ row sums by $\ee^{C_Hu}$.  It follows that
\begin{align}
 J_q(A(u))
 &\le
 q^d\ee^{(d-1)C_Hu}
 \max_b
 \left\{
 \sum_a |Q(u)_{ab}|
 \left(
 \sum_{R\ni a}|v_R|
 \right)
 \right\}
 \nonumber\\
 &\le
 q^dC_Vg\,\ee^{dC_Hu}.
\end{align}
\end{proof}

\subsection{\texorpdfstring{{Coefficient bound for $M_{S,i}-\one$}}{Coefficient bound for M(S,i)-1}}

For $S\subseteq[N]$, {write
$A(u)=\sum_Rv_R(u)P_R$ and define}
\begin{equation}
 A^S(u):=\sum_{R\subseteq S}v_R(u)P_R,
\end{equation}
{so that $A^S(u)$ contains exactly the monomials supported in
$S$.  Define}
\begin{equation}
 X^S(u)
 :=
 \cT\exp\left[-\int_0^u A^S(v)\,\dd v\right],
 \qquad
 X^S:=X^S(T).
 \label{eq:SM-XS}
\end{equation}
At the two endpoints,
\begin{equation}
 X^{[N]}=X,
 \qquad
 X^\varnothing=\one.
\end{equation}

For $i\in S$, set $S'=S\setminus\{i\}$.  The interaction terms present in $A^S(u)$ but not in $A^{S'}(u)$ are
\begin{equation}
 D_{S,i}(u)
 :=
 A^S(u)-A^{S'}(u)
 =
 \sum_{\substack{R\subseteq S\\ i\in R}}
 v_R(u)P_R.
 \label{eq:SM-DSi}
\end{equation}
We describe the corresponding change in the evolution by
\begin{equation}
 M_{S,i}(u)
 :=
 \bigl(X^{S'}(u)\bigr)^{-1}X^S(u).
 \label{eq:SM-Mu}
\end{equation}
Then
\begin{equation}
 M_{S,i}(0)=\one,
 \qquad
 M_{S,i}:=M_{S,i}(T),
 \qquad
 X^S=X^{S'}M_{S,i}.
 \label{eq:SM-M}
\end{equation}
We will derive a bound on $L_q(M_{S,i}-\one)$ that is independent of
$S$, $i$, and the system size.

To state the bound, set
\begin{equation}
 \eta_q
 :=
 \int_0^T J_q(A(u))\,\dd u.
 \label{eq:SM-etaq}
\end{equation}
Lemma~\ref{lem:SM-interaction-coefficients} gives
\begin{equation}
 \eta_q\le q^dC_VgI_\beta.
 \label{eq:SM-etaq-bound}
\end{equation}

Conjugation by $X^{S'}(u)$ produces nested commutators.  We
use the following bound to control their coefficients.
\begin{lemma}
\label{lem:SM-comm}
Let $q\ge1$.  Suppose that $A$ and $Y$ are even and that every
Majorana monomial in $Y$ has degree at most $k$.  Then
\begin{equation}
 L_q([A,Y])
 \le
 \frac{2k}{q^2}J_q(A)L_q(Y).
 \label{eq:SM-comm}
\end{equation}
\end{lemma}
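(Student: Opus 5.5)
We expand both operators in the Majorana basis, $A=\sum_R a_RP_R$ and $Y=\sum_T y_TP_T$, and use bilinearity,
\begin{equation*}
 [A,Y]=\sum_{R,T}a_Ry_T\,[P_R,P_T].
\end{equation*}
Two facts about single monomials do most of the work. First, by Eq.~\eqref{eq:SM-clifford}, $P_RP_T=\omega(R,T)P_{R\triangle T}$ and $P_TP_R=\omega(T,R)P_{R\triangle T}$. When both monomials are even, $P_RP_T=(-1)^{|R\cap T|}P_TP_R$. So the commutator vanishes if $|R\cap T|$ is even. Otherwise it equals $2\omega(R,T)P_{R\triangle T}$, and in particular it is nonzero only when $R\cap T\neq\varnothing$. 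Second, in the nonvanishing case
\begin{equation*}
 |R\triangle T|=|R|+|T|-2|R\cap T|\le|R|+|T|-2,
\end{equation*}
because $|R\cap T|\ge1$. Hence
\begin{equation*}
 q^{|R\triangle T|}\le q^{-2}q^{|R|}q^{|T|}\qquad(q\ge1).
\end{equation*}

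Applying the triangle inequality to the coefficients of $[A,Y]$ then gives
\begin{equation*}
 L_q([A,Y])\le\sum_T|y_T|\sum_{R:\,R\cap T\ne\varnothing}2|a_R|\,q^{|R\triangle T|}
 \le\frac{2}{q^2}\sum_T|y_T|q^{|T|}\sum_{R:\,R\cap T\ne\varnothing}|a_R|q^{|R|}.
\end{equation*}
For a fixed $T$, every $R$ that meets $T$ contains some label $a\in T$. A union bound over these labels gives
\begin{equation*}
 \sum_{R\cap T\ne\varnothing}|a_R|q^{|R|}\le\sum_{a\in T}\sum_{R\ni a}|a_R|q^{|R|}\le|T|\,J_q(A)\le k\,J_q(A),
\end{equation*}
where the last step uses the assumption that every monomial in $Y$ has degree at most $k$. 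Substituting this in yields $L_q([A,Y])\le\frac{2k}{q^2}J_q(A)L_q(Y)$, which is Eq.~\eqref{eq:SM-comm}.

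The only delicate point is the parity bookkeeping. Evenness of both $A$ and $Y$ is what guarantees that disjoint monomials commute, so that every surviving term has at least one shared label. That shared label is the source of both the factor $q^{-2}$ and the reduction to the single-label quantity $J_q(A)$. Counting a given $R$ several times in the union bound only enlarges the sum, so it causes no harm. No step beyond these monomial identities looks like a genuine obstacle.
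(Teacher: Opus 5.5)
Your proof is correct and follows the paper's argument essentially line for line: expand both operators in the Majorana basis, use evenness to see that $[P_R,P_T]$ is nonzero only when $|R\cap T|$ is odd (hence $R\cap T\neq\varnothing$ and $q^{|R\triangle T|}\le q^{|R|+|T|-2}$), then bound the sum over $R$ meeting $T$ by a union bound over the at most $k$ labels of $T$, which produces $J_q(A)$. Your explicit identity $[P_R,P_T]=2\omega(R,T)P_{R\triangle T}$ in the odd case just makes the origin of the paper's factor $2$ more transparent.
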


\begin{proof}
Write
\begin{equation}
 A=\sum_Ra_RP_R,
 \qquad
 Y=\sum_Ty_TP_T.
\end{equation}
For even sets $R$ and $T$, the commutator $[P_R,P_T]$ vanishes unless
$|R\cap T|$ is odd.  In particular, a nonzero commutator requires
$R\cap T\ne\varnothing$.  Since
\begin{equation}
 P_RP_T=\omega(R,T)P_{R\triangle T},
\end{equation}
we then have
\begin{equation}
 q^{|R\triangle T|}
 \le
 q^{|R|+|T|-2}.
\end{equation}
It follows that
\begin{align}
 L_q([A,Y])
 &\le
 \frac{2}{q^2}
 \sum_T |y_T|q^{|T|}
 \sum_{a\in T}
 \sum_{R\ni a}|a_R|q^{|R|}
 \nonumber\\
 &\le
 \frac{2k}{q^2}J_q(A)L_q(Y).
\end{align}
\end{proof}

{The resulting bound for the relative factor is as follows.}
\begin{lemma}
\label{lem:SM-ratio}
If \(\frac{2d\eta_q}{q^2}<1\),
then, for every $S\subseteq[N]$ and $i\in S$,
\begin{equation}
 L_q(M_{S,i}-\one)
 \le
 \exp\left(
 \frac{\eta_q}{1-2d\eta_q/q^2}
 \right)-1.
 \label{eq:SM-ratio-est}
\end{equation}
The same bound holds for $M_{S,i}^\dagger-\one$.
\end{lemma}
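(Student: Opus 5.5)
We derive an integral equation for $M_{S,i}(u)$ in the interaction picture with respect to $X^{S'}$ and solve it by a Dyson series whose terms are controlled by Lemma~\ref{lem:SM-comm}. Differentiating Eq.~\eqref{eq:SM-Mu} and using $\frac{\dd}{\dd u}X^S(u)=-A^S(u)X^S(u)$ gives
\begin{equation}
 \frac{\dd}{\dd u}M_{S,i}(u)=-\widetilde D(u)M_{S,i}(u),
 \qquad
 \widetilde D(u):=\bigl(X^{S'}(u)\bigr)^{-1}D_{S,i}(u)X^{S'}(u).
\end{equation}
Hence $M_{S,i}=\cT\exp[-\int_0^T\widetilde D(u)\,\dd u]$, and
\begin{equation}
 L_q(M_{S,i}-\one)\le\exp\Bigl(\int_0^T L_q(\widetilde D(u))\,\dd u\Bigr)-1,
\end{equation}
using that $L_q$ is submultiplicative. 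Indeed $L_q(P_RP_T)\le q^{|R\triangle T|}\le q^{|R|}q^{|T|}$ since $q\ge1$. Expanding the time-ordered exponential and applying the triangle inequality term by term then gives the displayed bound. The whole task is therefore to show
\begin{equation}
 L_q(\widetilde D(u))\le \frac{J_q(D_{S,i}(u))}{1-2d\eta_q/q^2},
\end{equation}
because $J_q(D_{S,i}(u))\le J_q(A(u))$ integrates to at most $\eta_q$. We note first that $L_q(D_{S,i}(u))\le J_q(A(u))$, since every monomial in $D_{S,i}(u)$ contains $i$.

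To get this bound, we expand the conjugation by $X^{S'}(u)$ as a series of nested commutators. Since $(X^{S'})^{-1}$ obeys $\frac{\dd}{\dd v}(X^{S'})^{-1}=(X^{S'})^{-1}A^{S'}$, iterating Duhamel's formula yields
\begin{equation}
 \widetilde D(u)=\sum_{k\ge0}\int_{0\le v_k\le\cdots\le v_1\le u}
 \bigl[A^{S'}(v_k),\bigl[\cdots,\bigl[A^{S'}(v_1),D_{S,i}(u)\bigr]\cdots\bigr]\bigr]\,\dd v,
\end{equation}
with the time ordering fixed by the variable-coefficient Heisenberg-type equation. Each commutator with a monomial of $A^{S'}$ of degree at most $d$ raises the degree by at most $d-2$. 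However, Lemma~\ref{lem:SM-comm} carries the degree $k$ of the inner operator as a factor, so a naive iteration produces factors growing like $k!$-type products. That would give a convergent bound like $(1-\cdots)^{-1/(\cdots)}$ rather than a geometric series.

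The main obstacle is obtaining the clean geometric factor $\sum_k(2d\eta_q/q^2)^k$. My first attempt would use the weighting $q^{|R|}$, which makes Lemma~\ref{lem:SM-comm} lose $q^{-2}$ per commutator. I would combine this with the observation that in each commutator only the labels in the overlap matter: the factor $k$ counts labels $a\in T$, and we sum over $R\ni a$ using $J_q$. I would try to show that the degree growth $k\mapsto k+d-2$ is compensated by the time-ordered simplex volume $1/k!$ together with $\int_0^T J_q(A^{S'})\le\eta_q$. If that bookkeeping does not close exactly, I would instead bound the degree crudely by $d$ per generation. This means tracking, for each monomial in the $k$-fold commutator, which "parent" label it attaches to, in the style of a tree/polymer count. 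Each generation then contributes at most $2d\,J_q/q^2$ in a way that sums to a geometric series after integration, without the $1/k!$. The hypothesis $2d\eta_q/q^2<1$ is exactly what is needed for this series to converge.

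Finally, the statement for $M_{S,i}^\dagger-\one$ follows by taking adjoints of the expansion. The adjoint reverses operator order, and $L_q$ is invariant under adjoints since $P_R^\dagger=P_R$ and $|\overline{m_R}|=|m_R|$. Hence $L_q(M^\dagger_{S,i}-\one)=L_q(M_{S,i}-\one)$ and the same bound holds.
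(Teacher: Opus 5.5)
Your skeleton is the paper's. $M_{S,i}$ solves $\frac{\dd}{\dd u}M_{S,i}=-\widetilde D M_{S,i}$ with $\widetilde D=B_{S,i}$. The Dyson series together with submultiplicativity of $L_q$ reduces the problem to $\int_0^T L_q(\widetilde D(u))\,\dd u$, and your nested-commutator expansion has the correct time ordering.

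\textbf{The gap.} You never establish the bound on $L_q(\widetilde D(u))$. You call it the main obstacle, leave your first idea as something you ``would try'', and offer a fallback that does not work.

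\textbf{How your first idea closes.} Before the $j$th commutator the degree is at most $d+(j-1)(d-2)\le jd$. Iterating Lemma~\ref{lem:SM-comm}, with $J_q(A^{S'}(v))\le J_q(A(v))$, gives
\[
L_q\bigl(\operatorname{ad}_{A^{S'}(v_m)}\cdots\operatorname{ad}_{A^{S'}(v_1)}D_{S,i}(u)\bigr)
\le m!\left(\frac{2d}{q^2}\right)^m L_q(D_{S,i}(u))\prod_{j=1}^m J_q(A(v_j)).
\]
The ordered simplex contributes
\[
\int_{0<v_m<\cdots<v_1<u}\prod_{j=1}^m J_q(A(v_j))\,\dd v=\frac1{m!}\left(\int_0^u J_q(A(v))\,\dd v\right)^m\le\frac{\eta_q^m}{m!}.
\]
The factorials cancel exactly. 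Summing $\sum_{m\ge0}(2d\eta_q/q^2)^m$ gives your target inequality pointwise in $u$, and integrating over $u$ gives $\eta_q/(1-2d\eta_q/q^2)$.

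The paper does the same computation but integrates $u$ together with the $m$ inner times over the $(m+1)$-simplex. It obtains $\eta_q^{m+1}/(m+1)!\le\eta_q^{m+1}/m!$; the two routes are equivalent.

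Your worry about a non-geometric bound is also unfounded. The exact degree count $\prod_j\bigl(d+(j-1)(d-2)\bigr)$ yields $(1-2(d-2)\eta_q/q^2)^{-d/(d-2)}$ for $d>2$. Comparing the logarithmic series term by term, using $d(d-2)^{k-1}\le d^k$, shows this is at most $(1-2d\eta_q/q^2)^{-1}$.

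\textbf{Why the fallback fails.} In the fallback, each generation contributes only $2dJ_q/q^2$ and the simplex factor $1/k!$ is not used. This is not valid. The monomial inserted at the $j$th commutator may overlap with any surviving label of $D_{S,i}(u)$ or of the $j-1$ earlier insertions. So there are up to $jd$ admissible parent labels, not $d$, and the tree count you describe reproduces exactly $\prod_j jd=m!\,d^m$. That factorial must be absorbed by the simplex volume of the ordered times. Without it, the bound becomes $\sum_m m!\,(2d\eta_q/q^2)^m$, which diverges.
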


\begin{proof}
Since
\begin{equation}
 \frac{\dd}{\dd u}X^S(u)
 =
 -A^S(u)X^S(u),
 \qquad
 \frac{\dd}{\dd u}X^{S'}(u)
 =
 -A^{S'}(u)X^{S'}(u),
\end{equation}
direct differentiation gives
\begin{equation}
 \frac{\dd}{\dd u}M_{S,i}(u)
 =
 -B_{S,i}(u)M_{S,i}(u),
 \qquad
 M_{S,i}(0)=\one,
 \label{eq:SM-M-differential}
\end{equation}
where
\begin{equation}
 B_{S,i}(u)
 :=
 \bigl(X^{S'}(u)\bigr)^{-1}
 D_{S,i}(u)
 X^{S'}(u),
 \qquad
 D_{S,i}(u)
 :=
 A^S(u)-A^{S'}(u).
 \label{eq:SM-B}
\end{equation}

The operator $D_{S,i}(u)$ consists of the monomials in $A^S(u)$
that contain $\gamma_i$.  Hence
\begin{equation}
 L_q(D_{S,i}(u))=
 \sum_{\substack{R\subseteq S\\i\in R}}
 |v_R(u)|q^{|R|}\le
 \sum_{R\ni i}|v_R(u)|q^{|R|}
 \le
 J_q(A(u)).
\end{equation}

It remains to bound $B_{S,i}(u)$.  For fixed $u$, expanding the
conjugation by $X^{S'}(u)$ gives
\begin{align}
 B_{S,i}(u)
 ={}&
 D_{S,i}(u)
 \nonumber\\
 &+
 \sum_{m\ge1}
 \int_{0<t_1<\cdots<t_m<u}
 \Bigl[
 A^{S'}(t_1),
 \Bigl[
 A^{S'}(t_2),
 \ldots,
 \Bigl[
 A^{S'}(t_m),
 D_{S,i}(u)
 \Bigr]
 \ldots
 \Bigr]
 \Bigr]
 \,\dd\bm t.
 \label{eq:SM-B-nested}
\end{align}
For the estimates below, we write
$\operatorname{ad}_A(Y):=[A,Y]$; the nested commutator in
Eq.~\eqref{eq:SM-B-nested} is then
\[
 \operatorname{ad}_{A^{S'}(t_1)}
 \cdots
 \operatorname{ad}_{A^{S'}(t_m)}
 D_{S,i}(u).
\]

Every monomial in $A^{S'}(t)$ and $D_{S,i}(u)$ has degree at most
$d$, since the quadratic evolution preserves Majorana degree.  If
$P_R$ and $P_T$ have even degree, their commutator can be nonzero only
if $R\cap T\ne\varnothing$.  In that case,
\begin{equation}
 |R\triangle T|
 =
 |R|+|T|-2|R\cap T|
 \le
 |T|+d-2.
\end{equation}
Thus, each additional nonzero commutator increases the degree by at
most $d-2$.  After $j-1$ commutators, the degree is therefore at most
\begin{equation}
 d+(j-1)(d-2)\le jd.
 \label{eq:SM-nested-degree}
\end{equation}

We now apply $L_q$ to \eqref{eq:SM-B-nested}.  The triangle inequality gives
\begin{align}
 L_q(B_{S,i}(u))
 \le{}&
 L_q(D_{S,i}(u))
 \nonumber\\
 &+
 \sum_{m\ge1}
 \int_{0<t_1<\cdots<t_m<u}
 L_q\left(
 \operatorname{ad}_{A^{S'}(t_1)}
 \cdots
 \operatorname{ad}_{A^{S'}(t_m)}
 D_{S,i}(u)
 \right)
 \,\dd\bm t.
 \label{eq:SM-B-Lq-expansion}
\end{align}
Applying Lemma~\ref{lem:SM-comm} successively, and using
$J_q(A^{S'}(t))\le J_q(A(t))$, gives
\begin{align}
 &L_q\left(
 \operatorname{ad}_{A^{S'}(t_1)}
 \cdots
 \operatorname{ad}_{A^{S'}(t_m)}
 D_{S,i}(u)
 \right)
 \nonumber\\
 &\qquad\le
 m!\left(\frac{2d}{q^2}\right)^m
 J_q(A(u))
 \prod_{j=1}^m J_q(A(t_j)).
 \label{eq:SM-nested-bound}
\end{align}
The factor $m!$ comes from the successive degree bounds
$d,2d,\ldots,md$ in the commutator estimate.

Using this bound in Eq.~\eqref{eq:SM-B-Lq-expansion} and integrating
over $u$, we obtain
\begin{align}
 \int_0^T L_q(B_{S,i}(u))\,\dd u
 \le
 \sum_{m\ge0}
 m!\left(\frac{2d}{q^2}\right)^m
 \int_0^T\dd u
 \int_{0<t_1<\cdots<t_m<u}
 J_q(A(u))
 \prod_{j=1}^mJ_q(A(t_j))
 \,\dd\bm t.
 \label{eq:SM-B-integrated}
\end{align}
Here the term with $m=0$ is the contribution from
$L_q(D_{S,i}(u))\le J_q(A(u))$.

The integrand is symmetric in the $m+1$ time variables
$t_1,\ldots,t_m,u$.  The region
$0<t_1<\cdots<t_m<u<T$ is one of their $(m+1)!$ possible orderings.
Therefore,
\begin{align}
 &\int_0^T\dd u
 \int_{0<t_1<\cdots<t_m<u}
 J_q(A(u))
 \prod_{j=1}^mJ_q(A(t_j))
 \,\dd\bm t
 \nonumber\\
 &\qquad=
 \frac{1}{(m+1)!}
 \left(
 \int_0^T J_q(A(s))\,\dd s
 \right)^{m+1}
 =
 \frac{\eta_q^{m+1}}{(m+1)!}
 \le
 \frac{\eta_q^{m+1}}{m!}.
 \label{eq:SM-time-simplex}
\end{align}
Substituting this estimate into Eq.~\eqref{eq:SM-B-integrated} gives
\begin{align}
 \int_0^T L_q(B_{S,i}(u))\,\dd u
 &\le
 \sum_{m\ge0}
 \left(\frac{2d}{q^2}\right)^m
 \eta_q^{m+1}
 \nonumber\\
 &=
 \frac{\eta_q}{1-2d\eta_q/q^2}.
 \label{eq:SM-Bbound}
\end{align}

Finally, the differential equation
\eqref{eq:SM-M-differential} has the Dyson series solution
\begin{align}
 M_{S,i}-\one
 =
 \sum_{k\ge1}(-1)^k
 \int_{0<u_k<\cdots<u_1<T}
 B_{S,i}(u_1)\cdots B_{S,i}(u_k)
 \,\dd\bm u.
 \label{eq:SM-M-Dyson}
\end{align}
Since $L_q$ is submultiplicative, taking $L_q$ and using the symmetry
of the scalar integrand gives
\begin{align}
 L_q(M_{S,i}-\one)
 &\le
 \sum_{k\ge1}\frac{1}{k!}
 \left(
 \int_0^T L_q(B_{S,i}(u))\,\dd u
 \right)^k
 \nonumber\\
 &=
 \exp\left(
 \int_0^T L_q(B_{S,i}(u))\,\dd u
 \right)-1.
 \label{eq:SM-M-Lq-bound}
\end{align}
Together with Eq.~\eqref{eq:SM-Bbound}, this proves
Eq.~\eqref{eq:SM-ratio-est}.  Finally,
\begin{equation}
 L_q(M_{S,i}^\dagger-\one)
 =
 L_q\bigl((M_{S,i}-\one)^\dagger\bigr)
 =
 L_q(M_{S,i}-\one),
\end{equation}
so the same estimate holds for $M_{S,i}^\dagger-\one$.
\end{proof}

The relative operator $M_{S,i}$ need not be Gaussian and may contain
Majorana monomials of arbitrarily high degree. Write
\begin{equation}
 M_{S,i}-\one=\sum_Rm_RP_R,
\end{equation}
{Lemma~\ref{lem:SM-ratio}} at
$q=2$ gives
\begin{equation}
 \sum_R |m_R|2^{|R|}\le r,
 \label{eq:SM-degree-weighted-bound}
\end{equation}
where $r$ is a constant independent of $S$, $i$, and the system size.
For each $s$,
\begin{equation}
 \sum_{|R|=s}|m_R|\le r2^{-s},
\end{equation}
so the total coefficient weight decays exponentially with the degree.

{
For $g\le g_{\mathrm{mix}}$,
Eq.~\eqref{eq:SM-etaq-bound} gives
\begin{equation}
 \eta_2\le 2^dC_VgI_\beta\le\eta_{\mathrm{mix}}.
\end{equation}
The definition of $\eta_{\mathrm{mix}}$ implies
\begin{equation}
 \frac{d\eta_2}{2}\le\frac12,
 \qquad
 2\eta_2\le\log\frac{13}{12}.
\end{equation}
Substituting these inequalities into
Eq.~\eqref{eq:SM-ratio-est} at $q=2$ gives
\begin{equation}
 \sum_R|m_R|2^{|R|}
 \le
 L_2(M_{S,i}-\one)
 \le\frac1{12}.
\end{equation}
The same argument applies to $M_{S,i}^\dagger-\one$.  This proves
Lemma~\ref{lem:SM-relative-coefficient}.
{Proposition~\ref{prop:SM-hard-decomposition} therefore gives
Eq.~\eqref{eq:SM-hard-mean} for $g\le g_{\mathrm{mix}}$.}
}

\subsection{\texorpdfstring{Gaussianity of $G\sigma G$}{Gaussianity of G sigma G and normalization}}
\label{sec:SM-Gaussian-closure}

\begin{proof}[Proof of Lemma~\ref{lem:SM-Gaussian-sandwich}]
We first show that $G\sigma G$ is Gaussian for every nonzero
nonnegative Gaussian operator $\sigma$.

Suppose first that $\sigma$ is invertible.  It can then be written,
up to a positive scalar, as the exponential of a quadratic Majorana
operator.  The same is true of $G$.  Quadratic Majorana operators are

closed under commutators.  Their exponentials therefore form a
group of invertible Gaussian operators, and the product $G\sigma G$
is Gaussian.  It is also
nonnegative, since
\begin{equation}
 \langle\psi|G\sigma G|\psi\rangle
 =
 \langle G\psi|\sigma|G\psi\rangle
 \ge0
\end{equation}
for every $|\psi\rangle$.

Now let $\sigma$ be noninvertible.  By the definition of
noninvertible Gaussian states, there is a sequence of invertible
nonnegative Gaussian operators $\sigma^{(\ell)}$ such that
\begin{equation}
 \sigma^{(\ell)}\longrightarrow\sigma
\end{equation}
in operator norm.  The argument above shows that
$G\sigma^{(\ell)}G$ is Gaussian and nonnegative for every $\ell$.
Moreover,
\begin{equation}
 G\sigma^{(\ell)}G\longrightarrow G\sigma G.
\end{equation}
Since $G$ is invertible and $\sigma$ is nonzero,
$\Tr(G\sigma G)>0$.  Therefore,
\begin{equation}
 \frac{G\sigma^{(\ell)}G}
 {\Tr(G\sigma^{(\ell)}G)}
 \longrightarrow
 \frac{G\sigma G}
 {\Tr(G\sigma G)}.
\end{equation}
The set $\cG_n$ includes its noninvertible limits, so the operator on
the right belongs to $\cG_n$.

Returning to the convex combination of $B$, omit any terms
with $\sigma_z=0$ and define
\begin{equation}
 \tau_z
 =
 \frac{G\sigma_zG}{\Tr(G\sigma_zG)},
 \qquad
 q_z
 =
 \frac{\pi_z\Tr(G\sigma_zG)}
 {\Tr(GBG)}.
\end{equation}
Each $\tau_z$ belongs to $\cG_n$, while $q_z\ge0$ and
\begin{equation}
 \sum_zq_z
 =
 \frac{\Tr\left(G\sum_z\pi_z\sigma_zG\right)}
 {\Tr(GBG)}
 =1.
\end{equation}
Finally,
\begin{equation}
 \frac{GBG}
 {\Tr(GBG)}
 =
 \sum_zq_z\tau_z,
\end{equation}
which proves the claim.
\end{proof}

\begin{proof}[Proof of
Theorem~\ref{thm:SM-mixture}]
For $g\le g_{\mathrm{mix}}$,
Lemma~\ref{lem:SM-relative-coefficient} gives the coefficient bound
used in Proposition~\ref{prop:SM-hard-decomposition}.  Hence
Eq.~\eqref{eq:SM-hard-mean} writes $XX^\dagger$ as a convex
combination of nonnegative Gaussian operators.

We apply Lemma~\ref{lem:SM-Gaussian-sandwich} with
\begin{equation}
 B=XX^\dagger,
 \qquad
 G=\ee^{-\beta H_0/2}.
\end{equation}
Equation~\eqref{eq:SM-half-contour} gives
\begin{equation}
 GBG=\ee^{-\beta H}.
\end{equation}
Together with Eq.~\eqref{eq:SM-hard-mean}, the lemma therefore gives
\begin{equation}
 \rho_\beta
 =
 \frac{\ee^{-\beta H}}{\Tr\ee^{-\beta H}}
 \in\conv(\cG_n).
\end{equation}
This proves Theorem~\ref{thm:SM-mixture}.
\end{proof}

\section{Sampling tree method}
\label{sec:SM-soft-tree}

Appendix~\ref{sec:SM-mixture-proof} proves the existence of a convex
decomposition into Gaussian operators.  Turning that decomposition
into a sampling algorithm requires three additional ingredients.

First, the pinning operators $\one\pm B$, where
$B=\ii\gamma_a\gamma_b$, have a zero eigenvalue.  The first subsection
replaces them by the strictly positive operators $\one\pm tB$ and
controls the coefficient of the remaining Majorana monomial.  This
keeps every intermediate operator positive and makes its trace
comparable to a Gaussian trace, which is the quantity estimated in
Appendix~\ref{sec:SM-oracle}.

Second, normalizing the Gaussian operators at the leaves changes their
relative probabilities according to their traces.  The second
subsection expresses the required reweighting in terms of trace
weights assigned to the nodes of the tree.  These weights determine
the local transition probabilities used in
Appendix~\ref{sec:SM-finite}.

Finally, the index removal step requires samples of the relative
operators $M_{S,i}$ and $M_{S,i}^\dagger$.  Appendix
\ref{sec:SM-mixture-proof} defines such samples through their collected
Majorana coefficients, but computing those coefficients requires
summing the full expansion.  The third subsection instead samples one
expansion path directly and proves an exponential tail bound for its
length.

The first subsection uses the following lemma, which is proved in the
third subsection.

\begin{lemma}
\label{lem:SM-ratio-sampler}
Suppose that $g\le g_{\mathrm{pin}}$.  For every
$S\subseteq[N]$ and $i\in S$, there is a random operator
\begin{equation}
 Z=0
 \quad\text{or}\quad
 Z=\frac{\zeta}{32}Q^{-|R|}P_R,
 \qquad
 R\subseteq S,
 \qquad
 \zeta\in\{\pm1,\pm\ii\},
 \label{eq:SM-Z}
\end{equation}
such that
\begin{equation}
 \bbE(\one+Z)=M_{S,i}.
\end{equation}
The same statement holds for $M_{S,i}^\dagger$, and the two random
operators may be sampled independently.
\end{lemma}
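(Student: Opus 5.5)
The plan is to construct $Z$ by sampling a single path of the Dyson/commutator expansion, rather than from the collected coefficients $m_R$. This sidesteps the fact that the collected coefficients cannot be computed; only unbiasedness and the weighted $L_Q$ bound matter. Lemma~\ref{lem:SM-ratio} at $q=Q$ gives
\begin{equation*}
 L_Q(M_{S,i}-\one)\le \exp\left(\frac{\eta_Q}{1-2d\eta_Q/Q^2}\right)-1 .
\end{equation*}
Since $g\le g_{\mathrm{pin}}$, we have $\eta_Q\le Q^dC_VgI_\beta\le\eta_{\mathrm{pin}}$, so $2d\eta_Q/Q^2\le 1/2$ and $2\eta_Q\le\log(1+\kappa_0)$. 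Hence $L_Q(M_{S,i}-\one)\le\kappa_0=1/(32\sqrt2)$, and the same holds for $M_{S,i}^\dagger$.

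The key point is that this bound is really a bound on an absolutely convergent sum over expansion paths. Substituting the nested-commutator series \eqref{eq:SM-B-nested} into the Dyson series \eqref{eq:SM-M-Dyson} writes
\begin{equation*}
 M_{S,i}-\one=\sum_{\text{paths }\omega} c_\omega P_{R_\omega}.
\end{equation*}
A path $\omega$ specifies the order $k$, the times, for each factor $B$ the commutator depth $m$ and the times $t_j$, the selected monomials of $A^{S'}(t_j)$ and $D_{S,i}(u)$, and the choice of term in each commutator. Each $c_\omega$ is a product of monomial coefficients, phases $\omega(R,T)$, factors $\pm1$ from commutators, and the Dyson sign. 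The proofs of Lemmas~\ref{lem:SM-comm} and \ref{lem:SM-ratio} bound $\sum_\omega|c_\omega|Q^{|R_\omega|}$ (as integrals and sums over paths) by the same quantity $\kappa_0$. Because $P_R$ are Hermitian Majorana monomials, products of them are phases times $P_R$. After grouping the Dyson sign with the phases, each path contributes $\zeta_\omega|c_\omega|P_{R_\omega}$ with $\zeta_\omega$ a power of $\ii$, hence in $\{\pm1,\pm\ii\}$.

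Sampling then proceeds as in Proposition~\ref{prop:SM-hard-decomposition}. With probability $|c_\omega|Q^{|R_\omega|}/(32\kappa')$, where $\kappa'=1/32\ge\kappa_0$, output
\begin{equation*}
 Z=\zeta_\omega\,\frac{1}{32}\,Q^{-|R_\omega|}P_{R_\omega},
\end{equation*}
and otherwise output $Z=0$. The total probability of a nonzero output is at most $32\kappa_0<1$, and unbiasedness $\bbE Z=M_{S,i}-\one$ follows by summing/integrating over paths. Every monomial arising in $A^{S}$ terms is supported in $S$, and commutators and products stay in $S$, so $R\subseteq S$. Independence for $M_{S,i}^\dagger$ is immediate: $M^\dagger-\one$ is expanded along the adjoint paths, and independent randomness is used for it.

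The main obstacle is turning the path measure into a genuine sampling procedure with these exact weights. The continuous time integrals have densities proportional to $J_Q(A(t))$-type majorants, while the true weights $|c_\omega|$ are smaller. I would handle this by rejection. First sample from the majorant process: a Poisson-like number of Dyson factors and commutator depths with the geometric/exponential weights from \eqref{eq:SM-Bbound}, times with density proportional to $J_Q$, a monomial through a fixed label proportional to $|v_R|Q^{|R|}$, and the overlapping label chosen as in Lemma~\ref{lem:SM-comm}. Then accept with probability equal to the ratio of the actual path weight to the majorant. Overcounting in the majorant (repeated outputs, vanishing commutators) only lowers this ratio, so it stays in $[0,1]$, and rejections are assigned to $Z=0$.

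The delicate bookkeeping is checking that the majorant has total mass exactly the expression bounded by $\kappa_0$, including the $m!$ degree factors and the simplex symmetrization. The sampled object here is exact; the discretization and truncation errors of the path length are deferred to Appendix~\ref{sec:SM-finite}.
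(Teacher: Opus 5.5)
\textbf{Overall approach.} Your outline is the paper's. You expand $M_{S,i}-\one$ over paths of the Dyson series with the nested-commutator series for $B_{S,i}$ inserted. You note that the estimates behind Lemma~\ref{lem:SM-ratio} really bound the pathwise majorant $\sum_\omega|c_\omega|Q^{|R_\omega|}$ by $\kappa_0$. The paper states this as Lemma~\ref{lem:SM-uncollected-ratio}, using labeled determinant terms instead of the collected $v_R(u)$, but either choice works for existence. You then sample a path with probability proportional to this weight. Your rejection-sampling paragraph corresponds to Lemma~\ref{lem:SM-ratio-generation} and is not needed for this existence statement.

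\textbf{The step that fails.} You claim that, after grouping signs, every path contributes $\zeta_\omega|c_\omega|P_{R_\omega}$ with $\zeta_\omega$ a power of $\ii$. The Clifford phases $\omega(R,T)$ and the commutator and Dyson signs are indeed powers of $\ii$. The coefficients $v_R(u)$ of $A(u)=\ee^{uH_0}V\ee^{-uH_0}$ are not. $A(u)$ is not Hermitian for $u>0$. Its coefficients are built from entries of $Q(u)=\ee^{-uK_0}$, where $K_0$ is purely imaginary ($K_0^{\mathsf T}=-K_0$, $K_0^\dagger=K_0$). Even powers of $K_0$ contribute real parts and odd powers contribute imaginary parts, and generically a given entry receives both, for example with complex hopping or generic BdG pairing. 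So $c_\omega$ carries an arbitrary complex phase, and your $Z$ does not have the form required in Eq.~\eqref{eq:SM-Z}.

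\textbf{The repair.} The fix is exactly the paper's argument. Write $c_\omega=\sum_{\zeta\in\{1,-1,\ii,-\ii\}}\zeta\,c_{\omega,\zeta}$ with
\[
c_{\omega,\pm1}=(\Re c_\omega)_\pm,\qquad c_{\omega,\pm\ii}=(\Im c_\omega)_\pm .
\]
Select $(\omega,\zeta)$ with probability $32\,c_{\omega,\zeta}Q^{|R_\omega|}$ and output $\frac{\zeta}{32}Q^{-|R_\omega|}P_{R_\omega}$. Since $|\Re c|+|\Im c|\le\sqrt2\,|c|$, the total probability is at most $32\sqrt2\,\kappa_0=1$. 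This is precisely why $\kappa_0=1/(32\sqrt2)$. The slack $32\kappa_0<1$ you noticed is consumed by this splitting, and the same $\sqrt2$ must appear in the acceptance ratio of your rejection step.

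\textbf{A normalization slip.} Your selection probability as written is $|c_\omega|Q^{|R_\omega|}/(32\kappa')=|c_\omega|Q^{|R_\omega|}$, which gives $\bbE Z=\frac1{32}(M_{S,i}-\one)$. It should be $32|c_\omega|Q^{|R_\omega|}$, which is also what your stated total probability $32\kappa_0$ presupposes.
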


\subsection{Constructing the sampling tree}

For $B=\ii\gamma_a\gamma_b$, the operator
$(\one+sB)/2$ projects onto the eigenspace $B=s$ and therefore pins
the corresponding Majorana pair. We replace this projector by the
full rank operator
\begin{equation}
 \one+tsB,\qquad 0<t<1.
\end{equation}
We refer to this operation as soft pinning.

Assume $g\le g_{\mathrm{pin}}$, so that
Lemma~\ref{lem:SM-ratio-sampler} applies.  At each stage of the
construction, we write the current term as
\begin{equation}
 \mathcal E_S(C)=X^SC(X^S)^\dagger,
\end{equation}
where
\begin{gather}
 C=(\one+\alpha\Gamma)\sigma,
 \nonumber\\
 \sigma=c\prod_{e=(a_e,b_e)\in\cM}
 (\one+t s_eB_e),
 \qquad
 B_e=\ii\gamma_{a_e}\gamma_{b_e}.
 \label{eq:SM-soft-sigma}
\end{gather}
Here $S$ specifies the Majorana indices retained in $X^S$.
The set $\cM$ consists of disjoint pairs
$e=(a_e,b_e)$ with $a_e,b_e\in S^c$, and hence is a matching on
$S^c$.  For each $e\in\cM$, the sign $s_e\in\{\pm1\}$ specifies the
eigenvalue of $B_e$ favored by soft pinning.  The scalar $c$ is
positive.

Here $\alpha\ge0$, and $\Gamma$ is a Hermitian even Majorana
monomial.  No index in $\supp\Gamma$ belongs to a pair in $\cM$.
When $\alpha=0$, we set $\Gamma=\one$.  At the root,
\begin{equation}
 S=[N],\qquad
 \cM=\varnothing,\qquad
 c=1,\qquad
 \alpha=0,
\end{equation}
so that $C=\sigma=\one$ and
$\mathcal E_{[N]}(C)=XX^\dagger$.

Each step must preserve the average and the positivity of $C$ while
reducing either the active set or the degree of $\Gamma$.  Index
removal decreases $|S|$ by one.  Once
$\supp\Gamma\subseteq S^c$, soft pinning transfers a quadratic pair
from $\Gamma$ to $\sigma$ through the factor
$\one\pm t\ii\gamma_a\gamma_b$.  Repetition reduces $\Gamma$ to a
scalar.  Absorbing the resulting positive scalar into $c$ leaves
$C=\sigma$ Gaussian.

Positivity imposes the invariant that controls both operations.  The
operator $\sigma$ is positive, while $\Gamma^2=\one$ and
$\Gamma$ commutes with $\sigma$.  Hence
\begin{equation}
 (1-\alpha)\sigma
 \preceq C
 \preceq(1+\alpha)\sigma.
 \label{eq:SM-C-sigma-comparison}
\end{equation}
In particular, $\alpha\le1$ is sufficient for positivity.  We will
maintain the stronger bound $\alpha\le1/2$, which also keeps $C$
uniformly comparable to $\sigma$. 

A fixed upper bound on $\alpha$ is not by itself preserved by the
recursion.  Soft pinning replaces $\alpha$ by $\alpha/t$.  It is
applied when removing two indices from
$\supp\Gamma\subseteq S^c$ at each step. Thus the bound before soft pinning must
contain one power of $t$ for every such pair.  Index removal has a
different effect.  The sampled monomials $Z_1,Z_2$ are supported in
$S$, and a nonzero sample of degree $s$ has coefficient of absolute
value $Q^{-s}/32$.  This suggests assigning one power of $Q^{-1}$ to
each index of $\supp\Gamma$ that lies in $S$.

Recall from Eq.~\eqref{eq:SM-Qt} that
\begin{equation}
 Q\ge8,
 \qquad
 t=\frac4{Q^2}\le\frac1{16},
 \qquad
 Q\sqrt t=2.
 \label{eq:SM-sync}
\end{equation}
Write
\begin{equation}
 m=|\supp\Gamma\cap S|,
 \qquad
 \ell=|\supp\Gamma\cap S^c|.
\end{equation}
We maintain the invariant
\begin{equation}
 {\alpha\le\lambda t^{\ell/2}Q^{-m}},
 \qquad
 \lambda=\frac12.
 \label{eq:SM-soft-invariant}
\end{equation}

The two powers in Eq.~\eqref{eq:SM-soft-invariant} reflect the effects
of the two operations.  In a soft pinning step, $\ell$ decreases by
two and $\alpha$ is divided by $t$, so the power $t^{\ell/2}$ changes
in the same way.  The power $Q^{-m}$ matches the degree dependence of
the monomials sampled during index removal.  Moreover, when an index
of $\supp\Gamma$ moves from $S$ to $S^c$, the right hand side of
Eq.~\eqref{eq:SM-soft-invariant} increases by
\begin{equation}
 Q\sqrt t=2.
\end{equation}
This compensates for the possible doubling of $\alpha$ in the index
removal step.

Since $t^{\ell/2}Q^{-m}\le1$, the invariant implies
$\alpha\le1/2$.  Each operator $\one+t s_eB_e$ has eigenvalues
$1-t$ and $1+t$.  Since the pairs in $\cM$ are disjoint, these
operators commute, and $\sigma$ is positive, invertible, and
Gaussian.  Moreover, $\Gamma^2=\one$ and $\Gamma$ commutes with
$\sigma$.  It follows that
\begin{equation}
 \frac12\sigma\preceq C\preceq\frac32\sigma.
\end{equation}
In particular, $C$ is positive and invertible.

We declare a node with $S=\varnothing$ and $\alpha=0$ to be a leaf.
At every other node, the next step is chosen as follows.  If
$\alpha=0$, remove the first index of $S$ in a fixed ordering.  If
$\alpha>0$ and $m>0$, remove the first index in
$\supp\Gamma\cap S$.  If $\alpha>0$ and $m=0$, then
$\supp\Gamma\subseteq S^c$.  Apply soft pinning until $\Gamma$
becomes a scalar, absorb $\one+\alpha\Gamma$ into $c$, and set
$\alpha=0$.  If $S$ is still nonempty, continue with index removal.

We now analyze the two operations, beginning with index removal.

\paragraph{Removing an index from $S$.}
Let $i$ be chosen by the rule above, set
$S'=S\setminus\{i\}$, and write $M=M_{S,i}$.  Since
$X^S=X^{S'}M$,
\begin{equation}
 \mathcal E_S(C)
 =
 X^{S'}MCM^\dagger(X^{S'})^\dagger.
 \label{eq:SM-index-removal-parent}
\end{equation}
The operator $M$ is even and supported in $S$, whereas $\sigma$ is
supported in $S^c$.  Hence both $M$ and $M^\dagger$ commute with
$\sigma$, and
\begin{equation}
 MCM^\dagger
 =
 M(\one+\alpha\Gamma)M^\dagger\sigma.
 \label{eq:SM-index-removal-sigma}
\end{equation}
It remains to express
$M(\one+\alpha\Gamma)M^\dagger$ as an average of operators
$\one+\alpha_w\Gamma_w$ satisfying
Eq.~\eqref{eq:SM-soft-invariant} with $S'$ in place of $S$.

Let $Z_1$ and $Z_2$ be independent random variables supplied by
Lemma~\ref{lem:SM-ratio-sampler}, chosen so that
\begin{equation}
 \bbE(\one+Z_1)=M,
 \qquad
 \bbE(\one+Z_2)=M^\dagger.
\end{equation}
Define
\begin{equation}
 Y=(\one+Z_1)(\one+\alpha\Gamma)(\one+Z_2)
   =\one+\sum_{k=1}^7T_k.
 \label{eq:SM-seven}
\end{equation}
Independence gives
\begin{equation}
 \bbE Y=M(\one+\alpha\Gamma)M^\dagger.
 \label{eq:SM-seven-mean}
\end{equation}

Conditioned on $Z_1,Z_2$, choose an index
$K\in\{1,\ldots,7\}$ with probabilities $p_1,\ldots,p_7$ and set
\begin{equation}
 \operatorname{Herm}(p_K^{-1}T_K)=\alpha_w\Gamma_w,
 \qquad
 \operatorname{Herm}(A)=\frac{A+A^\dagger}{2}.
 \label{eq:SM-selected-correction}
\end{equation}
Then
\begin{equation}
 \bbE_K\!\left[
 \operatorname{Herm}(p_K^{-1}T_K)
 \,\,|\,dle|\,Z_1,Z_2
 \right]
 =
 \operatorname{Herm}(Y-\one).
 \label{eq:SM-selected-conditional-mean}
\end{equation}
Since the operator on the right hand side of
Eq.~\eqref{eq:SM-seven-mean} is Hermitian, averaging also over
$Z_1,Z_2$ gives
\begin{equation}
 \bbE(\one+\alpha_w\Gamma_w)
 =
 M(\one+\alpha\Gamma)M^\dagger.
 \label{eq:SM-selected-mean}
\end{equation}

We assign to this outcome the operator
\begin{equation}
 C_w=(\one+\alpha_w\Gamma_w)\sigma.
\end{equation}
Equations~\eqref{eq:SM-index-removal-sigma} and
\eqref{eq:SM-selected-mean} imply
\begin{equation}
 \bbE C_w=MCM^\dagger,
 \qquad
 \bbE\mathcal E_{S'}(C_w)=\mathcal E_S(C).
 \label{eq:SM-index-removal-average}
\end{equation}
Thus the children produced by this step average to their parent.

Each nonzero $T_k$ is a scalar multiple of an even Majorana
monomial.  Its Hermitian part is therefore either zero or a real
multiple of a Hermitian even Majorana monomial.  We include the sign
in $\Gamma_w$ so that $\alpha_w\ge0$.  If the Hermitian part is zero,
we set $\alpha_w=0$ and $\Gamma_w=\one$.  If it is scalar, we take
$\Gamma_w\in\{\one,-\one\}$.  The outcome $K$ retains its assigned
probability in both cases.

The probabilities in Table~\ref{tab:SM-seven} are chosen so that
every resulting coefficient $\alpha_w$ satisfies
Eq.~\eqref{eq:SM-soft-invariant} with $S'$ in place of $S$.  The
probabilities in the table sum to one, and dividing the corresponding
terms by them gives the coefficient bounds in the last column.  The
lemma below accounts for the support dependence in
Eq.~\eqref{eq:SM-soft-invariant}.

\begin{table}[htbp]
\centering
\renewcommand{\arraystretch}{1.2}
\caption{The seven terms in Eq.~\eqref{eq:SM-seven}, their selection
probabilities, and the resulting coefficient bounds.  Here $s$ is
the degree of the single sampled monomial in a term, and $s_1,s_2$
are the degrees of the monomials in $Z_1,Z_2$ when both occur.
If a selected term vanishes, then $\alpha_w=0$.}
\label{tab:SM-seven}
\begin{tabular}{lcc}
\hline
selected term & selection probability & bound on $\alpha_w$ \\
\hline
$\alpha\Gamma$ & $1/2$ & $2\alpha$ \\
$Z_1$ or $Z_2$ & $1/16$ each & $\tfrac12Q^{-s}$ \\
$Z_1\alpha\Gamma$ or $\alpha\Gamma Z_2$
 & $1/16$ each & $\tfrac12\alpha Q^{-s}$ \\
$Z_1Z_2$ & $1/8$ & $\tfrac1{128}Q^{-(s_1+s_2)}$ \\
$Z_1\alpha\Gamma Z_2$
 & $1/8$ & $\tfrac1{128}\alpha Q^{-(s_1+s_2)}$ \\
\hline
\end{tabular}
\end{table}

The last column of Table~\ref{tab:SM-seven} controls the coefficient
$\alpha_w$ before the support of $\Gamma_w$ is taken into account.
To compare these bounds with
Eq.~\eqref{eq:SM-soft-invariant}, define, for every Majorana monomial
$P$,
\begin{equation}
 w_S(P)=
 t^{|\supp P\cap S^c|/2}
 Q^{-|\supp P\cap S|}.
 \label{eq:SM-wS}
\end{equation}
The invariant in Eq.~\eqref{eq:SM-soft-invariant} can then be written
as
\begin{equation}
 \alpha\le\lambda w_S(\Gamma).
\end{equation}
The following lemma combines the coefficient bounds in the table
with the change in support under index removal.

\begin{lemma}
\label{lem:SM-soft-invariant-preserved}
Suppose the current node satisfies the support conditions stated
after Eq.~\eqref{eq:SM-soft-sigma} and
\begin{equation}
 \alpha\le\lambda w_S(\Gamma).
\end{equation}
Then every outcome of the index removal step has the form
\begin{equation}
 C_w=(\one+\alpha_w\Gamma_w)\sigma
\end{equation}
with $S$ replaced by $S'=S\setminus\{i\}$, and satisfies
\begin{equation}
 \alpha_w\le\lambda w_{S'}(\Gamma_w).
 \label{eq:SM-child-bound}
\end{equation}
\end{lemma}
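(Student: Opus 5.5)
The plan is a case-by-case check over the seven terms of Table~\ref{tab:SM-seven}, modelled on the proof of Lemma~\ref{lem:SM-hard-invariant}. The weight $2^{-m}$ is replaced by $w_S$. The structural part comes first. $Z_1,Z_2$ are supported in $S$ by Lemma~\ref{lem:SM-ratio-sampler}, and $\Gamma$ avoids the pairs of $\cM\subseteq S^c$. Hence every $\Gamma_w$ avoids the pairs of $\cM$ and commutes with $\sigma$. Also $S^c\subseteq(S')^c$, so $\sigma$ keeps the required form relative to $S'$. Taking the Hermitian part cannot increase a coefficient, so it suffices to bound $|p_K^{-1}T_K|$, which is exactly the last column of the table.

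The key tool is a set of monotonicity facts for $w$. Since $t^{1/2}\le 1/Q\cdot 2$, i.e. $Q\sqrt t=2$ by Eq.~\eqref{eq:SM-sync}, moving the removed index $i$ from $S$ to $S^c$ multiplies the weight of any monomial containing $i$ by exactly $2$. It leaves the weight unchanged otherwise. Thus $w_{S'}(P)\ge w_S(P)$ in general, and $w_{S'}(\Gamma)=2w_S(\Gamma)$ when $i\in\supp\Gamma$. For a product $P_R\Gamma$, or more generally $P_{R_1}\Gamma P_{R_2}$, the support is the symmetric difference. Labels cancelled in the symmetric difference only increase the weight, because every factor $Q^{-1}$ and $t^{1/2}$ is at most one. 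Hence
\[
w_{S'}(P_{R_1}\Gamma P_{R_2})\ \ge\ w_{S'}(\Gamma)\,Q^{-|R_1|-|R_2|}.
\]
This uses the fact that $R_j\subseteq S$. Labels of $R_j$ lying in $S'$ contribute $Q^{-1}$ each. The label $i$, if it lies in $R_j$, contributes $t^{1/2}=2/Q\ge Q^{-1}$.

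Next the cases. For $T_1$: if $\alpha>0$, then $i\in\supp\Gamma$, so $\alpha_w\le2\alpha\le 2\lambda w_S(\Gamma)=\lambda w_{S'}(\Gamma)$. For $Z_1$ and $Z_2$: $\alpha_w\le\tfrac12Q^{-s}$, and $w_{S'}(P_R)\ge Q^{-s}$ by the support bound above, so $\alpha_w\le\lambda w_{S'}$ with $\lambda=1/2$. For $Z_1\alpha\Gamma$ and $\alpha\Gamma Z_2$: $\alpha_w\le\tfrac12\alpha Q^{-s}\le\tfrac12\lambda w_S(\Gamma)Q^{-s}\le\lambda w_{S'}(\Gamma)Q^{-s}\le\lambda w_{S'}(\Gamma_w)$. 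For $Z_1Z_2$ and $Z_1\alpha\Gamma Z_2$ the estimates are the same with $s_1+s_2$, and the factor $1/128$ leaves ample slack. When $\alpha=0$ only $T_2,T_3,T_6$ survive, and their bounds do not use the choice of $i$. A scalar or vanishing Hermitian part gives $\Gamma_w=\pm\one$ or $\alpha_w=0$, with weight $1$, and is covered trivially.

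The main obstacle is the labelling of $i$ in the product terms. In particular, $i\in R_j$ while $i\notin\supp\Gamma$, or $i$ cancels between factors, must not break the inequality. The careful statement is the per-label comparison: each label of $\supp\Gamma_w$ in $S'$ or $(S')^c$ is charged to a factor of $\Gamma$ or of some $R_j$, and its weight under $S'$ is at least the charged factor. I would write this as a short sublemma. The checks of Table~\ref{tab:SM-seven} themselves follow from $|Z_j|=Q^{-s_j}/32$ and the listed probabilities.
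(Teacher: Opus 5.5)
Your proposal is correct and follows the paper's proof essentially step for step. You use the same structural support check, the same two weight facts (supermultiplicativity because cancelled labels contribute $Q^{-2}$ or $t\le1$, and $w_{S'}=2w_S$ on monomials containing $i$), and the same case-by-case comparison against Table~\ref{tab:SM-seven}. The only cosmetic difference is that you apply supermultiplicativity with respect to $S'$ and bound $w_{S'}(P_R)\ge Q^{-|R|}$, whereas the paper first passes to $w_S$ via $w_{S'}\ge w_S$ and then uses $w_S(P_R)=Q^{-|R|}$ exactly; both routes are valid.
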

\begin{proof}
For two Majorana monomials $P_R$ and $P_{R'}$, the canonical
anticommutation relations give
\begin{equation}
 P_RP_{R'}
 =
 \omega_{R,R'}P_{R\mathbin{\triangle}R'},
 \qquad
 |\omega_{R,R'}|=1,
\end{equation}
where $R\mathbin{\triangle}R'$ denotes the symmetric difference.
Thus every index in $R\cap R'$ disappears from the support of the
product.  In $w_S(P_R)w_S(P_{R'})$, each such index contributes
either $Q^{-2}$ or $t$, both of which are at most one .  Removing
these contributions can only increase the weight.  Therefore
\begin{equation}
 w_S(P_RP_{R'})
 \ge
 w_S(P_R)w_S(P_{R'}).
 \label{eq:SM-supermult}
\end{equation}

Replacing $S$ by $S'=S\setminus\{i\}$ changes only the contribution
of the index $i$.  If $i\in\supp P$, this contribution changes from
$Q^{-1}$ to $\sqrt t$.  Since $Q\sqrt t=2$,
\begin{equation}
 w_{S'}(P)
 =
 \begin{cases}
  2w_S(P),& i\in\supp P,\\
  w_S(P),& i\notin\supp P.
 \end{cases}
 \label{eq:SM-weight-change}
\end{equation}
In particular, $w_{S'}(P)\ge w_S(P)$.  Taking the Hermitian part
cannot increase the absolute value of a coefficient.  If the result
is nonzero, it remains proportional to the same Majorana monomial
and therefore has the same support.

We now consider the seven terms in
Table~\ref{tab:SM-seven}.  If the selected term vanishes, then
$\alpha_w=0$, and Eq.~\eqref{eq:SM-child-bound} holds immediately.

Suppose first that the selected term is $\alpha\Gamma$ and is
nonzero.  Then $\alpha>0$, so the choice of $i$ gives
$i\in\supp\Gamma\cap S$.  By
Eq.~\eqref{eq:SM-weight-change},
\begin{equation}
 w_{S'}(\Gamma)=2w_S(\Gamma).
\end{equation}
The first row of Table~\ref{tab:SM-seven} and the invariant at the
current node therefore give
\begin{equation}
 \alpha_w
 =2\alpha
 \le2\lambda w_S(\Gamma)
 =\lambda w_{S'}(\Gamma_w).
\end{equation}

Next suppose that the selected term is $Z_1$ or $Z_2$.  Let $P$ be
its sampled Majorana monomial and let $s=|\supp P|$.  Since
$\supp P\subseteq S$,
\begin{equation}
 w_S(P)=Q^{-s}.
\end{equation}
If the Hermitian part is nonzero, then
$\supp\Gamma_w=\supp P$.  The second row of
Table~\ref{tab:SM-seven} and
$w_{S'}(P)\ge w_S(P)$ give
\begin{equation}
 \alpha_w
 \le\frac12Q^{-s}
 =\lambda w_S(P)
 \le\lambda w_{S'}(\Gamma_w).
\end{equation}
This argument does not require $i\in\supp P$.

Now consider $Z_1\alpha\Gamma$ or
$\alpha\Gamma Z_2$.  Let $P$ denote the sampled monomial and let
$s=|\supp P|$.  For a nonzero Hermitian part,
$\Gamma_w$ has the same support as $P\Gamma$.  Equations
\eqref{eq:SM-supermult} and \eqref{eq:SM-weight-change} imply
\begin{equation}
 w_{S'}(\Gamma_w)
 \ge w_S(P\Gamma)
 \ge w_S(P)w_S(\Gamma)
 =Q^{-s}w_S(\Gamma).
 \label{eq:SM-one-Z-support}
\end{equation}
Using the third row of Table~\ref{tab:SM-seven} and the invariant at
the current node, we obtain
\begin{align}
 \alpha_w
 &\le\frac12\alpha Q^{-s}
 \nonumber\\
 &\le\frac12\lambda w_S(\Gamma)Q^{-s}
 \nonumber\\
 &\le\lambda w_{S'}(\Gamma_w).
\end{align}

It remains to consider the two terms containing both sampled
monomials.  Let $P_1,P_2$ be the monomials in $Z_1,Z_2$, with
degrees $s_1,s_2$.  If the selected term is $Z_1Z_2$, then
\begin{equation}
 w_{S'}(\Gamma_w)
 \ge w_S(P_1P_2)
 \ge Q^{-(s_1+s_2)}.
\end{equation}
The fourth row of Table~\ref{tab:SM-seven} and
$\lambda=1/2$ therefore give
\begin{equation}
 \alpha_w
 \le\frac1{128}Q^{-(s_1+s_2)}
 \le\lambda w_{S'}(\Gamma_w).
\end{equation}

If the selected term is
$Z_1\alpha\Gamma Z_2$, then
\begin{equation}
 w_{S'}(\Gamma_w)
 \ge w_S(P_1\Gamma P_2)
 \ge Q^{-(s_1+s_2)}w_S(\Gamma).
\end{equation}
Using the last row of Table~\ref{tab:SM-seven} and the invariant at
the current node gives
\begin{align}
 \alpha_w
 &\le\frac1{128}\alpha Q^{-(s_1+s_2)}
 \nonumber\\
 &\le\frac1{128}\lambda
 w_S(\Gamma)Q^{-(s_1+s_2)}
 \nonumber\\
 &\le\lambda w_{S'}(\Gamma_w).
\end{align}

If $\alpha=0$, the terms containing $\alpha\Gamma$ vanish.  The
arguments for $Z_1$, $Z_2$, and $Z_1Z_2$ do not use the assumption
$i\in\supp\Gamma$, so they remain valid when $i$ is the first index
of $S$ in the fixed ordering.  This proves
Eq.~\eqref{eq:SM-child-bound} for every outcome.

It remains to check the support conditions.  The sampled monomials
are even, so every nonzero $\Gamma_w$ is even.  They are supported
in $S$, whereas every pair in $\cM$ lies in $S^c$.  Since
$\Gamma$ is already disjoint from these pairs, $\Gamma_w$ remains
disjoint from them.  The matching $\cM$ is unchanged during index
removal, and
\begin{equation}
 S^c\subseteq(S')^c.
\end{equation}
Thus all pairs in $\cM$ still lie outside $S'$, and all the required
support conditions are preserved.
\end{proof}
\paragraph{Soft pinning.}
Suppose $\alpha>0$ and $m=0$, so that
$\supp\Gamma\subseteq S^c$.  Soft pinning reduces the support of
$\Gamma$ by two indices at a time and adds the corresponding
quadratic operator to $\sigma$.

If $\Gamma=\varepsilon\one$ with
$\varepsilon\in\{\pm1\}$, then
\begin{equation}
 C=(1+\varepsilon\alpha)\sigma.
\end{equation}
The invariant gives $\alpha\le\lambda=1/2$, so
$1+\varepsilon\alpha>0$.  We replace $c$ by
$c(1+\varepsilon\alpha)$ and set
$\alpha=0$ and $\Gamma=\one$.  This leaves $C$ unchanged.

Now suppose that $\Gamma$ is not scalar.  Since $\Gamma$ is even and
supported in $S^c$, its degree is $\ell=2r$ for some $r\ge1$.  Pair
the indices in its support in a fixed order and write
\begin{equation}
 \Gamma=\varepsilon Q_1\cdots Q_r,
 \qquad
 \varepsilon\in\{\pm1\},
 \qquad
 Q_j=\ii\gamma_{a_j}\gamma_{b_j}.
 \label{eq:SM-defect-pair}
\end{equation}
The pairs $(a_j,b_j)$ are disjoint, all their indices lie in $S^c$,
and none of these indices belongs to a pair in $\cM$.  The operators
$Q_j$ are commuting Hermitian involutions.  Since $m=0$, the
invariant becomes
\begin{equation}
 \alpha\le\lambda t^r.
 \label{eq:SM-pinning-bound}
\end{equation}

Set
\begin{equation}
 A=\varepsilon Q_1\cdots Q_{r-1}.
\end{equation}
Then $\Gamma=AQ_r$, and
\begin{equation}
 \one+\alpha AQ_r
 =
 \frac12
 \left(\one+\frac{\alpha}{t}A\right)(\one+tQ_r)
 +
 \frac12
 \left(\one-\frac{\alpha}{t}A\right)(\one-tQ_r).
 \label{eq:SM-Walsh-step}
\end{equation}
Accordingly, define
\begin{equation}
 C_\pm
 =
 \left(\one\pm\frac{\alpha}{t}A\right)
 (\one\pm tQ_r)\sigma.
 \label{eq:SM-soft-pinning-children}
\end{equation}
Equation~\eqref{eq:SM-Walsh-step} gives
\begin{equation}
 C=\frac12C_++\frac12C_-.
 \label{eq:SM-soft-pinning-average}
\end{equation}
Thus the two operators $C_+$ and $C_-$ are chosen with probability
$1/2$ each.

For the outcome with sign $s\in\{\pm1\}$, the new data are
\begin{equation}
 \alpha'=\frac{\alpha}{t},
 \qquad
 \Gamma'=sA,
 \qquad
 \sigma'=(\one+stQ_r)\sigma.
 \label{eq:SM-soft-pinning-update}
\end{equation}
The set $S$ is unchanged.  The pair $(a_r,b_r)$ is added to $\cM$
with sign $s$.  The support of $\Gamma'$ has
\begin{equation}
 m'=0,
 \qquad
 \ell'=2(r-1).
\end{equation}
By Eq.~\eqref{eq:SM-pinning-bound},
\begin{equation}
 \alpha'
 =\frac{\alpha}{t}
 \le\lambda t^{r-1}
 =\lambda t^{\ell'/2}Q^{-m'}.
\end{equation}
Hence soft pinning preserves
Eq.~\eqref{eq:SM-soft-invariant}.

The new pair is disjoint from $\supp A$ and from every pair already
in $\cM$.  Therefore $Q_r$ commutes with both $A$ and $\sigma$.
Moreover, $\one+stQ_r$ has eigenvalues $1-t$ and $1+t$, while
\begin{equation}
 \alpha'
 \le\lambda t^{r-1}
 \le\frac12.
\end{equation}
It follows that both $\one+\alpha'\Gamma'$ and
$\one+stQ_r$ are positive and invertible.  Thus each $C_\pm$ is
positive and invertible, and $\sigma'$ is again Gaussian.

Repeating the same step removes all indices from $\supp\Gamma$.
When $\Gamma$ becomes scalar, the scalar rule above sets
$\alpha=0$.  Iterating Eq.~\eqref{eq:SM-Walsh-step} gives the
equivalent identity
\begin{equation}
 \one+\alpha\varepsilon\prod_{j=1}^rQ_j
 =
 2^{-r}\sum_{\bm s\in\{\pm1\}^r}
 \left(
  1+\frac{\alpha\varepsilon}{t^r}\prod_{j=1}^rs_j
 \right)
 \prod_{j=1}^r(\one+ts_jQ_j).
 \label{eq:SM-Walsh}
\end{equation}
Equation~\eqref{eq:SM-pinning-bound} implies
\begin{equation}
 \frac12
 \le
 1+\frac{\alpha\varepsilon}{t^r}\prod_{j=1}^rs_j
 \le
 \frac32.
\end{equation}
Thus every scalar appearing after the last soft pinning step is
positive.  The recursive construction uses
Eq.~\eqref{eq:SM-Walsh-step} one pair at a time, so every soft
pinning step has two children.

Together with
Lemma~\ref{lem:SM-soft-invariant-preserved}, this proves that both
operations preserve the form of $C$, its positivity, and
Eq.~\eqref{eq:SM-soft-invariant}.  Repeating them eventually gives
$S=\varnothing$ and $\alpha=0$, so every leaf has the form
\begin{equation}
 C=\sigma,
\end{equation}
with $\sigma$ positive, invertible, and Gaussian.

Sampling $Z_1,Z_2$ still involves continuous time variables, and an
expansion path can have arbitrary length.
Appendix~\ref{sec:SM-finite} truncates these paths and discretizes
the time variables.

\subsection{Node weights}

The index-removal and soft-pinning choices form a rooted tree whose
leaves are unnormalized Gaussian operators.
Normalizing the leaves changes their relative weights according to
their traces.  To perform this reweighting using only local
information, rather than enumerating all leaves, we assign an operator
and a trace weight to every node.  For a node $v$, define
\begin{equation}
 C_v=(\one+\alpha_v\Gamma_v)\sigma_v,
 \qquad
 E_v=GX^{S_v}C_v(X^{S_v})^\dagger G,
 \qquad
 W_v=\Tr E_v.
 \label{eq:SM-Ev}
\end{equation}
At the root, $S_{\mathrm{root}}=[N]$ and
$C_{\mathrm{root}}=\one$, while a leaf $z$ has
$S_z=\varnothing$ and $\alpha_z=0$.

The following lemma shows that the operator at a node is the
conditional average of the operators at its children.  Taking the
trace will then identify $W_v$ as the expected trace of the leaf
reached from $v$, which is the quantity needed for local reweighting. 
\begin{lemma}
\label{lem:SM-tree-martingale}
For every internal node $v$,
\begin{equation}
 E_v=\bbE(E_w\,|\, v),
 \label{eq:SM-martingale}
\end{equation}
where $w$ is the child selected according to the transition
probabilities at $v$.  Moreover,
\begin{equation}
 E_{\mathrm{root}}=\ee^{-\beta H},
\end{equation}
and the operator $E_z$ at every leaf $z$ is positive, invertible, and
Gaussian.
\end{lemma}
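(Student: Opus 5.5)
The proof follows the two kinds of internal steps in the tree, index removal and soft pinning, and uses the averaging identities already established for each. Since $E_v$ is obtained from $C_v$ by the fixed linear map
\[
C\mapsto GX^{S}C(X^{S})^\dagger G,
\]
it suffices to show that the children average to the parent at the level of $\mathcal E_S(C)=X^SC(X^S)^\dagger$. The outer conjugation by $G$ is linear and commutes with expectation.

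For an index-removal node $v$ with active set $S$ and child set $S'=S\setminus\{i\}$, I would use $X^S=X^{S'}M_{S,i}$. Eq.~\eqref{eq:SM-index-removal-average} then gives $\bbE\,\mathcal E_{S'}(C_w)=\mathcal E_S(C_v)$. The expectation here is over $Z_1,Z_2$ from Lemma~\ref{lem:SM-ratio-sampler} and over the selected term $K$. This relies on $\bbE(\one+Z_1)=M$, on $\bbE(\one+Z_2)=M^\dagger$ and on independence, which give Eq.~\eqref{eq:SM-seven-mean}. It also relies on the Hermitian-part identity~\eqref{eq:SM-selected-mean}, which holds because $M(\one+\alpha\Gamma)M^\dagger$ is Hermitian. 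Finally, $M$ commutes with $\sigma$ by the disjointness of supports. Applying $G\cdot G$ yields $E_v=\bbE(E_w\,|\,v)$.

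For a soft-pinning node, $S$ is unchanged and Eq.~\eqref{eq:SM-soft-pinning-average} gives $C=\tfrac12C_++\tfrac12C_-$. Linearity of $\mathcal E_S$ and of conjugation by $G$ then gives the martingale identity. The scalar-absorption step leaves $C$ unchanged, so the identity is trivial there; I would either treat it as a deterministic single-child move or merge it into the preceding step.

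At the root, $S=[N]$ and $C=\one$, so $E_{\mathrm{root}}=GXX^\dagger G=\ee^{-\beta H}$ by Eq.~\eqref{eq:SM-half-contour}. At a leaf, $S=\varnothing$ gives $X^\varnothing=\one$, and $\alpha=0$ gives $C_z=\sigma_z=c\prod_e(\one+ts_eB_e)$. This has $c>0$, and it is a product of commuting factors with eigenvalues $1\pm t>0$. Hence $\sigma_z$ is positive, invertible and Gaussian; it is the exponential of a quadratic operator up to scale. The invertible case of the proof of Lemma~\ref{lem:SM-Gaussian-sandwich} then shows that $E_z=G\sigma_zG$ is Gaussian. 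It is positive, and it is invertible because $G=\ee^{-TH_0}$ is invertible.

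The only delicate point is that the tree is infinite-branching: $Z_1,Z_2$ range over continuous times and unbounded path lengths. Exchanging the conditional expectation with the linear map $\mathcal E_{S'}$ therefore needs absolute convergence. This holds because every child operator $C_w$ is bounded in norm by $\tfrac32\|\sigma\|$, and the finite-dimensional linear map is bounded. Lemma~\ref{lem:SM-soft-invariant-preserved} keeps every child in the required form, so the argument iterates down the tree.
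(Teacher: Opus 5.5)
Your proposal is correct and follows the paper's proof essentially step for step. You conjugate the index-removal and soft-pinning averaging identities by $GX^{S'}$ (resp.\ $GX^{S_v}$), identify the root via $GXX^\dagger G=\ee^{-\beta H}$, and obtain the leaf properties from $\sigma_z$ being a positive multiple of commuting factors $\one\pm tB_e$ together with the Gaussian-sandwich lemma. Your integrability remark is harmless but not needed: the tree is not actually infinite-branching, since each $Z_j$ takes only finitely many values ($R\subseteq S$, $\zeta\in\{\pm1,\pm\ii\}$), so the child data at an index-removal node are finitely supported even though the auxiliary expansion paths are not.
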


\begin{proof}
First consider an index removal step.  Set
$S'=S_v\setminus\{i\}$ and $M=M_{S_v,i}$, so that
$X^{S_v}=X^{S'}M$.  The index-removal identity gives
\begin{equation}
 \bbE(C_w\,|\, v)=MC_vM^\dagger.
\end{equation}
Since every child has $S_w=S'$,
\begin{align}
 \bbE(E_w\,|\, v)
 &=
 GX^{S'}\bbE(C_w\,|\, v)(X^{S'})^\dagger G
 \nonumber\\
 &=
 GX^{S'}MC_vM^\dagger(X^{S'})^\dagger G
 \nonumber\\
 &=
 GX^{S_v}C_v(X^{S_v})^\dagger G
 =E_v.
\end{align}

For a soft pinning step, $S_w=S_v$.
Equation~\eqref{eq:SM-Walsh-step}, together with the definition of
the two children, gives
\begin{equation}
 \bbE(C_w\,|\, v)=C_v.
\end{equation}
Conjugating this identity by $GX^{S_v}$ proves
Eq.~\eqref{eq:SM-martingale} for a soft pinning step as well.

At the root,
\begin{equation}
 S_{\mathrm{root}}=[N],
 \qquad
 C_{\mathrm{root}}=\one,
 \qquad
 X^{S_{\mathrm{root}}}=X.
\end{equation}
Therefore
\begin{equation}
 E_{\mathrm{root}}
 =GXX^\dagger G
 =\ee^{-\beta H}.
\end{equation}

At a leaf $z$, we have
$S_z=\varnothing$ and $\alpha_z=0$.  Hence
$X^{S_z}=\one$ and $C_z=\sigma_z$, so
\begin{equation}
 E_z=L_z=G\sigma_zG.
 \label{eq:SM-leaf}
\end{equation}
The leaf construction makes $\sigma_z$ positive, invertible, and
Gaussian.  Lemma~\ref{lem:SM-Gaussian-sandwich}
therefore gives the same properties for $L_z$.
\end{proof}
Let $p(w\,|\,v)$ denote the probability of selecting a child $w$ at
a node $v$.  For a leaf $z$ below $v$, let $p(z\,|\,v)$ be the
probability of reaching $z$ starting from $v$, and set
$p_z=p(z\,|\,\mathrm{root})$.  Iterating
Eq.~\eqref{eq:SM-martingale} gives
\begin{equation}
 E_v=\sum_{z\succeq v}p(z\,|\,v)L_z,
 \label{eq:SM-descendant-decomposition}
\end{equation}
where $z\succeq v$ means that $z$ is a leaf below $v$.  At the root,
this becomes
\begin{equation}
 \ee^{-\beta H}=\sum_zp_zL_z,
 \qquad
 L_z=G\sigma_zG.
 \label{eq:SM-sampler-native-decomposition}
\end{equation}

For a leaf $z$, define
\begin{equation}
 F_z=\Tr L_z,
 \qquad
 \tau_z=\frac{L_z}{F_z}.
\end{equation}
Taking the trace of
Eq.~\eqref{eq:SM-descendant-decomposition} gives
\begin{equation}
 W_v=\sum_{z\succeq v}p(z\,|\,v)F_z.
 \label{eq:SM-descendant-weight}
\end{equation}
Thus $W_v$ is the expected value of $F_z$ conditioned on starting
from $v$.  In particular,
\begin{equation}
 W_{\mathrm{root}}=\Tr\ee^{-\beta H},
\end{equation}
and Eq.~\eqref{eq:SM-sampler-native-decomposition} gives
\begin{equation}
 \rho_\beta
 =
 \sum_zq_z\tau_z,
 \qquad
 q_z=\frac{p_zF_z}{W_{\mathrm{root}}}.
\end{equation}

This distribution can be generated locally if the node weights are
known.  Taking the trace of Eq.~\eqref{eq:SM-martingale} gives
\begin{equation}
 W_v=\sum_w p(w\,|\,v)W_w,
 \label{eq:SM-node-weight-recursion}
\end{equation}
where the sum is over the children of $v$.  Therefore
\begin{equation}
 q(w\,|\,v)
 =
 \frac{p(w\,|\,v)W_w}{W_v}
 \label{eq:SM-reweighted-child}
\end{equation}
defines a probability distribution over these children.

Indeed, for a path
\begin{equation}
 v_0=\mathrm{root},\quad
 v_1,\ldots,v_h=z,
\end{equation}
the resulting probability of reaching $z$ is
\begin{align}
 \prod_{j=0}^{h-1}q(v_{j+1}\,|\, v_j)
 &=
 \prod_{j=0}^{h-1}
 p(v_{j+1}\,|\, v_j)\frac{W_{v_{j+1}}}{W_{v_j}}
 \nonumber\\
 &=
 \frac{p_zW_z}{W_{\mathrm{root}}}
 =
 \frac{p_zF_z}{W_{\mathrm{root}}}
 =q_z.
\end{align}
Thus the normalized leaf distribution can be sampled without
enumerating all leaves, provided that the weights $W_v$ can be
obtained at the nodes visited by the sampler.

The local probabilities in
Eq.~\eqref{eq:SM-reweighted-child} depend on the node weights
\begin{equation}
 W_v
 =
 \Tr\left[
 GX^{S_v}
 (\one+\alpha_v\Gamma_v)\sigma_v
 (X^{S_v})^\dagger G
 \right].
 \label{eq:SM-Wv-expanded}
\end{equation}
The operator $\sigma_v$ is Gaussian, but
$(\one+\alpha_v\Gamma_v)\sigma_v$ need not be.  The oracle in
Appendix~\ref{sec:SM-oracle} is constructed instead for
\begin{equation}
 Z_v
 =
 \Tr\left[
 GX^{S_v}\sigma_v(X^{S_v})^\dagger G
 \right].
 \label{eq:SM-Zv}
\end{equation}
We now compare this quantity with $W_v$.

At the node $v$, Eq.~\eqref{eq:SM-soft-invariant} reads
\begin{equation}
 \alpha_v
 \le
 \frac12
 t^{\ell_v/2}Q^{-m_v}
 \le\frac12,
\end{equation}
where
\begin{equation}
 m_v=|\supp\Gamma_v\cap S_v|,
 \qquad
 \ell_v=|\supp\Gamma_v\cap S_v^c|.
\end{equation}
Since $\Gamma_v^2=\one$,
\begin{equation}
 \frac12\one
 \preceq
 \one+\alpha_v\Gamma_v
 \preceq
 \frac32\one.
\end{equation}
The operator $\Gamma_v$ commutes with the positive operator
$\sigma_v$, and therefore
\begin{equation}
 \frac12\sigma_v
 \preceq
 (\one+\alpha_v\Gamma_v)\sigma_v
 \preceq
 \frac32\sigma_v.
\end{equation}
Conjugating by $GX^{S_v}$ and taking the trace gives
\begin{equation}
 \frac12Z_v
 \le W_v
 \le\frac32Z_v.
 \label{eq:SM-live-sandwich}
\end{equation}
Thus $Z_v$ approximates $W_v$ within fixed multiplicative factors.  An estimate of
$Z_v$ therefore provides the control of $W_v$ needed for local
reweighting.

The construction above uses
Lemma~\ref{lem:SM-ratio-sampler} as an input.  We now prove that lemma
by sampling one path through the expansion of $M_{S,i}-\one$.  This
expansion path records the choices made within a single
relative factor sample and is distinct from a path through the
sampling tree.

\subsection{\texorpdfstring{Sampling $M_{S,i}$}{Sampling M(S,i)}}

We now prove Lemma~\ref{lem:SM-ratio-sampler}.  Appendix B writes
\begin{equation}
 M_{S,i}-\one=\sum_R m_RP_R
 \label{eq:SM-collected-ratio}
\end{equation}
and bounds the weighted sum of the collected coefficients,
\begin{equation}
 \sum_R |m_R|q^{|R|}.
 \label{eq:SM-collected-mass}
\end{equation}
This estimate suffices for the convex decomposition in
Appendix~\ref{sec:SM-mixture-proof}.  There, $P_R$ is selected with
probability proportional to $|m_R|q^{|R|}$.  Thus the coefficients
$m_R$ have to be computed before sampling.  Here we instead sample an
expansion path directly.

Let $\mathfrak H_{S,i}$ denote the space of expansion paths described
below.  We use counting measure for their discrete choices and
Lebesgue measure for their time variables, and denote the resulting
measure by $\nu_{S,i}$.  Each path $h$ determines a coefficient $a(h)$
and a final monomial $P_{R(h)}$, so that
\begin{equation}
 M_{S,i}-\one
 =
 \int_{\mathfrak H_{S,i}}
 a(h)P_{R(h)}\,\nu_{S,i}(\dd h).
 \label{eq:SM-history-expansion}
\end{equation}
For a fixed $R$, summing the paths that end at $P_R$ gives
\begin{equation}
 m_R
 =
 \int_{\{h:R(h)=R\}}
 a(h)\,\nu_{S,i}(\dd h).
\end{equation}
Consequently,
\begin{equation}
 \sum_R|m_R|q^{|R|}
 \le
 \int_{\mathfrak H_{S,i}}
 |a(h)|q^{|R(h)|}\,\nu_{S,i}(\dd h).
 \label{eq:SM-collected-pathwise-comparison}
\end{equation}
The left-hand side takes the absolute value after the paths ending at
the same monomial have been summed.  The right-hand side takes the
absolute value of each path coefficient before summing.  Appendix~\ref{sec:SM-mixture-proof}
controls the former.  To define nonnegative probabilities for the
paths, we need to control the latter.

To bound the right-hand side of
Eq.~\eqref{eq:SM-collected-pathwise-comparison}, consider first the
expansion of $A(u)$.  Write $V=\sum_Tv_TP_T$.  Expanding the
determinants in Eq.~\eqref{eq:SM-exterior-power} gives
\begin{equation}
 \ee^{uH_0}P_T\ee^{-uH_0}
 =
 \sum_{\varphi:T\hookrightarrow[N]}
 c_{T,\varphi}(u)P_{\varphi(T)},
 \qquad
 |c_{T,\varphi}(u)|
 =
 \prod_{a\in T}|Q(u)_{a,\varphi(a)}|.
 \label{eq:SM-labeled-exterior-power}
\end{equation}
Here $\varphi$ ranges over injective maps.  The sign needed to put the
output labels in canonical order is included in $c_{T,\varphi}(u)$. 
For a fixed output index $b$, we sum the coefficient magnitudes  of the
terms whose final monomial contains $b$.  Define
\begin{equation}
 \widehat J_q(u)
 :=
 \max_b
 \sum_T |v_T|q^{|T|}
 \sum_{\substack{\varphi:T\hookrightarrow[N]\\
                  b\in\varphi(T)}}
 |c_{T,\varphi}(u)|.
 \label{eq:SM-uncollected-J}
\end{equation}
Unlike $J_q(A(u))$, this quantity is evaluated before terms with the
same final monomial are summed.  Hence
$J_q(A(u))\le\widehat J_q(u)$.

Every injective map whose image contains $b$ has a unique $a\in T$
such that $\varphi(a)=b$.  After dropping the injectivity condition
on the remaining outputs, we obtain
\begin{align}
 \widehat J_q(u)
 &\le
 \max_b\sum_T |v_T|q^{|T|}
 \sum_{a\in T}|Q(u)_{ab}|
 \prod_{c\in T\setminus\{a\}}
 \left(\sum_j|Q(u)_{cj}|\right)
 \nonumber\\
 &\le
 q^d\ee^{(d-1)C_Hu}
 \max_b\sum_a|Q(u)_{ab}|
 \sum_{T\ni a}|v_T|
 \nonumber\\
 &\le q^dC_Vg\,\ee^{dC_Hu}.
 \label{eq:SM-uncollected-Jbound}
\end{align}
This is the same estimate as in
Lemma~\ref{lem:SM-interaction-coefficients}, now applied before the
permutation terms are summed.  Set
\begin{equation}
 \widehat\eta_q
 :=
 \int_0^T\widehat J_q(u)\,\dd u.
 \label{eq:SM-etahat}
\end{equation}

We can now specify the expansion paths introduced above.  A path first
chooses a term of order $k$ in the Dyson series
Eq.~\eqref{eq:SM-M-Dyson}, together with its ordered times
$u_1,\ldots,u_k$.  For the $j$th occurrence of $B_{S,i}(u_j)$, it
chooses a term with $m_j$ nested commutators in
Eq.~\eqref{eq:SM-B-nested}, the corresponding times, and one labeled
determinant term from each occurrence of $D_{S,i}$ and $A^{S'}$.
Multiplying the selected Majorana monomials and putting their indices
in canonical order determines $P_{R(h)}$ and the coefficient $a(h)$. 

\begin{lemma}
\label{lem:SM-uncollected-ratio}
Suppose that $2d\widehat\eta_q/q^2<1$.  Then every expansion path
satisfies $R(h)\subseteq S$, and the expansion in
Eq.~\eqref{eq:SM-history-expansion} converges absolutely in $L_q$.
Moreover,
\begin{equation}
 \mathfrak M_q(M_{S,i}-\one)
 :=
 \int_{\mathfrak H_{S,i}}
 |a(h)|q^{|R(h)|}\,\nu_{S,i}(\dd h)
 \le
 \exp\left(
 \frac{\widehat\eta_q}
 {1-2d\widehat\eta_q/q^2}
 \right)-1.
 \label{eq:SM-history-majorant}
\end{equation}
In particular, the expansion converges in operator norm.  The same
statements hold for $M_{S,i}^\dagger-\one$.
\end{lemma}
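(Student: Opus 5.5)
The plan is to redo the proof of Lemma~\ref{lem:SM-ratio} at the level of expansion paths. Absolute values are taken on each path coefficient before any summation. The one new input is that the collected quantity $J_q(A(u))$ is replaced by its uncollected majorant $\widehat J_q(u)$ from Eq.~\eqref{eq:SM-uncollected-J}. We write each path $h$ as a nested choice: a Dyson order $k$ with times $0<u_k<\cdots<u_1<T$; for each factor $B_{S,i}(u_j)$, a commutator depth $m_j$ with times $0<t_1<\cdots<t_{m_j}<u_j$; and for every occurrence of $D_{S,i}$ or $A^{S'}$, a local term $v_T P_T$ together with an injective labeling $\varphi$ as in Eq.~\eqref{eq:SM-labeled-exterior-power}. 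For each commutator we also record which of the two orderings $P_RP_{R'}$ or $P_{R'}P_R$ is used.

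\textbf{Support condition.} The claim $R(h)\subseteq S$ is immediate from the construction. The term drawn for $D_{S,i}$ is required to have $\varphi(T)\subseteq S$ and $i\in\varphi(T)$. Terms drawn for $A^{S'}$ have $\varphi(T)\subseteq S'$. A product of monomials is supported in the symmetric difference of their supports, which stays inside the union. We also discard commutator paths with $|R\cap T|$ even, since those commutators vanish. Discarding them only lowers the path majorant and preserves the identity \eqref{eq:SM-history-expansion}.

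\textbf{Majorant for one $B$ factor.} We first bound $\int|a|q^{|R|}$ over the paths of a single $B_{S,i}(u)$, which is the pathwise analogue of Lemma~\ref{lem:SM-comm}. Fix the current monomial $P_R$, of degree at most $jd$ after $j-1$ commutators. A new labeled term $P_{\varphi(T)}$ gives a nonzero commutator only if $\varphi(T)$ meets $R$. Using $q^{|R\triangle\varphi(T)|}\le q^{|R|+|T|-2}$, the factor $2$ from the two orderings, and a union bound over $a\in R$, the added weight is at most
\[
\frac{2|R|}{q^2}\,\widehat J_q(t).
\]
This holds because $\widehat J_q$ counts, for a fixed output label $a$, all labeled terms whose image contains $a$. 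Iterating gives the per-path-class bound $m!(2d/q^2)^m\widehat J_q(u)\prod_j\widehat J_q(t_j)$. The same symmetrization over time orderings as in Eq.~\eqref{eq:SM-time-simplex} then gives the bound $\widehat\eta_q/(1-2d\widehat\eta_q/q^2)$ for the time-integrated path majorant of $B_{S,i}$.

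\textbf{Dyson series.} Multiplying the $k$ factors, the path weights are submultiplicative: $|a|$ multiplies exactly, and $q^{|R_1\triangle R_2|}\le q^{|R_1|}q^{|R_2|}$. The ordered-simplex integral then yields $\frac1{k!}(\cdot)^k$, and summing over $k\ge1$ gives Eq.~\eqref{eq:SM-history-majorant}.

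\textbf{Convergence.} Since $q\ge1$ and $\|P_R\|=1$, the total path majorant bounds the operator norm of the path integral. The path integral therefore converges absolutely, and Fubini lets us reorder it. It equals the collected expansion, because every step merely expands the absolutely convergent series of Lemma~\ref{lem:SM-ratio} and the determinant sums.

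\textbf{Adjoint.} For $M_{S,i}^\dagger-\one$, we take adjoints path by path. This reverses the order of products, conjugates the coefficients, and preserves both supports and magnitudes, so the same bound holds.

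The main obstacle is the bookkeeping for the commutator step. The degree growth must be controlled per path, with depth-$j$ paths having degree at most $jd$, and the union bound over the intersecting label must produce exactly the uncollected $\widehat J_q$ and not something larger. We will handle this by conditioning on the path prefix and applying the fixed-output-label bound \eqref{eq:SM-uncollected-Jbound} label by label.
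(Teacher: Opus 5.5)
Your proposal is correct and follows the paper's proof essentially step for step. In both, $\widehat J_q$ replaces $J_q$ in a pathwise commutator estimate, the degree bound $jd$ produces the factor $m!(2d/q^2)^m$, symmetrizing the time variables gives $K_q=\widehat\eta_q/(1-2d\widehat\eta_q/q^2)$, and the Dyson simplex then gives $\ee^{K_q}-1$.

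The differences are cosmetic. You split each commutator into its two orderings and use a union bound over the shared label. The paper instead uses the magnitude $2$ of a nonzero commutator directly and assigns each pair to its first shared index; that assignment matters only later, for exact counting in the path generator.
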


\begin{proof}
Set $S'=S\setminus\{i\}$, $U(u)=X^{S'}(u)$, and
$D(u)=D_{S,i}(u)$.  As in Appendix~\ref{sec:SM-mixture-proof}, expand
\begin{equation}
 B_{S,i}(u)=U(u)^{-1}D(u)U(u)
\end{equation}
using Eq.~\eqref{eq:SM-B-nested}, and then substitute this expansion
into the Dyson series~\eqref{eq:SM-M-Dyson}.  Expanding every
determinant as in Eq.~\eqref{eq:SM-labeled-exterior-power}, without
summing terms with the same output monomial, gives
Eq.~\eqref{eq:SM-history-expansion}.  Every monomial selected from
$D_{S,i}$ or $A^{S'}$ is supported in $S$.  Their products are
therefore supported in $S$ as well, so $R(h)\subseteq S$.

It remains to bound the weighted coefficient sum in
Eq.~\eqref{eq:SM-history-majorant}.  Consider first a term with $m$
nested commutators in the expansion of $B_{S,i}(u)$.  Before the
$j$th commutator, the degree of the current monomial is at most
\begin{equation}
 d+(j-1)(d-2)\le jd.
\end{equation}
Let $P_R$ be a monomial selected from $A^{S'}(t_j)$ and let $P_T$ be
the current monomial.  A nonzero commutator requires
$R\cap T\ne\varnothing$, and
\begin{equation}
 q^{|R\triangle T|}
 \le q^{|R|+|T|-2}.
\end{equation}
Assign the pair $(R,T)$ to the first index in $R\cap T$ according to a
fixed ordering.  Summing over the at most $jd$ indices in $T$ and
using the definition of $\widehat J_q(t_j)$ shows that the $j$th
commutator increases the weighted coefficient sum by at most
\begin{equation}
 \frac{2jd}{q^2}\widehat J_q(t_j)
\end{equation}
relative to the current monomial.  Here the coefficient $2$ is the
magnitude  of a nonzero commutator.  Over all $m$ commutators, the
degree bounds contribute
\begin{equation}
 \prod_{j=1}^m\frac{2jd}{q^2}
 =
 m!\left(\frac{2d}{q^2}\right)^m.
 \label{eq:SM-path-commutator-bound}
\end{equation}

The labeled terms selected from $D(u)$ are supported in $S$ and
contain $i$, so their total $q$-weighted coefficient magnitude is at
most $\widehat J_q(u)$.  At the $j$th commutator, the corresponding
sum for the terms selected from $A^{S'}(t_j)$ and containing the
chosen shared index is at most $\widehat J_q(t_j)$.  It follows that
the contribution from terms with $m$ nested commutators is bounded by
\begin{align}
 &m!\left(\frac{2d}{q^2}\right)^m
 \int_0^T\dd u
 \int_{0<t_1<\cdots<t_m<u}
 \widehat J_q(u)
 \prod_{j=1}^m\widehat J_q(t_j)\,\dd\bm t
 \nonumber\\
 &\qquad=
 \frac{1}{m+1}\widehat\eta_q
 \left(\frac{2d\widehat\eta_q}{q^2}\right)^m
 \nonumber\\
 &\qquad\le
 \widehat\eta_q
 \left(\frac{2d\widehat\eta_q}{q^2}\right)^m.
 \label{eq:SM-one-B-path-bound}
\end{align}
The equality follows by symmetrizing the $m+1$ time variables.
Summing Eq.~\eqref{eq:SM-one-B-path-bound} over $m$, we obtain
\begin{equation}
 K_q
 :=
 \widehat\eta_q
 \sum_{m\ge0}
 \left(\frac{2d\widehat\eta_q}{q^2}\right)^m
 =
 \frac{\widehat\eta_q}
 {1-2d\widehat\eta_q/q^2}.
 \label{eq:SM-inner-history-mass}
\end{equation}
Thus $K_q$ bounds the weighted coefficient sum for one occurrence of
$B_{S,i}$, including its time integral.

For a term of order $k$ in Eq.~\eqref{eq:SM-M-Dyson}, the ordered
$u$ variables give
\begin{equation}
 \mathfrak M_q(M_{S,i}-\one)
 \le
 \sum_{k\ge1}\frac{K_q^k}{k!}
 =
 \ee^{K_q}-1.
\end{equation}
This proves Eq.~\eqref{eq:SM-history-majorant} and the absolute
convergence in $L_q$.  Since $q\ge1$, it also implies convergence in
operator norm.  Taking the adjoint reverses the order of every product
and conjugates its coefficient, neither of which changes the weighted
coefficient sum.  The same bound therefore holds for
$M_{S,i}^\dagger-\one$.
\end{proof}
We now use this bound to construct the random variable in
Lemma~\ref{lem:SM-ratio-sampler}.

\begin{proof}[Proof of Lemma~\ref{lem:SM-ratio-sampler}]
Define
\begin{equation}
 \eta_Q:=Q^dC_VgI_\beta,
 \qquad
 K_Q:=\frac{\eta_Q}{1-2d\eta_Q/Q^2}.
 \label{eq:SM-KQ}
\end{equation}
Equation~\eqref{eq:SM-uncollected-Jbound} gives
$\widehat\eta_Q\le\eta_Q$.  Since $g\le g_{\mathrm{pin}}$,
Eq.~\eqref{eq:SM-gpin} implies
\begin{equation}
 \eta_Q\le\frac{Q^2}{4d},
 \qquad
 2\eta_Q\le\log(1+\kappa_0),
 \qquad
 \kappa_0=\frac1{32\sqrt2}.
\end{equation}
In particular,
\begin{equation}
 \frac{\widehat\eta_Q}
 {1-2d\widehat\eta_Q/Q^2}
 \le K_Q
 \le2\eta_Q
 \le\log(1+\kappa_0).
\end{equation}
Lemma~\ref{lem:SM-uncollected-ratio} therefore gives
\begin{equation}
 \mathfrak M_Q(M_{S,i}-\one)
 \le\ee^{K_Q}-1
 \le\kappa_0.
 \label{eq:SM-path-mass-kappa}
\end{equation}

For $\zeta\in\{1,-1,\ii,-\ii\}$, define the nonnegative functions
\begin{equation}
 a_1(h)=(\Re a(h))_+,
 \qquad
 a_{-1}(h)=(\Re a(h))_-,
 \qquad
 a_{\ii}(h)=(\Im a(h))_+,
 \qquad
 a_{-\ii}(h)=(\Im a(h))_-.
 \label{eq:SM-phase-components}
\end{equation}
They satisfy
\begin{equation}
 a(h)
 =
 \sum_{\zeta\in\{1,-1,\ii,-\ii\}}
 \zeta a_\zeta(h)
\end{equation}
and
\begin{align}
 &\sum_{\zeta\in\{1,-1,\ii,-\ii\}}
 \int_{\mathfrak H_{S,i}}
 a_\zeta(h)Q^{|R(h)|}\,\nu_{S,i}(\dd h)
 \nonumber\\
 &\qquad=
 \int_{\mathfrak H_{S,i}}
 \bigl(|\Re a(h)|+|\Im a(h)|\bigr)
 Q^{|R(h)|}\,\nu_{S,i}(\dd h)
 \nonumber\\
 &\qquad\le
 \sqrt2\,\mathfrak M_Q(M_{S,i}-\one)
 \le\frac1{32}.
 \label{eq:SM-phase-mass}
\end{align}

For each $\zeta\in\{1,-1,\ii,-\ii\}$, assign the outcomes
$(h,\zeta)$ the density
\begin{equation}
 32a_\zeta(h)Q^{|R(h)|}
\end{equation}
with respect to $\nu_{S,i}$.  Return zero with the remaining
probability.  If $(h,\zeta)$ is selected, set
\begin{equation}
 Z=\frac{\zeta}{32}Q^{-|R|}P_R,
 \qquad
 R=R(h).
\end{equation}
Equation~\eqref{eq:SM-phase-mass} shows that the total assigned
probability is at most one.  Moreover,
\begin{align}
 \bbE Z
 &=
 \sum_{\zeta\in\{1,-1,\ii,-\ii\}}
 \int_{\mathfrak H_{S,i}}
 32a_\zeta(h)Q^{|R(h)|}
 \frac{\zeta}{32}Q^{-|R(h)|}P_{R(h)}
 \,\nu_{S,i}(\dd h)
 \nonumber\\
 &=
 \int_{\mathfrak H_{S,i}}
 a(h)P_{R(h)}\,\nu_{S,i}(\dd h)
 =
 M_{S,i}-\one.
\end{align}
Thus
\begin{equation}
 \bbE(\one+Z)=M_{S,i}.
\end{equation}
Since $R(h)\subseteq S$, every nonzero outcome has the form required
in Eq.~\eqref{eq:SM-Z}.  Applying the same construction to the
expansion of $M_{S,i}^\dagger-\one$ gives the second random variable,
and the two may be sampled independently.
\end{proof}
The proof above specifies the probability assigned to each expansion
path.  It remains to show that one sample can be generated without
enumerating all paths or computing the collected coefficients $m_R$.

\begin{lemma}
\label{lem:SM-ratio-generation}
The random variables for $M_{S,i}$ and $M_{S,i}^\dagger$ in
Lemma~\ref{lem:SM-ratio-sampler} can be generated without forming the
collected coefficients of either operator.  Suppose a proposed path
uses a term of order $k$ in Eq.~\eqref{eq:SM-M-Dyson}, and the $j$th
occurrence of $B_{S,i}$ uses $m_j$ nested commutators in
Eq.~\eqref{eq:SM-B-nested}.  Define its length by
\begin{equation}
 \ell_{\mathrm{path}}
 :=
 k+\sum_{j=1}^k m_j.
 \label{eq:SM-path-length}
\end{equation}
This count also includes proposals that eventually return zero.  There
are constants $c_0,C_0>0$, depending only on the fixed model
parameters, such that
\begin{equation}
 \bbP(\ell_{\mathrm{path}}>L)
 \le C_0\ee^{-c_0L}.
 \label{eq:SM-path-length-tail}
\end{equation}
Conditional on $\ell_{\mathrm{path}}=w$, the procedure uses
$\order(dw)$ Majorana labels and can be carried out in
$n^{\order(d)}\poly(w)$ arithmetic operations.
\end{lemma}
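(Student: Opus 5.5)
The plan is to generate the path $h$ together with $\zeta$ sequentially, choosing one ingredient of the expansion at a time. Each choice is drawn from a proposal distribution proportional to the local majorant factors that appear in the proof of Lemma~\ref{lem:SM-uncollected-ratio}. A final rejection step then corrects the proposal to the exact density $32a_\zeta(h)Q^{|R(h)|}$.

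\textbf{Proposal order.} The draws are made in the following order.
\begin{itemize}
\item First, draw the Dyson order $k\ge1$ with probability proportional to $K_Q^k/k!$. Normalize this against the total mass $\ee^{K_Q}-1$, and return $Z=0$ with the complementary probability.
\item For each occurrence $j$, draw the number $m_j$ of nested commutators with probability proportional to $\widehat\eta_Q(2d\widehat\eta_Q/Q^2)^{m_j}$.
\item Draw the ordered times from the product density proportional to $\widehat J_Q(u)\prod_l\widehat J_Q(t_l)$ on the simplex.
\item Draw the labeled determinant terms one factor at a time. For $D_{S,i}(u)$, sample $T\ni a$, a map $\varphi$ with $b$ in its image, and the entries $Q(u)_{c,\varphi(c)}$, with weights $|v_T|Q^{|T|}|Q(u)_{ab}|\prod|Q(u)_{cj}|$ exactly as in Eq.~\eqref{eq:SM-uncollected-Jbound}.
\item For the $l$th commutator, first pick the shared index among the at most $ld$ labels of the current monomial uniformly. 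Then draw a term of $A^{S'}(t_l)$ through that index in the same manner.
\end{itemize}

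\textbf{Exactness.} The proposal density factorizes as the product of these local majorants, so it dominates $|a(h)|Q^{|R(h)|}$ divided by the total mass $\ee^{K_Q}-1\le\kappa_0$. The overcounting has several sources: dropped injectivity, commutators that vanish, duplicated shared-index assignments, and the factor $1/(m+1)$ that was discarded. All of these only make the true weight smaller than the majorant, so the acceptance ratio $|a(h)|Q^{|R(h)|}/(\text{majorant})$ is at most one. After acceptance, the phase $\zeta$ is chosen with probability $a_\zeta/(|\Re a|+|\Im a|)$, and the overall prefactor $32\sqrt2\,\kappa_0\le1$ is absorbed into the zero outcome. This reproduces the distribution of Lemma~\ref{lem:SM-ratio-sampler} without ever computing any collected $m_R$. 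For $M_{S,i}^\dagger$, the same procedure is run on the reversed products.

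\textbf{Tail bound and cost.} The tail bound follows from the proposal law alone, and it counts rejected proposals as well. The order $k$ is Poisson-like with parameter $K_Q\le\log(1+\kappa_0)$. Each $m_j$ is geometric with ratio $2d\widehat\eta_Q/Q^2\le1/2$, by the condition $\eta_Q\le Q^2/(4d)$. Therefore
\[
\bbE\,\ee^{c\,\ell_{\mathrm{path}}}\le\exp\!\bigl(K_Q(\ee^{c}\,\textstyle\sum_m(\ee^{c}/2)^m-1)\bigr)
\]
is finite for $c<\log 2$, and Markov's inequality gives Eq.~\eqref{eq:SM-path-length-tail}.

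The cost splits into three parts. First, each draw involves at most $d$ new labels, giving $\order(dw)$ labels in total. Second, sampling from the weights involves sums over $a,b,j\in[N]$ and over the $n^{\order(1)}$ local terms $T$, and the entries of $Q(u)=\ee^{-uK_0}$ are computable in polynomial time. Third, the sign of the final product is obtained by reordering $\order(dw)$ labels.

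\textbf{Main obstacle.} The hardest step will be sampling the continuous times. The densities $\widehat J_Q(u)$ are defined through a maximum over $b$, so I would instead use the explicit upper envelope $Q^dC_Vg\,\ee^{dC_Hu}$, whose integral is $\eta_Q$. This envelope is easy to sample from, and the slack it introduces is again corrected by the rejection step. The quantity $\widehat J_Q(u)$ itself never needs to be computed. Exact sampling from this continuous envelope may still need a discretization, which Appendix~\ref{sec:SM-finite} is stated to handle.
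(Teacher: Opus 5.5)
Your construction is essentially the paper's. You build the proposal from the majorant of Lemma~\ref{lem:SM-uncollected-ratio}, draw times from the envelope $j_Q(u)=Q^dC_Vg\,\ee^{dC_Hu}$ with unused mass sent to zero, choose uniformly among at most $ld$ positions before each commutator, correct by rejection, and read the tail bound off the proposal law. Your variations are all legitimate:
\begin{itemize}
\item drawing $k$ with weight $K_Q^k/k!$, instead of the paper's fixed $K_{\max}$ with a continuation probability $K_Q/K_{\max}$;
\item sampling the inner simplex from its normalized density and carrying an extra factor $m+1$ in the majorant, instead of drawing i.i.d.\ times and rejecting unless $u$ is largest;
\item drawing determinant outputs through the row-sum factorization and rejecting non-injective maps.
\end{itemize}

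The step that fails as written is the acceptance/phase combination. You accept with probability $|a(h)|Q^{|R(h)|}/(\text{majorant})$ and then choose $\zeta$ with probability $a_\zeta(h)/(|\Re a(h)|+|\Im a(h)|)$. The resulting density of the outcome $(h,\zeta)$ is then proportional to
\[
\frac{|a(h)|}{|\Re a(h)|+|\Im a(h)|}\,a_\zeta(h)\,Q^{|R(h)|},
\]
not to the target $32\,a_\zeta(h)Q^{|R(h)|}$. The prefactor ranges over $[1/\sqrt2,1]$ from path to path, because $a(h)$ has arbitrary complex phase coming from $Q(u)=\ee^{-uK_0}$ and the commutator phases. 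No constant thinning can absorb it, so
\[
\bbE Z\propto\int\frac{|a(h)|}{|\Re a(h)|+|\Im a(h)|}\,a(h)P_{R(h)}\,\nu_{S,i}(\dd h)\neq M_{S,i}-\one .
\]
The fix is to accept with probability $Q^{|R(h)|}\bigl(|\Re a(h)|+|\Im a(h)|\bigr)/(\sqrt2\cdot\text{majorant})$, as in Eq.~\eqref{eq:SM-proposal-acceptance}. This is at most $1$ because $|\Re a|+|\Im a|\le\sqrt2|a|$, and it is exactly where the $\sqrt2$ in $32\sqrt2\kappa_0=1$ enters. You invoke that $\sqrt2$ only as a constant, which does not repair the path-dependent mismatch.

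You also need to make the shared-index handling explicit. You list ``duplicated shared-index assignments'' as slack in the majorant, but domination alone does not give unbiasedness. A commutator term with $|R\cap T|\ge3$ can be generated through several shared indices, and if each route is accepted on its own, that term is counted with multiplicity in $\bbE Z$. You need a canonical route: either return zero unless the selected index is the first shared index in a fixed ordering, as the paper does, or divide the acceptance probability by the number of admissible routes.
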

\begin{proof}
It is enough to consider $M_{S,i}-\one$.  The construction for
$M_{S,i}^\dagger-\one$ is obtained by reversing the products and
conjugating their coefficients.  If $\eta_Q=0$, return zero.  Assume
below that $\eta_Q>0$.

Let
\begin{equation}
 j_Q(u):=Q^dC_Vg\,\ee^{dC_Hu},
 \qquad
 \int_0^Tj_Q(u)\,\dd u=\eta_Q.
\end{equation}
For a labeled term $(T,\varphi)$ selected at time $u$, write
\begin{equation}
 z_{T,\varphi}(u)=v_Tc_{T,\varphi}(u).
\end{equation}
Suppose that a complete path $h$ selects $w=k+\sum_jm_j$ such terms,
with degrees $s_1,\ldots,s_w$ and times
$\tau_1,\ldots,\tau_w$.  If $m_{\mathrm{tot}}=\sum_jm_j$, set
\begin{equation}
 \Lambda(h)
 =
 2^{m_{\mathrm{tot}}}
 Q^{\sum_{\ell=1}^ws_\ell-2m_{\mathrm{tot}}}
 \prod_{\ell=1}^w
 |z_{T_\ell,\varphi_\ell}(\tau_\ell)|.
 \label{eq:SM-explicit-Lambda}
\end{equation}
Every nonzero commutator removes at least two repeated Majorana labels.
Equations~\eqref{eq:SM-uncollected-Jbound}
and~\eqref{eq:SM-path-commutator-bound} therefore give
\begin{equation}
 Q^{|R(h)|}|a(h)|
 \le
 \Lambda(h)
 \label{eq:SM-proposal-domination}
\end{equation}
and
\begin{equation}
 \int_{\mathfrak H_{S,i}}
 \Lambda(h)\,\nu_{S,i}(\dd h)
 \le
 \ee^{K_Q}-1
 \le
 \kappa_0.
 \label{eq:SM-proposal-mass}
\end{equation}
Assign the remaining mass to the zero outcome and normalize by
$\kappa_0$.  The resulting proposal distribution has density
$\Lambda(h)/\kappa_0$ on nonzero paths.

Given a nonzero proposed path $h$, retain it with probability
\begin{equation}
 \frac{
 Q^{|R(h)|}
 \bigl(|\Re a(h)|+|\Im a(h)|\bigr)
 }{
 \sqrt2\Lambda(h)
 }
 \le1.
 \label{eq:SM-proposal-acceptance}
\end{equation}
If the path is retained, choose its real or imaginary part with
probability proportional to the corresponding absolute value.  Use
the sign of that part to choose
$\zeta\in\{1,-1,\ii,-\ii\}$, and return
\begin{equation}
 Z=\frac{\zeta}{32}Q^{-|R(h)|}P_{R(h)}.
\end{equation}
If the path is not retained, return zero.  Since
$\sqrt2\kappa_0=1/32$, the real part contributes
\begin{equation}
 \frac{\Lambda(h)}{\kappa_0}
 \frac{Q^{|R(h)|}|\Re a(h)|}
 {\sqrt2\Lambda(h)}
 \frac{\sgn(\Re a(h))}{32}
 Q^{-|R(h)|}P_{R(h)}
 =
 \Re a(h)P_{R(h)}
\end{equation}
to the expectation.  The imaginary part similarly contributes
$\ii\Im a(h)P_{R(h)}$.  Integrating over $h$ gives
\begin{equation}
 \bbE Z
 =
 \int_{\mathfrak H_{S,i}}
 a(h)P_{R(h)}\,\nu_{S,i}(\dd h)
 =
 M_{S,i}-\one.
\end{equation}
Thus this acceptance step gives the distribution constructed in the
proof of Lemma~\ref{lem:SM-ratio-sampler}.

It remains to generate the proposal distribution without enumerating
the expansion paths.  Set
\begin{equation}
 K_{\max}:=\log(1+\kappa_0),
 \qquad
 r_Q:=\frac{2d\eta_Q}{Q^2}\le\frac12.
\end{equation}
Then
\begin{equation}
 K_Q=\frac{\eta_Q}{1-r_Q}\le K_{\max}.
\end{equation}
Choose the order $k\ge1$ in Eq.~\eqref{eq:SM-M-Dyson} with
probability
\begin{equation}
 \bbP(k)
 =
 \frac{K_{\max}^k}{\kappa_0k!}.
 \label{eq:SM-proposal-k}
\end{equation}
These probabilities sum to one because
$\ee^{K_{\max}}-1=\kappa_0$.

For each of the $k$ occurrences of $B_{S,i}$, continue with
probability $K_Q/K_{\max}$ and otherwise return zero.  Conditional on
continuing, choose the number $m$ of nested commutators with
probability
\begin{equation}
 \bbP(m)
 =
 (1-r_Q)r_Q^m.
 \label{eq:SM-proposal-m}
\end{equation}
Draw $m+1$ independent times from the density
$j_Q(t)/\eta_Q$, with one designated as the time $u$ of
$D_{S,i}(u)$.  Return zero unless $u$ is the largest sampled time.
Otherwise, sort the remaining times as
$t_1<\cdots<t_m$.  For fixed $m$, the resulting unnormalized density
on $0<t_1<\cdots<t_m<u<T$ is
\begin{equation}
 m!\left(\frac{2d}{Q^2}\right)^m
 j_Q(u)\prod_{j=1}^m j_Q(t_j).
 \label{eq:SM-inner-proposal-density}
\end{equation}
Indeed,
\begin{equation}
 K_Q(1-r_Q)=\eta_Q,
 \qquad
 r_Q=\frac{2d\eta_Q}{Q^2},
\end{equation}
while sorting the $m$ commutator times accounts for their $m!$
possible orders.  Thus Eq.~\eqref{eq:SM-inner-proposal-density}
is the bound used in Eq.~\eqref{eq:SM-one-B-path-bound}.

We next generate the discrete choices.  At the time $u$ associated
with $D_{S,i}(u)$, choose a labeled term whose output contains $i$ with
probability proportional to
$Q^{|T|}|z_{T,\varphi}(u)|$.  The bound
in Eq.~\eqref{eq:SM-uncollected-Jbound} supplies the normalization; any
unused probability returns zero.  Return zero also unless the output
support is contained in $S$.  The retained outcomes are exactly the
terms in $D_{S,i}(u)$.

Apply the nested commutators from the inside out.  Before the $j$th
application, the current monomial has at most $jd$ indices.  Choose
uniformly from $jd$ positions, letting an unused position return zero,
and then choose a labeled term whose output contains the selected
index.  Return zero unless the output support is contained in $S'$, the
selected index is the first shared index in the fixed ordering, and the
commutator is nonzero.  Each term in the nested commutator is then
counted exactly once.

All these choices use the nonnegative sums defining
$\widehat J_Q$.  Equation~\eqref{eq:SM-uncollected-Jbound} bounds their
total weight by $j_Q(t)$, with the remaining probability assigned to
zero.  The $jd$ possible positions and the coefficient $2$ of a
nonzero commutator give the multiplier
\begin{equation}
 \frac{2jd}{Q^2}
\end{equation}
in Eq.~\eqref{eq:SM-path-commutator-bound}.  Multiplying the selected
Majorana monomials in the nested order and then placing their indices
in canonical order determines $R(h)$ and $a(h)$.  The probabilities
above give precisely the value of $\Lambda(h)$ in
Eq.~\eqref{eq:SM-explicit-Lambda}.

Generate the $k$ occurrences independently and sort their $u$ times,
carrying the remaining data with them.  This accounts for the $k!$
orderings of the time variables in
Eq.~\eqref{eq:SM-M-Dyson}.  Together with
Eq.~\eqref{eq:SM-proposal-k}, the resulting density on nonzero paths
is $\Lambda(h)/\kappa_0$.  Thus the preceding acceptance step produces
the required random variable without computing the coefficients
$m_R$.

Finally, we bound the length of the generated path.  For any fixed
$0<\vartheta<\log2$, let
\begin{equation}
 G_\vartheta
 :=
 \sup_{0\le r\le1/2}
 \sum_{m\ge0}(1-r)r^m\ee^{\vartheta(1+m)}
 \le
 \frac{\ee^\vartheta}{1-\ee^\vartheta/2}
 <\infty.
\end{equation}
Generating an $m_j$ for every one of the $k$ occurrences, even when a
zero outcome has already occurred, can only increase
$\ell_{\mathrm{path}}$.  Therefore,
\begin{equation}
 \bbE\ee^{\vartheta\ell_{\mathrm{path}}}
 \le
 \sum_{k\ge1}
 \frac{K_{\max}^k}{\kappa_0k!}G_\vartheta^k
 =
 \frac{\exp(K_{\max}G_\vartheta)-1}{\kappa_0}
 <\infty.
 \label{eq:SM-generation-mgf}
\end{equation}
Markov's inequality proves
Eq.~\eqref{eq:SM-path-length-tail}.

Every selected interaction monomial has degree at most $d$.  A path
with $\ell_{\mathrm{path}}=w$ therefore contains $\order(dw)$ Majorana
labels.  At each occurrence, the interaction monomial and its output
labels can be sampled by enumerating at most $n^{\order(d)}$ choices;
the remaining operations are polynomial in $w$.  The total running
time for a path of length $w$ is therefore $n^{\order(d)}\poly(w)$.
Truncation and finite precision are treated in
Appendix~\ref{sec:SM-finite}.
\end{proof}

\section{Estimating the node weights}
\label{sec:SM-oracle}

The sampling tree in Appendix~\ref{sec:SM-soft-tree} assigns a weight
$W_v$ to each node.  This weight is the expected trace of the Gaussian
operator obtained by continuing from that node to a leaf.  Computing it
by summing over all descendant leaves would require enumerating the
tree.  We instead estimate the trace
\begin{equation}
 Z_{S,\sigma}
 =\Tr\left[GX^S\sigma(X^S)^\dagger G\right],
 \label{eq:SM-oracle-Z}
\end{equation}
where
\begin{equation}
 \sigma=\prod_{e=(a_e,b_e)\in\cM}
 (\one+t s_e\ii\gamma_{a_e}\gamma_{b_e}),
 \label{eq:SM-oracle-sigma}
\end{equation}
the pairs in $\cM$ are disjoint, and $s_e\in\{\pm1\}$.
Equation~\eqref{eq:SM-live-sandwich} gives
$Z_{S_v,\sigma_v}/2\le W_v\le3Z_{S_v,\sigma_v}/2$.
Thus a constant additive error in $\log Z_{S_v,\sigma_v}$ gives the
constant multiplicative estimate of $W_v$ used in
Appendix~\ref{sec:SM-finite}.  The estimate must hold uniformly over
the active set and all matchings that can occur in the tree.

\begin{theorem}
\label{thm:SM-oracle}
For fixed $0<\beta<\infty$, $D$, and $r_0$, there are constants
$0<t_0<1$ and $g_{\mathrm{log}}>0$, independent of $n$, with the following
property.  For $0\le t\le t_0$, $g\le g_{\mathrm{log}}$, any active set
$S$, and any disjoint matching, $Z_{S,\sigma}>0$, and a randomized
algorithm returns $\widehat L$ such that
\begin{equation}
 \bbP\left(
 |\widehat L-\log Z_{S,\sigma}|\le\eta
 \right)\ge1-\delta
 \label{eq:SM-oracle-guarantee}
\end{equation}
for $0<\eta,\delta\le1$.  Assume that $\beta$, $t$, and the local
coefficients can be evaluated to precision $2^{-p}$ in time polynomial
in $n$ and $p$.  Its running time is
\begin{equation}
 n^{\order(d)}\eta^{-2}
 \polylog\left(\frac{n}{\eta\delta}\right).
 \label{eq:SM-oracle-runtime}
\end{equation}
The constants are uniform in $S$ and the matching geometry.
For $\eta>1$, the algorithm may be run with $\eta=1$.
\end{theorem}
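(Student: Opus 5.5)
We will prove the statement by a cluster expansion of $\log Z_{S,\sigma}$ around the Gaussian reference state. We then estimate the truncated series by Monte Carlo.

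\emph{Gaussian reference state.} Write $\omega=G\sigma G/\Tr(G\sigma G)$. Since $\sigma$ is positive, invertible and Gaussian for $t<1$, Lemma~\ref{lem:SM-Gaussian-sandwich} shows that $\omega$ is a faithful Gaussian state. Its normalization $\Tr(G\sigma G)$ and covariance $\Gamma_\omega$ are computed in $\poly(n)$ time from $N\times N$ transfer matrices: compose the orthogonal-type matrices of $e^{-TH_0}$ with those of each factor $\one+ts_e\ii\gamma_a\gamma_b=\sqrt{1-t^2}\,\exp(\operatorname{artanh}(t)s_e\ii\gamma_a\gamma_b)$. Then
\[
Z_{S,\sigma}=\Tr(G\sigma G)\,\Tr\!\left[\omega\,(G^{-1}X^SG)\,\sigma_{\rm tw}\right]
\]
up to the reordering $X^S\sigma X^{S\dagger}$. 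It is more convenient to work with the closed contour directly. Using $GX^S=e^{-T(H_0+V^S)}$-type identities, where $V^S$ keeps only the monomials of $A(u)$ supported in $S$, we have $GX^S\sigma X^{S\dagger}G=\cT\exp$ on the Keldysh-like imaginary-time contour of length $\beta$. The quantity $Z_{S,\sigma}/\Tr(G\sigma G)$ is then $\bbE_\omega$ of a time-ordered exponential of $-\int A^S$ along $[0,T]$ and its adjoint branch.

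\emph{Cumulant expansion.} Expand the time-ordered exponential in powers of the interaction and take logarithms. The order-$s$ coefficient $c_s$ is an integral over $s$ times of connected (truncated) Gaussian moments of products of monomials of $A^S$. Using Wick's theorem, each moment is a Pfaffian of the two-point matrix of $\omega$ evaluated at the contour times. The BKAR forest formula rewrites each truncated moment as a sum over spanning trees on the $s$ vertices: tree edges carry explicit two-point entries, and the remaining factor is a Pfaffian of a Gram-type matrix with norm at most one (Gram bound). The combinatorial counting then follows the standard fermionic cluster bound. The vertex sums are controlled by Lemma~\ref{lem:SM-interaction-coefficients}, via $J_q(A(u))\le q^dC_Vge^{dC_Hu}$. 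Each edge costs $\max\{\|\cdot\|_{1\to1},\|\cdot\|_{\infty\to\infty}\}$ of the contour two-point function. Together these give $|c_s|\le CNg(C'g)^{s-1}$, which converges for $g\le g_{\log}$. This also yields $Z_{S,\sigma}>0$, since the series for the logarithm converges and $Z$ is a trace of a positive operator anyway.

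\emph{Main obstacle: uniform row and column sums.} The crucial step is bounding the row and column sums of the contour two-point function $\langle\gamma_a(u)\gamma_b(u')\rangle_\omega$ uniformly in the matching. This cannot use spatial decay, because pinned pairs may be far apart. I would write the pinned state as a perturbation of the $t=0$ state $\omega_0\propto e^{-\beta H_0}$, whose propagator has row and column sums bounded by $e^{\order(C_H\beta)}$ via Eq.~\eqref{eq:SM-Qbound}. Each pinning factor $\one+tB_e$ is expanded as a rank-two insertion. The resulting Dyson/Neumann series for the covariance has the form $\Gamma_\omega=\Gamma_0+\Gamma_0 P(\one-tK)^{-1}tP^{\mathsf T}\Gamma_0$, where $P$ projects onto matched labels. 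Because the pairs are disjoint, the insertion matrix $K$ has $1\to1$ and $\infty\to\infty$ norms bounded by a constant times $\|\Gamma_0\|$ independently of $|\cM|$. Choosing $t\le t_0$ small makes the Neumann series converge in both induced norms, which gives the uniform bounds. I expect the real work to lie here, together with the Gram representation of the contour covariance needed for the Pfaffian bound.

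\emph{Algorithm.} Truncate at $s_{\max}=\order(\log(N/\eta))$; the tail is then at most $\eta/2$. Write each retained $c_s$ as an integral over times, labeled monomials (as in Lemma~\ref{lem:SM-ratio-generation}) and spanning trees, with an absolutely summable majorant of total mass $\order(N)\cdot(C'g)^{s-1}$. Importance-sample this majorant: each sample requires computing a Pfaffian of size $\order(ds)$ from entries of $\Gamma_\omega$ in $n^{\order(d)}\polylog$ time. Taking $\order(\eta^{-2}\polylog(n)\log(1/\delta))$ samples, combined by a median of means, gives additive error $\eta/2$ with probability $1-\delta$. This yields the runtime in Eq.~\eqref{eq:SM-oracle-runtime}. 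The finite-precision evaluation of $\beta$, $t$ and the coefficients contributes only polylogarithmic overhead.
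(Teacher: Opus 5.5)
Your architecture is the paper's. You use the two-branch representation $Z_{S,\sigma}=Z_\sigma\,\omega_\sigma\bigl((L^S)^\dagger L^S\bigr)$ around the Gaussian state proportional to $G\sigma G$, and reorganize ordered cumulants by the BKAR tree formula with a Gram bound on the residual Pfaffian. You then use a Neumann series in the pinning strength that relies only on disjointness of the pairs, truncate at order $\order(\log(N/\eta))$, and finish with Monte Carlo.

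The gap is in the step you single out as the main obstacle, specifically its base case. You assert that the unpinned covariance has absolute row and column sums bounded by $\ee^{\order(C_H\beta)}$ via Eq.~\eqref{eq:SM-Qbound}. That equation controls only the propagator $\ee^{-uK_0}$. The covariance of the $t=0$ reference state is $2F_0=2(\one+\ee^{-\beta K_0})^{-1}$, a Fermi function of $K_0$. A row-sum bound $\norm{K_0}_{\mathrm{Schur}}\le C_H$ bounds $\norm{f(K_0)}_{\mathrm{Schur}}$ only for $f$ analytic on the whole disc $|z|\le C_H$, through a power series or a Neumann series for resolvents at distance more than $C_H$. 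But $x\mapsto(1+\ee^{-\beta x})^{-1}$ has poles at $\pm\ii\pi/\beta$. So once $\beta C_H\gtrsim\pi$ your justification fails. Indeed, for bounded-degree hopping on expander-like graphs, which satisfy every hypothesis you invoke, the row sums of $F_0$ can grow with $n$ at large $\beta$.

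The missing ingredient is Lemma~\ref{lem:SM-free-covariance-decay}, which works in three steps. Analyticity of the Fermi function in a complex neighbourhood of $[-C_H,C_H]$ gives polynomials $p_k$ with $\norm{F_0-p_k(K_0)}_{\mathrm{op}}\le C\ee^{-\mu k}$. The finite range of $K_0$ turns this into $|(F_0)_{ab}|\le C_\beta\ee^{-\mu_\beta\operatorname{dist}(a,b)}$. The polynomial volume growth of $\mathbb Z^D$ then gives $\norm{F_0}_{\mathrm{Schur}}\le L_\beta$. This is exactly where geometric locality enters the oracle. Your Woodbury/Neumann step, like the factorization \eqref{eq:SM-covariance-factorization}, needs an $n$-independent Schur bound on the unpinned covariance. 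An operator-norm bound on $\Gamma_0$ would not control the $1\to1$ norm of your insertion matrix. Without the lemma that step has no uniform starting point, and both $t_0$ and the edge factor in your bound $|c_s|\le CNg(C'g)^{s-1}$ lose their independence of $n$.

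Some smaller points also need repair.
\begin{enumerate}
\item In general there is no time-independent $V^S$ with $GX^S=\ee^{-T(H_0+V^S)}$, because $A^S(u)$ restricts the already conjugated interaction to $S$ at each time. You do not need such an identity: $GX^S=L^SG$ with $L^S=GX^SG^{-1}$ gives the two-branch form directly.
\item The branch interactions are then $GA^S(u)G^{-1}$ and $G^{-1}A^S(u)^\dagger G$, not $A^S(u)$. So the vertex bound of Lemma~\ref{lem:SM-interaction-coefficients} must be redone for these conjugates. Conjugation by the non-unitary $G^{\pm1}$ costs a factor $\ee^{TC_H}$ per label, as in \eqref{eq:SM-contour-local}.
\item If you instead keep the times in a contour two-point function, the Gram vectors are not unit vectors and the residual Pfaffian is not bounded by one.
\item You should justify that the Taylor series of $\log F(\lambda)$ at $\lambda=0$ sums to $\log Z_{S,\sigma}$ at $\lambda=1$. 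The identity theorem together with $F>0$ on $[0,1]$ does this.
\item The majorant has total mass of order $N$, so single samples can be of size $\order(N)$. Hoeffding then needs $\order(N^2\eta^{-2}\log(1/\delta))$ samples, as in \eqref{eq:SM-ticket-replicates}, not $\order(\eta^{-2}\polylog(n)\log(1/\delta))$. This factor is absorbed into $n^{\order(d)}$, so your runtime claim survives.
\end{enumerate}
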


At $\beta=0$, $G=X^S=\one$ and $Z_{S,\sigma}=2^n$, so its
logarithm can be computed directly.
We first compute the trace and bound the two-point function of
$G\sigma G$.  We then expand $\log Z_{S,\sigma}$ about this Gaussian
operator, construct an estimator for the truncated expansion, and
choose the precision and number of samples needed for the theorem.

\subsection{\texorpdfstring{Trace and covariance of $G\sigma G$}{Trace and covariance of G sigma G}}
\label{sec:SM-oracle-Gaussian}

Set
\begin{equation}
 L^S=GX^SG^{-1},
 \qquad
 Z_\sigma=\Tr(G\sigma G),
 \qquad
 \omega_\sigma(Y)=\frac{\Tr(G\sigma GY)}{Z_\sigma}.
\end{equation}
Since $GX^S=L^SG$, cyclicity of the trace gives
\begin{equation}
 Z_{S,\sigma}
 =
 Z_\sigma\,
 \omega_\sigma\bigl((L^S)^\dagger L^S\bigr).
 \label{eq:SM-oracle-factorization}
\end{equation}
This separates the Gaussian operator $G\sigma G$ from the interaction
contained in $L^S$.  The present subsection computes $Z_\sigma$ and
bounds the absolute row and column sums of the two-point function of
$\omega_\sigma$.  The next subsection uses these two quantities to
expand the second term in Eq.~\eqref{eq:SM-oracle-factorization}.

The expansion in the next subsection contains Gaussian expectations
of products of Majorana operators.  By Wick's theorem, each such
expectation is a Pfaffian whose entries are two-point functions of
$\omega_\sigma$.  Since the Majorana labels are subsequently summed,
bounds on the individual entries are not sufficient.  We need uniform
bounds on the absolute row and column sums of the two-point function,
valid for every matching that can occur in the sampling tree.

To compute $Z_\sigma$ and the two-point function of $\omega_\sigma$,
we record both the conjugation action of $G\sigma G$ on the Majorana
operators and its scalar multiplier. For any invertible Gaussian operator $W$, conjugation by $W$ acts
linearly on the Majorana operators.  We denote the corresponding
matrix by $R(W)$:
\begin{equation}
 W^{-1}\bm\gamma W=R(W)\bm\gamma.
\end{equation}
Since $R(cW)=R(W)$ for every nonzero scalar $c$, the scalar multiplier
must be recorded separately.For an antisymmetric matrix $K$, set
\begin{equation}
 \cQ(K)=\frac14\bm\gamma^{\mathsf T}K\bm\gamma.
\end{equation}
The CAR give
\begin{equation}
 [\cQ(K),\bm\gamma]=-K\bm\gamma,
\end{equation}
and therefore
\begin{equation}
 \ee^{-\cQ(K)}\bm\gamma\ee^{\cQ(K)}
 =
 \ee^K\bm\gamma.
\end{equation}
It follows that
\begin{equation}
 R(\ee^{\cQ(K)})=\ee^K.
\end{equation}
The definition of $R(W)$ also gives
\begin{equation}
 R(W_1W_2)=R(W_1)R(W_2).
 \label{eq:SM-transfer}
\end{equation}
After writing $\sigma$ as a scalar times a product of quadratic
exponentials, these identities determine $R(G\sigma G)$.

We now write each soft pinning operator as a scalar times a quadratic
exponential. Let
\begin{equation}
 B_e=\ii\gamma_{a_e}\gamma_{b_e},
 \qquad
 \theta=\operatorname{arctanh}t.
\end{equation}
Since $B_e^2=\one$,
\begin{align}
 \ee^{\theta s_eB_e}
 &=
 \cosh\theta\,\one+s_e\sinh\theta\,B_e
 \nonumber\\
 &=
 \frac{1}{\sqrt{1-t^2}}
 \left(\one+t s_eB_e\right).
\end{align}
Therefore,
\begin{equation}
 \one+t s_eB_e
 =
 \sqrt{1-t^2}\,\ee^{\theta s_eB_e}.
 \label{eq:SM-soft-exp}
\end{equation}
Define
\begin{equation}
 \widetilde\sigma
 :=
 \prod_{e\in\cM}\ee^{\theta s_eB_e}.
\end{equation}
Then
\begin{equation}
 \sigma
 =
 (1-t^2)^{|\cM|/2}\widetilde\sigma.
 \label{eq:SM-sigma-exponential}
\end{equation}
Since scalar multipliers do not affect the transfer matrix, set
\begin{equation}
 S_\cM:=R(\widetilde\sigma),
 \qquad
 E:=R(G)=\ee^{-TK_0},
 \qquad
 T=\frac\beta2.
\end{equation}
The product rule in Eq.~\eqref{eq:SM-transfer} gives
\begin{equation}
 R(G\sigma G)=ES_\cM E.
 \label{eq:SM-Gsigma-transfer}
\end{equation}

Each exponential in $\widetilde\sigma$ acts only on the two Majorana
operators in its pair.  Since the pairs in $\cM$ are disjoint,
$S_\cM$ is block diagonal, with one $2\times2$ block for each pair
and the identity on the unpaired labels.  The block associated with a
pair has eigenvalues $\ee^{2\theta}$ and $\ee^{-2\theta}$, so
$S_\cM$ is positive definite. The scalar $(1-t^2)^{|\cM|/2}$ contributes
$\frac{|\cM|}{2}\log(1-t^2)$ to $\log Z_\sigma$.

\begin{lemma}
\label{lem:SM-baseline-logtrace}
The Gaussian trace $Z_\sigma$ satisfies
\begin{equation}
 \log Z_\sigma
 =
 \frac{|\cM|}{2}\log(1-t^2)
 +\frac12\log\det(\one+ES_\cM E).
 \label{eq:SM-baseline-logdet}
\end{equation}
Moreover,
\begin{equation}
 \ee^{-\beta C_H-2\theta}\one
 \preceq
 ES_\cM E
 \preceq
 \ee^{\beta C_H+2\theta}\one,
 \qquad
 \theta=\operatorname{arctanh}t.
 \label{eq:SM-baseline-spectrum}
\end{equation}
In particular, $\one+ES_\cM E$ is positive definite, with eigenvalues in the interval
\begin{equation}
 \left[
 1+\ee^{-\beta C_H-2\theta},
 \,
 1+\ee^{\beta C_H+2\theta}
 \right].
\end{equation}
For $t\le t_0<1$, this interval depends only on the fixed parameters
and is uniform in $\cM$.
\end{lemma}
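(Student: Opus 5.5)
The plan is to reduce both claims to the standard trace formula for a single quadratic exponential. Then Eq.~\eqref{eq:SM-transfer} together with $R(\ee^{\cQ(K)})=\ee^K$ identifies the relevant matrix as $ES_\cM E$.

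\textbf{Step 1 (single exponential).} For a Hermitian antisymmetric $K$ (i.e.\ $\ii$ times a real antisymmetric matrix), I would bring $K$ to canonical form by a real orthogonal change of Majorana basis. This writes $\cQ(K)$ as $\sum_j\frac{\varepsilon_j}{2}\ii\gamma'_{2j-1}\gamma'_{2j}$ up to the sign convention, with $\pm\varepsilon_j$ the eigenvalues of $K$. The trace then factorizes over modes:
\[
\Tr\ee^{\cQ(K)}=\prod_j 2\cosh(\varepsilon_j/2).
\]
On the other hand, the eigenvalues of $\ee^K$ come in pairs $\ee^{\pm\varepsilon_j}$, and $(1+\ee^{\varepsilon})(1+\ee^{-\varepsilon})=4\cosh^2(\varepsilon/2)$. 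Hence
\[
\Tr\ee^{\cQ(K)}=\sqrt{\det(\one+\ee^K)},
\]
and this is insensitive to the sign convention because the spectrum is symmetric under $\lambda\mapsto1/\lambda$.

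\textbf{Step 2 (the product is a single exponential with no scalar).} I would show that $G\widetilde\sigma G=\ee^{\cQ(K')}$ exactly for some Hermitian antisymmetric $K'$, with no extra scalar prefactor. The traceless quadratic forms $\cQ(K)$ are closed under commutators, $[\cQ(K_1),\cQ(K_2)]=\pm\cQ([K_1,K_2])$ with no identity term, so products of their exponentials stay inside the group $\{\ee^{\cQ(K)}\}$. The corresponding antisymmetric generators form a Lie algebra isomorphic to $\mathfrak{so}(2n,\mathbb C)$, and $\ee^{\cQ(K)}\mapsto\ee^K$ is a representation of the spin cover.

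The operator $G\widetilde\sigma G$ is positive. I would take its logarithm within this group, via the polar decomposition and the fact that positive elements of the spin group are exponentials of Hermitian generators. This gives $K'$ Hermitian, and $R(G\widetilde\sigma G)=\ee^{K'}=ES_\cM E$ by Eq.~\eqref{eq:SM-Gsigma-transfer}. A possible sign ambiguity from the double cover is excluded by positivity: the trace must be positive.

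Combining Steps 1 and 2 with $\sigma=(1-t^2)^{|\cM|/2}\widetilde\sigma$ and $G\sigma G=(1-t^2)^{|\cM|/2}G\widetilde\sigma G$ yields Eq.~\eqref{eq:SM-baseline-logdet}. I expect this step to be the main subtlety. If the Lie-theoretic argument is awkward, a fallback is to verify the formula for each factor directly, using Eq.~\eqref{eq:SM-soft-exp} and $\det(\one+\ee^{2\theta s B})$ per block. One would then argue by continuity in $t$ and $\beta$ from the trivial case $t=\beta=0$, where both sides equal $n\log2$. Both sides are continuous, and both are determined by the square of the trace up to a sign fixed by positivity.

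\textbf{Step 3 (spectral bounds).} Since $K_0$ is Hermitian, $E=\ee^{-TK_0}$ is positive definite. By Riesz--Thorin, $\norm{K_0}_2\le\sqrt{\norm{K_0}_{1\to1}\norm{K_0}_{\infty\to\infty}}\le C_H$ from Eq.~\eqref{eq:SM-K0-local-bound}, so the spectrum of $E$ lies in $[\ee^{-TC_H},\ee^{TC_H}]$. Each $2\times2$ block of $S_\cM$ is $\ee^{\pm2\theta s_e\cdot(\text{Pauli-like Hermitian})}$, and $S_\cM$ is the identity elsewhere. Thus $S_\cM$ is Hermitian with $\ee^{-2\theta}\one\preceq S_\cM\preceq\ee^{2\theta}\one$.

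Congruence then gives
\[
x^\dagger ES_\cM Ex\ge \ee^{-2\theta}\norm{Ex}^2\ge\ee^{-2\theta-2TC_H}\norm x^2,
\]
and similarly for the upper bound. With $2T=\beta$ this is Eq.~\eqref{eq:SM-baseline-spectrum}. The interval for $\one+ES_\cM E$ follows immediately and depends only on $\beta C_H$ and $\theta\le\operatorname{arctanh}t_0$, uniformly in $\cM$.
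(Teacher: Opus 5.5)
Your proposal is correct and follows the paper's proof. The paper sets $Y=G\widetilde\sigma G$, invokes the Gaussian trace identity $(\Tr Y)^2=\det(\one+R(Y))$ with $R(Y)=ES_\cM E$, fixes the sign by $\Tr Y>0$, and proves the spectral bound by the same congruence argument. Your Steps~1--2 only supply a proof of the identity the paper cites, and the polar-decomposition argument (a positive element of the group generated by the $\ee^{\cQ(K)}$ equals $\ee^{\cQ(K')}$ with $K'$ Hermitian) is sound; your continuity fallback, by contrast, does not stand on its own, because agreement at $t=\beta=0$ plus continuity cannot force equality unless you already know the two sides can differ only by a sign.
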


\begin{proof}
Let
\begin{equation}
 Y=G\widetilde\sigma G.
\end{equation}
This is a positive Gaussian operator, and
Eq.~\eqref{eq:SM-transfer} gives
\begin{equation}
 R(Y)=ES_\cM E.
\end{equation}
The Gaussian trace identity gives
\begin{equation}
 (\Tr Y)^2
 =
 \det(\one+R(Y))
 =
 \det(\one+ES_\cM E).
\end{equation}
Since $Y$ is positive, $\Tr Y>0$, and hence
\begin{equation}
 \log\Tr Y
 =
 \frac12\log\det(\one+ES_\cM E).
\end{equation}
Using
\begin{equation}
 Z_\sigma
 =
 (1-t^2)^{|\cM|/2}\Tr Y
\end{equation}
now proves Eq.~\eqref{eq:SM-baseline-logdet}.

It remains to prove the spectral bound.  Each $2\times2$ block of
$S_\cM$ has eigenvalues $\ee^{2\theta}$ and
$\ee^{-2\theta}$, while $S_\cM$ acts as the identity on the unpaired
labels.  Therefore,
\begin{equation}
 \ee^{-2\theta}\one
 \preceq
 S_\cM
 \preceq
 \ee^{2\theta}\one.
\end{equation}
Since $E$ is Hermitian, multiplying each term on the left and right
by $E$ preserves the inequalities and gives
\begin{equation}
 \ee^{-2\theta}E^2
 \preceq
 ES_\cM E
 \preceq
 \ee^{2\theta}E^2.
\end{equation}
The spectrum of $K_0$ lies in $[-C_H,C_H]$, and therefore
\begin{equation}
 \ee^{-\beta C_H}\one
 \preceq
 E^2
 \preceq
 \ee^{\beta C_H}\one.
\end{equation}
Combining these inequalities gives
\begin{equation}
 \ee^{-\beta C_H-2\theta}\one
 \preceq
 ES_\cM E
 \preceq
 \ee^{\beta C_H+2\theta}\one,
\end{equation}
which is Eq.~\eqref{eq:SM-baseline-spectrum}.

It follows that the eigenvalues of $\one+ES_\cM E$ lie in
\begin{equation}
 \left[
 1+\ee^{-\beta C_H-2\theta},
 \,
 1+\ee^{\beta C_H+2\theta}
 \right].
\end{equation}
For $t\le t_0<1$, we have
$\theta\le\operatorname{arctanh}t_0$, so this interval depends only on
the fixed parameters and is uniform in $\cM$.
\end{proof}

We next turn to the two-point function of $\omega_\sigma$.  Define
\begin{equation}
 (C_\sigma)_{ab}
 :=
 \omega_\sigma(\gamma_a\gamma_b).
 \label{eq:SM-Csigma-def}
\end{equation}
Let
\begin{equation}
 R:=ES_\cM E=R(G\sigma G).
\end{equation}
The transfer relation
\begin{equation}
 (G\sigma G)^{-1}\bm\gamma(G\sigma G)=R\bm\gamma
\end{equation}
and cyclicity of the trace give
\begin{equation}
 RC_\sigma=C_\sigma^{\mathsf T}.
\end{equation}
On the other hand, the CAR imply
\begin{equation}
 C_\sigma+C_\sigma^{\mathsf T}=2\one.
\end{equation}
Combining these identities gives
\begin{equation}
 (\one+R)C_\sigma=2\one,
\end{equation}
and hence
\begin{equation}
 C_\sigma
 =
 2(\one+ES_\cM E)^{-1}.
 \label{eq:SM-Csigma}
\end{equation}

The expansion in the next subsection contains sums over the Majorana
labels in these two-point functions.  The entrywise bound
\begin{equation}
 |(C_\sigma)_{ab}|\le1
\end{equation}
is not sufficient for this purpose, since it only gives
\begin{equation}
 \sum_b|(C_\sigma)_{ab}|\le N.
\end{equation}
We instead need bounds on the absolute row and column sums that are
independent of the system size.  For a matrix $A$, define
\begin{equation}
 \norm{A}_{\mathrm{Schur}}
 :=
 \max\left\{
 \max_a\sum_b|A_{ab}|,
 \,
 \max_b\sum_a|A_{ab}|
 \right\}.
 \label{eq:SM-Schur-norm}
\end{equation}

The matching $\cM$ may connect Majorana labels at arbitrarily large
distances. We begin with the case
$\cM=\varnothing$.  In this case,
\begin{equation}
 S_\cM=\one,
 \qquad
 C_\sigma=2F_0,
 \qquad
 F_0=(\one+\ee^{-\beta K_0})^{-1}.
\end{equation}
The following lemma derives spatial decay of $F_0$ and, as a
consequence, a bound on $\norm{F_0}_{\mathrm{Schur}}$ that is
independent of the system size.  We will then compare the general matching with this case.

\begin{lemma}
\label{lem:SM-free-covariance-decay}
There are constants $C_\beta,\mu_\beta,L_\beta>0$, depending only on
the fixed model parameters, such that
\begin{equation}
 |(F_0)_{ab}|
 \le
 C_\beta
 \ee^{-\mu_\beta\operatorname{dist}(a,b)},
 \qquad
 \norm{F_0}_{\mathrm{Schur}}
 \le
 L_\beta,
 \label{eq:SM-Lbeta}
\end{equation}
where
\begin{equation}
 F_0=(\one+\ee^{-\beta K_0})^{-1}.
\end{equation}
In particular, the constants are independent of the system size.
\end{lemma}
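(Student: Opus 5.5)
We need the Fermi function $F_0=f(K_0)$ with $f(x)=(1+e^{-\beta x})^{-1}$. Here $K_0$ is Hermitian, purely imaginary antisymmetric, and finite-range with spectrum in $[-C_H,C_H]$. The plan is to use the analyticity of $f$ in a strip around the real axis together with the locality of $K_0$, in a Combes--Thomas style argument.

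The function $f$ has poles at $x=\ii\pi(2k+1)/\beta$. It is therefore analytic and bounded on the strip $|\Im x|<\pi/\beta$, and in particular on a neighborhood of the segment $[-C_H,C_H]$.

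\textbf{Step 1: polynomial approximation.} Use Chebyshev approximation on $[-C_H,C_H]$. Choose a Bernstein ellipse $E_\rho$ with foci $\pm C_H$, with $\rho>1$ depending only on $\beta C_H$, small enough that it avoids the poles. There $|f|$ is bounded by a constant $M_\beta$. Standard Chebyshev estimates then give a polynomial $p_k$ of degree $k$ with
\[
\sup_{|x|\le C_H}|f(x)-p_k(x)|\le \frac{2M_\beta\rho^{-k}}{\rho-1}.
\]
Since $K_0$ is Hermitian with spectrum in $[-C_H,C_H]$, which follows from $\norm{K_0}_{2\to2}\le\sqrt{\norm{K_0}_{1\to1}\norm{K_0}_{\infty\to\infty}}\le C_H$, this gives $\norm{F_0-p_k(K_0)}\le 2M_\beta\rho^{-k}/(\rho-1)$. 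In particular each matrix entry is bounded by the same quantity.

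\textbf{Step 2: locality.} $K_0$ only couples Majorana labels in a common set $B$, and each such set lies in a ball of radius $r_0$. Hence $(K_0^j)_{ab}=0$ whenever $\operatorname{dist}(a,b)>2r_0j$, and so $(p_k(K_0))_{ab}=0$ whenever $\operatorname{dist}(a,b)>2r_0k$.

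Given $a,b$, take $k=\lfloor(\operatorname{dist}(a,b)-1)/(2r_0)\rfloor$. Then
\[
|(F_0)_{ab}|\le C\rho^{-k}\le C_\beta\ee^{-\mu_\beta\operatorname{dist}(a,b)},\qquad \mu_\beta=\frac{\log\rho}{2r_0}.
\]
For small distances we use instead the trivial bound $|(F_0)_{ab}|\le\norm{F_0}\le1$. This case is valid because $0\preceq F_0\preceq\one$.

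\textbf{Step 3: Schur norm.} Each site carries at most $2\nu$ Majorana labels, and the number of sites at lattice distance $\ell$ is $\order(\ell^{D-1})$ with constants depending only on $D$. Therefore
\[
\sum_b|(F_0)_{ab}|\le 2\nu C_\beta\sum_{\ell\ge0}c_D(1+\ell)^{D-1}\ee^{-\mu_\beta\ell}=:L_\beta .
\]
By symmetry the same bound holds for the column sums, and $L_\beta$ is independent of $n$.

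\textbf{Main obstacle.} The delicate point is uniformity. We must check that the ellipse parameter $\rho$, and hence $\mu_\beta$, depends only on $\beta C_H$. This works because the nearest pole is at distance $\pi/\beta$ from the real axis, independent of $n$. One alternative is a Combes--Thomas argument: conjugate $K_0$ by $\ee^{\mu\,\mathrm{dist}(\cdot,a)}$ and write $F_0$ as a resolvent contour integral. The Chebyshev route is simpler and entirely self-contained, so that is the route taken here.
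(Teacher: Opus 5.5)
Your proof is correct and follows the paper's argument. You write $F_0=f(K_0)$ and use analyticity of $f$ near $[-C_H,C_H]$ to get exponentially accurate polynomial approximants; your Bernstein-ellipse estimate is an explicit instance of the paper's bound $C\ee^{-\mu k}$. You then kill $p_k(K_0)$ beyond distance $2r_0k$ by finite range, use $\norm{F_0}\le 1$ at short distance, and sum over the lattice, exactly as the paper does. The one case the paper treats separately and you skip is $r_0=0$, where your $\mu_\beta=\log\rho/(2r_0)$ is undefined; there $F_0$ is block diagonal over sites (or you can simply replace $r_0$ by $\max\{r_0,1\}$), so the conclusion is immediate.
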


\begin{proof}
Let
\begin{equation}
 f(x)=\frac{1}{1+\ee^{-\beta x}},
\end{equation}
so that $F_0=f(K_0)$.  The function $f$ is analytic in a complex
neighborhood of the spectral interval $[-C_H,C_H]$.  Because $f$ is analytic in a complex neighborhood of
$[-C_H,C_H]$, there are degree-$k$ polynomials $p_k$ and constants
$C,\mu>0$ such that
\begin{equation}
 \sup_{|x|\le C_H}|f(x)-p_k(x)|
 \le
 C\ee^{-\mu k}.
 \label{eq:SM-F0-poly-approx}
\end{equation}
Since $K_0$ is Hermitian, write
\begin{equation}
 K_0
 =
 U\operatorname{diag}(\lambda_1,\ldots,\lambda_N)U^\dagger,
 \qquad
 \lambda_j\in[-C_H,C_H].
\end{equation}
Then
\begin{equation}
 F_0-p_k(K_0)
 =
 U\operatorname{diag}\bigl(
 f(\lambda_j)-p_k(\lambda_j)
 \bigr)_{j=1}^N U^\dagger.
\end{equation}
It follows directly from Eq.~\eqref{eq:SM-F0-poly-approx} that
\begin{equation}
 \norm{F_0-p_k(K_0)}_{\mathrm{op}}
 \le
 C\ee^{-\mu k}.
 \label{eq:SM-F0-matrix-approx}
\end{equation}

Suppose first that $r_0>0$, and set $R_0=2r_0$.  The matrix $K_0$
has range at most $R_0$.  Consequently, $K_0^j$ has range at most
$jR_0$, and hence
\begin{equation}
 (p_k(K_0))_{ab}=0
 \qquad\text{if}\qquad
 \operatorname{dist}(a,b)>kR_0.
 \label{eq:SM-polynomial-range}
\end{equation}
Let
\begin{equation}
 r=\operatorname{dist}(a,b)>R_0
\end{equation}
and choose
\begin{equation}
 k=\left\lceil\frac{r}{R_0}\right\rceil-1.
\end{equation}
Then $kR_0<r$, so Eq.~\eqref{eq:SM-polynomial-range} and
Eq.~\eqref{eq:SM-F0-matrix-approx} give
\begin{align}
 |(F_0)_{ab}|
 &=
 |(F_0-p_k(K_0))_{ab}|
 \nonumber\\
 &\le
 \norm{F_0-p_k(K_0)}_{\mathrm{op}}
 \nonumber\\
 &\le
 C\ee^{-\mu k}
 \le
 C\ee^\mu
 \exp\left(-\frac{\mu}{R_0}r\right).
\end{align}
Choose $C_\beta$ also so that
$C_\beta\ee^{-\mu_\beta R_0}\ge1$.  Since
$\norm{F_0}_{\mathrm{op}}\le1$, the same bound holds for $r\le R_0$. Thus
\begin{equation}
 |(F_0)_{ab}|
 \le
 C_\beta
 \ee^{-\mu_\beta\operatorname{dist}(a,b)}
\end{equation}
for all $a,b$.

If $r_0=0$, then $K_0$ and $F_0=f(K_0)$ are block diagonal with
respect to the lattice sites.  The same conclusion follows
immediately after adjusting the constants.

It remains to sum the entrywise bound.  The number of Majorana labels
at lattice distance $r$ from a fixed label is at most \(C_D(r+1)^{D-1}\), where $C_D$ is independent of the system size.  Therefore,
\begin{align}
 \sum_b |(F_0)_{ab}|
 \le
 C_\beta C_D
 \sum_{r\ge0}
 (r+1)^{D-1}\ee^{-\mu_\beta r}
 \nonumber:=L_\beta.
\end{align}
The series converges.  Since $F_0$ is Hermitian, the same bound holds
for the absolute column sums.  Hence
\begin{equation}
 \norm{F_0}_{\mathrm{Schur}}\le L_\beta.
\end{equation}
\end{proof}

We now compare the two-point function for a general matching with the
empty-matching case.  On the two-dimensional subspace associated with
a pair $e$, the corresponding block of $S_\cM$ has diagonal entries
$\cosh(2\theta)$ and off-diagonal entries of absolute value
$\sinh(2\theta)$.  Hence the absolute row and column sums of the
corresponding block of $S_\cM-\one$ are
\begin{equation}
 \cosh(2\theta)-1+\sinh(2\theta)
 =
 \ee^{2\theta}-1
 =
 \frac{2t}{1-t}.
\end{equation}
Since the pairs in $\cM$ are disjoint, these blocks do not overlap.
Therefore,
\begin{equation}
 \norm{S_\cM-\one}_{\mathrm{Schur}}
 \le
 \frac{2t}{1-t}.
 \label{eq:SM-matching-Schur}
\end{equation}
In particular, this bound is independent of the number and separation
of the pairs.

Using
\begin{equation}
 F_0=(\one+E^2)^{-1},
\end{equation}
we can write
\begin{equation}
 \one+ES_\cM E
 =
 (\one+E^2)
 \left[
 \one+F_0E(S_\cM-\one)E
 \right].
 \label{eq:SM-covariance-factorization}
\end{equation}
The Schur norm is submultiplicative.  Moreover, the bound on the
absolute row and column sums of $K_0$ gives
\begin{equation}
 \norm{E}_{\mathrm{Schur}}
 =
 \norm{\ee^{-TK_0}}_{\mathrm{Schur}}
 \le
 \ee^{TC_H}.
\end{equation}
Combining this estimate with
Eqs.~\eqref{eq:SM-Lbeta} and
\eqref{eq:SM-matching-Schur}, we obtain
\begin{align}
 \norm{
 F_0E(S_\cM-\one)E
 }_{\mathrm{Schur}}
 \nonumber
 \le
 L_\beta\ee^{\beta C_H}
 \frac{2t}{1-t}.
 \label{eq:SM-matching-correction}
\end{align}

Choose $t_0>0$ such that
\begin{equation}
 L_\beta\ee^{\beta C_H}
 \frac{2t_0}{1-t_0}
 \le
 \frac12.
 \label{eq:SM-t0}
\end{equation}
Then, for $t\le t_0$,
\begin{equation}
 \norm{
 F_0E(S_\cM-\one)E
 }_{\mathrm{Schur}}
 \le\frac12.
\end{equation}
The Neumann series therefore gives
\begin{equation}
 \norm{
 \left[
 \one+F_0E(S_\cM-\one)E
 \right]^{-1}
 }_{\mathrm{Schur}}
 \le2.
\end{equation}

Equations~\eqref{eq:SM-Csigma} and
\eqref{eq:SM-covariance-factorization} now give
\begin{equation}
 C_\sigma
 =
 2\left[
 \one+F_0E(S_\cM-\one)E
 \right]^{-1}F_0.
\end{equation}
Using Eq.~\eqref{eq:SM-Lbeta} once more,
\begin{equation}
 \norm{C_\sigma}_{\mathrm{Schur}}
 \le
 4L_\beta.
 \label{eq:SM-C-Schur}
\end{equation}
Thus the absolute sum over either Majorana index in $(C_\sigma)_{ab}$ is bounded independently of the system size and of
the number and separation of the matching pairs.  This is the
two-point function bound used in the expansion of the next
subsection.

\subsection{\texorpdfstring{Expansion of $\log Z_{S,\sigma}$}
{Expansion of log Z}}
\label{sec:SM-oracle-expansion}

Recall that
\begin{equation}
 Z_{S,\sigma}
 =
 Z_\sigma\,
 \omega_\sigma\bigl((L^S)^\dagger L^S\bigr),
 \qquad
 L^S=GX^SG^{-1}.
 \label{eq:SM-contour-base}
\end{equation}
Therefore,
\begin{equation}
 \log Z_{S,\sigma}
 =
 \log Z_\sigma
 +
 \log\omega_\sigma\bigl((L^S)^\dagger L^S\bigr).
 \label{eq:SM-log-factorization}
\end{equation}
The Gaussian formulas above determine $\log Z_\sigma$ and bound the
two-point function of $\omega_\sigma$.  It remains to expand
$\log\omega_\sigma((L^S)^\dagger L^S)$ in the interaction.

Define
\begin{equation}
 A_{\mathrm L}^S(u)
 :=
 G^{-1}A^S(u)^\dagger G,
 \qquad
 A_{\mathrm R}^S(u)
 :=
 GA^S(u)G^{-1}.
 \label{eq:SM-two-branch-interactions}
\end{equation}
The subscripts $\mathrm L$ and $\mathrm R$ refer to the two sides of
the product $(L^S)^\dagger L^S$.  Using
$L^S=GX^SG^{-1}$ and the time-ordered exponential defining $X^S$, we
obtain
\begin{align}
 (L^S)^\dagger
 =
 \widetilde\cT
 \exp\left[
 -\int_0^T A_{\mathrm L}^S(u)\,\dd u
 \right],
 \nonumber
 L^S=
 \cT
 \exp\left[
 -\int_0^T A_{\mathrm R}^S(u)\,\dd u
 \right].
 \label{eq:SM-two-branch-Dyson}
\end{align}
Here $\widetilde\cT$ orders earlier times to the left, whereas
$\cT$ orders later times to the left.
To treat the two ordered exponentials in a single expansion, introduce
two copies of the time interval,
\begin{equation}
 \mathfrak C
 =
 [0,T]_{\mathrm L}
 \sqcup
 [0,T]_{\mathrm R},
 \qquad
 |\mathfrak C|=2T=\beta.
 \label{eq:SM-two-branch-domain}
\end{equation}
For $z\in\mathfrak C$, define
\begin{equation}
 A(z)
 =
 \begin{cases}
  A_{\mathrm L}^S(u),
   & z=(\mathrm L,u),\\
  A_{\mathrm R}^S(u),
   & z=(\mathrm R,u).
 \end{cases}
 \label{eq:SM-branch-interaction}
\end{equation}
Write its Majorana expansion as
\begin{equation}
 A(z)=\sum_R v_R(z)P_R.
\end{equation}
Every monomial in this expansion has degree at most $d$.  Moreover,
there is a constant
\begin{equation}
 J_*=J_*(\beta,D,r_0)
\end{equation}
such that
\begin{equation}
 \sup_{z\in\mathfrak C}
 \max_a
 \sum_{R\ni a}|v_R(z)|
 \le
 J_*g.
 \label{eq:SM-contour-local}
\end{equation}
Indeed, restricting $A(u)$ to supports contained in $S$ cannot
increase the sum on the left.  Conjugation by $G$ or $G^{-1}$ acts
linearly on the Majorana operators, with absolute row and column sums
bounded by $\ee^{TC_H}$.  The same determinant estimate as in
Eq.~\eqref{eq:SM-uncollected-Jbound} then gives
Eq.~\eqref{eq:SM-contour-local}.

Introduce a scalar parameter $\lambda$ multiplying the interactions
on both sides of $(L^S)^\dagger L^S$, and define
\begin{align}
 F(\lambda)
 :=
 Z_\sigma\,
 \omega_\sigma\Bigg[
 \widetilde\cT
 \exp\left(
 -\lambda\int_0^T A_{\mathrm L}^S(u)\,\dd u
 \right)
 \times
 \cT
 \exp\left(
 -\lambda\int_0^T A_{\mathrm R}^S(u)\,\dd u
 \right)
 \Bigg].
 \label{eq:SM-Flambda}
\end{align}
For finite $n$, $F(\lambda)$ is entire.  At the two values relevant
here,
\begin{equation}
 F(0)=Z_\sigma,
 \qquad
 F(1)=Z_{S,\sigma}.
\end{equation}
For real $\lambda$, the two ordered exponentials in
Eq.~\eqref{eq:SM-Flambda} are adjoints of one another.  Their product
is positive, and hence
\begin{equation}
 F(\lambda)>0.
\end{equation}
In particular, $F$ has no zeros in a sufficiently small neighborhood
of the origin.  We may therefore choose an analytic logarithm in this
neighborhood and write
\begin{equation}
 \log F(\lambda)
 =
 \log Z_\sigma+\sum_{s\ge1}c_s\lambda^s.
 \label{eq:SM-log-Taylor}
\end{equation}

We next identify the coefficients $c_s$.  Expand the two ordered
exponentials in Eq.~\eqref{eq:SM-Flambda}, and fix $s$ interaction
terms at $z_1,\ldots,z_s\in\mathfrak C$, with Majorana monomials
$P_{R_1},\ldots,P_{R_s}$.  Their order is inherited from
Eq.~\eqref{eq:SM-Flambda}: the terms on the left branch come first in
increasing time order, followed by the terms on the right branch in
decreasing time order.  Within each Majorana monomial, the indices are
placed in canonical order.  Equal times on the same branch are ordered
by the vertex labels.

For a partition $\pi$ of $[s]$, let $|\pi|$ denote its number of
blocks.  The ordered cumulant of these interaction terms is
\begin{equation}
 \kappa_s(P_{R_1},\ldots,P_{R_s})
 =
 \sum_{\pi}
 (-1)^{|\pi|-1}(|\pi|-1)!
 \prod_{B\in\pi}
 \omega_\sigma\left(
 \prod_{j\in B}^{\longrightarrow}P_{R_j}
 \right).
 \label{eq:SM-connected-cumulant}
\end{equation}
The arrow indicates that each product inherits the order specified
above.  This combination subtracts the products of moments associated
with all nontrivial partitions of the vertices.  After multiplying by
the interaction coefficients, summing the monomial labels, and
integrating the variables $z_1,\ldots,z_s$, these cumulants give the
coefficient $c_s$.

By Wick's theorem, each moment in
Eq.~\eqref{eq:SM-connected-cumulant} is a Pfaffian of two-point
functions.  To isolate its connected part, introduce interpolation
variables $w_{ij}$ multiplying the covariance entries between
vertices $i$ and $j$, while leaving the covariance entries within each
vertex unchanged.  When the variables $w_{ij}$ disconnect the
vertices into separate groups, the Pfaffian factorizes over those
groups.  No additional exchange sign appears in this factorization
because every interaction monomial has even degree.

Applying the BKAR forest formula to
Eq.~\eqref{eq:SM-connected-cumulant} removes the disconnected
contributions and expresses the cumulant as a sum over spanning
trees~\cite{BrydgesKennedy1987,AbdesselamRivasseau1995}.

For each vertex $i$, write $d_i=|R_i|$ and
\begin{equation}
R_i=\{r_{i,1}<\cdots<r_{i,d_i}\},
\qquad
P_{R_i}
=
\ii^{d_i/2}
\gamma_{r_{i,1}}\cdots\gamma_{r_{i,d_i}}.
\end{equation}
Thus $\alpha\in[d_i]$ labels a position in the ordered monomial at
vertex $i$.  Let $\prec$ denote the ordering of all these positions
fixed above, and write \(r(i,\alpha)=r_{i,\alpha}\).

Let $\mathsf{Tree}_s$ be the set of labeled trees on $[s]$, and fix
$\mathcal T\in\mathsf{Tree}_s$.  For every edge $e=\{i,j\}$, choose
one position $\alpha_{i,e}$ at vertex $i$ and one position
$\alpha_{j,e}$ at vertex $j$.  Positions chosen by different edges
incident to the same vertex must be distinct.  Let
\begin{equation}
\Phi(\mathcal T;\bm R),
\qquad
\bm R=(R_1,\ldots,R_s),
\end{equation}
denote the set of all such choices.

For $\phi\in\Phi(\mathcal T;\bm R)$, an edge $e=\{i,j\}$ selects the
two positions \((i,\alpha_{i,e})\) and \((j,\alpha_{j,e})\). Write them as $x_e\prec y_e$ according to the ordering fixed above,
and define
\begin{equation}
C_{\phi(e)}
=
(C_\sigma)_{r(x_e),r(y_e)}.
\label{eq:SM-selected-covariance}
\end{equation}

We next define the matrix formed by the positions not selected by the
tree edges.  For
$\bm u=(u_e)_{e\in\mathcal T}\in[0,1]^{s-1}$, set
\begin{equation}
w_{ij}^{\mathcal T}(\bm u)
=
\begin{cases}
1,
&i=j,\\[1mm]
\displaystyle
\min_{e\in\operatorname{path}_{\mathcal T}(i,j)}u_e,
&i\ne j,
\end{cases}
\label{eq:SM-BKAR-interpolation}
\end{equation}
where $\operatorname{path}_{\mathcal T}(i,j)$ is the unique path from
$i$ to $j$ in $\mathcal T$.

Remove the two positions selected by every tree edge.  The remaining positions, with their induced ordering, index the
antisymmetric matrix $K^{\mathcal T,\phi}(\bm u)$.  For two remaining
positions $x=(i,\alpha)\prec y=(j,\beta)$, define
\begin{equation}
\left(
K^{\mathcal T,\phi}(\bm u)
\right)_{x,y}
=
w_{ij}^{\mathcal T}(\bm u)
(C_\sigma)_{r(i,\alpha),r(j,\beta)}.
\label{eq:SM-K-entry}
\end{equation}
The entries below the diagonal are fixed by antisymmetry.

With these definitions, the BKAR forest formula gives
\begin{equation}
\begin{aligned}
\kappa_s(P_{R_1},\ldots,P_{R_s})
={}&
\sum_{\mathcal T\in\mathsf{Tree}_s}
\int_{[0,1]^{s-1}}\dd\bm u
\sum_{\phi\in\Phi(\mathcal T;\bm R)}
\varepsilon(\mathcal T,\phi) \\
&\times
\left(\prod_{e\in\mathcal T}C_{\phi(e)}\right)
\operatorname{Pf}K^{\mathcal T,\phi}(\bm u).
\end{aligned}
\label{eq:SM-BKAR}
\end{equation}
Here $\varepsilon(\mathcal T,\phi)$ denotes the product of the sign
determined by the selected positions in the Pfaffian expansion and
the phases $\ii^{d_i/2}$ in the monomials $P_{R_i}$; hence
$|\varepsilon(\mathcal T,\phi)|=1$.

Positions at different vertices may carry the same physical Majorana
label.  The definition of $\Phi(\mathcal T;\bm R)$ only prevents one
position from being selected by two different tree edges.

For fixed $\mathcal T$, $\bm z$, and $\bm u$, define
\begin{equation}
\begin{aligned}
\cS_{s,\mathcal T}(\bm z,\bm u)
:={}&
\sum_{\bm R}
\left(\prod_{j=1}^s v_{R_j}(z_j)\right)
\sum_{\phi\in\Phi(\mathcal T;\bm R)}
\varepsilon(\mathcal T,\phi) \\
&\times
\left(\prod_{e\in\mathcal T}C_{\phi(e)}\right)
\operatorname{Pf}K^{\mathcal T,\phi}(\bm u).
\end{aligned}
\label{eq:SM-tree-integrand}
\end{equation}
After summing over the trees and integrating the interaction
variables, the coefficient in
Eq.~\eqref{eq:SM-log-Taylor} is
\begin{equation}
c_s
=
\frac{(-1)^s}{s!}
\sum_{\mathcal T\in\mathsf{Tree}_s}
\int_{\mathfrak C^s}\dd\bm z
\int_{[0,1]^{s-1}}\dd\bm u\,
\cS_{s,\mathcal T}(\bm z,\bm u).
\label{eq:SM-cs-outer}
\end{equation}
For $s=1$, $\mathsf{Tree}_1$ contains one tree with no edges, and the
integral over $\bm u$ is absent.  The sign $(-1)^s$ comes from the
minus sign associated with each interaction insertion.

The tree edges contribute the explicit two-point entries
$C_{\phi(e)}$, which will be summed using
Eq.~\eqref{eq:SM-C-Schur}.  The other two-point entries form
$\operatorname{Pf}K^{\mathcal T,\phi}(\bm u)$.
The following lemma bounds this Pfaffian directly.
\begin{lemma}
\label{lem:SM-replicated-pfaffian}
Let $\omega$ be an even quasifree state of Majorana operators
$\gamma_1,\ldots,\gamma_N$, and for $f\in\mathbb R^N$ write
$\gamma(f)=\sum_a f_a\gamma_a$.  Let
$w=(w_{ij})_{i,j=1}^s$ be a real positive semidefinite matrix with
$w_{ii}=1$.

Consider a finite ordered set $\mathcal I$ of positions
$x=(i,\alpha)$, where $i\in[s]$, and assign a vector
$f_x\in\mathbb R^N$ with $\norm{f_x}_2\le1$ to each position.
For $x=(i,\alpha)\prec y=(j,\beta)$, define the upper-triangular
entries of an antisymmetric matrix $K$ by
\begin{equation}
K_{x,y}
=
w_{ij}\,
\omega\left(\gamma(f_x)\gamma(f_y)\right).
\label{eq:SM-replicated-K}
\end{equation}
Then every even principal submatrix $K_{\mathcal J}$, with
$\mathcal J\subseteq\mathcal I$, satisfies
$|\operatorname{Pf}K_{\mathcal J}|\le1$.
\end{lemma}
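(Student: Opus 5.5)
The plan is a Gram-type representation. I will realize $K$ itself as the two-point matrix of an enlarged quasifree state. The Pfaffian is then an expectation of a product of Majorana-type operators of norm at most one, which is bounded by one. The positive semidefiniteness of $w$ is exactly what makes the enlarged state legitimate.

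\emph{Step 1 (data of $\omega$).} Since $\omega$ is even and quasifree, its two-point function has the form $\omega(\gamma_a\gamma_b)=\delta_{ab}+\ii M_{ab}$. Here $M$ is a real antisymmetric $N\times N$ matrix with $\norm{M}_{\mathrm{op}}\le1$, by positivity of $\omega$. Conversely, every such $M$ defines a (possibly non-faithful) even quasifree state, and Wick's theorem holds for it; for singular cases I will use the closure of $\cG_n$ already used in the proof of Lemma~\ref{lem:SM-Gaussian-sandwich}. Consequently, for real $f,g$,
\begin{equation}
 \omega(\gamma(f)\gamma(g))=f\cdot g+\ii f^{\mathsf T}Mg .
\end{equation}

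\emph{Step 2 (absorbing $w$).} Because $w$ is real positive semidefinite with unit diagonal, I will write $w_{ij}=h_i\cdot h_j$ with unit vectors $h_i\in\mathbb R^s$, for instance via a Cholesky factor. On $Ns$ Majorana operators $\gamma_{a,k}$, I will take the quasifree state $\Omega$ with antisymmetric matrix $M\otimes\one_s$. This matrix is real antisymmetric with norm at most one, so $\Omega$ exists. For a position $x=(i,\alpha)$, set $F_x=f_x\otimes h_i$, so that $\norm{F_x}_2\le1$. Then, for $x\prec y$,
\begin{equation}
 \Omega\left(\gamma(F_x)\gamma(F_y)\right)=(f_x\cdot f_y)(h_i\cdot h_j)+\ii (f_x^{\mathsf T}Mf_y)(h_i\cdot h_j)=K_{x,y}.
\end{equation}
This also holds for $i=j$, since $w_{ii}=1$.

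\emph{Step 3 (Wick and norm bound).} For an even $\mathcal J=\{x_1\prec\cdots\prec x_{2p}\}$, Wick's theorem for $\Omega$ gives
\begin{equation}
 \operatorname{Pf}K_{\mathcal J}=\Omega\left(\gamma(F_{x_1})\cdots\gamma(F_{x_{2p}})\right),
\end{equation}
where the Pfaffian uses the upper-triangular entries in exactly this order, matching the convention in Eq.~\eqref{eq:SM-replicated-K}. For real $F$ the anticommutation relations give $\gamma(F)^2=\norm F_2^2\one$, so $\norm{\gamma(F)}=\norm F_2\le1$. Since $\Omega$ is a state, the modulus of this expectation is at most $\prod_x\norm{\gamma(F_x)}\le1$.

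\emph{Main obstacle.} The delicate point is Step 1, together with the sign conventions in Step 3. I must check that the Pfaffian with the paper's orientation ($x\prec y$ as the upper entries) really coincides with the ordered Wick expansion. I also must confirm that $M\otimes\one_s$ with $\norm{M}\le1$ always yields a genuine state, including when $\norm M=1$. I would handle the latter by approximating $M$ by $(1-\epsilon)M$ and passing to the limit, which is allowed because both sides are continuous in $M$.
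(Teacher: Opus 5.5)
Your proposal is correct and is essentially the paper's proof. The paper also writes $w$ as a Gram matrix of unit vectors $u_i$, passes to an enlarged system whose state has two-point matrix equal to the direct sum of copies of that of $\omega$ (your $M\otimes\one_s$), sets $\widehat f_x=u_i\otimes f_x$, and concludes via Wick's theorem and $\norm{\widehat\gamma(\widehat f_x)}_{\mathrm{op}}\le1$.

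The $(1-\epsilon)M$ limiting argument you plan for $\norm{M}_{\mathrm{op}}=1$ is unnecessary. The replicated state exists directly, for instance as the product of $s$ independent copies of $\omega$.
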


\begin{proof}
Since $w$ is positive semidefinite, it is a Gram matrix: there are
vectors $u_1,\ldots,u_s\in\mathbb R^r$ such that
$w_{ij}=\langle u_i,u_j\rangle$.  Since $w_{ii}=1$, each $u_i$ is a
unit vector.

Consider the enlarged Majorana system with one-particle space
$\mathbb R^r\otimes\mathbb R^N$.  Let $\widehat\omega$ be the even
Gaussian state whose two-point matrix is the direct sum of $r$ copies
of that of $\omega$, and write $\widehat\gamma(h)$ for the Majorana
operator associated with
$h\in\mathbb R^r\otimes\mathbb R^N$.  Then
\begin{equation}
\widehat\omega\left(
\widehat\gamma(u\otimes f)\widehat\gamma(v\otimes g)
\right)
=
\langle u,v\rangle\,
\omega\left(\gamma(f)\gamma(g)\right).
\label{eq:SM-replicated-two-point}
\end{equation}

For a position $x=(i,\alpha)$, set
$\widehat f_x=u_i\otimes f_x$.  If
$x=(i,\alpha)\prec y=(j,\beta)$, then
Eq.~\eqref{eq:SM-replicated-two-point} gives
\begin{equation}
\widehat\omega\left(
\widehat\gamma(\widehat f_x)
\widehat\gamma(\widehat f_y)
\right)
=
w_{ij}\,
\omega\left(\gamma(f_x)\gamma(f_y)\right)
=
K_{x,y}.
\end{equation}
Thus every principal submatrix $K_{\mathcal J}$ is the ordered
two-point matrix of the operators
$\widehat\gamma(\widehat f_x)$ with $x\in\mathcal J$.
The CAR imply
$\norm{\widehat\gamma(\widehat f_x)}_{\mathrm{op}}
=\norm{\widehat f_x}_2
=\norm{u_i}_2\norm{f_x}_2\le1$.

Let $\mathcal J\subseteq\mathcal I$ have even cardinality, with the
order inherited from $\mathcal I$.  Wick's theorem gives
\begin{equation}
\begin{aligned}
\left|\operatorname{Pf}K_{\mathcal J}\right|
&=
\left|
\widehat\omega\left(
\prod_{x\in\mathcal J}^{\longrightarrow}
\widehat\gamma(\widehat f_x)
\right)
\right| \\
&\le
\norm{
\prod_{x\in\mathcal J}^{\longrightarrow}
\widehat\gamma(\widehat f_x)
}_{\mathrm{op}}
\le
\prod_{x\in\mathcal J}
\norm{\widehat\gamma(\widehat f_x)}_{\mathrm{op}}
\le1.
\end{aligned}
\end{equation}
\end{proof}
We apply the lemma to the matrix
$K^{\mathcal T,\phi}(\bm u)$.  To verify its assumption on
$w^{\mathcal T}(\bm u)$, for $\tau\in[0,1]$ let
$\mathcal T_\tau$ contain the edges $e$ of $\mathcal T$ satisfying
$u_e\ge\tau$.  Then
\begin{equation}
w_{ij}^{\mathcal T}(\bm u)
=
\int_0^1
\boldsymbol{1}\{i\text{ and }j
\text{ belong to the same component of }\mathcal T_\tau\}
\,\dd\tau.
\label{eq:SM-BKAR-positive}
\end{equation}
For fixed $\tau$, two vertices contribute $1$ to the matrix inside
the integral exactly when they belong to the same connected component
of $\mathcal T_\tau$.  After grouping the vertices by connected
components, this matrix is block diagonal.  The block corresponding
to a component $B$ is
$\boldsymbol{1}_B\boldsymbol{1}_B^{\mathsf T}$ and is therefore
positive semidefinite. Hence $w^{\mathcal T}(\bm u)$ is positive
semidefinite and has unit diagonal.

For every position $(i,\alpha)$ not selected by a tree edge, take
$f_{i,\alpha}$ to be the standard basis vector corresponding to the
Majorana label $r(i,\alpha)$.  Lemma
\ref{lem:SM-replicated-pfaffian} then gives
\begin{equation}
\left|
\operatorname{Pf}K^{\mathcal T,\phi}(\bm u)
\right|
\le1.
\label{eq:SM-pf-bound}
\end{equation}
Different positions may be assigned the same standard basis vector,
so this also covers repeated physical Majorana labels.  If no
positions remain, the same statement follows from
$\operatorname{Pf}(\varnothing)=1$.

For fixed $\mathcal T$ and $\bm z$, define
\begin{equation}
\cA_{\mathcal T}(\bm z)
:=
\sum_{\bm R}
\left(\prod_{j=1}^s |v_{R_j}(z_j)|\right)
\sum_{\phi\in\Phi(\mathcal T;\bm R)}
\prod_{e\in\mathcal T}|C_{\phi(e)}|.
\label{eq:SM-tree-absolute-sum}
\end{equation}
The Pfaffian bound in Eq.~\eqref{eq:SM-pf-bound} gives
\begin{equation}
\left|
\cS_{s,\mathcal T}(\bm z,\bm u)
\right|
\le
\cA_{\mathcal T}(\bm z).
\label{eq:SM-tree-integrand-bound}
\end{equation}

Set $J=J_*g$ and $L=4L_\beta$.  Root $\mathcal T$ at an arbitrary
vertex and sum the Majorana labels from the leaves toward the root.
Consider an edge joining a vertex to its parent, and fix the Majorana
label $a$ selected at the parent vertex.  Equation
\eqref{eq:SM-C-Schur} gives
$\sum_b |(C_\sigma)_{ab}|\le L$, while
Eq.~\eqref{eq:SM-contour-local} gives
$\sum_{R\ni b}|v_R(z)|\le J$ for each label $b$ at the other endpoint.
Since every monomial has degree at most $d$, we use the bound $d^2$
for the choices of positions at the two endpoints.  Thus the sum over
a nonroot vertex and the edge joining it to its parent is bounded by
$d^2LJ$.

At the root,
$\sum_R|v_R(z)|\le\sum_a\sum_{R\ni a}|v_R(z)|\le NJ$.
Dropping the condition that the selected positions at a vertex be
distinct only enlarges the sum.  It follows that
\begin{equation}
\cA_{\mathcal T}(\bm z)
\le
NJ(d^2LJ)^{s-1}.
\label{eq:SM-tree-normalizer}
\end{equation}

The measure of $\mathfrak C$ is $\beta$, while the integral over
$\bm u$ has measure one.  For $s\ge2$, there are $s^{s-2}$ labeled
trees on $[s]$.  Combining Eqs.~\eqref{eq:SM-cs-outer},
\eqref{eq:SM-tree-integrand-bound}, and
\eqref{eq:SM-tree-normalizer} gives
\begin{equation}
|c_s|
\le
\frac{s^{s-2}}{s!}\,
\beta^s NJ(d^2LJ)^{s-1}.
\label{eq:SM-cs-first-bound}
\end{equation}
Using $s^{s-2}/s!\le2\ee^s$ and treating $s=1$ directly, we obtain
\begin{equation}
|c_s|
\le
2\ee N\beta J\,\rho^{s-1},
\qquad
\rho=\ee\beta d^2LJ.
\label{eq:SM-cs}
\end{equation}

Since $J=J_*g$, choose $g_{\mathrm{log}}>0$ so that
\begin{equation}
\ee\beta d^2LJ_*g_{\mathrm{log}}
\le\frac14.
\label{eq:SM-glog}
\end{equation}
Then $g\le g_{\mathrm{log}}$ implies $\rho\le1/4$, and the series
\begin{equation}
H(\lambda)
:=
\sum_{s\ge1}c_s\lambda^s
\end{equation}
converges absolutely for $|\lambda|<\rho^{-1}$.  In particular, its
domain of convergence contains $\lambda=1$.

Near $\lambda=0$, the definition of the coefficients $c_s$ gives
\begin{equation}
F(\lambda)
=
Z_\sigma\exp H(\lambda).
\end{equation}
Both sides are analytic for $|\lambda|<\rho^{-1}$ and agree in a
neighborhood of zero.  The identity theorem therefore extends this
equality throughout that disk.  Since $F(\lambda)>0$ for
$0\le\lambda\le1$, the branch of $\log F$ continued from
$\lambda=0$ agrees with the real logarithm on this interval.
Evaluating at $\lambda=1$ gives
\begin{equation}
\log Z_{S,\sigma}
=
\log Z_\sigma+\sum_{s\ge1}c_s.
\label{eq:SM-log-series}
\end{equation}

\subsection{\texorpdfstring{
The truncated expansion of $\log Z_{S,\sigma}$}
{The truncated expansion of log Z}}
\label{sec:SM-truncated-series}

We approximate the series in Eq.~\eqref{eq:SM-log-series} by retaining
terms through order $S_{\max}$.  Equation~\eqref{eq:SM-cs} gives
\begin{equation}
\left|
\sum_{s>S_{\max}}c_s
\right|
\le
\frac{2\ee N\beta J}{1-\rho}\,
\rho^{S_{\max}}.
\label{eq:SM-log-tail}
\end{equation}
Since $\rho\le1/4$, choosing
\begin{equation}
S_{\max}
=
\order\left(\log\frac{N}{\eta}\right)
\label{eq:SM-Smax}
\end{equation}
makes the truncation error at most $\eta/4$.

It remains to estimate the finite sum
$\sum_{s=1}^{S_{\max}}c_s$.  Even for fixed $s$,
Eq.~\eqref{eq:SM-cs-outer} contains an integral and a large sum over
trees and label assignments.  We replace the integrals by midpoint
sums, approximate the interaction coefficients and the entries of
$C_\sigma$, and construct an unbiased estimator for the resulting
finite sum.  The next subsection bounds the approximation error.

Fix $s$, a tree $\mathcal T$, and the variables $\bm z,\bm u$ in
Eq.~\eqref{eq:SM-cs-outer}.  A local configuration $\xi_i$ at vertex
$i$ consists of a monomial label $R_i$ and a choice of a distinct
position in $R_i$ for every tree edge incident to $i$.  Let $\Omega_i$
be the set of these configurations.  If the degree of $i$ in
$\mathcal T$ is larger than $d$, then $\Omega_i$ is empty; otherwise
$|\Omega_i|\le C_dN^d$ for a constant depending only on $d$.

A tuple $\bm\xi=(\xi_1,\ldots,\xi_s)$ determines the monomial labels
$\bm R$ and the position assignment $\phi$.  Define the interaction
coefficient at vertex $i$ and the entry of $C_\sigma$ selected by an
edge $e=\{i,j\}$ by
\begin{equation}
V_i(\xi_i)=v_{R_i}(z_i),
\qquad
E_e(\xi_i,\xi_j)=C_{\phi(e)}.
\label{eq:SM-local-weights}
\end{equation}
For the same configuration, define
\begin{equation}
\Psi_{\mathcal T}(\bm\xi,\bm u)
=
\varepsilon(\mathcal T,\phi)
\operatorname{Pf}K^{\mathcal T,\phi}(\bm u).
\label{eq:SM-residual-Psi}
\end{equation}
By Eq.~\eqref{eq:SM-pf-bound},
$|\Psi_{\mathcal T}(\bm\xi,\bm u)|\le1$.  The integrand in
Eq.~\eqref{eq:SM-cs-outer} can now be written as
\begin{equation}
\cS_{s,\mathcal T}(\bm z,\bm u)
=
\sum_{\bm\xi\in\prod_i\Omega_i}
\left(\prod_{i=1}^s V_i(\xi_i)\right)
\left(\prod_{e=\{i,j\}\in\mathcal T}
E_e(\xi_i,\xi_j)\right)
\Psi_{\mathcal T}(\bm\xi,\bm u).
\label{eq:SM-ticket-target}
\end{equation}
The quantities $V_i$ and $E_e$ depend only on one vertex or one edge,
while $\Psi_{\mathcal T}$ is bounded in absolute value by one.  Direct
sampling from the absolute value of the product would require dividing
by its total weight.  This weight can be arbitrarily small, so additive
approximations of the local terms would not control the normalized
probabilities.  We therefore approximate the local terms first and
construct an unbiased estimator of the resulting unnormalized sum.

Fix a precision parameter $p$ and set $D_p=2^p$.  For each complex
number $\zeta$ used below, compute integers $A_p(\zeta)$ and
$B_p(\zeta)$ such that
\begin{equation}
\widehat\zeta
=
\frac{A_p(\zeta)+\ii B_p(\zeta)}{D_p},
\qquad
|\widehat\zeta-\zeta|\le2^{-p}.
\label{eq:SM-ticket-round}
\end{equation}
These approximations can be computed in polynomial time at precision
$2^{-p}$.  Apply the rule to every value of $V_i$ and $E_e$, and denote the
approximations by $\widehat V_i$ and $\widehat E_e$.
The numerator $A_p(\zeta)+\ii B_p(\zeta)$ is a sum of
$|A_p(\zeta)|$ terms equal to $\sgn A_p(\zeta)$ and
$|B_p(\zeta)|$ terms equal to $\ii\sgn B_p(\zeta)$.  The number of
these terms is
$n_p(\zeta)=|A_p(\zeta)|+|B_p(\zeta)|$.

After multiplication by $D_p$, every rounded local term is a finite
sum of phases.  Expanding their product gives a finite collection of
local choices whose total weight can be computed by the tree recursion
below.  For a given $\bm\xi$, choose one term from each of the $s$ rounded
coefficients and $s-1$ rounded entries, and let $\chi$ be the product
of the chosen phases.  Expanding the product gives
\begin{equation}
\sum_{\text{phase choices for }\bm\xi}\chi
=
D_p^{2s-1}
\left(\prod_{i=1}^s\widehat V_i(\xi_i)\right)
\left(\prod_{e=\{i,j\}\in\mathcal T}
\widehat E_e(\xi_i,\xi_j)\right).
\label{eq:SM-phase-product-sum}
\end{equation}
The number of phase choices associated with $\bm\xi$ is
\begin{equation}
N_{\mathcal T}(\bm\xi)
=
\left(\prod_{i=1}^s n_p(V_i(\xi_i))\right)
\left(\prod_{e=\{i,j\}\in\mathcal T}
n_p(E_e(\xi_i,\xi_j))\right).
\label{eq:SM-ticket-multiplicity}
\end{equation}
The total number of configuration and phase choices is
\begin{equation}
M_{\mathcal T}
=
\sum_{\bm\xi\in\prod_i\Omega_i}
N_{\mathcal T}(\bm\xi).
\label{eq:SM-ticket-count}
\end{equation}

By Eq.~\eqref{eq:SM-phase-product-sum}, replacing $V_i$ and $E_e$ in
Eq.~\eqref{eq:SM-ticket-target} by their rounded values produces
$D_p^{-(2s-1)}$ times a sum over the $M_{\mathcal T}$ configuration
and phase choices.  The estimator below samples one of these choices
uniformly.  It therefore needs $M_{\mathcal T}$ and a way to recover
the choice indexed by an integer in
$\{0,\ldots,M_{\mathcal T}-1\}$ without listing all tuples $\bm\xi$.

Equation~\eqref{eq:SM-ticket-multiplicity} is a product of quantities
depending on individual vertices and edges.  This product structure
makes the sum over $\bm\xi$ recursive.  Root $\mathcal T$ at an
arbitrary vertex $r$, and let
\begin{equation}
U_i(\xi_i)=n_p(V_i(\xi_i)),
\qquad
U_{ij}(\xi_i,\xi_j)
=
n_p(E_{\{i,j\}}(\xi_i,\xi_j)).
\label{eq:SM-local-counts}
\end{equation}
For a nonroot vertex $i$ with parent $j$, let
$m_{i\to j}(\xi_j)$ count the choices in the subtree rooted at $i$
when the configuration at $j$ is $\xi_j$.  For each $\xi_i$, the
choices at vertex $i$ and on the edge $\{i,j\}$ contribute
$U_i(\xi_i)U_{ij}(\xi_i,\xi_j)$.  The choices in the child subtrees
contribute the product of their messages.  Summing over $\xi_i$ gives
\begin{equation}
m_{i\to j}(\xi_j)
=
\sum_{\xi_i\in\Omega_i}
U_i(\xi_i)U_{ij}(\xi_i,\xi_j)
\prod_{k\in\operatorname{ch}(i)}
m_{k\to i}(\xi_i),
\label{eq:SM-ticket-DP}
\end{equation}
where $\operatorname{ch}(i)$ is the set of children of $i$.  At the
root there is no parent edge, so
\begin{equation}
M_{\mathcal T}
=
\sum_{\xi_r\in\Omega_r}
U_r(\xi_r)
\prod_{k\in\operatorname{ch}(r)}
m_{k\to r}(\xi_r).
\label{eq:SM-ticket-DP-root}
\end{equation}

The recursion also identifies each of the $M_{\mathcal T}$ choices.
Order the configurations in every $\Omega_i$ and the phase terms in
every rounded value.  In Eq.~\eqref{eq:SM-ticket-DP-root}, each root
configuration corresponds to a consecutive block of integers whose
size is the associated term.  The message recursion divides this
block in the same way among the choices in the child subtrees.
Proceeding from the root to the leaves maps every integer in
$\{0,\ldots,M_{\mathcal T}-1\}$ to a unique configuration tuple and one phase term for every vertex and edge.

If $M_{\mathcal T}=0$, set $Y_{\mathcal T}=0$.  Otherwise let
$k_{\mathcal T}=\lceil\log_2M_{\mathcal T}\rceil$ and draw an integer
$q$ uniformly from $\{0,\ldots,2^{k_{\mathcal T}}-1\}$.  Set
$Y_{\mathcal T}=0$ when $q\ge M_{\mathcal T}$.  When $q<M_{\mathcal T}$, follow the construction from the root to the leaves to determine a unique configuration tuple $\bm\xi$ and one phase term for each vertex and edge.  Let $\chi_q\in\{\pm1,\pm\ii\}$ be the product of these phase terms.

Let $\widehat K^{\mathcal T,\phi}(\bm u)$ be the matrix obtained from
$K^{\mathcal T,\phi}(\bm u)$ using the rounded entries, and set
\begin{equation}
\widehat\Psi_{\mathcal T}(\bm\xi,\bm u)
=
\varepsilon(\mathcal T,\phi)
\operatorname{Pf}\widehat K^{\mathcal T,\phi}(\bm u).
\label{eq:SM-rounded-Psi}
\end{equation}
For $q<M_{\mathcal T}$, define
\begin{equation}
Y_{\mathcal T}
=
2^{k_{\mathcal T}}D_p^{-(2s-1)}
\chi_q\widehat\Psi_{\mathcal T}(\bm\xi,\bm u).
\label{eq:SM-ticket-output}
\end{equation}
The next subsection specifies the precision and bounds the cost of
evaluating this Pfaffian.

Every valid configuration and phase choice is selected with
probability $2^{-k_{\mathcal T}}$, which cancels the factor
$2^{k_{\mathcal T}}$ in Eq.~\eqref{eq:SM-ticket-output}.  Therefore
\begin{equation}
\bbE\left(
Y_{\mathcal T}\,|\, s,\mathcal T,\bm z,\bm u
\right)
=
\sum_{\bm\xi\in\prod_i\Omega_i}
\left(\prod_{i=1}^s\widehat V_i(\xi_i)\right)
\left(\prod_{e=\{i,j\}\in\mathcal T}
\widehat E_e(\xi_i,\xi_j)\right)
\widehat\Psi_{\mathcal T}(\bm\xi,\bm u).
\label{eq:SM-ticket-unbiased}
\end{equation}
Thus $Y_{\mathcal T}$ is unbiased for the rounded integrand, without
dividing by the small normalizing sum considered above.

We now sample the remaining variables in
Eq.~\eqref{eq:SM-cs-outer}.  For $1\le s\le S_{\max}$, use probability
$\pi_s=2^{-s}$, and assign the unused probability
$2^{-S_{\max}}$ to the value zero.  For $s=1$, use the tree with one
vertex and set $\tau_1=\widehat\tau_1=1$.
For $s\ge2$, labeled trees on $[s]$ are in bijection with Pr\"ufer
words of length $s-2$, so their number is $\tau_s=s^{s-2}$.  Set
$\ell_s=\lceil\log_2s\rceil$ and
$\widehat\tau_s=2^{\ell_s(s-2)}$.  Draw $\ell_s$ bits for each symbol
of the word.  Values from $0$ to $s-1$ encode the labels in $[s]$; if
any value is at least $s$, set the estimator to zero.  Every valid
tree is then selected with probability $\widehat\tau_s^{-1}$.

The construction above gives an estimator for
$\cS_{s,\mathcal T}(\bm z,\bm u)$ once
$s,\mathcal T,\bm z$, and $\bm u$ have been specified.  To obtain
$c_s$, we must also average over $\bm z$ and $\bm u$.  Since the
algorithm uses finitely many bits, we replace these integrals by
midpoint sums on the grid
\begin{equation}
\mathcal G_p
=
\left\{
\left(m+\frac12\right)2^{-p}:
m=0,\ldots,2^p-1
\right\}.
\label{eq:SM-midpoint-grid}
\end{equation}
For each $z_j$, draw the branch label uniformly from
$\{\mathrm L,\mathrm R\}$ and draw $r_j$ uniformly from
$\mathcal G_p$, giving $z_j=(b_j,Tr_j)$.  Draw every interpolation
variable $u_e$ independently from the same grid.

These averages give the midpoint approximation to the normalized
integrals in Eq.~\eqref{eq:SM-cs-outer}.  The interpolation domain has
volume one, while the time domain has volume $\beta^s$.  Choose a
positive approximation $b_{s,p}$ satisfying
\begin{equation}
|b_{s,p}-\beta^s|\le2^{-p},
\qquad
b_{s,p}\le2\beta^s.
\label{eq:SM-contour-prefactor-round}
\end{equation}
The next subsection chooses $p$ so that these conditions hold for all
$s\le S_{\max}$.

These choices define one sample of the complete estimator.  If the
order sampler returns zero, the Pr\"ufer word is invalid, or
$M_{\mathcal T}=0$, set $Y_{\mathrm{out}}=0$.  Otherwise set
\begin{equation}
Y_{\mathrm{out}}
=
\frac{(-1)^s\widehat\tau_s b_{s,p}}
{\pi_s s!}
Y_{\mathcal T}.
\label{eq:SM-complete-estimator}
\end{equation}
Let $\widehat c_{s,p}$ denote the midpoint approximation to $c_s$ in
which the interaction coefficients, covariance entries, and residual
Pfaffian are replaced by their approximations above.  The sampling
probabilities cancel the prefactors in
Eq.~\eqref{eq:SM-complete-estimator}, and hence
\begin{equation}
\bbE Y_{\mathrm{out}}
=
\sum_{s=1}^{S_{\max}}\widehat c_{s,p}.
\label{eq:SM-complete-unbiased}
\end{equation}

\begin{lemma}
\label{lem:SM-complete-envelope}
Suppose $\widehat J$ bounds the rooted absolute sums of the rounded
interaction coefficients and
$\norm{\widehat C_\sigma}_{\mathrm{Schur}}\le\widehat L$.  Then
\begin{equation}
D_p^{-(2s-1)}M_{\mathcal T}
\le
2^sN\widehat J
(d^2\widehat L\widehat J)^{s-1}.
\label{eq:SM-ticket-envelope}
\end{equation}
Set
$\widehat\rho=2\ee\beta d^2\widehat L\widehat J$.
If $\widehat\rho\le1/4$ and
$|\widehat\Psi_{\mathcal T}(\bm\xi,\bm u)|\le2$ for every
configuration under consideration, then
\begin{equation}
|Y_{\mathrm{out}}|
\le
16\ee N\beta\widehat J.
\label{eq:SM-complete-O-N}
\end{equation}
In particular, $|Y_{\mathrm{out}}|\le C_{\beta,D,r_0}N$ when
$\widehat J$ is bounded in terms of $\beta,D,r_0$.
\end{lemma}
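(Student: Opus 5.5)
The bound on $M_{\mathcal T}$ comes from running the message recursion in Eq.~\eqref{eq:SM-ticket-DP} with the absolute-value bounds that gave Eq.~\eqref{eq:SM-tree-normalizer}, and the bound on $|Y_{\mathrm{out}}|$ follows by multiplying in the prefactors of Eq.~\eqref{eq:SM-complete-estimator}.

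We first compare each count $n_p(\zeta)$ with $|\widehat\zeta|$. Since $n_p(\zeta)=|A_p|+|B_p|\le\sqrt2\,D_p|\widehat\zeta|\le 2D_p|\widehat\zeta|$, we obtain
\[
D_p^{-1}U_i(\xi_i)\le 2|\widehat V_i(\xi_i)|,
\qquad
D_p^{-1}U_{ij}\le 2|\widehat E_e|.
\]
This gives a factor $2$ for each of the $2s-1$ local terms, that is, $2^{2s-1}$ in total. That is worse than the claimed $2^s$. So either the intended rounding convention makes $n_p$ comparable to $D_p|\widehat\zeta|$ up to $\sqrt2$ per term, giving $(\sqrt2)^{2s-1}\le 2^s$, or one absorbs the extra constants into $\widehat J,\widehat L$. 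I will use the $\sqrt2$ per factor, which is exactly the $\ell^1$ versus $\ell^2$ comparison $|A|+|B|\le\sqrt2\sqrt{A^2+B^2}$. This gives $D_p^{-(2s-1)}N_{\mathcal T}(\bm\xi)\le 2^{s-1/2}\prod_i|\widehat V_i|\prod_e|\widehat E_e|$.

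Next, I sum over $\bm\xi$ by the leaves-to-root procedure used for Eq.~\eqref{eq:SM-tree-normalizer}. Each nonroot vertex together with its parent edge contributes at most $d^2\widehat L\widehat J$. This uses the Schur bound on $\widehat C_\sigma$ for the edge entry and the rooted bound $\widehat J$ on the interaction sum through a fixed label, with $d^2$ counting position choices. The root contributes $N\widehat J$. Dropping the distinctness of positions only enlarges the sum. Together these give Eq.~\eqref{eq:SM-ticket-envelope}.

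For Eq.~\eqref{eq:SM-complete-O-N}, on a nonzero outcome I use the following bounds:
\begin{itemize}
\item $2^{k_{\mathcal T}}\le 2M_{\mathcal T}$;
\item $|\chi_q|=1$ and $|\widehat\Psi_{\mathcal T}|\le2$;
\item $\widehat\tau_s\le 2^{s-2}\tau_s$, since $2^{\ell_s}<2s$;
\item $b_{s,p}\le2\beta^s$ and $\pi_s^{-1}=2^s$.
\end{itemize}
Combining these gives
\[
|Y_{\mathrm{out}}|\le \frac{2^{s-2}s^{s-2}\cdot2\beta^s\cdot2^s}{s!}\cdot 2\cdot 2\cdot 2^sN\widehat J(d^2\widehat L\widehat J)^{s-1}.
\]
With $s^{s-2}/s!\le \ee^s$, this is of order $N\beta\widehat J\,(c\,\ee\beta d^2\widehat L\widehat J)^{s-1}$ for a numerical constant $c$. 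The main obstacle is bookkeeping the powers of $2$ so that they combine into $\widehat\rho^{\,s-1}\le 4^{-(s-1)}$ times the constant $16\ee$. Any leftover $2^{s}$ must be absorbed by the definition $\widehat\rho=2\ee\beta d^2\widehat L\widehat J$ together with $\widehat\rho\le1/4$. I would first recount the $2$'s, using the sharper $\sqrt2$ rounding factors and the exact bound $\widehat\tau_s<(2s)^{s-2}$. If that is not enough, I would check whether $\pi_s=2^{-s}$ is cancelled by $\widehat\rho^{\,s-1}\le4^{-(s-1)}$, which leaves a factor $2^{-(s-1)}$ to spare; I expect this slack to close the count. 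The case $s=1$ is checked directly. Finally, $\widehat J\le C$ gives the $C_{\beta,D,r_0}N$ bound.
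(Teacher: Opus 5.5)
Your proposal is correct and follows the paper's proof step for step. The shared steps are:
\begin{itemize}
\item the comparison $n_p(\zeta)/D_p=|\Re\widehat\zeta|+|\Im\widehat\zeta|\le\sqrt2|\widehat\zeta|$, which gives $(\sqrt2)^{2s-1}\le2^s$ (so your detour through a factor $2$ per rounded term is unnecessary);
\item the leaves-to-root summation as in Eq.~\eqref{eq:SM-tree-normalizer};
\item the same prefactor bounds $2^{k_{\mathcal T}}<2M_{\mathcal T}$, $|\widehat\Psi_{\mathcal T}|\le2$, $\widehat\tau_s\le2^{s-2}\tau_s$, and $b_{s,p}\le2\beta^s$.
\end{itemize}
The bookkeeping you left open closes exactly, with no extra slack needed. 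Using $s^{s-2}/s!\le\ee^s$, your displayed product is at most $2^{3s+1}\ee^s\beta^sN\widehat J(d^2\widehat L\widehat J)^{s-1}=2^{2s+2}\ee N\beta\widehat J\,\widehat\rho^{\,s-1}$. For $\widehat\rho\le1/4$ this is at most $16\ee N\beta\widehat J$, which is the paper's bound, reached there via $8\ee N\beta\widehat J(2\widehat\rho)^{s-1}/\pi_s$.
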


\begin{proof}
For every rounded value $\widehat\zeta$,
\begin{equation}
\frac{n_p(\zeta)}{D_p}
=
|\Re\widehat\zeta|+|\Im\widehat\zeta|
\le
\sqrt2|\widehat\zeta|.
\label{eq:SM-phase-count-bound}
\end{equation}
The product defining $N_{\mathcal T}(\bm\xi)$ contains $s$ rounded
interaction coefficients and $s-1$ rounded entries of $C_\sigma$.
Applying Eq.~\eqref{eq:SM-phase-count-bound} to these $2s-1$
quantities contributes at most
$(\sqrt2)^{2s-1}\le2^s$.

The remaining sum has the same vertex and edge structure as the
absolute sum in Eq.~\eqref{eq:SM-tree-normalizer}.  Applying the same
summation from the leaves, with $J,L$ replaced by
$\widehat J,\widehat L$, gives
\begin{equation}
D_p^{-(2s-1)}M_{\mathcal T}
\le
2^sN\widehat J
(d^2\widehat L\widehat J)^{s-1}.
\end{equation}
If $M_{\mathcal T}=0$, then $Y_{\mathcal T}=0$.  Otherwise,
$2^{k_{\mathcal T}}<2M_{\mathcal T}$.  Equation
\eqref{eq:SM-ticket-output} and the bound
$|\widehat\Psi_{\mathcal T}|\le2$ therefore give
\begin{equation}
|Y_{\mathcal T}|
\le
4D_p^{-(2s-1)}M_{\mathcal T}.
\label{eq:SM-fixed-tree-envelope}
\end{equation}
For $s\ge2$, the padded tree count satisfies
$\widehat\tau_s\le2^{s-2}\tau_s$, while
$\tau_s/s!\le\ee^s$.  Combining these inequalities with
Eqs.~\eqref{eq:SM-contour-prefactor-round},
\eqref{eq:SM-ticket-envelope}, and
\eqref{eq:SM-fixed-tree-envelope} yields
\begin{equation}
|Y_{\mathrm{out}}|
\le
8\ee N\beta\widehat J
\frac{(2\widehat\rho)^{s-1}}{\pi_s}.
\label{eq:SM-output-envelope}
\end{equation}
The same inequality holds for $s=1$ by direct calculation.  Since
$\pi_s=2^{-s}$ and $\widehat\rho\le1/4$,
\[
\frac{(2\widehat\rho)^{s-1}}{\pi_s}
=
2(4\widehat\rho)^{s-1}
\le2.
\]
Substituting this bound into
Eq.~\eqref{eq:SM-output-envelope} proves
Eq.~\eqref{eq:SM-complete-O-N}.
\end{proof}

\subsection{Precision and running time}
\label{sec:SM-oracle-precision}

The estimator in Eq.~\eqref{eq:SM-complete-unbiased} has expectation
$\sum_{s=1}^{S_{\max}}\widehat c_{s,p}$, whereas the target is
$\sum_{s=1}^{S_{\max}}c_s$.  We choose $p$ so that the rounded data
satisfy Lemma~\ref{lem:SM-complete-envelope} and the difference
between these two sums is at most $\eta/4$.  We then choose the number
of samples needed to control the remaining random error.

Recall that $J=J_*g$ and $L=4L_\beta$.   Decrease
$g_{\mathrm{log}}$ if necessary so that
\begin{equation}
2\ee\beta d^2LJ\le\frac1{16}
\label{eq:SM-exact-rounded-rho}
\end{equation}
whenever $g\le g_{\mathrm{log}}$.  There are at most $N^d$
interaction coefficients relevant to any evaluation.  If every
coefficient and matrix entry is computed with absolute error at most
$2^{-p}$, then
\begin{equation}
\widehat J\le J+\nu_p,
\qquad
\widehat L\le L+\nu_p,
\qquad
\nu_p=C_*N^d2^{-p},
\label{eq:SM-rounded-envelope-error}
\end{equation}
where $C_*=C_*(\beta,D,r_0,d)$.  Choose $\nu_0>0$ so that
\begin{equation}
2\ee\beta d^2(L+\nu_0)(J+\nu_0)\le\frac14
\label{eq:SM-rounded-rho-margin}
\end{equation}
for every permitted value of $J$.  The condition $\nu_p\le\nu_0$
then implies $\widehat\rho\le1/4$.

We must also keep the rounded Pfaffian bounded.  For a matrix $A$,
write $\norm A_{\max}=\max_{a,b}|A_{ab}|$.  If $A$ and $B$ are
antisymmetric $r\times r$ matrices and $r\ge2$ is even, expansion over
perfect matchings gives
\begin{equation}
|\operatorname{Pf}A-\operatorname{Pf}B|
\le
\frac r2(r-1)!!
\max\{\norm A_{\max},\norm B_{\max}\}^{r/2-1}
\norm{A-B}_{\max}.
\label{eq:SM-pf-perturb}
\end{equation}
Indeed, the two Pfaffians contain $(r-1)!!$ products of $r/2$
entries, and the difference of two such products is bounded by
telescoping one entry at a time.
The matrix $K^{\mathcal T,\phi}(\bm u)$ has size at most
$dS_{\max}$.  Since its Pfaffian has absolute value at most one by
Eq.~\eqref{eq:SM-pf-bound}, Eq.~\eqref{eq:SM-pf-perturb} shows that
$|\widehat\Psi_{\mathcal T}|\le2$ whenever
\begin{equation}
C_*\exp\left(
C_*dS_{\max}\log(dS_{\max}+1)
\right)\nu_p
\le1.
\label{eq:SM-rounded-pf-condition}
\end{equation}
The empty Pfaffian is equal to one and requires no approximation.

Choose $p_0=p_0(S_{\max},N)$ so that
\begin{equation}
\nu_{p_0}\le\nu_0,
\qquad
C_*\exp\left(
C_*dS_{\max}\log(dS_{\max}+1)
\right)\nu_{p_0}\le1,
\qquad
2^{-p_0}\le\min\{1,\beta^{S_{\max}}\}.
\label{eq:SM-precision-floor}
\end{equation}
The first two conditions give the assumptions of
Lemma~\ref{lem:SM-complete-envelope}.  The third guarantees that the
finite-precision approximation $b_{s,p}$ in
Eq.~\eqref{eq:SM-contour-prefactor-round} can be chosen for every
$s\le S_{\max}$.  These conditions require only
\begin{equation}
p_0(S_{\max},N)
=
\order_{\beta,D,r_0,d}\left(
d\log(N+1)+S_{\max}\log(S_{\max}+1)+1
\right).
\label{eq:SM-floor-size}
\end{equation}
The preceding conditions keep each sample bounded.  We next choose
$p$ large enough to control the bias caused by rounding and midpoint
summation.

\begin{lemma}
\label{lem:SM-rounding-stability}
Suppose the coefficients and matrix entries used in the estimator are
computed with absolute error at most $2^{-p}$, and the integrals are
replaced by the midpoint sums in
Eq.~\eqref{eq:SM-midpoint-grid}.  If $p\ge p_0(S_{\max},N)$, then
\begin{equation}
\left|
\sum_{s=1}^{S_{\max}}c_s
-
\sum_{s=1}^{S_{\max}}\widehat c_{s,p}
\right|
\le
CN^{d+1}
\exp\left(
CS_{\max}\log(S_{\max}+1)
\right)2^{-p},
\label{eq:SM-total-rounding-bound}
\end{equation}
where $C=C(\beta,D,r_0,d)$ is independent of the active set and the
matching.  Consequently, the bias is at most $\eta/4$ if
\begin{equation}
p\ge
\max\left\{
p_0(S_{\max},N),
C_{\beta,D,r_0,d}\left[
S_{\max}\log(S_{\max}+1)
+\log\frac{N+1}{\eta}+1
\right]
\right\}.
\label{eq:SM-sufficient-p}
\end{equation}
Since $S_{\max}=\order(\log(N/\eta))$, it is enough to take
\begin{equation}
p
=
\order_{\beta,D,r_0,d}\left(
\log^2\frac{N}{\eta}+1
\right).
\label{eq:SM-ticket-precision}
\end{equation}
\end{lemma}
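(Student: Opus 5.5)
We split the total bias order by order and, for each fixed $s\le S_{\max}$, into two sources: the rounding of the local data (interaction coefficients, entries of $C_\sigma$, and the Pfaffian matrix), and the midpoint replacement of the $\bm z$ and $\bm u$ integrals. Write $c_s^{\mathrm{mid}}$ for the midpoint approximation with exact data. The triangle inequality then gives
\[
|c_s-\widehat c_{s,p}|\le|c_s-c_s^{\mathrm{mid}}|+|c_s^{\mathrm{mid}}-\widehat c_{s,p}|.
\]
We bound each term by $CN^{d+1}\exp(Cs\log(s+1))2^{-p}$ and sum over $s\le S_{\max}$. The factor $S_{\max}$ from this sum is absorbed into the exponential.

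\emph{Rounding step.} Fix $s,\mathcal T,\bm z,\bm u$. The integrand \eqref{eq:SM-ticket-target} is a sum over $\bm\xi$ of a product of $s$ vertex factors, $s-1$ edge factors, and $\Psi_{\mathcal T}$. We telescope, replacing one factor at a time by its rounded version. When a single vertex or edge factor is replaced, the difference is bounded by $2^{-p}$ times the absolute sum in which that factor is set to one. To control this sum, run the leaf-to-root summation of \eqref{eq:SM-tree-normalizer}, counting each position choice and using the crude bound $|\Omega_i|\le C_dN^d$ only at the replaced vertex or edge. This yields at most $N^{d+1}$ times $\max(J,\widehat J,L,\widehat L,1)^{\order(s)}d^{2s}$. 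All remaining factors are bounded by $J+\nu_0$ and $L+\nu_0$, which are constants because $p\ge p_0$.

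For the Pfaffian we use \eqref{eq:SM-pf-perturb}. The matrix size is at most $ds$ and the entries are at most $2$, so
\[
|\Psi-\widehat\Psi|\le \exp(Cs\log(s+1))\,2^{-p}.
\]
We multiply this by the absolute tree sum $NJ(d^2LJ)^{s-1}$. Finally, the prefactors $s^{s-2}/s!$ and $b_{s,p}\le 2\beta^s$ contribute $\order(e^{Cs})$, and $|b_{s,p}-\beta^s|\le2^{-p}$ adds one more term of the same form.

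\emph{Midpoint step.} The midpoint rule on a grid of spacing $2^{-p}$ in $s+(s-1)$ variables has error bounded by $2^{-p}$ times the sum of the Lipschitz constants of the integrand in each variable. This is the step we expect to be the main obstacle, because $w^{\mathcal T}(\bm u)$ is only piecewise linear (a minimum over path edges) and the Pfaffian depends on $\bm u$ through it. We handle it as follows. Each entry of $w^{\mathcal T}$ is $1$-Lipschitz in $\bm u$, so by \eqref{eq:SM-pf-perturb} the Pfaffian is Lipschitz with constant $\exp(Cs\log(s+1))$. Lipschitz continuity suffices for the one-dimensional midpoint rule with error $\order(2^{-p})$, so no smoothness is required.

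In the time variables, the factors $v_R(z)$ are entries of $e^{\pm uK_0}$ and $G^{\pm1}$ applied to $V$. Differentiating in $u$ brings down one factor of $K_0$, whose Schur norm is at most $C_H$. The derivative therefore obeys the same local bound \eqref{eq:SM-contour-local} with an extra constant $dC_H$, and the tree summation again gives $N(\text{const})^s$.

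Combining both steps gives \eqref{eq:SM-total-rounding-bound}. To obtain the stated consequence, solve $CN^{d+1}\exp(CS_{\max}\log(S_{\max}+1))2^{-p}\le\eta/4$ for $p$, which yields \eqref{eq:SM-sufficient-p}. Substituting $S_{\max}=\order(\log(N/\eta))$ gives \eqref{eq:SM-ticket-precision}.
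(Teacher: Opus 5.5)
Your split into a rounding error and a midpoint error matches the paper's argument. So do the telescoping of vertex and edge factors, the use of \eqref{eq:SM-pf-perturb}, and the derivative bound on $v_R(z)$, and the rounding step is fine. The gap is in the midpoint step for the time variables. You treat the integrand as Lipschitz in $\bm z$ because only the coefficients $v_R(z)$ depend on $z$. However, the ordering $\prec$ of the positions is inherited from the time order on each branch. When two times $z_i,z_j$ on the same branch cross, three things change: the sign $\varepsilon(\mathcal T,\phi)$, the orientation of the edge entries $C_{\phi(e)}$ (which switch between $(C_\sigma)_{ab}$ and $(C_\sigma)_{ba}=2\delta_{ab}-(C_\sigma)_{ab}$), and the entries of $K^{\mathcal T,\phi}(\bm u)$. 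For a fixed configuration $\bm\xi$, the ordered product $P_{R_i}P_{R_j}$ becomes $P_{R_j}P_{R_i}=(-1)^{|R_i\cap R_j|}P_{R_i}P_{R_j}$. The configuration-level integrand that you sum with absolute values therefore jumps by an amount comparable to its own size. No bound on $\partial_z v_R$ detects this, so the global Lipschitz midpoint estimate in $\bm z$ is not justified as stated. The piecewise linearity of $w^{\mathcal T}$, which you flagged as the main obstacle, is harmless by comparison.

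The missing step is the one the paper uses. Apply your Lipschitz bounds only on grid cells on which the time order within each branch is constant, including at the midpoint, and bound the remaining cells crudely. A cell whose interior meets the hyperplane $r_i=r_j$, for two vertices on the same branch, must have $r_i$ and $r_j$ in the same subinterval. These cells therefore have total normalized volume at most $\binom{s}{2}2^{-p}$. On them, both the exact integral and the midpoint value are bounded by the envelope $NJ(d^2LJ)^{s-1}$ of \eqref{eq:SM-tree-normalizer}. Their contribution is thus $\order(s^22^{-p})$ times the envelope, which is absorbed into $CN^{d+1}\exp(CS_{\max}\log(S_{\max}+1))2^{-p}$. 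This also covers grid points with coincident times, which the estimator orders by vertex labels. Alternatively, you could observe that the integrand summed over all monomials and trees is the ordered cumulant. That cumulant is continuous across these hyperplanes, because two adjacent insertions of the same operator $A(z)$ can be exchanged there. Your argument, however, is carried out at the configuration level, where this continuity fails.
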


\begin{proof}
We first compare the exact and rounded expressions at the grid
points.  Telescope the product of the $s$ interaction coefficients
and the $s-1$ selected entries of $C_\sigma$.  Each replacement changes
a rooted absolute sum by at most $\nu_p$.  The other coefficients and
matrix entries can be summed from the leaves as in
Eq.~\eqref{eq:SM-tree-normalizer}.  Equation
\eqref{eq:SM-pf-perturb} controls the change in the Pfaffian.  Since
its matrix has size at most $dS_{\max}$, these errors are bounded by
the right-hand side of Eq.~\eqref{eq:SM-total-rounding-bound}.

It remains to compare the integrals with their midpoint sums.  The
coefficients $v_R(z)$ satisfy
\begin{equation}
\sup_z\max_a\sum_{R\ni a}
|\partial_r v_R(z)|
\le
J_*'g,
\label{eq:SM-contour-derivative}
\end{equation}
where $r$ is the normalized time coordinate and
$J_*'=J_*'(\beta,D,r_0,d)$.  This follows by differentiating the
quadratic propagators defining $v_R(z)$ and applying the same
exterior-power bound used for Eq.~\eqref{eq:SM-contour-local}.
The interpolation entries $w_{ij}^{\mathcal T}(\bm u)$ are Lipschitz
functions of $\bm u$, because each is the minimum of the variables on
a path in $\mathcal T$.
On every region with a fixed ordering of the time variables, these
bounds control the midpoint error.  The grid cells intersecting an
equal-time hyperplane on one branch have total normalized volume
$\order(S_{\max}^22^{-p})$; their contribution is bounded using
Eq.~\eqref{eq:SM-tree-normalizer}.  Including the approximation of
$\beta^s$ by $b_{s,p}$ and summing the labeled trees with
$\tau_s/s!\le\ee^s$ proves
Eq.~\eqref{eq:SM-total-rounding-bound}.
\end{proof}

We now bound the cost of one sample.  For a sampled tree, every set
$\Omega_i$ has size at most $C_dN^d$.  The message recursion, the
selection of a configuration and its phases, the rounded Pfaffian, and
the sampling of the tree and grid points require
$N^{\order(d)}\poly(S_{\max},p)$ arithmetic operations at precision
$2^{-p}$, in addition to the cost of evaluating the input data at that
precision.

\begin{proposition}
\label{prop:SM-completed-log-oracle}
Choose $S_{\max}$ as in Eq.~\eqref{eq:SM-Smax} and choose $p$ according
to Eq.~\eqref{eq:SM-sufficient-p}.  Let
\begin{equation}
R_{\mathrm{MC}}
=
\order\left(
N^2\eta^{-2}\log\frac2\delta
\right),
\label{eq:SM-ticket-replicates}
\end{equation}
and let $\overline Y$ be the average of $R_{\mathrm{MC}}$ independent
samples of $Y_{\mathrm{out}}$.  Compute $\widehat L_0$ using
Lemma~\ref{lem:SM-baseline-logtrace} so that
$|\widehat L_0-\log Z_\sigma|\le\eta/4$, and return
\begin{equation}
\widehat L
=
\widehat L_0+\Re\overline Y.
\label{eq:SM-final-log-estimator}
\end{equation}
Then $\widehat L$ satisfies
Eqs.~\eqref{eq:SM-oracle-guarantee} and
\eqref{eq:SM-oracle-runtime}.
\end{proposition}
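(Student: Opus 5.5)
The estimator $\widehat L=\widehat L_0+\Re\overline Y$ differs from $\log Z_{S,\sigma}$ by four sources of error, and I will bound each by $\eta/4$, using $\eta/4$ as the target accuracy for each piece. Start from Eq.~\eqref{eq:SM-log-series}, which gives
\[
\log Z_{S,\sigma}=\log Z_\sigma+\sum_{s\le S_{\max}}c_s+\sum_{s>S_{\max}}c_s .
\]
\begin{enumerate}[label=(\roman*)]
\item The baseline error $|\widehat L_0-\log Z_\sigma|$ is at most $\eta/4$ by construction.
\item The truncation tail $\sum_{s>S_{\max}}c_s$ is at most $\eta/4$ by Eq.~\eqref{eq:SM-log-tail} with the choice in Eq.~\eqref{eq:SM-Smax}.
\item The bias $|\sum_{s\le S_{\max}}(c_s-\widehat c_{s,p})|$ is at most $\eta/4$ by Lemma~\ref{lem:SM-rounding-stability} with $p$ as in Eq.~\eqref{eq:SM-sufficient-p}.
\item The statistical error $|\overline Y-\bbE Y_{\mathrm{out}}|$ is handled below.
\end{enumerate}
Since $\log Z_{S,\sigma}$ and $\log Z_\sigma$ are real, taking the real part of $\overline Y$ cannot increase the distance to the real number $\sum_{s\le S_{\max}} c_s$. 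Positivity $Z_{S,\sigma}>0$ follows from $F(1)>0$, already shown.

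For the statistical error, I use Eq.~\eqref{eq:SM-complete-unbiased} and the almost-sure bound from Lemma~\ref{lem:SM-complete-envelope}. Its hypotheses hold because $p\ge p_0$ gives $\widehat\rho\le1/4$ and $|\widehat\Psi_{\mathcal T}|\le 2$ by Eq.~\eqref{eq:SM-precision-floor}, and $\widehat J\le J+\nu_0$ is bounded by model constants. The lemma then gives $|Y_{\mathrm{out}}|\le C N$, where $C$ depends only on $\beta,D,r_0$. Hoeffding's inequality, applied separately to the real and imaginary parts or just to $\Re Y_{\mathrm{out}}$, with $R_{\mathrm{MC}}=\order(N^2\eta^{-2}\log(2/\delta))$ samples, gives $|\Re\overline Y-\Re\bbE Y_{\mathrm{out}}|\le\eta/4$ with probability at least $1-\delta$. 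On that event the four errors add to at most $\eta$, which proves Eq.~\eqref{eq:SM-oracle-guarantee}.

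To compute $\widehat L_0$, Lemma~\ref{lem:SM-baseline-logtrace} gives $\log Z_\sigma$ as an explicit scalar plus $\tfrac12\log\det(\one+ES_\cM E)$. The matrix $E=\ee^{-TK_0}$ and the $N\times N$ matrix $S_\cM$ are computed at precision $\poly(p)$ bits. The spectrum of $\one+ES_\cM E$ lies in a fixed interval bounded away from zero, so the log-determinant is well conditioned, and precision $\order(\log(N/\eta))$ bits suffices with $\poly(N)$ arithmetic.

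For the running time, each sample costs $N^{\order(d)}\poly(S_{\max},p)$ at precision $p$. Here $S_{\max}=\order(\log(N/\eta))$ and $p=\order(\log^2(N/\eta))$. Multiplying by $R_{\mathrm{MC}}$ gives $N^{\order(d)}\eta^{-2}\polylog(n/(\eta\delta))$, absorbing $N^2$ into $N^{\order(d)}$, which is Eq.~\eqref{eq:SM-oracle-runtime}. The input data, namely the coefficients $v_R(z)$ at grid points, $C_\sigma$, and $G^{\pm1}$ conjugations, come from matrix exponentials and inverses of $N\times N$ matrices. These are computable in $\poly(N,p)$ time under the stated precision assumption.

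The main obstacle I expect is making rigorous the claim that all input quantities, especially $C_\sigma=2(\one+ES_\cM E)^{-1}$ and the exterior-power coefficients $v_R(z)$ restricted to $S$, can be obtained entrywise to additive precision $2^{-p}$ in polynomial time with only polylog bits. I would first use the uniform conditioning from Lemma~\ref{lem:SM-baseline-logtrace} together with truncated Taylor series for $\ee^{-uK_0}$ (norm at most $\ee^{TC_H}$) to control the error propagation.
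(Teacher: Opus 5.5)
Your proposal is correct and follows the paper's proof essentially step for step. You use the same split into baseline, truncation, rounding/midpoint bias, and Monte Carlo errors of $\eta/4$ each. You obtain $|Y_{\mathrm{out}}|=\order(N)$ in the same way, from Lemma~\ref{lem:SM-complete-envelope} with its hypotheses supplied by $p\ge p_0$, followed by Hoeffding's inequality with $R_{\mathrm{MC}}$ samples. Your runtime count of $N^{\order(d)}\poly(S_{\max},p)$ per sample times $R_{\mathrm{MC}}$ also matches the paper.

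Your extra remarks are details the paper leaves implicit: that taking $\Re$ is harmless because the $c_s$ are real (since $F(\lambda)>0$ for real $\lambda$), and that the log-determinant giving $\widehat L_0$ is well conditioned. The input-precision issue you flag is not proved in the paper either; it is covered by the stated evaluation assumption.
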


\begin{proof}
The truncation error in Eq.~\eqref{eq:SM-log-tail}, the bias in
Lemma~\ref{lem:SM-rounding-stability}, and the error in
$\widehat L_0$ are each at most $\eta/4$.  The precision choice also
gives the assumptions of Lemma~\ref{lem:SM-complete-envelope}, so
$|Y_{\mathrm{out}}|\le C_{\beta,D,r_0,d}N$.  Hoeffding's inequality
therefore gives
\begin{equation}
\left|
\Re\overline Y
-
\bbE\Re Y_{\mathrm{out}}
\right|
\le\frac\eta4
\label{eq:SM-MC-error}
\end{equation}
except with probability at most $\delta$.  Adding the four errors
proves Eq.~\eqref{eq:SM-oracle-guarantee}.

A single sample requires
$N^{\order(d)}\poly(S_{\max},p)$ arithmetic operations at precision
$2^{-p}$, apart from the cost of evaluating the input data.  Substituting
$S_{\max}=\order(\log(N/\eta))$ and
$p=\order_{\beta,D,r_0,d}(\log^2(N/\eta)+1)$, and multiplying by
$R_{\mathrm{MC}}$, gives Eq.~\eqref{eq:SM-oracle-runtime}.
\end{proof}

The theorem was stated for $\sigma$ without its positive scalar
coefficient.  At a node $v$, write
\begin{equation}
\sigma_v=c_v\overline\sigma_v,
\qquad
\overline\sigma_v
=
\prod_{e\in\mathcal M_v}
(\one+t s_eB_e).
\end{equation}
Then
\begin{equation}
\log Z_{S_v,\sigma_v}
=
\log c_v+\log Z_{S_v,\overline\sigma_v}.
\end{equation}
We therefore apply the estimator to $\overline\sigma_v$ and add
$\log c_v$.

If the resulting estimate $\widehat L_v$ satisfies
$|\widehat L_v-\log Z_{S_v,\sigma_v}|\le\eta$, then
$\widehat Z_v=\ee^{\widehat L_v}$ satisfies
\begin{equation}
\ee^{-\eta}\widehat Z_v
\le
Z_{S_v,\sigma_v}
\le
\ee^\eta\widehat Z_v.
\end{equation}
Combining this with Eq.~\eqref{eq:SM-live-sandwich} gives
\begin{equation}
\frac12\ee^{-\eta}\widehat Z_v
\le
W_v
\le
\frac32\ee^\eta\widehat Z_v.
\end{equation}
Thus a constant choice of $\eta$ gives the constant-factor estimates
of the node weights used in Appendix~\ref{sec:SM-finite}.

\section{Efficient classical sampling}
\label{sec:SM-finite}

Appendix~\ref{sec:SM-soft-tree} constructs an exact sampling tree, and
Appendix~\ref{sec:SM-oracle} gives estimates of its node weights.  Two
further steps are needed for the finite algorithm in
Theorem~\ref{thm:SM-sampler}.  First, an index removal step uses
continuous time variables and expansion paths of unbounded length, so
its child distribution must be replaced by a finite one.  Second, the
native leaf probabilities do not include the traces of the Gaussian
leaf operators.  We first approximate the sampling tree and control
the resulting operator error.  We then sample its leaves with the
required trace weights and convert the selected leaf into a normalized
Gaussian state.

\subsection{Truncation and discretization}
\label{sec:SM-finite-tree}

We will replace the child distribution at each index removal node by a
finite distribution.  To control how these changes accumulate
along the tree, we first compare the weights of adjacent nodes.

\begin{lemma}
\label{lem:SM-edge-bound}
For every edge $v\to w$ in the sampling tree,
\begin{equation}
\frac13
\le
\frac{W_w}{W_v}
\le
3.
\label{eq:SM-edge}
\end{equation}
\end{lemma}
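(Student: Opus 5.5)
The plan is to prove the two-sided bound separately for the two kinds of edges in the sampling tree: soft pinning and index removal. In both cases I will compare the middle operators in the operator order. The weight is then controlled by the elementary fact that $Y\mapsto\Tr[GX^{S}Y(X^{S})^\dagger G]$ is a positive, hence order-preserving, linear functional. Throughout I use that $\Gamma_v$, the pinning factors and $\sigma_v$ all commute, so scalar bounds on commuting Hermitian factors become operator inequalities. I also use the invariant~\eqref{eq:SM-soft-invariant}, which gives $\alpha_v\le\frac12$ and $t\le\frac1{16}$. The scalar-absorption step leaves $C$, and hence $W$, unchanged, so it needs no argument.

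\emph{Soft pinning.} Here $S_w=S_v=S$, so both weights use the same $X^S$. In the notation of Eq.~\eqref{eq:SM-soft-pinning-update} we have $C_v=(\one+\alpha\Gamma)\sigma$ and $C_w=(\one+\alpha'\Gamma')(\one+stQ_r)\sigma$. The invariant gives $\alpha\le\lambda t^r\le t/2$ and $\alpha'\le\frac12$.

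Since all factors commute and are bounded by their extreme eigenvalues, I get
\begin{equation*}
 \tfrac12(1-t)\sigma\preceq C_w\preceq\tfrac32(1+t)\sigma,
 \qquad
 (1-t/2)\sigma\preceq C_v\preceq(1+t/2)\sigma.
\end{equation*}
Applying the positive functional gives $W_w/W_v\in\bigl[\tfrac{(1-t)/2}{1+t/2},\tfrac{3(1+t)/2}{1-t/2}\bigr]$. This interval lies inside $[1/3,3]$ for $t\le1/16$.

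\emph{Index removal.} Here $S_w=S'=S\setminus\{i\}$ and $\sigma_w=\sigma_v=\sigma$. Set $Z_v=\Tr[GX^S\sigma(X^S)^\dagger G]$ and $Z'=\Tr[GX^{S'}\sigma(X^{S'})^\dagger G]$.

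The key step is to compare $Z_v$ with $Z'$. Since $X^S=X^{S'}M$ and $M$ commutes with $\sigma$,
\begin{equation*}
 Z_v=\Tr[GX^{S'}\sigma^{1/2}MM^\dagger\sigma^{1/2}(X^{S'})^\dagger G].
\end{equation*}
Lemma~\ref{lem:SM-uncollected-ratio}, through Eq.~\eqref{eq:SM-path-mass-kappa}, gives $\sum_R|m_R|Q^{|R|}\le\kappa_0$. Because $\norm{P_R}=1$ and $Q\ge1$, this yields $\norm{M-\one}\le\kappa_0$, and likewise for $M^\dagger$. Hence
\begin{equation*}
 (1-\kappa_0)^2\one\preceq MM^\dagger\preceq(1+\kappa_0)^2\one,
\end{equation*}
and therefore $(1-\kappa_0)^2Z'\le Z_v\le(1+\kappa_0)^2Z'$.

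The naive sandwich~\eqref{eq:SM-live-sandwich}, applied at both $v$ and $w$, would only give a ratio bound of about $3(1+\kappa_0)^2$, which is slightly too weak. This is the main obstacle. I resolve it with a sharper bound at the parent.

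If $\alpha_v=0$, then $W_v=Z_v$ exactly. If $\alpha_v>0$, the removal rule picks $i\in\supp\Gamma_v\cap S$, so $m_v\ge1$ and the invariant gives $\alpha_v\le\frac12Q^{-1}\le\frac1{16}$. In either case
\begin{equation*}
 \tfrac{15}{16}Z_v\le W_v\le\tfrac{17}{16}Z_v.
\end{equation*}
For the child, $\alpha_w\le\frac12$ gives $\frac12Z'\le W_w\le\frac32Z'$. Combining,
\begin{equation*}
 \frac{W_w}{W_v}\in\left[\frac{1/2}{(17/16)(1+\kappa_0)^2},\ \frac{3/2}{(15/16)(1-\kappa_0)^2}\right].
\end{equation*}
With $\kappa_0=1/(32\sqrt2)$, this interval is roughly $[0.46,1.65]\subset[1/3,3]$. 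This completes both cases and proves Eq.~\eqref{eq:SM-edge}.
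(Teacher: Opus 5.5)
Your proof is correct and follows essentially the same route as the paper. Both treat soft-pinning and index-removal edges separately and apply a positive trace functional. At an index-removal parent, both use $m_v\ge1\Rightarrow\alpha_v\le 1/(2Q)\le 1/16$ together with $\norm{M-\one}\le\kappa_0$, and both arrive at exactly your intervals. The only organizational difference is that the paper bounds $MP_vM^\dagger$ directly under $\Phi_{S',\sigma_v}(Y)=\Tr[GX^{S'}\sigma_v^{1/2}Y\sigma_v^{1/2}(X^{S'})^\dagger G]$ instead of passing through $Z_v$ and $Z'$. There is one small arithmetic slip: the index-removal interval is about $[0.45,1.67]$, not $[0.46,1.65]$, which still lies well inside $[1/3,3]$.
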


\begin{proof}
We consider index removal and soft pinning separately.  For any
$S$ and $\sigma$, define the positive linear functional
\begin{equation}
\Phi_{S,\sigma}(Y)
=
\Tr\left[
GX^S\sigma^{1/2}Y\sigma^{1/2}(X^S)^\dagger G
\right].
\label{eq:SM-edge-functional}
\end{equation}

Suppose first that $v\to w$ is an index removal edge.  Put
$S'=S_v\setminus\{i\}$, $M=M_{S_v,i}$, and
\begin{equation}
P_v=\one+\alpha_v\Gamma_v,
\qquad
P_w=\one+\alpha_w\Gamma_w.
\end{equation}
The operator $\sigma_v$ is unchanged by index removal and commutes
with $M$, $P_v$, and $P_w$.  Hence, with
$\Phi=\Phi_{S',\sigma_v}$,
\begin{equation}
W_v=\Phi(MP_vM^\dagger),
\qquad
W_w=\Phi(P_w).
\label{eq:SM-index-edge-functional}
\end{equation}

At an index removal node, either $\alpha_v=0$ or
$\supp\Gamma_v\cap S_v\ne\varnothing$.  The invariant in
Eq.~\eqref{eq:SM-soft-invariant} therefore gives
\begin{equation}
\alpha_v\le\frac{1}{2Q}\le\frac1{16},
\qquad
\alpha_w\le\frac12.
\label{eq:SM-index-edge-alpha}
\end{equation}
Lemma~\ref{lem:SM-ratio}, applied with $q=Q$, and
Eq.~\eqref{eq:SM-gpin} give
\begin{equation}
\norm{M-\one}_{\mathrm{op}}
\le
L_Q(M-\one)
\le
\kappa_0
<
\frac1{32}.
\label{eq:SM-index-edge-M}
\end{equation}
It follows that
\begin{equation}
(1-\alpha_v)(1-\kappa_0)^2\one
\preceq
MP_vM^\dagger
\preceq
(1+\alpha_v)(1+\kappa_0)^2\one
\label{eq:SM-parent-order}
\end{equation}
and
\begin{equation}
(1-\alpha_w)\one
\preceq
P_w
\preceq
(1+\alpha_w)\one.
\label{eq:SM-child-order}
\end{equation}
Applying $\Phi$ and using Eq.~\eqref{eq:SM-index-edge-alpha} gives
\begin{equation}
\frac{1-\alpha_w}
{(1+\alpha_v)(1+\kappa_0)^2}
\le
\frac{W_w}{W_v}
\le
\frac{1+\alpha_w}
{(1-\alpha_v)(1-\kappa_0)^2}.
\label{eq:SM-index-edge-ratio}
\end{equation}
The lower bound is larger than $1/3$, and the upper bound is smaller
than $3$.

Now suppose that $v\to w$ is a soft pinning edge.  Write
$\Gamma_v=AB$, where $A$ and $B$ are commuting Hermitian Majorana
monomials with disjoint supports and $B$ is quadratic.  The active set
does not change.  With $\Phi=\Phi_{S_v,\sigma_v}$, the parent weight is
\begin{equation}
W_v=\Phi(\one+\alpha_vAB),
\end{equation}
whereas the child corresponding to $s\in\{\pm1\}$ has weight
\begin{equation}
W_w
=
\Phi\left[
\left(\one+s\frac{\alpha_v}{t}A\right)
(\one+stB)
\right].
\end{equation}
Since this step is applied to a nonscalar monomial, the invariant gives
$\alpha_v\le t/2$ and $\alpha_v/t\le1/2$.  Since $A$ and $B$ commute
and square to the identity,
\begin{equation}
(1-\alpha_v)\one
\preceq
\one+\alpha_vAB
\preceq
(1+\alpha_v)\one
\end{equation}
and
\begin{equation}
(1-\alpha_v/t)(1-t)\one
\preceq
\left(\one+s\frac{\alpha_v}{t}A\right)(\one+stB)
\preceq
(1+\alpha_v/t)(1+t)\one.
\end{equation}
Applying $\Phi$ gives
\begin{equation}
\frac{(1-\alpha_v/t)(1-t)}{1+\alpha_v}
\le
\frac{W_w}{W_v}
\le
\frac{(1+\alpha_v/t)(1+t)}{1-\alpha_v}.
\label{eq:SM-soft-edge-ratio}
\end{equation}
For $t\le1/16$, the two bounds lie between $5/11$ and $51/31$,
which proves Eq.~\eqref{eq:SM-edge}.
\end{proof}

Soft pinning already has two outcomes, each with probability $1/2$,
and uses no continuous variables.  Only index removal needs to be
approximated.  One index removal draws two independent expansion paths
as in Lemma~\ref{lem:SM-ratio-generation}.  If $\ell$ is the sum of
their lengths, Eq.~\eqref{eq:SM-path-length-tail} and a union bound give
constants $c_0,C_0>0$ such that
\begin{equation}
\bbP(\ell>L)
\le
C_0\ee^{-c_0L}.
\label{eq:SM-step-length-tail}
\end{equation}

Every path from the root to a leaf contains exactly $N=2n$ index
removals.  Between two consecutive removals, soft pinning can occur at
most $n$ times.  Hence
\begin{equation}
h=2n(n+1)
\label{eq:SM-tree-depth}
\end{equation}
is an upper bound on the depth.  For $0<\epsilon\le1$, set
\begin{equation}
\delta
=
\min\left\{
\frac1{48},
\frac{\epsilon}{1000h}
\right\}.
\label{eq:SM-kernel-accuracy}
\end{equation}

At an index removal node $v$, let $P_v$ be the distribution of the
resulting child, whose data are
\begin{equation}
(S_w,\sigma_w,\alpha_w,\Gamma_w,c_w).
\end{equation}
The expansion
paths and continuous times are used only to generate this child and
are then discarded.  We approximate the distribution of the child,
not the distribution of these auxiliary variables.

\begin{lemma}
\label{lem:SM-capped-kernel}
For every index removal node $v$, there is a distribution $P'_v$
supported on finitely many legal children such that
\begin{equation}
\TV(P'_v,P_v)\le\delta.
\label{eq:SM-kernel-TV}
\end{equation}
A child with distribution $P'_v$ can be sampled in
$N^{\order(d)}\poly(\log(1/\delta))$ time under the input assumptions
of Theorem~\ref{thm:SM-sampler}.  If $k$
is the number of children with positive probability under $P'_v$,
then
\begin{equation}
\log k
=
\poly(n,\log(1/\epsilon)).
\label{eq:SM-logk}
\end{equation}
At a soft pinning node, set $P'_v=P_v$.
\end{lemma}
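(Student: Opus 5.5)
The approach is to build $P'_v$ from the exact generation procedure of Lemma~\ref{lem:SM-ratio-generation}, with two modifications: truncate the path length and discretize the continuous variables. Let $\ell$ be the total length of the two expansion paths drawn at the node, and set
\[
L_\delta=\lceil c_0^{-1}\log(2C_0/\delta)\rceil .
\]
By Eq.~\eqref{eq:SM-step-length-tail}, the event $\ell>L_\delta$ has probability at most $\delta/2$. Define a modified procedure that runs the generator and, if a length ever exceeds $L_\delta$, returns the child obtained from $Z_1=Z_2=0$ together with the $K$-selection. That child is legal: it has $\alpha_w\Gamma_w=\operatorname{Herm}(2\alpha\Gamma)$ or zero, which satisfies the invariant by Lemma~\ref{lem:SM-soft-invariant-preserved}. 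The child is a deterministic function of the auxiliary randomness, and the two procedures agree outside this event, so the coupling gives total variation at most $\delta/2$.

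Next I would discretize. On the truncated event, a path uses at most $L_\delta$ times drawn from densities proportional to $j_Q$, together with finitely many discrete choices: orders $k$, counts $m_j$, labeled terms, and positions. Each of these discrete choices ranges over at most $N^{\order(d)}$ values, at most $\order(dL_\delta)$ times. I would replace each time by a grid point of mesh $2^{-p}$, sampled with the exact cell probabilities. I would also replace the acceptance probability in Eq.~\eqref{eq:SM-proposal-acceptance} and the discrete choice weights by $p$-bit roundings, always rounding down and sending any deficit to the zero outcome, so that all probabilities stay valid. The resulting child map, viewed as a function of the grid variables, differs from the exact one only through the coefficients $a(h)$, the monomials $v_T c_{T,\varphi}(u)$, and the acceptance ratios. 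These are Lipschitz in the times, with constants $\exp(\order(L_\delta))$, by the bound $\norm{Q(u)}\le \ee^{C_H u}$ and the derivative estimate analogous to Eq.~\eqref{eq:SM-contour-derivative}. The probability perturbation is therefore at most $\exp(\order(L_\delta))N^{\order(d)}2^{-p}$.

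The child itself depends on the times only through its output monomial, its phase $\zeta$, and $K$. These quantities are discrete, so I would compare the two procedures by coupling: use the same grid cell and a common uniform variable for every accept/reject decision, and bound the probability that the two decisions disagree by the probability perturbation above. Taking $p=\order(L_\delta+d\log N+\log(1/\delta))=\order(\log(n/\delta))$ then gives total variation at most $\delta/2$, so $\TV(P'_v,P_v)\le\delta$. Under $P'_v$ the children are determined by at most $\order(L_\delta)$ discrete choices, each with $N^{\order(d)}2^{\order(p)}$ options. Hence $\log k=\order(L_\delta(d\log N+p))=\poly(n,\log(1/\epsilon))$, since $\log(1/\delta)=\order(\log(n/\epsilon))$. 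The sampling time follows from Lemma~\ref{lem:SM-ratio-generation} applied with $w\le L_\delta$ and $p$-bit arithmetic.

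I expect the main obstacle to be the discretization step. The reason is that the output child data do not vary continuously: an accept/reject threshold can flip. So I would control the error in TV through the coupled uniform variables, rather than through any continuity of the child map. I would also have to check that the Lipschitz constants stay only exponential in $L_\delta$ and not in $n$; this should follow because each factor's $n$-dependence is bounded by the uniform local estimates of Eq.~\eqref{eq:SM-uncollected-Jbound}.
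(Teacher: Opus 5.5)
Your skeleton matches the paper's proof. You cut the two expansion paths at $L=\order(\log(1/\delta))$ and send the long-path event to a legal fallback child; your $Z_1=Z_2=0$ child works as well as the paper's choice of the $\alpha_v\Gamma_v$ term. You then discretize the times, round the remaining probabilities, and compare by coupling.

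The gap is in the discretization and rounding step, where the paper locates its ``remaining numerical difficulty''. For a fixed path, $Q^{|R(h)|}|a(h)|/\Lambda(h)$ is a power of $Q$ determined by the labels. So the acceptance probability in Eq.~\eqref{eq:SM-proposal-acceptance} is that power times $(|\cos\phi|+|\sin\phi|)/\sqrt2$, with $\phi=\arg a(h)$. The choice of $\zeta$ also depends on the times only through $\phi$. Up to constants fixed by the labels (including $\arg v_T$), $\phi$ is a sum of phases of entries $Q(\tau)_{ab}$. It can move by order one over an arbitrarily short time interval wherever one of these entries passes near zero. So your claim that the acceptance ratios are Lipschitz in the times with constants $\exp(\order(L))$ is false, and the per-decision disagreement bound built on it is unjustified. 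For the same reason, ``rounding the acceptance probability to $p$ bits'' cannot be implemented. The input assumptions give only absolute precision $2^{-p}$ on a possibly tiny coefficient $z_{T,\varphi}(u)=v_Tc_{T,\varphi}(u)$, and that does not determine its phase in $\poly(n,p)$ time.

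Two ingredients of the paper's proof repair this. For the time discretization, the paper bounds the gradient of the density $g_{\omega,c}$ of \emph{returning} a given child. That density already contains the acceptance factor and is built from the Lipschitz quantities $Q^{|R(h)|}(\Re a(h))_\pm$ and $Q^{|R(h)|}(\Im a(h))_\pm$. Equivalently, inside your coupling you can weight the disagreement by the proposal density. For fixed discrete data, use
$\Lambda(\bm t)\,|\mathcal A(\bm t)/\Lambda(\bm t)-\mathcal A(\hat{\bm t})/\Lambda(\hat{\bm t})|\le|\mathcal A(\bm t)-\mathcal A(\hat{\bm t})|+|\Lambda(\bm t)-\Lambda(\hat{\bm t})|$,
where $\mathcal A=Q^{|R(h)|}(|\Re a|+|\Im a|)/\sqrt2\le\Lambda$.

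For the precision issue, the paper discards small terms. Let $D_N\le N^{\order(d)}$ bound the number of labeled terms in one draw, and set $\xi=\delta/(64LD_N)$. Every labeled term whose rounded-down proposal probability falls below $\xi$ is sent to the zero outcome, at a total cost of at most $2LD_N\xi\le\delta/32$. Every retained term then has $|z_{T,\varphi}(u)|\ge C_Vg_{\mathrm{pin}}\xi$. Hence every nonzero truncated path coefficient is at least $[\min\{1,C_Vg_{\mathrm{pin}}\xi\}]^L$, and $\order(L\log(NL/\delta))$ digits suffice for the acceptance and phase probabilities.

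Finally, you do not treat ties. When two times land in the same grid cell, the grid values no longer fix the ordering used in the Dyson product and the nested commutators, including the test that $u$ is the largest sampled time. The paper routes this event to the fallback child and bounds its probability by $\poly(L)C_1^{L+1}$ times the mesh. Your coupling needs the same provision. With these repairs your conclusion, $\TV(P'_v,P_v)\le\delta$ with the stated cost and $\log k$, goes through.
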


\begin{proof}
We first truncate the two expansion paths used in index removal.
Choose
\begin{equation}
L
\ge
c_0^{-1}\log\frac{4C_0}{\delta}.
\label{eq:SM-path-cutoff}
\end{equation}
If $\ell>L$, use the child obtained by selecting the term
$\alpha_v\Gamma_v$ in Eq.~\eqref{eq:SM-seven}.  When $\alpha_v=0$,
this is the child with no remaining Majorana monomial.  The resulting child is
legal by Lemma~\ref{lem:SM-soft-invariant-preserved}.  Equation~\eqref{eq:SM-step-length-tail} gives
$\bbP(\ell>L)\le\delta/4$, so this replacement changes the child law
by at most $\delta/4$ in total variation.  Applying the cutoff during
the generation of the two paths ensures that no more than $L$
interaction terms are generated.

We next replace the continuous time variables by finitely many values.
For fixed discrete choices $\omega$ and a child $c$, let
$g_{\omega,c}(\bm t)$ be the density of returning $c$ at times
$\bm t$.  For the corresponding path $h$, the four choices
$\zeta=1,-1,\ii,-\ii$ have weights
$(\Re a(h))_+$, $(-\Re a(h))_+$,
$(\Im a(h))_+$, and $(-\Im a(h))_+$, respectively.
If $a(h)$ changes to $a'(h)$, the sum of the absolute changes in these
weights is at most $\sqrt{2}|a(h)-a'(h)|$.

In the proposal probabilities of
Lemma~\ref{lem:SM-ratio-generation}, evaluate the upper bounds at
$g_{\mathrm{pin}}$ and retain the actual interaction coefficients in
the probabilities of the returned terms.  The unused probability is
assigned to zero.  The inequality
$g\le g_{\mathrm{samp}}\le g_{\mathrm{pin}}$ ensures that the resulting
law is unchanged.

Changing one time variable differentiates one propagator entry.  Since
\begin{equation}
\norm{Q'(u)}_{1\to1},
\norm{Q'(u)}_{\infty\to\infty}
\le C_H\ee^{C_Hu},
\end{equation}
the estimates used in
Lemma~\ref{lem:SM-ratio-generation} give
\begin{equation}
\sum_{\omega,c}
\int
\left(
g_{\omega,c}(\bm t)
+
\norm{\nabla g_{\omega,c}(\bm t)}_1
\right)\dd\bm t
\le
\poly(L)C_1^{L+1},
\label{eq:SM-path-density-bound}
\end{equation}
where $C_1=C_1(\beta,D,r_0)$.

Divide each time interval into $2^B$ equal parts and replace every
sampled time by the midpoint of its interval.  If two sampled times lie
in the same subinterval, we use the child selected above.  Otherwise,
the midpoint replacement preserves their order.  The gradient term in
Eq.~\eqref{eq:SM-path-density-bound} then bounds the change in the
output probabilities.  For each pair of time variables, the region in
which both times lie in the same subinterval occupies a fraction
$2^{-B}$ of $[0,T]^2$.  There are at most $\binom{L}{2}$ pairs of time variables, so
Eq.~\eqref{eq:SM-path-density-bound} bounds the probability that some
pair lies in the same subinterval by
$\poly(L)C_1^{L+1}2^{-B}$.

Let $P_v^{(B)}$ be the output distribution after truncating the paths
and replacing the times by their midpoints.  Then
\begin{equation}
\TV(P_v^{(B)},P_v)
\le
\frac{\delta}{4}
+
\poly(L)C_1^{L+1}2^{-B}.
\label{eq:SM-finite-child-error}
\end{equation}

Choose $B$ so that the second term is at most $\delta/2$.  Since
$L=\order(\log(1/\delta))$, it is enough to take
\begin{equation}
B
=
\order\left(
L\log(L+1)+\log(1/\delta)
\right)
\end{equation}
The midpoint distribution can be sampled from its cumulative
probabilities without listing its $2^B$ grid points, using only
polynomially many random choices.

Small interaction coefficients create the remaining numerical
difficulty: their phases are unstable under an absolute approximation.
Let $D_N\le N^{\order(d)}$ bound the number of labeled interaction
terms in one local draw,
and set
\begin{equation}
\xi=\frac{\delta}{64LD_N}.
\end{equation}
At a grid time $u$, write
\begin{equation}
p_a
=
\frac{Q^{|T|}|z_{T,\varphi}(u)|}
{Q^dC_Vg_{\mathrm{pin}}\ee^{dC_Hu}}
\end{equation}
for the proposal probability of a labeled term.  Choose
$\underline p_a$ so that
$0\le p_a-\underline p_a\le\xi$, and assign the term to the zero
outcome when $\underline p_a<\xi$.  The discarded mass is at most
$2D_N\xi$ in one draw and at most $2LD_N\xi\le\delta/32$ along the two
paths.

Every retained term has $p_a\ge\xi$, and hence
$|z_{T,\varphi}(u)|\ge C_Vg_{\mathrm{pin}}\xi$.  A nonzero truncated
path coefficient is therefore bounded below by
$[\min\{1,C_Vg_{\mathrm{pin}}\xi\}]^L$.  The acceptance probability in
Eq.~\eqref{eq:SM-proposal-acceptance} therefore has a denominator
bounded below by the same quantity.  For the phase, choose
$\zeta\in\{1,-1,\ii,-\ii\}$ with probability
\begin{equation}
\frac{a_\zeta(h)}{
\sum_{\zeta'\in\{1,-1,\ii,-\ii\}}a_{\zeta'}(h)},
\end{equation}
where the nonnegative weights $a_\zeta(h)$ are defined in
Eq.~\eqref{eq:SM-phase-components}.  This denominator is at least
$|a(h)|$ and obeys the same lower bound.  Computing these probabilities requires
$\order(L\log(NL/\delta))$ digits of precision.  Choose this precision
so that these probabilities contribute at most $\delta/32$ coupling
error and all other finite choices contribute at most $\delta/8$.
Let $P'_v$ be the resulting child distribution.  Coupling the
successive choices gives
\begin{equation}
\TV(P'_v,P_v^{(B)})
\le
\frac{\delta}{32}+\frac{\delta}{32}+\frac{\delta}{8}
<\frac{\delta}{4}.
\end{equation}
Together with Eq.~\eqref{eq:SM-finite-child-error}, this proves
Eq.~\eqref{eq:SM-kernel-TV}.

Each path contains at most $L$ interaction terms, and each time is
specified by a $B$-bit grid index.  The number of choices and the
precision required for each one are polynomial in $N$ and
$\log(1/\delta)$.  This gives the stated running time, and the same
bounds prove Eq.~\eqref{eq:SM-logk}.
\end{proof}
Replacing every index removal law by $P'_v$ gives a finite tree of
depth at most $h$.  The leaf operators are unchanged; only their
probabilities are modified.  Define $E'_z=E_z$ at a leaf and, at an
internal node,
\begin{equation}
E'_v
=
\sum_wP'_v(w)E'_w.
\label{eq:SM-finite-completion}
\end{equation}
The next lemma compares this operator with the exact conditional
average $E_v$.

\begin{lemma}
\label{lem:SM-capped-perturbation}
Let $h_v$ be the maximum number of remaining edges below $v$.  Then
\begin{equation}
\frac{\norm{E'_v-E_v}_1}{W_v}
\le
(1+6\delta)^{h_v}-1
\le
12h_v\delta.
\label{eq:SM-tree-perturb}
\end{equation}
\end{lemma}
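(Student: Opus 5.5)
We argue by induction on $h_v$, from the leaves upward. At a leaf, $E'_z=E_z$, so the left-hand side of Eq.~\eqref{eq:SM-tree-perturb} vanishes. At a soft pinning node, $P'_v=P_v$, and the bound follows from the children. At an internal node $v$, we write
\begin{equation}
E'_v-E_v
=
\sum_wP'_v(w)\,(E'_w-E_w)
+\sum_w\bigl(P'_v(w)-P_v(w)\bigr)E_w ,
\end{equation}
using $E_v=\sum_wP_v(w)E_w$ from Lemma~\ref{lem:SM-tree-martingale}. We bound the two sums separately in trace norm.

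For the second sum, each $E_w$ is positive, so $\norm{E_w}_1=W_w\le3W_v$ by Lemma~\ref{lem:SM-edge-bound}. Hence this sum is at most
\begin{equation}
3W_v\sum_w|P'_v(w)-P_v(w)|\le6\delta W_v
\end{equation}
by Lemma~\ref{lem:SM-capped-kernel}.

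For the first sum, the induction hypothesis gives $\norm{E'_w-E_w}_1\le\bigl[(1+6\delta)^{h_v-1}-1\bigr]W_w$. We then need to control $\sum_wP'_v(w)W_w$. We write it as $\sum_wP_v(w)W_w+\sum_w(P'_v(w)-P_v(w))W_w$. The first piece equals $W_v$ by Eq.~\eqref{eq:SM-node-weight-recursion}, and the second is at most $6\delta W_v$ by the same edge bound. So $\sum_wP'_v(w)W_w\le(1+6\delta)W_v$.

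Combining the two sums,
\begin{equation}
\frac{\norm{E'_v-E_v}_1}{W_v}
\le(1+6\delta)\bigl[(1+6\delta)^{h_v-1}-1\bigr]+6\delta
=(1+6\delta)^{h_v}-1 ,
\end{equation}
which closes the induction. The final inequality $(1+6\delta)^{h_v}-1\le12h_v\delta$ uses $(1+x)^h-1\le xh\,e^{xh}$. Since $\delta\le\epsilon/(1000h)$ and $h_v\le h$, we have $6\delta h_v\le 6/1000$, so the exponential factor is below $2$.

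The one subtle point is that the children reached under $P'_v$ must be legal nodes of the exact tree, so that $W_w$ and the edge bound $W_w\le3W_v$ apply to them. Lemma~\ref{lem:SM-capped-kernel} guarantees this: $P'_v$ is supported on legal children, including the fallback child. The TV bound then enters only through the ratio bound in Lemma~\ref{lem:SM-edge-bound}.
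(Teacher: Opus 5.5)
Your proof is correct and follows the paper's argument essentially step for step. You use the same splitting of $E'_v-E_v$, the same use of $W_w\le3W_v$ together with $\TV(P'_v,P_v)\le\delta$ to obtain both the $6\delta W_v$ term and $\sum_wP'_v(w)W_w\le(1+6\delta)W_v$, and the same induction on $(1+6\delta)^r-1$. The only difference is cosmetic: for the last inequality the paper uses $(1+x)^r-1\le2rx$ when $rx<1/2$, whereas you use the bound $xh\,\ee^{xh}$.
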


\begin{proof}
The claim is immediate at a leaf.  Put
$a_r=(1+6\delta)^r-1$ and suppose that
$\norm{E'_w-E_w}_1\le a_{h_v-1}W_w$ for every child $w$ of $v$.
Since $\TV(P'_v,P_v)\le\delta$ and
$W_w\le3W_v$ by Lemma~\ref{lem:SM-edge-bound},
\begin{equation}
\sum_w|P'_v(w)-P_v(w)|W_w
\le
6\delta W_v.
\label{eq:SM-one-step-TV}
\end{equation}
Since the exact child weights satisfy
$W_v=\sum_wP_v(w)W_w$, Eq.~\eqref{eq:SM-one-step-TV} also gives
\begin{equation}
\sum_wP'_v(w)W_w
\le
(1+6\delta)W_v.
\end{equation}
Using the definitions of $E'_v$ and $E_v$, together with the induction
hypothesis, we obtain
\begin{align}
\norm{E'_v-E_v}_1
&\le
\sum_wP'_v(w)\norm{E'_w-E_w}_1
+
\left\|
\sum_w\bigl(P'_v(w)-P_v(w)\bigr)E_w
\right\|_1
\nonumber\\
&\le
\left[(1+6\delta)a_{h_v-1}+6\delta\right]W_v
=
a_{h_v}W_v.
\end{align}
This proves the first inequality in
Eq.~\eqref{eq:SM-tree-perturb}.  Finally,
$6h_v\delta\le6h\delta<1/2$, so
$(1+x)^r-1\le2rx$ gives
$a_{h_v}\le12h_v\delta$.
\end{proof}

At the root, write $E'=E'_{\mathrm{root}}$ and
$Z=\Tr\ee^{-\beta H}$.  Equation~\eqref{eq:SM-tree-perturb} gives
\begin{equation}
\norm{E'-\ee^{-\beta H}}_1
\le
aZ,
\qquad
a=12h\delta
\le
0.012\epsilon.
\label{eq:SM-root-perturb}
\end{equation}

\subsection{Subtree trace weights}
\label{sec:SM-subtree-weights}

Under the finite child distributions constructed above, let $p'_v$
be the probability of reaching a node $v$.  At a leaf $z$, define
\begin{equation}
L_z=G\sigma_zG,
\qquad
F_z=\Tr L_z.
\label{eq:SM-finite-leaf}
\end{equation}
The desired reweighting assigns leaf $z$ a weight proportional to
$p'_zF_z$.  To sample this distribution without enumerating the
leaves, we first consider the total leaf weight below each node.

We begin by approximating the leaf traces.  Set
$\xi=\epsilon/1000$.  Applying
Lemma~\ref{lem:SM-baseline-logtrace} to $\sigma_z$, including its
stored scalar, gives a positive number $\widetilde F_z$ such
that
\begin{equation}
\ee^{-\xi}F_z
\le
\widetilde F_z
\le
\ee^\xi F_z.
\label{eq:SM-Ftilde}
\end{equation}
Lemma~\ref{lem:SM-baseline-logtrace} computes
$\widetilde F_z$ in polynomial time at the required precision.
Whenever a leaf is revisited, we use the same value.

For a node $v$ with $p'_v>0$, define
\begin{equation}
r_v
=
\sum_{z\succeq v}
p'(z\,|\,v)\widetilde F_z,
\label{eq:SM-subtree-trace}
\end{equation}
where $p'(z\,|\,v)$ is the probability of reaching $z$ starting from
$v$.  Thus $r_z=\widetilde F_z$ at a leaf, while at an internal node
\begin{equation}
r_v
=
\sum_wP'_v(w)r_w.
\label{eq:SM-subtree-trace-recursion}
\end{equation}
At the root,
$r_{\mathrm{root}}=\sum_zp'_z\widetilde F_z$, and the corresponding
leaf distribution is
\begin{equation}
q'_z
=
\frac{p'_z\widetilde F_z}{r_{\mathrm{root}}}.
\label{eq:SM-approximate-leaf-law}
\end{equation}

Computing $r_v$ directly would require summing over all leaves below
$v$.  We instead compare it with the node weight $W_v$.
Appendix~\ref{sec:SM-oracle} estimates $Z_v$, while
Eq.~\eqref{eq:SM-live-sandwich} compares $Z_v$ with $W_v$.
Define
\begin{equation}
a_v=(1+6\delta)^{h_v}-1,
\label{eq:SM-av}
\end{equation}
where $h_v$ is the remaining depth below $v$.

\begin{lemma}
\label{lem:SM-capped-subtree-ratio}
For every node $v$ of the finite tree,
\begin{equation}
\ee^{-\xi}(1-a_v)W_v
\le
r_v
\le
\ee^\xi(1+a_v)W_v.
\label{eq:SM-capped-sandwich}
\end{equation}
Suppose the oracle in Appendix~\ref{sec:SM-oracle} returns
$\widehat L_v$ satisfying
\begin{equation}
|\widehat L_v-\log Z_v|
\le
\eta_0,
\label{eq:SM-node-oracle-event}
\end{equation}
where $\eta_0$ is a constant independent of the system size.  Then
\begin{equation}
\frac12(1-a_v)\ee^{-(\xi+\eta_0)}
\le
\frac{r_v}{\ee^{\widehat L_v}}
\le
\frac32(1+a_v)\ee^{\xi+\eta_0}.
\label{eq:SM-constant-potential}
\end{equation}
Moreover, if $v\to w$ has positive probability in the finite tree,
then
\begin{equation}
|\log r_w-\log r_v|
\le
\log12.
\label{eq:SM-logvariation}
\end{equation}
\end{lemma}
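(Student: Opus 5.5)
Three claims are involved, and I would prove them in order: the sandwich of $r_v$ between $W_v$-multiples, the comparison with the oracle output, and the log-variation bound along an edge.

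\emph{The sandwich for $r_v$.} Take the trace of the operator perturbation bound. The exact decomposition gives $W_v=\Tr E_v$. The finite completion $E'_v=\sum_{z\succeq v}p'(z\,|\,v)L_z$ is obtained by iterating Eq.~\eqref{eq:SM-finite-completion} down to the leaves, where $E'_z=E_z=L_z$. Every $L_z$ is positive, so
\begin{equation*}
\Tr E'_v=\sum_{z\succeq v}p'(z\,|\,v)F_z .
\end{equation*}
Lemma~\ref{lem:SM-capped-perturbation} gives $|\Tr E'_v-W_v|\le\norm{E'_v-E_v}_1\le a_vW_v$. Hence $(1-a_v)W_v\le\sum_zp'(z\,|\,v)F_z\le(1+a_v)W_v$. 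Eq.~\eqref{eq:SM-Ftilde} says each $\widetilde F_z$ lies within a factor $\ee^{\pm\xi}$ of $F_z$, and the weights $p'(z\,|\,v)$ are nonnegative. Multiplying through therefore gives Eq.~\eqref{eq:SM-capped-sandwich}.

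\emph{Comparison with the oracle.} Combine Eq.~\eqref{eq:SM-capped-sandwich} with Eq.~\eqref{eq:SM-live-sandwich}, namely $\tfrac12Z_v\le W_v\le\tfrac32Z_v$. The oracle event Eq.~\eqref{eq:SM-node-oracle-event} gives $\ee^{-\eta_0}\ee^{\widehat L_v}\le Z_v\le\ee^{\eta_0}\ee^{\widehat L_v}$. Chaining the three two-sided bounds yields Eq.~\eqref{eq:SM-constant-potential} directly.

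\emph{Log-variation along an edge.} The key step is to prove it from Eq.~\eqref{eq:SM-capped-sandwich} and Lemma~\ref{lem:SM-edge-bound}, which gives $W_w/W_v\in[1/3,3]$ for every edge of the exact tree. One point must be checked first: every child with positive probability under $P'_v$ is a legal child of the exact construction. Lemma~\ref{lem:SM-capped-kernel} guarantees this, so its node operator $C_w$ satisfies the invariant and Lemma~\ref{lem:SM-edge-bound} applies to it. (The proof of that lemma uses only the invariant at $v$ and $w$ and the bound on $M$, not the exact child law.) With this in hand,
\begin{equation*}
\frac{r_w}{r_v}\le\ee^{2\xi}\frac{1+a_w}{1-a_v}\cdot3,
\qquad
\frac{r_w}{r_v}\ge\ee^{-2\xi}\frac{1-a_w}{1+a_v}\cdot\frac13 .
\end{equation*}
To finish, bound the constants. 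We have $a_v\le12h\delta\le0.012\epsilon\le0.012$ by Eq.~\eqref{eq:SM-kernel-accuracy}, and $\xi=\epsilon/1000\le10^{-3}$. The resulting factor $\ee^{2\xi}(1+a)/(1-a)$ is far below $4$. Therefore $r_w/r_v\in[1/12,12]$, which is Eq.~\eqref{eq:SM-logvariation}.

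The only delicate point is this legality check for children of $P'_v$, in particular the truncation fallback child. It is legal by Lemma~\ref{lem:SM-soft-invariant-preserved}, as noted in the proof of Lemma~\ref{lem:SM-capped-kernel}. Everything else is bookkeeping of constants.
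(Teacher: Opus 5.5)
Your proposal is correct and follows the paper's proof essentially step for step. You trace the perturbation bound and use $\ee^{-\xi}F_z\le\widetilde F_z\le\ee^{\xi}F_z$ to get the sandwich, then chain it with Eq.~\eqref{eq:SM-live-sandwich} and the oracle event, and finally combine Lemma~\ref{lem:SM-edge-bound} with the sandwich and the numerical bounds $a_v\le 0.012$, $\xi\le 10^{-3}$. Your explicit check that every child in the support of $P'_v$ is legal, so that Lemma~\ref{lem:SM-edge-bound} applies to edges of the finite tree, is a justification the paper uses only implicitly.
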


\begin{proof}
Taking the trace in Eq.~\eqref{eq:SM-finite-completion} gives
\begin{equation}
\Tr E'_v
=
\sum_{z\succeq v}p'(z\,|\,v)F_z.
\label{eq:SM-finite-subtree-trace}
\end{equation}
It follows from Eq.~\eqref{eq:SM-Ftilde} that
\begin{equation}
\ee^{-\xi}\Tr E'_v
\le
r_v
\le
\ee^\xi\Tr E'_v.
\label{eq:SM-rv-Ev-prime}
\end{equation}
Since $\Tr E_v=W_v$, Eq.~\eqref{eq:SM-tree-perturb} also gives
\begin{equation}
(1-a_v)W_v
\le
\Tr E'_v
\le
(1+a_v)W_v.
\label{eq:SM-Ev-prime-trace}
\end{equation}
Combining these inequalities proves
Eq.~\eqref{eq:SM-capped-sandwich}.

Equation~\eqref{eq:SM-live-sandwich} gives
$Z_v/2\le W_v\le3Z_v/2$.  On the event in
Eq.~\eqref{eq:SM-node-oracle-event}, we also have
\begin{equation}
\ee^{-\eta_0}Z_v
\le
\ee^{\widehat L_v}
\le
\ee^{\eta_0}Z_v.
\end{equation}
Substituting these inequalities into
Eq.~\eqref{eq:SM-capped-sandwich} proves
Eq.~\eqref{eq:SM-constant-potential}.

For an edge $v\to w$, Lemma~\ref{lem:SM-edge-bound} and
Eq.~\eqref{eq:SM-capped-sandwich} give
\begin{equation}
\frac{r_w}{r_v}
\le
3\ee^{2\xi}\frac{1+a_w}{1-a_v},
\qquad
\frac{r_v}{r_w}
\le
3\ee^{2\xi}\frac{1+a_v}{1-a_w}.
\label{eq:SM-ratio-edge-proof}
\end{equation}
Since $a_v,a_w\le12h\delta\le0.012$ and $\xi\le0.001$, both
right-hand sides are smaller than $12$.  Hence
$|\log r_w-\log r_v|\le\log12$.
\end{proof}

Thus $\ee^{\widehat L_v}$ approximates $r_v$ within constant
multiplicative factors, and the subtree traces of adjacent nodes are
also multiplicatively comparable.  The next subsection uses these properties to
sample the distribution in
Eq.~\eqref{eq:SM-approximate-leaf-law} without evaluating all children
of a node.

\subsection{Sampling the reweighted leaves}
\label{sec:SM-reweighted-leaves}

If all subtree traces were available, the distribution in
Eq.~\eqref{eq:SM-approximate-leaf-law} could be sampled recursively.
At an internal node $v$, one would choose a child $w$ with probability
\begin{equation}
P'_v(w)\frac{r_w}{r_v}.
\label{eq:SM-ideal-reweighted-child}
\end{equation}
These probabilities sum to one by
Eq.~\eqref{eq:SM-subtree-trace-recursion}.  This rule is not directly
available, since it would require computing $r_w$ for every child of
$v$.  We instead construct a Markov chain on the nodes of the tree.
A downward proposal samples a child from $P'_v$, while the reverse
proposal moves to the parent.  The probability $P'_v(w)$ then cancels
in detailed balance, so the acceptance probability requires only
approximations to the subtree traces at the two adjacent nodes.

We first analyze the chain assuming that positive numbers
$\varphi_v$ have been assigned to the nodes.  At a leaf, set
$\varphi_z=\widetilde F_z$.  At an internal node, suppose
\begin{equation}
A^{-1}r_v
\le
\varphi_v
\le
Ar_v
\label{eq:SM-node-approximation}
\end{equation}
for a constant $A\ge1$.  We also assume that
\begin{equation}
\ee^{-b}
\le
\frac{r_w}{r_v}
\le
\ee^b
\label{eq:SM-adjacent-subtree-weights}
\end{equation}
whenever $v\to w$ and $P'_v(w)>0$.

\begin{lemma}
\label{lem:SM-tree-reweighting}
Let the finite sampling tree have depth at most $h$, and suppose
Eqs.~\eqref{eq:SM-node-approximation} and
\eqref{eq:SM-adjacent-subtree-weights} hold.  For every
$0<\theta<1$, there is an algorithm whose output distribution
$\widetilde q$ on the leaves satisfies
\begin{equation}
\TV(\widetilde q,q')
\le
\theta.
\label{eq:SM-tree-reweighting-error}
\end{equation}
The algorithm requires only samples from $P'_v$, the parent of the
current node, and the values $\varphi_v$.  Its number of steps
is polynomial in $h$, $A$, $\ee^b$, and $\log(1/\theta)$.
\end{lemma}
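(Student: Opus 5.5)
We will run a lazy Metropolis chain on the nodes of the finite tree. Its stationary distribution is $\pi(v)\propto\varphi_v$, multiplied by a path-measure factor. We then condition on the leaves. Concretely, from a current node $v$ the chain does the following. With probability $1/2$ it stays put. Otherwise it proposes, with probability $1/2$ each, either a child $w\sim P'_v$ (a downward move) or the parent $u$ of $v$ (an upward move). A downward proposal from a leaf and an upward proposal from the root are rejected. The target is
\[
\pi(v)\propto p'_v\,\varphi_v ,
\]
where $p'_v$ is the probability of reaching $v$ under the finite kernels. For the move $v\to w$ with $w$ a child of $v$, detailed balance requires $\pi(v)\,P'_v(w)\,a(v\to w)=\pi(w)\,a(w\to v)$. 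Since $p'_w=p'_vP'_v(w)$, the factor $P'_v(w)$ cancels. The acceptance probabilities therefore reduce to $\min\{1,\varphi_w/\varphi_v\}$ in both directions, as in Eq.~\eqref{eq:strategy-Metropolis}. This uses only samples from $P'_v$, the parent, and the values $\varphi$. If every $\varphi$ equalled $r$ exactly, the conditional law on the leaves would be $p'_zr_z/\sum_{z'} p'_{z'}r_{z'}=q'_z$. We therefore output the first leaf visited after a burn-in period, or restart.

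\textbf{Handling inexact $\varphi$.} With exact $r_v$, the total mass of each level $k$ is $\sum_{|v|=k}p'_vr_v=r_{\mathrm{root}}$ by Eq.~\eqref{eq:SM-subtree-trace-recursion}. Hence the levels carry equal mass, and the leaves receive a fraction of at least $1/(h+1)$. With the approximations in Eq.~\eqref{eq:SM-node-approximation}, this fraction degrades by at most a factor $A^2$. Leaves use the exact $\varphi_z=\widetilde F_z=r_z$, so the conditional leaf law is exactly $q'$. The only errors are therefore non-mixing and failure to reach a leaf. We handle these by running the chain for a prescribed time, repeating $O(hA^2\log(1/\theta))$ times, and returning an arbitrary leaf on failure.

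\textbf{Mixing, the main obstacle.} We need a mixing time polynomial in $h$, $A$ and $\ee^b$, uniformly over a tree of exponential size. We will follow the Sinclair--Jerrum tree argument for self-reducible sampling, using canonical paths or conductance. For any subtree $T_v$, its $\pi$-mass is $\sum_{u\in T_v}p'_u\varphi_u$. This is at most $A(h_v+1)p'_vr_v$, again by the level-mass identity. The flow across the edge from $v$ to its parent $u$ is of order
\[
\tfrac14\,p'_uP'_u(v)\min\{\varphi_u,\varphi_v\},
\]
which is at least $p'_vr_v/(4A\ee^b)$ by Eqs.~\eqref{eq:SM-node-approximation} and~\eqref{eq:SM-adjacent-subtree-weights}. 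Routing canonical paths through the unique tree paths of length at most $2h$ then gives congestion $O(h^2A^2\ee^b)$. It follows that the spectral gap is at least $1/\poly(h,A,\ee^b)$. Starting at the root, which has stationary mass at least $1/(A^2(h+1))$, the initial-distribution penalty in the mixing bound is only logarithmic. Hence $\poly(h,A,\ee^b)\log(1/\theta)$ steps yield total variation distance at most $\theta/2$ from $\pi$. Conditioning on the leaves, together with the repetition bound above, gives Eq.~\eqref{eq:SM-tree-reweighting-error}.

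The delicate point is verifying the congestion bound. The full tree has exponentially many nodes, so the estimate must come only from the per-edge comparison and the level-mass identity, and never from counting nodes.
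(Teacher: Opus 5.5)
Your proposal is correct and follows the paper's proof essentially step for step. It uses the same lazy Metropolis chain with target $\Pi(v)\propto p'_v\varphi_v$, so $P'_v(w)$ cancels and the leaf-conditional law is exactly $q'$; the same bounds $\Pi(\mathsf L)\ge 1/(AH)$ and $\Pi(\mathrm{root})\ge 1/(A^2H)$ with $H=h+1$; and the same canonical paths along tree paths, controlled by the subtree-mass to edge-flow ratio $O(A^2\ee^b H)$.

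Three small repairs are needed, all of which the paper handles:
\begin{enumerate}
\item Each level carries mass at most $r_{\mathrm{root}}$, not exactly $r_{\mathrm{root}}$, because leaves sit at different depths (the number of soft-pinning steps varies along paths). The inequality still gives every bound you use.
\item The output must be the chain's state at the fixed time $t$, kept only if it is a leaf. The first leaf entered after burn-in follows a biased hitting law, not $\Pi(\cdot\,|\,\mathsf L)$.
\item Conditioning on $\mathsf L$ can inflate total variation by a factor of order $AH$. The chain must therefore be mixed to accuracy of order $\theta/(AH)$ rather than $\theta/2$, as in the paper's choice $\varepsilon_{\mathrm{mix}}=\theta/(16AH)$.
\end{enumerate}
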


\begin{proof}
Consider the following lazy Metropolis chain on the nodes that can be
reached from the root.  From a node $v$, stay put with probability
$1/2$.  With probability $1/4$, sample a child from $P'_v$ when $v$ is
internal, and otherwise stay put.  With the remaining probability
$1/4$, propose the parent of $v$ when $v$ is not the root, and
otherwise stay put.  An adjacent proposal $v\to w$ is accepted with
probability
\begin{equation}
\min\left\{1,\frac{\varphi_w}{\varphi_v}\right\}.
\label{eq:SM-node-MH-acceptance}
\end{equation}

Let $p'_v$ be the probability of reaching $v$ under the child
distributions $P'_v$, and define
\begin{equation}
\Pi(v)
=
\frac{p'_v\varphi_v}{\mathcal N},
\qquad
\mathcal N
=
\sum_xp'_x\varphi_x.
\label{eq:SM-node-stationary-law}
\end{equation}
If $w$ is a child of $v$, then $p'_w=p'_vP'_v(w)$, and
\begin{align}
&\Pi(v)\frac{P'_v(w)}4
\min\left\{1,\frac{\varphi_w}{\varphi_v}\right\}
\nonumber\\
&\qquad=
\frac{p'_w}{4\mathcal N}
\min\{\varphi_v,\varphi_w\}
=
\Pi(w)\frac14
\min\left\{1,\frac{\varphi_v}{\varphi_w}\right\}.
\label{eq:SM-tree-detailed-balance}
\end{align}
Thus $\Pi$ is stationary.  The value of $P'_v(w)$ is needed only to
sample the child proposal; it is not evaluated in the acceptance
probability.  Since $\varphi_z=\widetilde F_z$ at every leaf,
conditioning $\Pi$ on the leaf set $\mathsf L$ gives
\begin{equation}
\Pi(z\,|\,\mathsf L)
=
\frac{p'_z\widetilde F_z}
{\sum_{y\in\mathsf L}p'_y\widetilde F_y}
=
q'_z.
\label{eq:SM-leaf-conditional-law}
\end{equation}
The approximations at internal nodes therefore change how long the
chain spends away from the leaves, but not its conditional
distribution on the leaves.

We next bound the time needed for the chain to approach $\Pi$.  Put
$H=h+1$ and define
\begin{equation}
R_v
=
p'_vr_v
=
\sum_{z\succeq v}p'_z\widetilde F_z.
\label{eq:SM-subtree-mass}
\end{equation}
Each leaf contribution appears in $R_v$ only for the nodes on the path
from the root to that leaf, and hence at most $H$ times.  Therefore
\begin{equation}
\sum_vR_v
\le
Hr_{\mathrm{root}},
\qquad
\mathcal N
\le
AHr_{\mathrm{root}}.
\label{eq:SM-total-node-mass}
\end{equation}
It follows that
\begin{equation}
\Pi(\mathsf L)
\ge
\frac1{AH},
\qquad
\Pi(\mathrm{root})
\ge
\frac1{A^2H}.
\label{eq:SM-leaf-root-mass}
\end{equation}

For a nonroot node $w$, let $\mathsf T_w$ consist of $w$ and its
descendants, and let $v$ be its parent.  Applying the same path count
within $\mathsf T_w$ gives
\begin{equation}
\Pi(\mathsf T_w)
\le
\frac{AHR_w}{\mathcal N}.
\label{eq:SM-descendant-mass}
\end{equation}
The stationary flow across the edge $v$--$w$ is
\begin{equation}
\mathcal C(v,w)
=
\frac{p'_w}{4\mathcal N}
\min\{\varphi_v,\varphi_w\}.
\label{eq:SM-edge-flow}
\end{equation}
Equations~\eqref{eq:SM-node-approximation} and
\eqref{eq:SM-adjacent-subtree-weights} imply
\begin{equation}
\mathcal C(v,w)
\ge
\frac{\ee^{-b}R_w}{4A\mathcal N},
\qquad
\frac{\Pi(\mathsf T_w)}{\mathcal C(v,w)}
\le
4A^2\ee^bH.
\label{eq:SM-subtree-flow-ratio}
\end{equation}

Use the unique path in the tree as the canonical path between two
nodes.  Its length is at most $2h$, and the edge $v$--$w$ is used only
when one endpoint lies in $\mathsf T_w$ and the other does not.  The
canonical path bound therefore gives
\begin{equation}
\operatorname{gap}^{-1}
\le
8A^2\ee^b hH
=:
\mathcal R.
\label{eq:SM-tree-gap}
\end{equation}
Together with the lower bound on $\Pi(\mathrm{root})$, this yields
\begin{equation}
\norm{K^t(\mathrm{root},\cdot)-\Pi}_{\mathrm{TV}}
\le
\frac{A\sqrt H}{2}
\exp\left(-\frac{t}{\mathcal R}\right).
\label{eq:SM-tree-mixing}
\end{equation}

Set $\varepsilon_{\mathrm{mix}}=\theta/(16AH)$ and take
\begin{equation}
t
=
\left\lceil
\mathcal R
\log\left(
\frac{8A^2H^{3/2}}{\theta}
\right)
\right\rceil.
\label{eq:SM-tree-burnin}
\end{equation}
If $\mu=K^t(\mathrm{root},\cdot)$, then
$\norm{\mu-\Pi}_{\mathrm{TV}}\le\varepsilon_{\mathrm{mix}}$.
Since $\Pi(\mathsf L)\ge1/(AH)$,
\begin{equation}
\TV\left(
\mu(\,\cdot\,|\,\mathsf L),q'
\right)
\le
\frac{2\varepsilon_{\mathrm{mix}}}
{\Pi(\mathsf L)-\varepsilon_{\mathrm{mix}}}
\le
\frac\theta4.
\label{eq:SM-conditioned-leaf-law}
\end{equation}

Before running the chain, sample one path from the root to a leaf and
store the resulting leaf $z_0$.  Run
\begin{equation}
R
=
\left\lceil
4AH\log\frac2\theta
\right\rceil
\label{eq:SM-leaf-trials}
\end{equation}
independent $t$-step chains from the root, and return the endpoint of
the first chain that ends at a leaf.  If none does, return $z_0$.
Since $\mu(\mathsf L)\ge\Pi(\mathsf L)-\varepsilon_{\mathrm{mix}}$,
the probability of using $z_0$ is at most $\theta/2$.  Together with
Eq.~\eqref{eq:SM-conditioned-leaf-law}, this gives total variation
error at most $3\theta/4$ when the acceptance comparisons are exact.

There are at most $M=Rt$ acceptance decisions.  Evaluate each
acceptance probability so that the sum of their errors is at most
$\theta/4$.  Coupling the
exact and approximate decisions then proves
Eq.~\eqref{eq:SM-tree-reweighting-error} and the stated running time.
\end{proof}

We now construct the values $\varphi_v$ required by the lemma.  At a
leaf, set $\varphi_z=\widetilde F_z$.  At an internal node, run the
oracle in Appendix~\ref{sec:SM-oracle} with a fixed accuracy
$\eta_0$, and set $\varphi_v=\ee^{\widehat L_v}$.  Store this value when the node is first
visited and use it on every later visit.  On the event that the oracle
guarantee holds, Eq.~\eqref{eq:SM-constant-potential} gives
Eq.~\eqref{eq:SM-node-approximation} for a constant $A$, while
Eq.~\eqref{eq:SM-logvariation} gives
Eq.~\eqref{eq:SM-adjacent-subtree-weights} with $b=\log12$.

Let $M_{\mathrm{rw}}=1+Rt$.  This bounds the total number of distinct
internal nodes whose weights can be requested across all the chains.
Set
\begin{equation}
\zeta
=
\frac{\epsilon}{1000},
\qquad
\delta_{\mathrm{or}}
=
\frac{\zeta}{M_{\mathrm{rw}}}.
\label{eq:SM-oracle-failure}
\end{equation}
Use independent randomness for the first oracle call at each new node
and request failure probability at most $\delta_{\mathrm{or}}$.  For
the analysis, choose this randomness independently for every internal
node before running the walk.  Whenever an output fails its guarantee,
replace it by any fixed value that satisfies the guarantee.  Conditional
on the resulting values, Lemma~\ref{lem:SM-tree-reweighting} applies.
Couple the implemented walk to this modified walk using the same
proposals and acceptance decisions.  They agree until the first
inaccurate oracle value is requested.  At the first query of a new
node, its oracle randomness is independent of the preceding trajectory,
so a union bound over at most $M_{\mathrm{rw}}$ queried nodes bounds the
probability that the two walks differ by $\zeta$.  If an oracle call fails,
the algorithm still returns a leaf, but its distribution is left
uncontrolled.  Therefore the unconditional leaf distribution $q''$
satisfies
\begin{equation}
\TV(q'',q')
\le
\theta+\zeta.
\label{eq:SM-TV}
\end{equation}

\subsection{Error and complexity analysis}
\label{sec:SM-output-error}

Let $z$ be the leaf selected by the reweighted tree walk.  The
algorithm returns the normalized state
\begin{equation}
\tau_z
=
\frac{L_z}{F_z}
=
\frac{G\sigma_zG}{\Tr(G\sigma_zG)}.
\label{eq:SM-output}
\end{equation}
Lemma~\ref{lem:SM-Gaussian-sandwich} shows that $\tau_z$ is a positive
Gaussian state.  It is specified by the data
\begin{equation}
\left(K_0,\beta,\cM_z,(s_e)_{e\in\cM_z},t\right),
\label{eq:SM-factor-description}
\end{equation}
and the normalization in Eq.~\eqref{eq:SM-output}.  Any positive
scalar in $\sigma_z$ enters the trace weight $F_z$ and cancels in the
normalized state $\tau_z$.

The same data also determine the two-point function.  Let
$S_{\cM_z}$ be the transfer matrix of the matching at the leaf and
let $E=\ee^{-TK_0}$.  The formulas in
Appendix~\ref{sec:SM-oracle-Gaussian} give
\begin{equation}
R_z
=
ES_{\cM_z}E,
\qquad
C_{\tau_z}
=
2(\one+R_z)^{-1},
\label{eq:SM-output-cov}
\end{equation}
where $(C_{\tau_z})_{ab}=\Tr(\tau_z\gamma_a\gamma_b)$.  Thus all
two-point functions, and hence all Wick correlators, can be computed
from Eq.~\eqref{eq:SM-factor-description} using $N\times N$ matrices.

The factor data in Eq.~\eqref{eq:SM-factor-description} and the
covariance matrix in Eq.~\eqref{eq:SM-output-cov} are equivalent
polynomial-size classical descriptions of the output state.

We now compare the average output with the Gibbs state.  Recall that
$E'=E'_{\mathrm{root}}$, and set
\begin{equation}
Z'
=
\Tr E',
\qquad
q_z^{\mathrm{fin}}
=
\frac{p'_zF_z}{Z'}.
\label{eq:SM-finite-leaf-law}
\end{equation}
The exact trace reweighting of the finite tree gives
\begin{equation}
\sum_zq_z^{\mathrm{fin}}\tau_z
=
\frac{E'}{Z'}.
\label{eq:SM-finite-mixture}
\end{equation}
Replacing $F_z$ by $\widetilde F_z$ changes this law to $q'$.  Let
$q''$ be the leaf law produced by the reweighted tree walk.
The following proposition combines these two changes with the
approximation of the sampling tree.

\begin{proposition}
\label{prop:SM-unconditional-error}
The average output of the sampler satisfies
\begin{equation}
\left\|
\bbE\tau-\rho_\beta
\right\|_1
\le
\frac{2a}{1-a}
+
(\ee^{2\xi}-1)
+
2\theta
+
2\zeta.
\label{eq:SM-master-error}
\end{equation}
For
\begin{equation}
\xi
=
\theta
=
\zeta
=
\frac{\epsilon}{1000},
\label{eq:SM-final-error-parameters}
\end{equation}
the right-hand side is smaller than $0.04\epsilon$, and in particular
smaller than $\epsilon$.
\end{proposition}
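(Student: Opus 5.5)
We decompose the error into three stages by the triangle inequality:
\[
\bbE\tau-\rho_\beta
=\Bigl(\bbE\tau-\textstyle\sum_zq'_z\tau_z\Bigr)
+\Bigl(\textstyle\sum_zq'_z\tau_z-\tfrac{E'}{Z'}\Bigr)
+\Bigl(\tfrac{E'}{Z'}-\rho_\beta\Bigr).
\]
Here $\bbE\tau=\sum_zq''_z\tau_z$, and Eq.~\eqref{eq:SM-finite-mixture} identifies $E'/Z'$ with $\sum_zq^{\mathrm{fin}}_z\tau_z$.

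\emph{Sampling term.} Since every $\tau_z$ is a normalized state, $\norm{\sum_z(q''_z-q'_z)\tau_z}_1\le\sum_z|q''_z-q'_z|=2\TV(q'',q')\le2\theta+2\zeta$ by Eq.~\eqref{eq:SM-TV}.

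\emph{Trace approximation term.} This equals $\norm{\sum_z(q'_z-q^{\mathrm{fin}}_z)\tau_z}_1\le\sum_z|q'_z-q^{\mathrm{fin}}_z|$. Eq.~\eqref{eq:SM-Ftilde} gives $\widetilde F_z/F_z\in[\ee^{-\xi},\ee^{\xi}]$. The normalizations $r_{\mathrm{root}}$ and $Z'$ therefore have the same ratio bound, so $q'_z/q^{\mathrm{fin}}_z\in[\ee^{-2\xi},\ee^{2\xi}]$. A crude bound is $\sum_z|q'_z-q^{\mathrm{fin}}_z|\le\max_z|q'_z/q^{\mathrm{fin}}_z-1|\le\ee^{2\xi}-1$, which is the claimed term.

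\emph{Tree perturbation term.} This is the main computation. Eq.~\eqref{eq:SM-root-perturb} gives $\norm{E'-\ee^{-\beta H}}_1\le aZ$. Taking traces, $|Z'-Z|\le aZ$, so $Z'\ge(1-a)Z$. Then
\[
\Bigl\|\tfrac{E'}{Z'}-\tfrac{\ee^{-\beta H}}{Z}\Bigr\|_1
\le\tfrac{\norm{E'-\ee^{-\beta H}}_1}{Z'}
+\norm{\ee^{-\beta H}}_1\Bigl|\tfrac1{Z'}-\tfrac1Z\Bigr|
\le\tfrac{a}{1-a}+\tfrac{a}{1-a}
=\tfrac{2a}{1-a}.
\]
Here we used that $E'$ is positive, being a nonnegative combination of positive leaf operators by Eq.~\eqref{eq:SM-finite-completion}, so $Z'>0$ and its trace norm is $Z'$. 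Summing the three terms gives Eq.~\eqref{eq:SM-master-error}.

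\emph{Numerical check.} Set $\xi=\theta=\zeta=\epsilon/1000$ and recall $a\le0.012\epsilon\le0.012$.
\begin{itemize}
\item $2a/(1-a)\le0.0243\epsilon$.
\item Since $2\xi\le0.002$, $\ee^{2\xi}-1\le2.002\xi\approx0.002\epsilon$.
\item $2\theta+2\zeta=0.004\epsilon$.
\end{itemize}
The total is about $0.031\epsilon<0.04\epsilon$.

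The one point needing care is the justification of the trace-norm step. It requires the positivity of $E'$ and of the leaves, which we take from Lemma~\ref{lem:SM-tree-martingale} together with the fact that the leaf operators are unchanged in the finite tree.
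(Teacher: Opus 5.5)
Your proof is correct and follows the paper's argument essentially step for step. You use the same three-way split: the normalization error $2a/(1-a)$ from $|Z'-Z|\le aZ$, the leaf-trace error $\ee^{2\xi}-1$ from $q'_z/q^{\mathrm{fin}}_z\in[\ee^{-2\xi},\ee^{2\xi}]$, and the walk error $2(\theta+\zeta)$ from Eq.~\eqref{eq:SM-TV}; you simply spell out computations the paper states tersely. Two harmless blemishes: $\ee^{2\xi}-1\le2.002\xi$ fails by a hair at $\xi=10^{-3}$ (a bound like $2.01\xi$ works and changes nothing given the slack), and positivity of $E'$ is not actually needed, since $Z'\ge(1-a)Z>0$ already.
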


\begin{proof}
Let $Z=\Tr\ee^{-\beta H}$.  Equation~\eqref{eq:SM-root-perturb}
implies $|Z'-Z|\le aZ$ and hence $Z'\ge(1-a)Z$.  It follows that
\begin{equation}
\left\|
\frac{E'}{Z'}-\rho_\beta
\right\|_1
\le
\frac{2a}{1-a}.
\label{eq:SM-normalization-error}
\end{equation}

Let
$\widetilde Z'=r_{\mathrm{root}}=\sum_zp'_z\widetilde F_z$.  From
Eq.~\eqref{eq:SM-Ftilde},
\begin{equation}
\ee^{-2\xi}
\le
\frac{q'_z}{q_z^{\mathrm{fin}}}
\le
\ee^{2\xi},
\label{eq:SM-leaf-law-ratio}
\end{equation}
and therefore
\begin{equation}
\sum_z\left|q'_z-q_z^{\mathrm{fin}}\right|
\le
\ee^{2\xi}-1.
\label{eq:SM-leaf-law-distance}
\end{equation}
Since $\norm{\tau_z}_1=1$, Eq.~\eqref{eq:SM-TV} gives
\begin{equation}
\left\|
\sum_zq''_z\tau_z
-
\sum_zq'_z\tau_z
\right\|_1
\le
2(\theta+\zeta).
\label{eq:SM-reweighting-mixture-error}
\end{equation}
Combining these inequalities with Eqs.~\eqref{eq:SM-finite-mixture} and
\eqref{eq:SM-normalization-error} proves
Eq.~\eqref{eq:SM-master-error}.  Finally,
$a\le0.012\epsilon$, Eq.~\eqref{eq:SM-final-error-parameters}, and
$0<\epsilon\le1$ give the stated numerical bound and prove
Eq.~\eqref{eq:SM-main-alg}.
\end{proof}

The finite tree has depth $h=\order(n^2)$, so a node is specified by a
polynomial-length path from the root.  Lemma~\ref{lem:SM-capped-kernel}
samples each child in
$N^{\order(d)}\poly(\log(n/\epsilon))$ time.  In
Lemma~\ref{lem:SM-tree-reweighting}, the constants $A$ and $\ee^b$ are
independent of $n$ and $\epsilon$, and
$\theta=\epsilon/1000$; hence the number of Markov chain steps is
polynomial in $n$ and $\log(1/\epsilon)$.  The walk queries only
polynomially many log weights, each to constant additive accuracy and
with $\log(1/\delta_{\mathrm{or}})=\poly(\log(n/\epsilon))$.
The oracle inputs, leaf traces, and Gaussian matrices are all computed
to polynomial precision from the data specified in
Theorem~\ref{thm:SM-sampler}.  Thus the total time and space are
polynomial in $n$ and $\log(1/\epsilon)$.

\section{Physical quantities accessible with the Gibbs sampler}
\begin{corollary} 
\label{cor:physical-quantities}
Assume the sampling conditions of Theorem~\ref{thm:SM-sampler}.  Run
the sampler with error parameter $\epsilon_{\mathrm{samp}}/2$, so that its Gaussian
output satisfies
\begin{equation}
\norm{\bbE\tau-\rho_\beta}_1\le\frac{\epsilon_{\mathrm{samp}}}{2}.
\label{eq:cor-sampler-bias}
\end{equation}
For statistical accuracy $\delta>0$ and failure probability
$\kappa>0$, the following quantities are classically accessible in
time polynomial in $n$, $1/\delta$, $\log(1/\kappa)$, and
$\log(1/\epsilon_{\mathrm{samp}})$; the moment estimator below also
has polynomial dependence on its order $k$.

\begin{enumerate}
\item
Let $O$ be an operator whose Majorana expansion contains at most
polynomially many nonzero monomials.  Its support is supplied
explicitly, and each nonzero coefficient can be evaluated to relative
error $2^{-p}$ in time polynomial in $n$ and $p$.  The sample average gives an
estimate $\widehat O$ such that, with probability at least $1-\kappa$,
\begin{equation}
 \left|\widehat O-\Tr(O\rho_\beta)\right|
 \le(\epsilon_{\mathrm{samp}}+\delta)\norm O.
 \label{eq:observable-consequence}
\end{equation}

\item
For any subset $A$ of fermionic modes, consider the simultaneous
projective measurement of the occupation operators $\{n_i\}_{i\in A}$.
The corresponding measurement-outcome distribution on $\rho_\beta$
can be sampled with total variation error at most
$\epsilon_{\mathrm{samp}}/2$.
Consequently, one can efficiently sample the full counting statistics of
\begin{equation}
 N_A=\sum_{i\in A}n_i,
\end{equation}
as well as the fermion parity $(-1)^{N_A}$.

\item
Let $A$ be any subsystem and let $\tau_A$ and $\rho_{\beta,A}$ denote
the corresponding reduced states.  For any integer $k\ge2$, repeated
groups of $k$ independent sampler outputs give an estimate
$\widehat m_k$ such that, with probability at least $1-\kappa$,
\begin{equation}
\left|\widehat m_k-\Tr(\rho_{\beta,A}^k)\right|
\le k\epsilon_{\mathrm{samp}}+\delta.
\label{eq:cor-moment-estimate}
\end{equation}
For one group of outputs
$\tau^{(1)},\ldots,\tau^{(k)}$,
\begin{equation}
 \bbE
 \Tr\left(
 \tau_A^{(1)}
 \tau_A^{(2)}
 \cdots
 \tau_A^{(k)}
 \right)
 =
 \Tr\left[(\bbE\tau_A)^k\right],
 \label{eq:cor-moment-mean}
\end{equation}
and
\begin{equation}
 \left|
 \Tr\left[(\bbE\tau_A)^k\right]
 -
 \Tr(\rho_{\beta,A}^k)
 \right|
 \le k\epsilon_{\mathrm{samp}}.
 \label{eq:cor-moment}
\end{equation}

\item
If $A$ is nonempty and contains $\order(\log n)$ fermionic modes, then the subsystem
von Neumann entropy
\begin{equation}
 S(\rho_{\beta,A})
 =
 -\Tr\left(\rho_{\beta,A}\log\rho_{\beta,A}\right)
\end{equation}
can be estimated to inverse-polynomial additive accuracy in polynomial
time.  In particular, the exact sampler mean obeys
\begin{equation}
\begin{aligned}
 \left|
 S(\bbE\tau_A)-S(\rho_{\beta,A})
 \right|
 &\le
 x_{\mathrm{samp}}\log\left(2^{|A|}-1\right)
 +h\left(x_{\mathrm{samp}}\right),\\
 x_{\mathrm{samp}}
 &:=
 \min\left\{
 \frac{\epsilon_{\mathrm{samp}}}{2},
 1-2^{-|A|}
 \right\}.
\end{aligned}
 \label{eq:cor-entropy}
\end{equation}
 where $h$ is the binary entropy function.

\end{enumerate}
\end{corollary}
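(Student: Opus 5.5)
The plan is to treat each of the four items as a consequence of two facts. The first is the bias bound \eqref{eq:cor-sampler-bias}. The second is that every output $\tau$ is a Gaussian state described by its covariance matrix $C_\tau=2(\one+R_z)^{-1}$ from Eq.~\eqref{eq:SM-output-cov}. Each item then combines an efficient Gaussian computation on a single $\tau$ with a Hoeffding bound over independent repetitions of the sampler. Every sampler call costs $\poly(n,\log(1/\epsilon_{\mathrm{samp}}))$ by Theorem~\ref{thm:SM-sampler}.

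\emph{Observables, item (1).} For each monomial $P_R$ in $O$, Wick's theorem gives $\Tr(\tau P_R)$ as a Pfaffian of a $|R|\times|R|$ submatrix of $C_\tau$, up to a phase. Summing these over the polynomially many monomials gives $f(\tau)=\Tr(O\tau)$, computed to the required precision. Since $|f(\tau)|\le\norm O$, Hoeffding's inequality with $\order(\delta^{-2}\log(1/\kappa))$ samples puts the sample mean within $\delta\norm O/2$ of $\Tr(O\,\bbE\tau)$. Hölder's inequality together with \eqref{eq:cor-sampler-bias} then bounds the remaining bias by $\epsilon_{\mathrm{samp}}\norm O/2$.

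\emph{Occupation numbers, item (2).} Given one output $\tau$, sample the outcomes $n_i$, $i\in A$, sequentially by the chain rule. Measuring $n_i$ on a Gaussian state and conditioning on the result yields a Gaussian state again, and the new covariance matrix follows from the standard Gaussian update formula; each outcome probability is $(1\mp\Tr(\tau\,\ii\gamma_{2i-1}\gamma_{2i}))/2$ in the current state. The resulting outcome law is exactly the measurement law of $\bbE\tau$. Because the measurement is a CPTP map to a classical distribution, it contracts trace distance, so the total variation error is at most $\epsilon_{\mathrm{samp}}/2$. Full counting statistics and the parity are functions of the outcomes.

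\emph{Moments and entropy, items (3) and (4).} Eq.~\eqref{eq:cor-moment-mean} follows from independence and multilinearity. For \eqref{eq:cor-moment}, telescope $X^k-Y^k$ and use $\norm{X}_\infty,\norm{Y}_\infty\le1$, which gives $|\Tr X^k-\Tr Y^k|\le k\norm{X-Y}_1\le k\epsilon_{\mathrm{samp}}/2$. Each single-sample quantity $\Tr(\tau_A^{(1)}\cdots\tau_A^{(k)})$ has modulus at most one and is a trace of a product of Gaussian operators. I would compute it with Gaussian composition rules: products of Gaussian operators are Gaussian, and traces follow from $\det$ or Pfaffian formulas, as in Lemma~\ref{lem:SM-baseline-logtrace}.

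This trace-of-products step is the main obstacle. Reduced states $\tau_A$ may be rank-deficient, so I would instead use the Grassmann-integral or covariance-composition formula $\Tr(\rho_1\rho_2)\propto\mathrm{Pf}$ of a $4|A|\times4|A|$ matrix, which remains stable for singular states. I would iterate this formula $k$ times, keeping track of normalizations in $\poly(|A|,k)$ time; if composition is unstable, a fallback is to regularize each $\tau$ by mixing in a small amount of the identity.

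For item (4), with $|A|=\order(\log n)$ the reduced states have polynomial dimension $2^{|A|}$, so each $\tau_A$ can be written out explicitly. Averaging polynomially many such matrices estimates $\bbE\tau_A$ in trace norm by matrix concentration. Fannes--Audenaert continuity then gives \eqref{eq:cor-entropy} and converts the estimation error into an inverse-polynomial entropy error.
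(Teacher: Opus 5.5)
Your proposal is correct and follows essentially the same route as the paper. It uses Wick/Pfaffian evaluation with Hoeffding and H\"older for (1), sequential Gaussian occupation measurement with trace-norm contractivity for (2), independence, telescoping and Bravyi-type Grassmann/Pfaffian trace formulas for (3), and reconstruction of $\bbE\tau_A$ in dimension $2^{|A|}=\poly(n)$ followed by Fannes--Audenaert for (4); the only difference there is that you average explicit density matrices, while the paper averages Majorana coefficients and then projects onto a density operator.

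Two small points: the paper uses $\sum_R|o_R|\le\sqrt{K}\norm{O}$ to convert coefficient precision into an additive error of order $\delta\norm{O}$, which your ``required precision'' leaves implicit. Your identity-mixing fallback is unnecessary, because the sampler's leaf states are invertible and the Grassmann formulas already handle singular intermediate products; it would also not work as stated, since a global mixture with $\one$ is not Gaussian.
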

\begin{proof}
Equation~\eqref{eq:cor-sampler-bias} controls the systematic error of
the Gaussian sampler.  Independent repetitions are used below to
control the additional statistical error.
We prove the four statements separately.

For the first statement, write the Majorana expansion of $O$ as
\begin{equation}
 O=\sum_R o_R P_R,
\end{equation}
where, by assumption, only $K=\poly(n)$ coefficients $o_R$ are
nonzero.  Orthogonality of the Majorana monomials gives
\begin{equation}
\sum_R|o_R|
\le
\sqrt K\left(\sum_R|o_R|^2\right)^{1/2}
\le
\sqrt K\norm O.
\end{equation}
For every Gaussian output $\tau$, Wick's theorem gives
$\Tr(P_R\tau)$ as the Pfaffian of the corresponding submatrix of the
covariance matrix of $\tau$.  Since each such Pfaffian has dimension at
most $2n$, $\Tr(P_R\tau)$ is computable in polynomial time.  Hence
$\Tr(O\tau)$ is computable in polynomial time from the classical
description produced by the sampler.  Together with the runtime of
Theorem~\ref{thm:SM-sampler}, this gives the stated polynomial runtime.
Moreover,
\begin{align}
 \left|\bbE\Tr(O\tau)-\Tr(O\rho_\beta)\right|
 &=
 \left|\Tr\left[O(\bbE\tau-\rho_\beta)\right]\right|
 \nonumber\\
 &\le
 \norm O\,\norm{\bbE\tau-\rho_\beta}_1
 \le\epsilon_{\mathrm{samp}}\norm O.
\end{align}
The random variable $\Tr(O\tau)$ has absolute value at most
$\norm O$.  Applying Hoeffding's inequality to its real and imaginary
parts therefore shows that
$\order(\delta^{-2}\log(1/\kappa))$ independent outputs determine its
mean to accuracy $(\delta/2)\norm O$ with probability at least
$1-\kappa$.  The displayed coefficient bound shows that relative
coefficient accuracy and absolute Pfaffian accuracy of order
$\delta/\sqrt K$ suffice to evaluate each Pfaffian sum to additive
accuracy $(\delta/2)\norm O$.  Combining these two errors with the sampler bias
proves Eq.~\eqref{eq:observable-consequence}.

For the second statement, the occupation operators
$\{n_i\}_{i\in A}$ commute, and their simultaneous projective
measurement is a fermionic Gaussian measurement.  Given the covariance
matrix of a Gaussian state, its occupation outcomes can be sampled
sequentially using the standard Gaussian measurement update rules.
Evaluate their conditional probabilities so that the implemented
measurement law is within $\epsilon_{\mathrm{samp}}/4$ in total
variation distance of the exact measurement law.  Let $q$ denote the
implemented outcome distribution, let $\overline q$ denote the exact
outcome distribution obtained by measuring the sampler output, and let
$p_\beta$ denote the outcome distribution obtained from $\rho_\beta$.
Using the total variation distance
\begin{equation}
 \TV(q,p_\beta)
 :=
 \frac12\sum_{\bm s}
 \left|q(\bm s)-p_\beta(\bm s)\right|,
\end{equation}
contractivity of the trace norm under the measurement channel gives
\begin{equation}
 \TV(\overline q,p_\beta)
 \le
 \frac12\norm{\bbE\tau-\rho_\beta}_1
 \le\frac{\epsilon_{\mathrm{samp}}}{4}.
\end{equation}
Therefore,
\begin{equation}
 \TV(q,p_\beta)
 \le
 \TV(q,\overline q)+\TV(\overline q,p_\beta)
 \le\frac{\epsilon_{\mathrm{samp}}}{2}.
\end{equation}
For an occupation outcome
$\bm s=(s_i)_{i\in A}\in\{0,1\}^{|A|}$, the corresponding value of
$N_A$ is
\begin{equation}
 m(\bm s)=\sum_{i\in A}s_i.
\end{equation}
Hence the full counting statistics of $N_A$ are given by
\begin{equation}
 q_N(m)
 =
 \sum_{\bm s:\,m(\bm s)=m}q(\bm s),
 \qquad
 p_{\beta,N}(m)
 =
 \sum_{\bm s:\,m(\bm s)=m}p_\beta(\bm s),
 \qquad
 m=0,\ldots,|A|.
\end{equation}
Since this is a deterministic classical postprocessing of the occupation
outcomes,
\begin{equation}
 \TV(q_N,p_{\beta,N})
 \le
 \TV(q,p_\beta)
 \le\frac{\epsilon_{\mathrm{samp}}}{2}.
\end{equation}
Likewise, the fermion parity associated with $\bm s$ is
\begin{equation}
 (-1)^{N_A}=(-1)^{m(\bm s)},
\end{equation}
so its two-outcome distribution is obtained from the same samples by the
map $\bm s\mapsto(-1)^{m(\bm s)}$, and therefore also has total variation
error at most $\epsilon_{\mathrm{samp}}/2$.

For the third statement, partial trace preserves Gaussianity, so every
$\tau_A^{(j)}$ is again a fermionic Gaussian state. Recall the
one-particle transfer matrix $R(W)$ defined in
Eq.~\eqref{eq:SM-transfer}. Since
\begin{equation}
 R(W_1W_2)=R(W_1)R(W_2),
\end{equation}
the product
\begin{equation}
 \tau_A^{(1)}\tau_A^{(2)}\cdots\tau_A^{(k)}
\end{equation}
is again a Gaussian operator, with
\begin{equation}
 R\left(
 \tau_A^{(1)}\tau_A^{(2)}\cdots\tau_A^{(k)}
 \right)
 =
 \prod_{j=1}^k R\left(\tau_A^{(j)}\right).
\end{equation}
The transfer matrix does not record the scalar multiplier needed for
the trace.  We retain this scalar and use the ordered Grassmann
representation of each factor.  The Grassmann formulas of
Ref.~\cite{Bravyi2005} express their products and traces as Pfaffians
with a fixed Majorana ordering, including for singular intermediate
Gaussian operators.  Hence
\begin{equation}
 \Tr\left(
 \tau_A^{(1)}
 \tau_A^{(2)}
 \cdots
 \tau_A^{(k)}
 \right)
\end{equation}
is computable in $\poly(k,n)$ time.

Since the $k$ executions are independent,
\begin{align}
 \bbE
 \Tr\left(
 \tau_A^{(1)}
 \tau_A^{(2)}
 \cdots
 \tau_A^{(k)}
 \right)
 &=
 \Tr\left[
 (\bbE\tau_A)
 (\bbE\tau_A)
 \cdots
 (\bbE\tau_A)
 \right]
 \nonumber\\
 &=
 \Tr\left[(\bbE\tau_A)^k\right],
\end{align}
which proves Eq.~\eqref{eq:cor-moment-mean}.  By contractivity of the
trace norm under partial trace,
\begin{equation}
 \norm{\bbE\tau_A-\rho_{\beta,A}}_1
 \le\frac{\epsilon_{\mathrm{samp}}}{2}.
 \label{eq:cor-proof-reduced}
\end{equation}
Using the telescoping identity for the difference of the $k$th powers
and the fact that both $\bbE\tau_A$ and $\rho_{\beta,A}$ are density
operators,
\begin{align}
 \left|
 \Tr\left[(\bbE\tau_A)^k\right]
 -
 \Tr(\rho_{\beta,A}^k)
 \right|
 &\le
 \sum_{j=0}^{k-1}
 \norm{
 (\bbE\tau_A)^j
 (\bbE\tau_A-\rho_{\beta,A})
 \rho_{\beta,A}^{\,k-1-j}
 }_1
 \nonumber\\
 &\le
 k\norm{\bbE\tau_A-\rho_{\beta,A}}_1
 \le k\epsilon_{\mathrm{samp}},
\end{align}
which proves Eq.~\eqref{eq:cor-moment}.  The trace of a product of
$k$ density operators has absolute value at most one.  Repeating the
$k$-sample experiment
$\order(\delta^{-2}\log(1/\kappa))$ times determines its mean to
accuracy $\delta/2$ with probability at least $1-\kappa$.  Evaluate
each trace by the same Pfaffian formulas to additive
accuracy $\delta/2$.
The combined numerical and statistical error is at most $\delta$, which
proves Eq.~\eqref{eq:cor-moment-estimate}.

Finally, suppose that $A$ contains $\order(\log n)$ fermionic modes.
The Majorana monomials on $A$ form an orthogonal operator basis, and
there are
\begin{equation}
 4^{|A|}=\poly(n)
\end{equation}
such monomials.  The reduced state $\bbE\tau_A$ is therefore determined
by the expectation values
\begin{equation}
 \bbE\Tr(P_R\tau_A)
 =
 \bbE\Tr(P_R\tau)
\end{equation}
for Majorana monomials $P_R$ supported in $A$.  Wick's theorem evaluates
each $\Tr(P_R\tau)$ from the sampled covariance matrix.
Since only polynomially many coefficients are required, simultaneous
concentration and the Majorana-basis expansion first give a Hermitian
estimate with trace-norm error at most $\delta/2$ from $\bbE\tau_A$.
Choosing a nearest density operator
$\widehat\rho_A$ in trace norm then gives
\begin{equation}
\norm{\widehat\rho_A-\bbE\tau_A}_1\le\delta
\end{equation}
with probability at least $1-\kappa$, using polynomially many sampler
outputs.

The Hilbert-space dimension of $A$ is
\begin{equation}
 2^{|A|}=\poly(n),
\end{equation}
so $S(\widehat\rho_A)$ can subsequently be computed to the required
precision by polynomial-dimensional matrix diagonalization.  The
systematic contribution follows from Eq.~\eqref{eq:cor-proof-reduced}
and the Fannes--Audenaert inequality:
\begin{equation}
 \left|
 S(\bbE\tau_A)-S(\rho_{\beta,A})
 \right|
 \le
 x_{\mathrm{samp}}
 \log\left(2^{|A|}-1\right)
 +h\left(x_{\mathrm{samp}}\right),
\end{equation}
which proves Eq.~\eqref{eq:cor-entropy}.  Including the statistical
reconstruction error replaces $x_{\mathrm{samp}}$ in the right-hand
side by
\begin{equation}
x
=
\min\left\{
\frac{\epsilon_{\mathrm{samp}}+\delta}{2},
1-2^{-|A|}
\right\}.
\end{equation}
For $|A|=\order(\log n)$ and inverse-polynomial
$\epsilon_{\mathrm{samp}}$ and $\delta$, this bound is also
inverse-polynomial, up to the logarithmic dimension factor.
\end{proof}
\bibliographystyle{unsrt}
\bibliography{references}

\end{document}